\bibnamedelimd\mkbibnamesuffix{#4}\isdot}%
\bibnamedelimd\mkbibnamegiven{#2}\isdot}}}
\bibnamedelimd\mkbibnamesuffix{#4}\isdot}%
\bibnamedelimd\mkbibnamegiven{#2}\isdot}%
\bibnamedelimd\mkbibnameprefix{#3}\isdot}}}}
\DeclareMathAlphabet{\mathcal}{OMS}{cmsy}{m}{n}
\newlength{\labelind}
\newlength{\leftmarg}
\newtheoremstyle{ownTheorem}
{.5 em}
{1 em}
{\itshape}
{0pt}
{\bfseries}
{}
{\newline}
{}
\newtheoremstyle{ownpostulateTheorem}
  {.5 em}
  {1 em}
  {\itshape}
  {0pt}
  {\bfseries}
  {}
  {\newline}
  {\thmname{#1}\thmnumber{ #2} (\thmnote{#3}\thmnumber{#2})}
\theoremstyle{ownTheorem}
\newtheorem{theorem}{Theorem}
\newtheorem{lemma}[theorem]{Lemma}
\newtheorem{corollary}[theorem]{Corollary}
\newtheorem{proposition}[theorem]{Proposition}
\newtheorem*{DMsumm}{Decision-theoretic summary}
\theoremstyle{ownpostulateTheorem}
\newtheorem{QMpost}{Quantum mechanical postulate}
\newtheorem{DMbackass}{Decision-theoretic background assumption}
\newtheorem{RFpost}{Reward function postulate}
\newtheoremstyle{ownDef}
  {.5 em}
  {1 em}
  {}
  {0pt}
  {\bfseries}
  {}
  {\newline}
  {}
\theoremstyle{ownDef}
\newtheoremstyle{ownEx}
  {.5 em}
  {1 em}
  {}
  {0pt}
  {\bfseries}
  {}
  {\newline}
  {}
\theoremstyle{ownEx}
\newenvironment{example}
  {\pushQED{\qed}\examplex}
  {\popQED\endexamplex}
\newcounter{proofpart}
\xpretocmd{\proof}{\setcounter{proofpart}{0}}{}{}
\newcommand{\proofpart}[1]{%
  \par
  \addvspace{\medskipamount}%
  \stepcounter{proofpart}%
  \noindent\emph{Part \theproofpart: #1}\par\nobreak\smallskip
  \@afterheading
}
\DeclarePairedDelimiter{\group}{(}{)}
\DeclarePairedDelimiter{\sqgroup}{[}{]}
\DeclarePairedDelimiter{\set}{\{}{\}}
\DeclarePairedDelimiter{\norm}{\Vert}{\Vert}
\DeclarePairedDelimiter{\abs}{\vert}{\vert}
\DeclarePairedDelimiter{\modulus}{\vert}{\vert}
\DeclarePairedDelimiter\bra{\langle}{\rvert}
\DeclarePairedDelimiter\ket{\lvert}{\rangle}
\DeclarePairedDelimiterX\braket[2]{\langle}{\rangle}{#1\delimsize\vert#2}
\DeclarePairedDelimiterX\braketwithop[3]{\langle}{\rangle}{#1\delimsize\vert#2\delimsize\vert#3}
\DeclarePairedDelimiterXPP{\btrace}[2]{\tr_{#1}}{(}{)}{}{{#2}}
\DeclarePairedDelimiterXPP{\trace}[1]{\tr}{(}{)}{}{{#1}}
\DeclarePairedDelimiterXPP{\inprod}[2]{}{(}{)}{}{{#1,#2}}
\DeclareMathOperator{\spec}{spec}
\DeclareMathOperator{\range}{rng}
\DeclareMathOperator{\posi}{posi}
\DeclareMathOperator{\interior}{Int}
\DeclareMathOperator{\closure}{Cl}
\DeclareMathOperator{\spa}{span}
\DeclareMathOperator{\tr}{Tr}
\DeclareMathOperator{\expe}{E}
\newcommand{\hilbertspace}{\mathscr{X}}
\newcommand{\althilbertspace}{\mathscr{Y}}
\newcommand{\allhilbertspaces}{\mathbb{X}}
\newcommand{\statespace}{\bar{\mathscr{X}}}
\newcommand{\subspace}{\mathscr{V}}
\newcommand{\linprevs}[1][]{\mathbb{P}}
\newcommand{\supnorm}[2][]{\norm[#1]{#2}_\infty}
\newcommand{\frobnorm}[2][]{\norm[#1]{#2}_\mathrm{F}}
\newcommand{\opnorm}[2][]{\norm[#1]{#2}_{\mathrm{op}}}
\newcommand{\complexes}{\mathbb{C}}
\newcommand{\reals}{\mathbb{R}}
\newcommand{\rationals}{\mathbb{Q}}
\newcommand{\naturals}{\mathbb{N}}
\newcommand{\nonnegreals}{\reals_{\geq0}}
\newcommand{\posreals}{\reals_{>0}}
\newcommand{\posrationals}{\rationals_{>0}}
\newcommand{\linspanof}[1]{\spa(#1)}
\newcommand{\basis}{\mathscr{B}}
\newcommand{\densities}{\mathscr{R}}
\newcommand{\posmeasurements}{\measurements_{\alwaysbetterthan\zero}}
\newcommand{\strictlyposmeasurements}{\measurements_{\alwaysstrictlybetterthan\zero}}
\newcommand{\negmeasurements}{\measurements_{\alwaysworsethan\zero}}
\newcommand{\strictlynegmeasurements}{\measurements_{\alwaysstrictlyworsethan\zero}}
\newcommand{\possemidefmeasurements}{\measurements_{\possemidef\zero}}
\newcommand{\negsemidefmeasurements}{\measurements_{\negsemidef\zero}}
\newcommand{\desirset}[1][\,]{\mathscr{D}_{#1}}
\newcommand{\assessment}{\mathscr{A}}
\newcommand{\utilities}{\mathscr{W}}
\newcommand{\theutilities}{\mathscr{U}}
\newcommand{\ext}{\mathscr{C}}
\newcommand{\fket}[1][\,]{\smash{\ket{\phi_{#1}}}}
\newcommand{\indfket}[2]{\smash{\ket{\phi^{#1}_{#2}}}}
\newcommand{\fbra}[1][\,]{\smash{\bra{\phi_{#1}}}}
\newcommand{\fbraket}[1][\,]{\braket{\phi_{#1}}{\phi_{#1}}}
\newcommand{\gket}[1][\,]{\smash{\ket{\psi_{#1}}}}
\newcommand{\gbra}[1][\,]{\smash{\bra{\psi_{#1}}}}
\newcommand{\gbrakets}[2]{\braket{\psi_{#1}}{\psi_{#2}}}
\newcommand{\uket}{\ket{\Psi}}
\newcommand{\ubra}{\bra{\Psi}}
\newcommand{\basisket}[1][\,]{\smash{\ket{b_{#1}}}}
\newcommand{\basisbra}[1][\,]{\smash{\bra{b_{#1}}}}
\newcommand{\basisbrakets}[2]{\smash{\braket{b_{#1}}{b_{#2}}}}
\newcommand{\zeroket}{\ket{0}}
\newcommand{\oneket}{\ket{1}}
\newcommand{\zerobra}{\bra{0}}
\newcommand{\onebra}{\bra{1}}
\newcommand{\adjoint}[1]{{#1}^\dag}
\newcommand{\operator}[1]{\hat{#1}}
\newcommand{\measurement}[1]{\operator{#1}}
\newcommand{\indmeasurement}[2]{\operator{#1}_{#2}}
\newcommand{\measurements}{\mathscr{H}}
\newcommand{\projection}[1][\,]{\measurement{P}_{{#1}}}
\newcommand{\unitary}[1][\,]{\measurement{U}_{{#1}}}
\newcommand{\density}[1][\,]{\measurement{\rho}_{#1}}
\newcommand{\identity}{\measurement{I}}
\newcommand{\zero}{\measurement{0}}
\newcommand{\uval}{u}
\newcommand{\vval}{v}
\newcommand{\wval}{w}
\newcommand{\utility}[1]{\wval_{\measurement{#1}}}
\newcommand{\indutility}[2]{\wval_{\measurement{#1}_{#2}}}
\newcommand{\theutility}[1]{\uval_{\measurement{#1}}}
\newcommand{\utilitymaps}[1][\hilbertspace]{\mathbb{W}_{#1}}
\newcommand{\utilitymap}{\wval_{\bolleke}}
\newcommand{\theutilitymap}{\uval_{\bolleke}}
\newcommand{\assignation}{W}
\newcommand{\theassignation}{U}
\newcommand{\permutation}{\pi}
\newcommand{\pauli}{\measurement{\sigma}}
\newcommand{\paulix}{\pauli_x}
\newcommand{\pauliy}{\pauli_y}
\newcommand{\pauliz}{\pauli_z}
\newcommand{\commutator}[3][]{\sqgroup[#1]{#2,#3}}
\newcommand{\eigval}{\lambda}
\newcommand{\alteigval}{\mu}
\newcommand{\eigket}[1][\,]{\smash{\ket{a_{#1}}}}
\newcommand{\eigbra}[1][\,]{\smash{\bra{a_{#1}}}}
\newcommand{\alteigket}[1][\,]{\smash{\ket{b_{#1}}}}
\newcommand{\alteigbra}[1][\,]{\smash{\bra{b_{#1}}}}
\newcommand{\eigbraket}[1][\,]{\smash{\braket{a_{#1}}{a_{#1}}}}
\newcommand{\eigbrakets}[2]{\smash{\braket{a_{#1}}{a_{#2}}}}
\newcommand{\eigspace}[1][\,]{\smash{\mathscr{E}_{{#1}}}}
\newcommand{\alteigspace}[1][\,]{\smash{\mathscr{F}_{{#1}}}}
\newcommand{\condprob}[2]{p\group[\big]{#1\big\vert#2}}
\newcommand{\prob}[1]{p\group{#1}}
\newcommand{\expec}[2][\density]{\expe_{#1}\group{#2}}
\newcommand{\condexpec}[2][\gket]{\expe\group[\big]{#2\big\vert#1}}
\newcommand{\betterthan}{\mathrel{\rhd}}
\newcommand{\betterthanorequal}{\mathrel{\unrhd}}
\newcommand{\notbetterthan}{\not\betterthan}
\newcommand{\alwaysbetterthan}{\gneq}
\newcommand{\alwaysstrictlybetterthan}{>}
\newcommand{\alwaysworsethan}{\lneq}
\newcommand{\alwaysstrictlyworsethan}{<}
\newcommand{\possemidef}{\geq}
\newcommand{\negsemidef}{\leq}
\newcommand{\gambset}{{\mathscr{G}}}
\newcommand{\mapset}{{\mathscr{L}}}
\newcommand{\linprev}[1][\,]{\Lambda_{#1}}
\newcommand{\lowprev}[1][\,]{\underline{\Lambda}_{#1}}
\newcommand{\uppprev}[1][\,]{\overline{\Lambda}_{#1}}
\newcommand{\tensortimes}{\otimes}
\newcommand{\con}{\alpha}
\newcommand{\altcon}{\beta}
\newcommand{\im}{i}
\newcommand{\SDM}{set of desirable measurements}
\newcommand{\SDMs}{sets of desirable measurements}
\newcommand*{\instantiateas}{\rightsquigarrow}
\newcommand{\cset}[3][]{\set[#1]{#2\colon#3}}
\newcommand{\then}{\Rightarrow}
\newcommand{\ifandonlyif}{\Leftrightarrow}
\newcommand{\game}[2][A]{\smash{\mathrm{game}^{\,\measurement{#1}}_{\ket{#2}}}}
\newcommand{\act}[1]{\smash{\mathrm{act}_{\measurement{#1}}}}
\newcommand{\valuationsymbol}{\mathscr{V}}
\newcommand{\valuation}[1][\measurement{A}]{\valuationsymbol_{#1}}
\newcommand{\spectrum}[1]{\spec(\measurement{#1})}
\newcommand{\spectrumpure}[1]{\spec(#1)}
\newcommand{\utilitypure}[1]{\smash{\wval_{#1}}}
\newcommand{\theutilitypure}[1]{\smash{\uval_{#1}}}
\newcommand{\gambles}[1][\statespace]{\gambset({#1})}
\newcommand{\realmaps}[1][\statespace]{\mapset({#1})}
\newcommand{\bolleke}{\vcenter{\hbox{\scalebox{0.7}{\(\bullet\)}}}}
\newcommand{\dotket}{\ket{\bolleke}}
\newcommand{\dotbra}{\bra{\bolleke}}
\newcommand{\dotbraket}{\braket{\bolleke}{\bolleke}}
\newcommand{\gbraketwithop}[2][]{\braketwithop[#1]{\psi}{#2}{\psi}}
\newcommand{\fbraketwithop}[2][]{\braketwithop[#1]{\phi}{#2}{\phi}}
\newcommand{\solp}{\mathscr{M}}
\newcommand{\todensity}{\hat{R}}
\begin{document}
\title{A decision-theoretic approach to dealing with uncertainty in quantum mechanics}
\author{Keano De Vos \and Gert de Cooman \and Alexander Erreygers \and Jasper De Bock}
\address{Ghent University, Foundations Lab for imprecise probabilities, Technologiepark-Zwijnaarde 125, 9052 Zwijnaarde, Belgium}
\begin{abstract}
We provide a decision-theoretic framework for dealing with uncertainty in quantum mechanics.
This uncertainty is two-fold: on the one hand there may be uncertainty about the state the quantum system is in, and on the other hand, as is essential to quantum mechanical uncertainty, even if the quantum state is known, measurements may still produce an uncertain outcome.
In our framework, measurements therefore play the role of acts with an uncertain outcome and our postulates ensure that Born's rule is encapsulated in the utility functions associated with such acts.
This approach allows us to uncouple (precise) probability theory from quantum mechanics, in the sense that it leaves room for a more general, so-called imprecise probabilities approach.
We discuss the mathematical implications of our findings, which allow us to give a decision-theoretic foundation to recent seminal work by \citeauthor{benavoli2016:quantum}, and we compare our approach to earlier and different approaches by Deutsch and Wallace.
\end{abstract}
\keywords{uncertainty; decision theory; desirability; lower prevision; prevision; credal set; density operator; quantum mechanics}
\maketitle

\section{Introduction and overview}\label{sec:introduction}
In dealing with a quantum system, there are various reasons why some subject, whom we'll call You, might be uncertain about the present state it's in: the state may have been prepared by performing a measurement, or You might be uncertain about the dynamics that brought it to its present state, or about the state it started out from, to name a few of them.

This uncertainty is typically represented, and reasoned with, using (quantum) probabilities, in the form of density operators.
This type of probabilistic model, pervasive in the literature, has a characteristic that it shares with the probabilistic models that are commonly used in more classical --- non-quantum --- contexts: it has a very high informational content.
That it does, is exemplified by the fact that when such models are used in a decision-theoretic (expected utility) framework, they leave no room for indecision: they allow You to choose between any two options and provide a definite answer to every yes-or-no question.
While this isn't a problem or a drawback in itself, it does call into question why such decisiveness is always justified in all contexts where uncertainty is present.
And this question is made all the more incisive by the following observation: while probabilities are often seen as constituting a superstructure built on a foundation of classical (propositional) logic, they don't share, in the way they're commonly represented, with this logic one of its most crucial properties: its ability to represent and deal with \emph{partial information} through its conservative \emph{deductive inference} mechanism.

This `problem' has been recognised for quite some time in classical (non-quantum) uncertainty contexts, and the research field of \emph{imprecise probabilities} (see, for instance, Refs.~\cite{walley1991,walley2000,augustin2013:itip,troffaes2013:lp}) has devoted quite some time and effort to allowing the mutually related aspects of indecision, imprecision and partial information to also play their important constitutive role in probability theory.
It was, arguably and to the best of our knowledge, \citeauthor{walley1991} who first drew attention to the conservative inference mechanism that is hidden behind classical (non-quantum) probability theory \cite{walley1991,walley2000} and one of us (\citeauthor{cooman2003a}) who first drew attention to the analogy between that conservative inference mechanism and deductive inference in classical propositional logic \cite{cooman2003a}.
In a nutshell, the so-called \emph{imprecise probability models} (such as there are: sets of probability measures, coherent lower and upper previsions, coherent sets of desirable gambles and coherent partial preference orderings) represent partial information states that have the same role as deductively closed and logically consistent sets of propositions do in classical propositional logic; and it's the precise models amongst them (such as there are: single probability measures, coherent previsions, maximal coherent sets of desirable gambles and coherent total preference orderings, respectively) that correspond to the maximal (so-called \emph{complete}) deductively closed and logically consistent sets of propositions, to which no further information can be added without producing inconsistencies.
On this way of looking at things, insisting that all uncertainty must be modelled by probabilities is very much like claiming that the only reasonable sets of propositions are the complete ones.
We'll have ample occasion to explain, and provide pointers to, the relevant technical imprecise probabilities literature dealing with such models as coherent sets of desirable gambles, coherent lower previsions and sets of probability measures further on in the paper, but as far as their connection with conservative inference is concerned, we want to take the opportunity here to also refer readers to Refs.~\cite{cooman2021:archimedean:choice,debock2018,decooman2024:things:logic,cooman2003a}, which are relevant explorations of this issue.

We believe that \citeauthor{benavoli2016:quantum} were the first to bring the imprecise probabilities framework to bear on dealing with uncertainty in quantum mechanics in a  finite-dimensional context.
In a number of ground-breaking papers \cite{benavoli2016:quantum,benavoli2019:computational}, they essentially associated specific types of desirable gambles with measurements, and showed how coherent sets of such desirable gambles can be connected to sets of density operators.

One of the things we want to achieve in this paper, is to provide their uncertainty models with a more solid decision-theoretic foundation and to show that the exact `Born rule'-like form they give to the desirable gamble that is associated with a measurement on a quantum-mechanical system can be derived from decision-theoretic principles.

We're well aware that we aren't the only ones --- and definitely also not the first --- to try and provide a decision-theoretic foundation for dealing with uncertainty in quantum mechanics.
To name only a few others, \citeauthor{deutsch1999:quantum:decisions} \cite{deutsch1999:quantum:decisions} and \citeauthor{wallace2003:defending:deutsch} \cite{wallace2003:defending:deutsch,wallace2007:improving:deutsch,wallace2009:born:arxiv} have tried to show that in an Everettian (many-worlds) approach to quantum mechanics, a few simple decision-theoretic principles are enough to derive the probabilistic quantum mechanical postulates from the non-probabilistic ones.
Their aim in this has been to justify the use of probabilities (Born's rule) in a quantum mechanical context, or to use Deutsch's phrase, to justify
\begin{quote}
[\dots] deriving a probability statement from a factual statement.
This could be called deriving a `tends to' from a `does'.
\end{quote}
Furthermore, \citeauthor{gleason} famously derived Born's rule from the non-probabilistic postulates of quantum mechanics, but his argument relies on the assumption that probabilities are already present in the form of probability measures on the projection operators \cite{gleason}.
Other work, under the umbrella of \emph{generalised probabilistic theories} \cite{hardy2001,hardy2002,barrett2007,Chiribella2011}, makes no assumptions about the mathematical structure of the state space or measurements, but derives these from a number of operationally motivated postulates.
However, all this work starts from the assumption that probabilities are fundamental and takes their existence as a given.

In this paper, we don't necessarily want to repeat or improve on these existing arguments and we most decidedly aren't looking for ways to justify using (precise) probabilities in a quantum context.
But we are looking for ways to deal with the existing uncertainty in quantum mechanics using a decision-theoretic toolbox.
In doing so, we also want to show that more general models than probabilities are reasonable and useful for dealing with epistemic uncertainty in quantum mechanics.
As we'll explain in much more detail further on in \cref{sec:our:approach}, we'll place the decision-theoretic argument in a de Finetti rather than a Savage framework, which will then allow us to deal more easily with partial preference, imprecision and indecision.
Indeed, rather than postulate Born's rule (and the concomitant existence of probabilities) directly, we propose (like Deutsch and Wallace) a conglomerate of decision-theoretic postulates that seem less invasive, and which allow us to derive Born's rule as a special case, while allowing (unlike Deutsch and Wallace) for more general (imprecise or partially specified probabilities) models in less restrictive cases.
This will allow us to put into perspective the --- to our mind unjustifiably strong --- uniquely central role that precise probabilities often seem to play in quantum mechanics.

How have we structured the discussion?
We begin in \cref{sec:quantum:mechanics} with a brief review of the basics of quantum mechanics, and in particular of its (non-probabilistic and probabilistic) postulates, which we'll have occasion to refer to and build on later.

In \cref{sec:our:approach}, we describe the decision-theoretic foundations for our approach to dealing with uncertainty in quantum mechanics, we formulate and motivate our (decision-theoretic) postulates, and we announce the important conclusions that can be drawn from them; we postpone proving these conclusions until \cref{sec:proofs}.
At the same time, we sketch the decision-theoretic background that will help readers place our subsequent discussion in its proper context and compare it in \cref{sec:deutsch,sec:wallace} to the approaches followed by Deutsch and Wallace.
In a nutshell, we show in this section that when measurements~\(\measurement{A}\) are interpreted as acts with an uncertain reward~\(\utility{A}(\uket)\), which depends on the (possibly uncertain) quantum state~\(\uket\) and which is expressed in units of some predetermined linear utility scale, a number of simple postulates fix the form of this reward, as they guarantee that \(\utility{A}(\uket)=\ubra\measurement{A}\uket\) --- we'll explain the notations in \cref{sec:quantum:mechanics}.

The decision-theoretic upshot of our results in \cref{sec:our:approach} (and their proofs in~\cref{sec:proofs}) is that uncertainty about the quantum state~\(\uket\) of a system under consideration can be described by a strict partial vector ordering on the uncertain rewards~\(\utility{A}\), or equivalently, on the measurements~\(\measurement{A}\).
We describe the mathematical consequences of using such uncertainty models, and alternative representations for them, in \cref{sec:desirability,sec:lower:upper:previsions,sec:coherent:previsions}.
This is where we show that the models introduced by \citeauthor{benavoli2016:quantum} \cite{benavoli2016:quantum,benavoli2019:computational} can be derived within our decision-theoretic framework.
We also provide proofs for our claims there, even though some of them can be found elsewhere in some form or other, mainly in the interest of making this discussion as self-contained as possible.

In~\cref{sec:desirability}, we show how these strict partial vector orderings can be represented by so-called \emph{coherent sets of desirable measurements}, and that such models provide a means to perform conservative inference based on partial desirability statements about various measurements.
In~\cref{sec:lower:upper:previsions} we go on to introduce a slightly simpler and less general type of model that has the advantage of being Archimedean, meaning that the preferences can be expressed using the real number scale; these are the coherent lower (and upper) prevision functionals, or equivalently, convex closed sets of density operators.
When lower and upper prevision functionals coincide, they turn into coherent previsions; equivalently, when the convex closed sets of density operators collapse to singletons, we recover the more classical case of quantum probability --- working with density operators --- as a special case, as discussed in \cref{sec:coherent:previsions}.
We'll also discover, incidentally, that the mathematical \smash{(almost-)}equivalence between working with sets of measurements on the one hand in \cref{sec:desirability} and working with sets of density operators on the other in \cref{sec:coherent:previsions}, allows us to recover the well-known duality between the Heisenberg and Schrödinger pictures in our decision-theoretic approach as well.
Throughout \cref{sec:our:approach,sec:desirability,sec:lower:upper:previsions,sec:coherent:previsions}, we illustrate the various models and arguments using a series of simple examples involving qubits.

\cref{sec:born} is of special importance, because we show there that our general sets of desirable measurements approach, based on the representation result in \cref{sec:our:approach}, collapses to Born's rule when the state~\(\uket\) is known.
This allows us to compare in some detail our present approach to the earlier ones by Deutsch (in \cref{sec:deutsch}) and Wallace (in \cref{sec:wallace}), and to point out the relevant differences in \cref{sec:our:approach:is different}.

Finally, we spend some time in \cref{sec:conclusion} on rehearsing the main themes in this paper and drawing conclusions from them.
We also point to useful and interesting ways that the uncertainty models we're about to discuss in detail, might be used advantageously in practical computational tasks, which also opens up paths for future research.

We've also added two appendices.
The first contains an overview of the many notations and mathematical concepts in the paper, which we believe could be helpful for readers who might otherwise lose their way.
The second appendix shows how our theory could be extended from the projective measurements that are Hermitian operators to POVMs (positive operator-valued measures), the more general types of measurements considered in quantum information theory and quantum statistical decision theory \cite{helstrom1976,holevo2001}, which all in their foundations take precise probabilities for granted.
Again, our main aim here is to broaden those foundations.

\section{The quantum-mechanical basics}\label{sec:quantum:mechanics}
To lay the foundations for the coming discussion and to sketch the necessary context, we start with a concise account of those foundational tenets of quantum mechanics that we'll need, and which we'll present here in the form of seven basic postulates.
For a more thorough account of the foundations and more details about what we leave unexplained, we refer to the basic textbooks by \citeauthor{cohen1977:quantum:1} \cite{cohen1977:quantum:1}  and \citeauthor{nielsen2010:quantum} \cite{nielsen2010:quantum}, where readers can also find a discussion of those aspects of linear algebra that are necessary for understanding the mathematical underpinnings of these postulates, and which we'll assume them to be familiar with.

We begin with those postulates that don't mention, or rely on, probabilistic notions.

\subsection{The non-probabilistic postulates}\label{sec:quantum:mechanics:non-probabilistic}
With every quantum system, we associate a complex \emph{Hilbert space}~\(\hilbertspace\), with an \emph{inner product}~\(\inprod{\bolleke_1}{\bolleke_2}\) and associated norm~\(\norm{\bolleke}\coloneqq\sqrt{\inprod{\bolleke}{\bolleke}}\).

We'll use Dirac's notation and terminology \cite{dirac1981:principles} throughout, so we'll call an element~\(\gket\) of~\(\hilbertspace\) a \emph{ket}.
The corresponding \emph{bra}, or dual ket, is then generically denoted by~\(\gbra\): it's the continuous\footnote{In the topology induced by the norm~\(\norm{\bolleke}\), the inner product and the norm are continuous.} linear functional on~\(\hilbertspace\) defined by \(\gbra(\fket)\coloneqq\inprod{\gket}{\fket}\) for all~\(\fket\in\hilbertspace\).
The inner product of two kets~\(\gket\) and \(\fket\), or in other words, the image of the ket~\(\fket\) under the bra~\(\gbra\), is then conveniently denoted by~\(\braket{\psi}{\phi}\).

For any subset~\(\assessment\) of~\(\hilbertspace\), its \emph{linear span}
\begin{equation*}
\linspanof{\assessment}
\coloneqq\cset[\bigg]{\sum_{k=1}^n\lambda_k\fket[k]}{n\in\naturals,\fket[1],\dots,\fket[n]\in\assessment,\lambda_1,\dots,\lambda_n\in\complexes}
\end{equation*}
is the smallest linear subspace of~\(\hilbertspace\) that includes~\(\assessment\).
Here and in what follows, we denote by \(\naturals\) the set of all natural numbers (without zero), and by \(\complexes\) the set of all complex numbers.

A \emph{state} is a normalised, or normal, ket~\(\gket\in\hilbertspace\), which means that \(\braket{\psi}{\psi}=1\).
We'll denote the set of possible states --- the \emph{state space} --- by~\(\statespace\coloneqq\cset{\gket\in\hilbertspace}{\braket{\psi}{\psi}=1}\).

\begin{QMpost}[QM]\label{post:qm:kets}
At any fixed time, the state of a physical system is represented by a normalised ket~\(\gket\), which is an element of the system's state space~\(\statespace\).\footnote{Strictly speaking, the state of a system is represented by a \emph{ray} of kets, typically characterised by one of its elements: a normalised one. This implies that if the normalised ket~\(\gket\) represents the state of a system, then so does \(e^{i\alpha}\gket\) for any \(\alpha\in\reals\); a normalised ket represents a system's state uniquely, up to a phase factor. We'll always implicitly disregard this phase factor when considering the state of a system.}
\end{QMpost}

To deal with certain aspects of quantum-mechanical systems, such as location, infinite-dimensional Hilbert spaces are essential.
But to keep the discussion as simple as possible, \emph{we'll restrict ourselves in this paper to the case of finite-dimensional Hilbert spaces}, which can for instance be used to model such aspects as the spin or (with some extra assumptions, such as ignoring the higher energy levels) the energy of a bounded electron.
Such finite-dimensional spaces are particularly useful in quantum computing and quantum cryptography \cite{nielsen2010:quantum}.

One way to interact with a quantum system is through measurements, which are represented by Hermitian operators.
The \emph{Hermitian conjugate} of a linear operator~\(\operator{O}\) on~\(\hilbertspace\) is the unique linear operator~\(\adjoint{\operator{O}}\) on~\(\hilbertspace\) such that \(\inprod{\adjoint{\operator{O}}\gket}{\fket}=\inprod{\gket}{\operator{O}\fket}\) for all~\(\gket,\fket\in\hilbertspace\).
A linear operator~\(\measurement{A}\) on~\(\hilbertspace\) is called \emph{Hermitian} if it's equal to its Hermitian conjugate~\(\adjoint{\measurement{A}}\), and we can then use Dirac's notation~\(\braketwithop{\psi}{\measurement{A}}{\phi}\coloneqq\inprod{\gket}{\measurement{A}\fket}=\inprod{\measurement{A}\gket}{\fket}\) without any possible confusion.
Observe, by the way, that it follows from the properties of an inner product that\footnote{We denote the complex conjugate of~\(a\in\complexes\) by \(a^*\).}
\begin{equation}\label{eq:hermitian:inner:product}
\braketwithop{\psi}{\measurement{A}}{\phi}
=\inprod{\gket}{\measurement{A}\fket}
=\inprod{\measurement{A}\fket}{\gket}^*
=\braketwithop{\phi}{\measurement{A}}{\psi}^*
\text{ for all~\(\gket,\fket\in\hilbertspace\)},
\end{equation}
and therefore also that \(\braketwithop{\phi}{\measurement{A}}{\phi}\) is real for all~\(\fket\in\hilbertspace\).
We denote the set of all Hermitian operators on~\(\hilbertspace\) by~\(\measurements\), or by~\(\measurements(\hilbertspace)\) should we want to avoid confusion about the Hilbert space~\(\hilbertspace\) they're defined on.

\begin{QMpost}[QM]\label{post:qm:operator}
Every measurable physical quantity of a quantum system with state space~\(\statespace\) can be described by a Hermitian operator~\(\measurement{A}\in\measurements(\hilbertspace)\).
\end{QMpost}

\noindent In fact, in finite-dimensional Hilbert spaces it's also often tacitly assumed that all Hermitian operators correspond to physical measurements, and we'll also make that assumption here.

When we measure a physical quantity associated with a system, this measurement produces a real number, which we'll call the \emph{outcome} of the measurement; we'll denote the set of all real numbers by~\(\reals\).
But it turns out that, typically, for a given measurement~\(\measurement{A}\), not all real numbers are possible outcomes for the measurement.
This is where the third postulate comes in, for the formulation of which we recall the notion of eigenvalues and eigenkets.

A complex number~\(\eigval\) is an \emph{eigenvalue} of a linear operator~\(\operator{O}\) on~\(\hilbertspace\) if there's some non-null ket~\(\eigket\in\hilbertspace\) such that \(\operator{O}\eigket=\eigval\eigket\).
Any ket~\(\eigket\) for which \(\operator{O}\eigket=\eigval\eigket\) is then called an \emph{eigenket} of~\(\operator{O}\) corresponding to the eigenvalue~\(\eigval\); we'll denote the set of all such eigenkets by~\(\eigspace[\eigval]\).
A normal eigenket is called an \emph{eigenstate}.
It's clear that any linear combination of eigenkets in~\(\eigspace[\eigval]\) is still an eigenket, so \(\eigspace[\eigval]\) is a linear subspace of~\(\hilbertspace\), called the \emph{eigenspace} of~\(\operator{O}\) corresponding to the eigenvalue~\(\eigval\).
The set of all the eigenvalues of the operator~\(\operator{O}\) is called its \emph{spectrum} and denoted by~\(\spectrum{O}\).

\begin{QMpost}[QM]\label{post:qm:eigenvalues}
The only possible outcomes of a measurement are the eigenvalues of the corresponding Hermitian operator~\(\measurement{A}\in\measurements\).
\end{QMpost}

The fact that all eigenvalues of any Hermitian operator are always real and that eigenstates corresponding to different eigenvalues are always orthogonal \cite[Exercise~2.17]{nielsen2010:quantum}, gives some intuition as to why the measurement operators are (taken to be) Hermitian.
Since the outcome of a physical measurement must be a real number, we must therefore consider measurement operators with real eigenvalues, and this is a property that Hermitian operators have.
In fact, the knowledge of both the eigenvalues and their corresponding eigenstates is enough to determine the corresponding Hermitian operator.
To explain this in the following proposition, we first introduce a convenient new notation.
For any ket~\(\gket\), we'll denote by \(\gket\gbra\) the linear operator on~\(\hilbertspace\) that maps any ket~\(\fket\) to the ket~\((\gket\gbra)(\fket)\coloneqq\gket\braket{\psi}{\phi}\), which is a scalar multiple of the ket~\(\gket\) --- in fact, it's the orthogonal projection of~\(\fket\) on the linear subspace spanned by \(\gket\).
It's a trivial exercise to show that \(\gket\gbra\) is Hermitian.
Indeed, any Hermitian operator is a real linear combination of specific operators of this type.

\begin{proposition}[\protect{\cite[Box~2.2]{nielsen2010:quantum}}]\label{prop:basis:and:eigenvalues}
Let \(\eigval_1\), \dots, \(\eigval_n\) be any real numbers and let \(\set{\eigket[1],\dots,\eigket[n]}\) be any orthonormal basis for the \(n\)-dimensional Hilbert space~\(\hilbertspace\).
Then \(\measurement{A}\coloneqq\sum_{k=1}^n\lambda_k\eigket[k]\eigbra[k]\) is the only operator that has \(\eigval_1\), \dots, \(\eigval_n\) as eigenvalues with respective eigenstates~\(\eigket[1]\), \dots, \(\eigket[n]\), and this operator is Hermitian.
Conversely, every Hermitian operator~\(\measurement{A}\in\measurements\) can be written as \(\measurement{A}=\sum_{k=1}^n\lambda_k\eigket[k]\eigbra[k]\), with~\(\set{\eigket[1],\dots,\eigket[n]}\) an orthogonal basis of eigenstates\footnote{Throughout, we'll use the term `orthogonal basis of eigenstates' rather than `orthonormal basis of eigenstates', because the normality is already implied by our using the term `eigen\emph{states}'.} for~\(\measurement{A}\) with corresponding respective real eigenvalues~\(\eigval_1,\dots,\eigval_n\in\reals\).
\end{proposition}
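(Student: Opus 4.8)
The plan is to handle the two directions of the equivalence separately, dispatching the explicit-formula-plus-uniqueness claim first and then the spectral decomposition. For the first statement, I would begin by checking that the proposed operator \(\measurement{A}\coloneqq\sum_{k=1}^n\eigval_k\eigket[k]\eigbra[k]\) really has the advertised eigenstructure: applying it to a basis ket \(\eigket[j]\) and using orthonormality \(\eigbrakets{k}{j}=\delta_{kj}\) collapses the sum to \(\eigval_j\eigket[j]\), so each \(\eigket[j]\) is an eigenstate of \(\measurement{A}\) with eigenvalue \(\eigval_j\). Hermiticity is then immediate: each \(\eigket[k]\eigbra[k]\) is Hermitian (as already noted in the text preceding the proposition) and the coefficients \(\eigval_k\) are real, so \(\measurement{A}\) is a real linear combination of Hermitian operators and hence itself Hermitian. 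Uniqueness is the observation that a linear operator on a finite-dimensional space is determined by its action on a basis: if \(\measurement{B}\) is any operator with \(\measurement{B}\eigket[k]=\eigval_k\eigket[k]\) for every \(k\in\set{1,\dots,n}\), then \(\measurement{B}\) and \(\measurement{A}\) agree on the basis \(\set{\eigket[1],\dots,\eigket[n]}\), whence \(\measurement{B}=\measurement{A}\).

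For the converse — which is exactly the finite-dimensional spectral theorem for Hermitian operators — I would argue by induction on \(n=\dim\hilbertspace\), the case \(n=1\) being trivial. For the induction step, algebraic closedness of \(\complexes\) gives the characteristic polynomial of \(\measurement{A}\) a root, so \(\measurement{A}\) has an eigenvalue \(\eigval\) with a normalised eigenket \(\eigket[n]\); this \(\eigval\) is real because \(\eigval=\braketwithop{a_n}{\measurement{A}}{a_n}\), which is real by the remark following \cref{eq:hermitian:inner:product}. The key step is that the orthogonal complement \(\subspace\coloneqq\comp{\linspanof{\set{\eigket[n]}}}\) is \(\measurement{A}\)-invariant: for \(v\in\subspace\) one has \(\braket{a_n}{\measurement{A}v}=\braketwithop{a_n}{\measurement{A}}{v}=\braket{\measurement{A}a_n}{v}=\eigval^*\braket{a_n}{v}=0\), using Hermiticity and \(\braket{a_n}{v}=0\). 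The restriction of \(\measurement{A}\) to the \((n-1)\)-dimensional space \(\subspace\) is therefore again Hermitian, so by the induction hypothesis \(\subspace\) admits an orthonormal basis \(\set{\eigket[1],\dots,\eigket[n-1]}\) of eigenstates of that restriction, with real eigenvalues \(\eigval_1,\dots,\eigval_{n-1}\). Adjoining \(\eigket[n]\) yields an orthonormal basis of eigenstates of \(\measurement{A}\) for all of \(\hilbertspace\), and then \(\measurement{A}=\sum_{k=1}^n\eigval_k\eigket[k]\eigbra[k]\) because — by the computation from the first part — the two sides act identically on this basis; each \(\eigval_k=\braketwithop{a_k}{\measurement{A}}{a_k}\) is real for the same reason as above.

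The only step with genuine content — and hence the expected main obstacle — is the existence half of the spectral theorem: extracting the first eigenpair and verifying invariance of its orthogonal complement. Everything else is bookkeeping with orthonormal bases. Since the paper explicitly presupposes the relevant linear algebra and the proposition already cites \cite[Box~2.2]{nielsen2010:quantum}, a legitimate alternative would be to invoke that reference for the converse outright and devote the written proof only to the short Hermiticity and uniqueness arguments of the first statement.
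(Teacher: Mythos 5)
Your proposal is correct. Note that the paper itself supplies no proof of this proposition at all: it simply imports the result from \cite[Box~2.2]{nielsen2010:quantum}, which is precisely the fallback you identify in your closing paragraph. The argument you give in full — the direct verification plus determination-on-a-basis for the first claim, and the standard induction for the converse (extract one real eigenpair via algebraic closedness, check that the orthogonal complement of the eigenket is invariant under the Hermitian operator, restrict and recurse) — is the textbook proof of the finite-dimensional spectral theorem and is sound in every step, so it would serve as a self-contained substitute for the citation.
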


A \emph{projection operator}~\(\projection\) is a Hermitian operator such that \(\projection\projection=\projection\).
Due to its hermicity, a projection operator is an orthogonal projection: \(\inprod{\projection\fket}{\fket-\projection\fket}=\inprod{\projection\fket}{\fket}-\inprod{\projection\fket}{\projection\fket}=\inprod{\projection\fket}{\fket}-\inprod{\adjoint{\projection}\projection\fket}{\fket}=0\) for all~\(\fket\in\hilbertspace\).
Moreover, since for any eigenvalue~\(\lambda\in\spectrum{P}\) with corresponding eigenket~\(\fket\), \(\lambda\fket=\projection\fket=\projection^2\fket=\lambda^2\fket\), we see that \(\spectrum{P}\subseteq\set{0,1}\).

For any linear subspace~\(\eigspace\) of~\(\hilbertspace\), we'll denote by \(\projection[\eigspace]\) the unique projection operator~\(\projection\) whose range~\(\range(\projection)\coloneqq\cset{\projection\fket}{\fket\in\hilbertspace}\) is equal to~\(\eigspace\).
Given an \(m\)-dimensional linear subspace~\(\eigspace\) of~\(\hilbertspace\) and any orthonormal basis~\(\set{\basisket[1],\dots,\basisket[m]}\) for~\(\eigspace\), this projection operator can be written as \(\projection[\eigspace]=\sum_{k=1}^m\basisket[k]\basisbra[k]\); see \Cref{prop:basis:and:eigenvalues}.
If \(\eigspace[\eigval]\) is the eigenspace corresponding to the eigenvalue~\(\eigval\) of some Hermitian operator, then we'll also use the notation~\(\projection[\eigval]\) for the projection operator~\(\projection[{\eigspace[\eigval]}]\) with range~\(\eigspace[\eigval]\).

Projection operators allow us to represent any Hermitian operator in terms of its eigenvalues and eigenspaces.

\begin{corollary}[\protect{\cite[Box~2.2]{nielsen2010:quantum}}]\label{cor:basis:and:eigenvalues}
Consider \(m\leq n\) distinct real numbers~\(\eigval_1\), \dots, \(\eigval_m\) and let \(\eigspace[1]\), \dots, \(\eigspace[m]\) be orthogonal subspaces of an \(n\)-dimensional Hilbert space~\(\hilbertspace\) that span \(\hilbertspace\), then the Hermitian operator~\(\measurement{A}\coloneqq\sum_{k=1}^m\lambda_k\projection[{\eigspace[k]}]\) is the unique operator with eigenvalues~\(\eigval_1\), \dots, \(\eigval_m\) and corresponding eigenspaces~\(\eigspace[\eigval_1]=\eigspace[1]\), \dots, \(\eigspace[\eigval_m]=\eigspace[m]\).
Conversely, any Hermitian operator~\(\measurement{A}\in\measurements\) can be uniquely written as \(\measurement{A}=\sum_{k=1}^m\eigval_k\projection[{\eigspace[\eigval_k]}]\), with~\(\eigval_1\), \dots, \(\eigval_m\) its distinct real eigenvalues and with orthogonal corresponding eigenspaces~\(\eigspace[\eigval_1]\), \dots, \(\eigspace[\eigval_m]\) that span \(\hilbertspace\).
\end{corollary}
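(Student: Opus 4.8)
The plan is to reduce both directions to \Cref{prop:basis:and:eigenvalues}, together with the observation recorded just above the statement that the projection operator onto an $m$-dimensional subspace equals $\sum_{k=1}^{m}\basisket[k]\basisbra[k]$ for any orthonormal basis $\set{\basisket[1],\dots,\basisket[m]}$ of that subspace. For the direct part, I would start from the mutually orthogonal subspaces $\eigspace[1],\dots,\eigspace[m]$ that span $\hilbertspace$ and concatenate orthonormal bases of the individual $\eigspace[k]$; since the $\eigspace[k]$ are orthogonal and spanning, this produces an orthonormal basis of $\hilbertspace$. Assigning the real number $\lambda_k$ to every basis vector that originates in $\eigspace[k]$ and applying \Cref{prop:basis:and:eigenvalues} yields a Hermitian operator whose eigenket expansion, regrouped by the value $\lambda_k$ and rewritten with the projection-operator formula above, is exactly $\measurement{A}=\sum_{k=1}^{m}\lambda_k\projection[{\eigspace[k]}]$; in particular $\measurement{A}\in\measurements$.

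Next I would pin down the spectrum and the eigenspaces of $\measurement{A}$. Fixing $k$, every ket of $\eigspace[k]$ is a linear combination of the corresponding basis eigenkets, each of which $\measurement{A}$ scales by $\lambda_k$, so $\eigspace[k]\subseteq\eigspace[\lambda_k]$. Conversely, for any eigenket $\gket\neq0$ of $\measurement{A}$ with eigenvalue $\eigval$, decompose $\gket=\sum_{k}\gket[k]$ along the orthogonal direct sum $\hilbertspace=\bigoplus_k\eigspace[k]$; then $\measurement{A}\gket=\sum_k\lambda_k\gket[k]$ while $\eigval\gket=\sum_k\eigval\gket[k]$, and uniqueness of the decomposition forces $(\lambda_k-\eigval)\gket[k]=0$ for every $k$. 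Because the $\lambda_k$ are \emph{distinct}, this leaves $\eigval=\lambda_k$ for exactly one $k$ and $\gket=\gket[k]\in\eigspace[k]$, so $\spectrum{A}=\set{\lambda_1,\dots,\lambda_m}$ and $\eigspace[\lambda_k]=\eigspace[k]$. Uniqueness is then immediate: any operator with the prescribed eigenvalues and eigenspaces acts as multiplication by $\lambda_k$ on $\eigspace[k]$, and since the $\eigspace[k]$ span $\hilbertspace$, it is thereby determined.

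For the converse, I would invoke the converse half of \Cref{prop:basis:and:eigenvalues} to write an arbitrary $\measurement{A}\in\measurements$ as $\measurement{A}=\sum_{j=1}^{n}\mu_j\eigket[j]\eigbra[j]$ with $\set{\eigket[1],\dots,\eigket[n]}$ an orthonormal basis of eigenstates and $\mu_1,\dots,\mu_n\in\reals$ (possibly repeated). Taking $\lambda_1,\dots,\lambda_m$ to be the distinct values among the $\mu_j$ and letting $\eigspace[\lambda_k]$ be the span of the $\eigket[j]$ with $\mu_j=\lambda_k$, these subspaces are mutually orthogonal, span $\hilbertspace$, and satisfy $\projection[{\eigspace[\lambda_k]}]=\sum_{j\colon\mu_j=\lambda_k}\eigket[j]\eigbra[j]$, so regrouping gives $\measurement{A}=\sum_{k=1}^{m}\lambda_k\projection[{\eigspace[\lambda_k]}]$. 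The direct part just established then certifies that the $\lambda_k$ are exactly the distinct eigenvalues of $\measurement{A}$ and that each $\eigspace[\lambda_k]$ is the full eigenspace for $\lambda_k$; since eigenvalues and eigenspaces are intrinsic to $\measurement{A}$, this decomposition is unique.

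The only genuinely delicate point is the upgrade from $\eigspace[k]\subseteq\eigspace[\lambda_k]$ to equality --- equivalently, excluding spurious eigenvalues and eigenkets --- which is precisely where the distinctness of the $\lambda_k$ and the spanning property of the $\eigspace[k]$ are used, via uniqueness of the orthogonal decomposition $\hilbertspace=\bigoplus_k\eigspace[k]$. Everything else is routine bookkeeping: concatenating orthonormal bases and regrouping a finite sum.
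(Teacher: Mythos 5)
Your argument is correct and complete: reducing both directions to \Cref{prop:basis:and:eigenvalues} by concatenating orthonormal bases of the orthogonal subspaces, and using the uniqueness of the orthogonal decomposition together with the distinctness of the \(\lambda_k\) to upgrade \(\eigspace[k]\subseteq\eigspace[\lambda_k]\) to equality, is exactly the standard spectral-decomposition argument. The paper itself offers no proof of this corollary --- it simply cites \cite[Box~2.2]{nielsen2010:quantum} --- so there is nothing to compare against, but your write-up would serve as a valid self-contained justification.
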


Postulate~\cref{post:qm:eigenvalues} fixes~\(\spectrum{A}\) as the set of all possible outcomes of a measurement~\(\measurement{A}\), but says nothing about which of these outcomes will actually be obtained.
The following postulate resolves this uncertainty in an important particular case.

\begin{QMpost}[QM]\label{post:qm:deterministic}
Performing a measurement with Hermitian operator~\(\measurement{A}\in\measurements\) on a quantum system in an eigenstate~\(\eigket\) of~\(\measurement{A}\) results in the corresponding eigenvalue~\(\lambda\) being measured with certainty.
\end{QMpost}

In quantum mechanics, measurements seldom leave the state untouched, as is made clear by the following postulate.
We won't rely on it for our main argument in \Cref{sec:our:approach} (and \Cref{sec:desirability,sec:lower:upper:previsions,sec:coherent:previsions,sec:proofs}).
But since it allows for ways to make sure that a quantum system is in a known state --- by performing a measurement with an outcome that corresponds to a one-dimensional eigenspace --- it will enable us in \Cref{sec:born} to consider a special case where we can compare our approach to earlier discussions by \citeauthor{deutsch1999:quantum:decisions} \cite{deutsch1999:quantum:decisions} and \citeauthor{wallace2003:defending:deutsch} \cite{ wallace2003:defending:deutsch,wallace2007:improving:deutsch,wallace2009:born:arxiv}.

\begin{QMpost}[QM]\label{post:qm:after:measurement}
If a measurement with Hermitian operator~\(\measurement{A}\in\measurements\) on a quantum system in a state~\(\gket\in\statespace\) results in the eigenvalue~\(\eigval\), then the state~\(\fket\) of the system immediately after the measurement is given by\footnote{This \(\fket\) is the normalised version of~\(\projection[\eigval]\gket\).}
\[
\fket
=\frac{\projection[\eigval]\gket}{\sqrt{\gbraketwithop{\projection[\eigval]}}}.
\]
\end{QMpost}

The last non-probabilistic postulate describes the dynamics of a quantum-mechanical system --- how its state changes over time.

\begin{QMpost}[QM]
The temporal evolution of the state~\(\ket{\psi(t)}\) as a function of the time \(t\), is governed by the Schrödinger equation:\footnote{Here, \(\hbar=\nicefrac{h}{2\pi}\), where \(h\) is Planck's constant.}
\[
\im\hbar\frac{\mathrm{d}}{\mathrm{d}t}\ket{\psi(t)}
=\operator{H}(t)\ket{\psi(t)},
\]
where \(\operator{H}(t)\) is the (possibly time-dependent) Hamiltonian of the system.
\end{QMpost}

\noindent As this paper is focused on the uncertainty related to measurements in quantum mechanics, the dynamical aspects will have little or no bearing on the present discussion, but we still include this postulate for the sake of completeness.\footnote{Of course, the dynamics will become relevant when we want to account for how the uncertainty changes with time, which we'll leave for future work.}

\subsection{Probabilities in quantum mechanics}\label{sec:quantum:mechanics:probabilistic}
As we've already mentioned, Postulate~\cref{post:qm:eigenvalues} tells us what the possible outcomes of a measurement are, but tells us nothing about which of these possible outcomes will actually be observed.
Postulate~\cref{post:qm:deterministic} resolves the remaining uncertainty whenever the quantum state belongs to an eigenspace of the measurement, but fails to do so in more general cases, where the quantum state is a \emph{superposition} --- a linear combination --- of kets belonging to different eigenspaces.
In classical accounts of quantum mechanics, this is where probabilities come into play.
The final postulate, also known as \emph{Born's rule}, attaches a specific probability to each possible outcome.

\begin{QMpost}[QM]\label{post:qm:born}
When a measurement with corresponding Hermitian operator~\(\measurement{A}\) on~\(\hilbertspace\) is executed on a system in state~\(\gket\in\statespace\), then the probability~\(\condprob{\eigval}{\gket}\) of measuring the eigenvalue~\(\eigval\) of~\(\measurement{A}\) is given by
\begin{equation}\label{eq:qm:born}
\condprob{\eigval}{\gket}\coloneqq\gbraketwithop{\projection[\eigval]}.
\end{equation}
\end{QMpost}
\noindent This is in similar spirit to Postulate~\cref{post:qm:deterministic}, where if the system resides in the eigenstate~\(\eigket\), the probability of observing the corresponding eigenvalue~\(\eigval\) is \(1\).

As an immediate consequence of this postulate, the linearity properties of the inner product and \cref{cor:basis:and:eigenvalues}, the \emph{expected outcome}~\(\condexpec{\measurement{A}}\) of the measurement~\(\measurement{A}\) is then given by
\begin{align}
\condexpec{\measurement{A}}
&=\smashoperator[r]{\sum_{\eigval\in\spectrum{A}}}\lambda\condprob{\eigval}{\gket}
=\smashoperator[r]{\sum_{\eigval\in\spectrum{A}}}\lambda\gbraketwithop{\projection[\eigval]}\notag
=\gbraketwithop[\bigg]{\smashoperator[r]{\sum_{\eigval\in\spectrum{A}}}\lambda\projection[\eigval]}\\
&=\gbraketwithop{\measurement{A}}.
\label{eq:qm:conditional:expectation}
\end{align}
Born's rule, in the guise of \cref{eq:qm:born,eq:qm:conditional:expectation},\footnote{\Cref{eq:qm:born} can be seen as a special case of \cref{eq:qm:conditional:expectation}, for the specific choice~\(\measurement{A}\instantiateas\projection[\eigval]\). We'll come across an even more general version, or formulation, further on in \cref{eq:qm:expectation}.} postulates the existence of such probabilities (and expectations), but it leaves open the question of how they should be interpreted.
In the many interpretations of quantum mechanics, they tend to acquire a different meaning and/or justification.

Some interpretations of quantum mechanics, such as the Copenhagen interpretation \cite{faye2019:copenhagen}, interpret the Born probabilities as physical and frequentist: they tend to insist that these probabilities, and their interpretation as limit frequencies, have been corroborated by numerous experiments, and that they're physical, objective properties attached to the system; on such views, the universe is indeterministic.

Other interpretations, such as QBism \cite{fuchs2014:introduction}, insist that these probabilities are epistemic and that they characterise some subject's knowledge, or the lack thereof, about the quantum system.
This view is compatible with an agnostic stance about whether the universe is deterministic.

The Everettian world view is deterministic, which might lead us to suspect that the Born probabilities there must necessarily have some epistemic interpretation.
Indeed, Deutsch \cite{deutsch1999:quantum:decisions} and later Wallace \cite{wallace2003:defending:deutsch,wallace2007:improving:deutsch,wallace2009:born:arxiv}, have advanced the claim that under certain (rationality) assumptions, and taking into account the non-probabilistic quantum mechanical postulates, any rational decision-maker must bet on the possible outcomes of a measurement using the betting rates supplied by Born's rule.
In this sense, on the Everettian view, their claim is that Born's rule follows from the non-probabilistic postulates and basic tenets of rational decision-making.
We'll come back to Deutsch and Wallace's arguments in later sections.

Leaving aside the different possible interpretations of the Born probabilities for now, we want to stress that the uncertainty about the outcome of a measurement when the system is in a given state~\(\gket\), isn't the only type of uncertainty that can be considered in quantum mechanics: it's also possible that we have imperfect knowledge about the actual state that the system resides in.
In the traditional formulation of quantum mechanics, such so-called \emph{epistemic uncertainty} about the system state is described by so-called \emph{mixed states}.
It's essential for the later discussion that we recall here in some detail how it can be represented mathematically and how this mathematical representation can be combined with Born's rule in the shape of \cref{eq:qm:conditional:expectation}.

A quantum system is said to be in a \emph{mixed state} if it's believed to be in one of a finite number of possible quantum states~\(\gket[1],\gket[2],\dots,\gket[m]\in\statespace\), with respective probabilities~\(p_1,p_2,\dots,p_m\), where~\(m\in\naturals\).
These probabilities~\(p_k\) therefore describe epistemic uncertainty about the actual state the system is in.

The outcome of a measurement on a system in a mixed state is uncertain because there is, on the one hand, this epistemic uncertainty about which of the possible states that system is in, and because in each of these possible states, the outcome of the measurement will be uncertain according to Postulate~\cref{post:qm:born}.
To distinguish between these two types of uncertainty, we'll also refer to the uncertainty about the outcome of a measurement on a system in a given pure state as \emph{quantum-mechanical}.\footnote{Depending on the interpretation of quantum mechanics, quantum-mechanical uncertainty may be seen as physical, or also as epistemic; see the discussion above.}

It turns out that, as we'll explain below, the epistemic uncertainty associated with a mixed state can be very conveniently described using its so-called density operator
\begin{equation}\label{eq:qm:density:operator}
\density\coloneqq\sum_{k=1}^mp_k\gket[k]\gbra[k].
\end{equation}
Generally speaking, a \emph{density operator}~\(\density\) is defined as a Hermitian operator such that \(\trace{\density}=1\)  and \(\density\geq\zero\).
We'll denote the convex set of all density operators on~\(\hilbertspace\) by~\(\densities\), a subset of~\(\measurements\).

To explain the notations we've just used, recall that the \emph{trace}~\(\trace{\measurement{A}}\) of a Hermitian operator~\(\measurement{A}\) is the sum of its eigenvalues.
Given any orthonormal basis~\(\set{\basisket[1],\dots,\basisket[n]}\) for the \(n\)-dimensional Hilbert space~\(\hilbertspace\), the trace can also be written as \(\trace{\measurement{A}}=\sum_{k=1}^n\basisbra[k]\measurement{A}\basisket[k]\).
Moreover, a Hermitian operator~\(\measurement{A}\in\measurements\) is called \emph{positive semidefinite}, which we'll write as \(\measurement{A}\geq\zero\), if \(\gbraketwithop{\measurement{A}}\geq0\) for all~\(\gket\in\hilbertspace\), or equivalently, if all its eigenvalues are non-negative.

It isn't hard to see that the operator~\(\density\) in~\cref{eq:qm:density:operator} is indeed a density operator.
As a matter of fact, any density operator can be rewritten as some such finite convex mixture of projection operators based on states.
But, such finite convex mixtures aren't unique in determining the density operator~\(\density\).

\begin{proposition}[\protect{\cite[Problem 10.1]{kitaev2002:classical}}]\label{prop:qm:density}
An operator~\(\density\) is a density operator if and only if there are states~\(\gket[1],\gket[2],\dots,\gket[m]\in\statespace\) and real numbers~\(\con_1,\con_2,\dots,\con_m\in[0,1]\) such that \(\sum_{k=1}^m\con_k=1\) and \(\density=\sum_{k=1}^m\con_k\gket[k]\gbra[k]\).
\end{proposition}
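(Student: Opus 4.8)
The plan is to prove the two implications separately, beginning with the easy ``if'' direction. Suppose $\density = \sum_{k=1}^m \con_k \gket[k]\gbra[k]$ with $\con_1,\dots,\con_m \in [0,1]$ summing to~$1$ and each $\gket[k] \in \statespace$. Since every operator $\gket[k]\gbra[k]$ is Hermitian (as observed just before \Cref{prop:basis:and:eigenvalues}), $\density$ is a real linear combination of Hermitian operators and hence itself Hermitian. For the trace, I would invoke its linearity together with $\trace{\gket[k]\gbra[k]} = \braket{\psi_k}{\psi_k} = 1$ to conclude $\trace{\density} = \sum_{k=1}^m \con_k = 1$. For positive semidefiniteness, I would compute, for an arbitrary $\fket \in \hilbertspace$,
\[
\fbraketwithop{\density} = \sum_{k=1}^m \con_k \braket{\phi}{\psi_k}\braket{\psi_k}{\phi} = \sum_{k=1}^m \con_k \abs{\braket{\psi_k}{\phi}}^2 \geq 0,
\]
so that $\density \possemidef \zero$; this establishes that $\density$ is a density operator.

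For the ``only if'' direction, the key tool is the spectral decomposition. Given a density operator~$\density$, \Cref{prop:basis:and:eigenvalues} allows me to write $\density = \sum_{k=1}^n \eigval_k \eigket[k]\eigbra[k]$ with $\set{\eigket[1],\dots,\eigket[n]}$ an orthogonal basis of eigenstates of~$\density$ and $\eigval_1,\dots,\eigval_n \in \reals$ the corresponding eigenvalues. Positive semidefiniteness of~$\density$ means all its eigenvalues are non-negative, so $\eigval_k \geq 0$ for every~$k$; and $\trace{\density} = \sum_{k=1}^n \eigval_k = 1$ then forces each $\eigval_k \leq 1$ as well, so $\eigval_k \in [0,1]$. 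Setting $m \coloneqq n$, $\con_k \coloneqq \eigval_k$ and $\gket[k] \coloneqq \eigket[k]$ --- each of which is indeed a state, since our convention has eigenstates normalised --- yields the representation in the statement.

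There is no genuine obstacle here: the content of the proposition is essentially the spectral theorem for Hermitian operators, already packaged in \Cref{prop:basis:and:eigenvalues}, combined with the elementary observation that non-negative eigenvalues summing to one must each lie in $[0,1]$. The only point worth keeping in mind is that the decomposition asserted in the statement is deliberately non-unique, so no canonical choice is called for --- the eigendecomposition is merely one convenient witness among many, and in particular one may freely drop the terms with $\con_k = 0$ if a strictly convex combination is preferred.
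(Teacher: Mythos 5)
Your proof is correct and is exactly the standard argument: the ``if'' direction by direct verification of Hermiticity, unit trace and positive semidefiniteness, and the ``only if'' direction via the spectral decomposition of \Cref{prop:basis:and:eigenvalues} together with the observation that non-negative eigenvalues summing to one lie in $[0,1]$. The paper itself gives no proof, deferring to the cited reference, so there is nothing to contrast with; your write-up fills that gap correctly.
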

\noindent The so-called \emph{pure states} are then (degenerate) special cases: the density matrix~\(\density\) corresponding to a pure state~\(\gket\in\statespace\) is the projection operator~\(\density=\gket\gbra\) on that state.

Let's now argue briefly why the concept of a density operator~\(\density\) associated with a mixed state in~\cref{eq:qm:density:operator}, is so useful.

Consider any Hermitian measurement operator~\(\measurement{A}\in\measurements\) with eigenvalues \(\eigval_1,\dots,\eigval_m\) and corresponding respective eigenspaces~\(\eigspace[\eigval_1],\dots,\eigspace[\eigval_m]\) and corresponding respective projection operators~\(\projection[\eigval_1],\dots,\projection[\eigval_m]\), then we gather from \cref{cor:basis:and:eigenvalues} that \(\measurement{A}=\sum_{\ell=1}^m\eigval_\ell\projection[\eigval_\ell]\).
We now perform the measurement~\(\measurement{A}\) on the quantum system in its mixed state.

First, it follows from a straightforward application of the Law of Total Probability that the probability \(\prob{\eigval_\ell}\) of observing eigenvalue~\(\eigval_\ell\) is given by
\begin{align}
\prob{\eigval_\ell}
&=\sum_{k=1}^m\condprob{\eigval_\ell}{\gket[k]}p_k
=\sum_{k=1}^mp_k\gbra[k]\projection[\eigval_\ell]\gket[k]
=\sum_{k=1}^mp_k\trace[\big]{\gket[k]\gbra[k]\projection[\eigval_\ell]}\notag\\
&=\trace{\density\projection[\eigval_\ell]},
\label{eq:qm:probabilities}
\end{align}
where the second equality follows from Born's rule [\cref{post:qm:born}] in its simplest form~\eqref{eq:qm:born}; the third equality follows from the cyclic nature of the trace;\footnote{We'll have occasion to use this idea a number of times. It's based on the observation that \(\trace{\measurement{A}\measurement{B}}=\trace{\measurement{B}\measurement{A}}\); see \cite[Sec.~2.1.8]{nielsen2010:quantum} for more details.} and the final equality follows from the linearity of the trace.

As we now know the probabilities~\(\prob{\eigval_\ell}\) of the different possible outcomes~\(\eigval_\ell\) of the measurement~\(\measurement{A}\), it's a simple matter to calculate the corresponding \emph{expected outcome}:
\begin{align}
\expec[\density]{\measurement{A}}
\coloneqq&\sum_{\ell=1}^m\eigval_\ell\prob{\eigval_\ell}
=\sum_{\ell=1}^m\eigval_\ell\trace{\density\projection[\eigval_\ell]}
=\trace[\bigg]{\sum_{\ell=1}^m\eigval_\ell\density\projection[\eigval_\ell]}
=\trace[\bigg]{\density\sum_{\ell=1}^m\eigval_\ell\projection[\eigval_\ell]}\notag\\
=&\trace{\density\measurement{A}},
\label{eq:qm:expectation}
\end{align}
where the second equality follows from~\cref{eq:qm:probabilities} and the third equality from the linearity of the trace.
This expression for the expectation operator on measurements associated with a mixed state, or a density operator, is also often referred to as \emph{Born's rule}, and it will provide an anchor point for connecting our developments starting in \cref{sec:our:approach} with the more traditional approach in quantum mechanics.
This is the version of Born's rule in the presence of \emph{epistemic uncertainty} about the state of the system, as captured by the density operator~\(\density\).
The version of \cref{eq:qm:conditional:expectation} in the presence of a specific state~\(\gket\), can be recovered from it by letting~\(\density\instantiateas\gket\gbra\), as \(\trace{\gket\gbra\measurement{A}}=\trace{\gbra\measurement{A}\gket}=\gbraketwithop{\measurement{A}}\).\footnote{Here and further on, we use the symbol `\(\instantiateas\)' to denote `instantiates as' or `is replaced by'.}
So, \cref{eq:qm:expectation} combines the \emph{quantum-mechanical uncertainty} about the outcome of a measurement on a system in a pure state, as expressed in \cref{eq:qm:conditional:expectation}, with the \emph{epistemic uncertainty} about the state of the system, captured by the density operator~\(\density\).

\subsection{A few elementary results about Hermitian operators}\label{sec:hermitian:operators:basics}
We conclude this introductory section with a number of basic results about Hilbert spaces and Hermitian operators that will come in useful in later sections, and whose formulations (and proofs) we want to separate off in the interest of didactic clarity.

We begin with a number of elementary lemmas.
The first lemma provides a necessary and sufficient condition for kets to be normalised.

\begin{lemma}[\protect{\cite[p.~67]{nielsen2010:quantum}}]\label{lem:normalisation}
Let \(\eigket[1],\eigket[2],\dots,\eigket[m]\) be any collection of mutually orthogonal states in a Hilbert space~\(\hilbertspace\).
Then any~\(\gket\in\hilbertspace\) is normal if (and only if when \(m=n\)) there are \(\con_1,\dots,\con_m\in\complexes\) such that \(\sum_{k=1}^m\vert\con_k\vert^2=1\) and \(\gket=\sum_{k=1}^m\con_k\eigket[k]\).
\end{lemma}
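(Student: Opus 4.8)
The plan is to prove both implications by a single direct computation with the inner product, using only that the kets $\eigket[1], \dots, \eigket[m]$ are mutually orthogonal and normal, so that $\eigbrakets{j}{k}$ equals $1$ when $j = k$ and $0$ otherwise. The key observation driving everything is that, for any $\con_1, \dots, \con_m \in \complexes$, expanding the inner product by (conjugate-)linearity in its two arguments gives $\braket{\psi}{\psi} = \sum_{j=1}^m \sum_{k=1}^m \con_j^* \con_k \eigbrakets{j}{k} = \sum_{k=1}^m \abs{\con_k}^2$ whenever $\gket = \sum_{k=1}^m \con_k \eigket[k]$.

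First I would treat the \enquote{if} direction, which holds for arbitrary $m$. Assuming $\gket = \sum_{k=1}^m \con_k \eigket[k]$ with $\sum_{k=1}^m \abs{\con_k}^2 = 1$, the displayed identity immediately yields $\braket{\psi}{\psi} = \sum_{k=1}^m \abs{\con_k}^2 = 1$, so $\gket$ is normal. For the converse — claimed only when $m = n$ — I would first note that $n$ mutually orthogonal normal (hence nonzero) kets in the $n$-dimensional space $\hilbertspace$ are automatically linearly independent, and therefore form an orthonormal basis of $\hilbertspace$. Consequently every $\gket \in \hilbertspace$ has a (unique) expansion $\gket = \sum_{k=1}^n \con_k \eigket[k]$ with coefficients $\con_k = \braket{a_k}{\psi} \in \complexes$, and the same identity, read from left to right, turns $\braket{\psi}{\psi} = 1$ into $\sum_{k=1}^n \abs{\con_k}^2 = 1$, as wanted.

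There is no genuine obstacle in this argument; the only point that warrants care is the parenthetical hypothesis $m = n$, which is exactly what guarantees that an arbitrary $\gket$ lies in $\linspanof{\set{\eigket[1], \dots, \eigket[m]}}$. Without it the converse can fail — any normal ket outside that span is a counterexample — and it is worth structuring the write-up so that the role of this hypothesis is made transparent.
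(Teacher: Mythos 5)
Your proof is correct and complete: the orthonormal expansion of $\braket{\psi}{\psi}$ handles the \enquote{if} direction for arbitrary $m$, and the observation that $n$ mutually orthogonal states form an orthonormal basis of the $n$-dimensional space correctly isolates why the converse needs $m=n$. The paper itself supplies no proof for this lemma (it is cited from Nielsen and Chuang), and your argument is exactly the standard one that the citation points to.
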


\noindent Our second lemma shows that it's possible to write any state as an equal-amplitude superposition of some appropriate collection of orthogonal states.

\begin{lemma}\label{lem:uniform:expansion:in:basis}
If \(\gket\) is a state in an \(n\)-dimensional Hilbert space~\(\hilbertspace\), then for all~\(m\in\set{1,2,\dots,n}\), there are mutually orthogonal states~\(\gket[1]\), \(\gket[2]\), \dots, \(\gket[m]\) in~\(\statespace\) such that \(\gket=\frac{1}{\sqrt{m}}\sum_{k=1}^m\gket[k]\).
\end{lemma}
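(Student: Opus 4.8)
The plan is to prove this by induction on $m$, the claim being trivial for $m=1$ (take $\gket[1]\coloneqq\gket$). For the inductive step, I would first handle the key case $m=2$ directly and then bootstrap. So suppose $\gket$ is a state in an $n$-dimensional Hilbert space with $n\geq 2$; I want orthogonal states $\gket[1],\gket[2]$ with $\gket=\tfrac{1}{\sqrt{2}}(\gket[1]+\gket[2])$. Geometrically, if $\gket[1]$ and $\gket[2]$ are unit vectors, then $\tfrac{1}{\sqrt{2}}(\gket[1]+\gket[2])$ has squared norm $\tfrac{1}{2}(2+2\operatorname{Re}\braket{\psi_1}{\psi_2})$, which equals $1$ precisely when $\operatorname{Re}\braket{\psi_1}{\psi_2}=0$; orthogonality is a (slightly stronger but) convenient sufficient condition. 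Concretely, pick any state~$\altteigket[]$ orthogonal to~$\gket$ (possible since $n\geq 2$), and set $\gket[1]\coloneqq\cos\theta\,\gket+\sin\theta\,\altteigket[]$ and $\gket[2]\coloneqq\cos\theta\,\gket-\sin\theta\,\altteigket[]$. Then $\gket[1]$ and $\gket[2]$ are states (unit vectors, by \cref{lem:normalisation} with coefficients $\cos\theta,\pm\sin\theta$), their inner product is $\cos^2\theta-\sin^2\theta$, which vanishes for $\theta=\pi/4$, and $\tfrac{1}{2}(\gket[1]+\gket[2])=\cos\theta\,\gket$, so dividing by $\cos(\pi/4)=\tfrac{1}{\sqrt2}$ gives $\tfrac{1}{\sqrt2}(\gket[1]+\gket[2])=\gket$ after rescaling — more precisely $\gket=\tfrac{1}{\sqrt2}\bigl(\sqrt2\cos\theta\,\gket[1]/\ldots\bigr)$; I will arrange the constants so the final identity reads exactly $\gket=\tfrac{1}{\sqrt{2}}\sum_{k=1}^2\gket[k]$.

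For general $m\leq n$, I would split off one summand and recurse. Given the result for $m-1$ in any Hilbert space of dimension $\geq m-1$, write $\gket=\tfrac{1}{\sqrt{m-1}}\sum_{k=1}^{m-1}\altteigket[k]$ with the $\altteigket[k]$ mutually orthogonal states. The idea is to replace the last term $\altteigket[m-1]$ by $\tfrac{1}{\sqrt2}$ times a sum of two further orthogonal states, but the naive split does not rescale correctly — I need all $m$ final coefficients equal to $\tfrac{1}{\sqrt m}$, not a mix of $\tfrac{1}{\sqrt{m-1}}$ and smaller ones. So the cleaner route is a \emph{direct} construction rather than a one-step recursion: choose an orthonormal system $\altteigket[1],\dots,\altteigket[m-1]$ together with $\gket$ spanning an $m$-dimensional subspace (using $m\leq n$), and look for states of the form $\gket[k]=\tfrac{1}{\sqrt m}\,\gket+\sum_{j=1}^{m-1}c_{kj}\altteigket[j]$ for $k=1,\dots,m$. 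Summing gives $\sum_k\gket[k]=\sqrt m\,\gket+\sum_j\bigl(\sum_k c_{kj}\bigr)\altteigket[j]$, so I need $\sum_k c_{kj}=0$ for each $j$. Normality of each $\gket[k]$ requires $\tfrac{1}{m}+\sum_j|c_{kj}|^2=1$, i.e. each row of the coefficient matrix $C=(c_{kj})$ has squared norm $\tfrac{m-1}{m}$. Orthogonality $\braket{\psi_k}{\psi_l}=0$ for $k\neq l$ requires $\tfrac{1}{m}+\sum_j c_{kj}^*c_{lj}=0$.

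These constraints say exactly that the $m$ vectors $v_k\coloneqq(\tfrac{1}{\sqrt m},c_{k1},\dots,c_{k,m-1})\in\complexes^m$ are orthonormal and their sum telescopes appropriately; equivalently, I want an $m\times m$ unitary matrix whose first column is $(\tfrac{1}{\sqrt m},\dots,\tfrac{1}{\sqrt m})^{\mathsf T}$. Such a unitary exists — extend the unit vector $\tfrac{1}{\sqrt m}(1,\dots,1)^{\mathsf T}$ to an orthonormal basis of $\complexes^m$ (Gram–Schmidt) and take these as columns; the rows $v_k$ are then automatically orthonormal with first entry $\tfrac{1}{\sqrt m}$, and the condition $\sum_k c_{kj}=0$ for $j\geq 1$ is the statement that rows $2,\dots,m$ of the first column... wait, it is the statement that each non-first column is orthogonal to the first column, which holds since the columns are orthonormal. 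Setting $c_{kj}$ to be the $(k,j{+}1)$ entry and defining $\gket[k]$ as above then yields the required decomposition. I expect the main obstacle to be purely bookkeeping: keeping the index ranges straight (the $m$ states live in an $m$-dimensional subspace, needing $m\leq n$), verifying normality and mutual orthogonality via \cref{lem:normalisation} and a short inner-product computation, and confirming $\sum_{k=1}^m\gket[k]=\sqrt m\,\gket$ so that division by $\sqrt m$ gives the statement. The Hadamard-type matrix for $m=2$ (rows $\tfrac{1}{\sqrt2}(1,1)$ and $\tfrac{1}{\sqrt2}(1,-1)$) recovers the $\theta=\pi/4$ construction above and is a good sanity check.
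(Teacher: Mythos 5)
Your argument is correct. At bottom, both you and the paper reduce the lemma to exhibiting an \(m\times m\) unitary matrix one of whose columns is the constant vector \(\tfrac{1}{\sqrt m}(1,\dots,1)^{\mathsf T}\): the rows of such a matrix, used as coefficient vectors over an orthonormal system containing \(\gket\), are the desired mutually orthogonal states, and the constancy of that column is exactly what makes \(\tfrac{1}{\sqrt m}\sum_{k=1}^m\gket[k]\) collapse back to \(\gket\). The difference lies in how the matrix is produced. The paper writes one down explicitly --- the discrete Fourier matrix with entries \(\tfrac{1}{\sqrt m}e^{2\pi\im sk/m}\), whose \(k=m\) column is constant --- and verifies orthonormality by summing finite geometric series of roots of unity. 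You instead extend \(\tfrac{1}{\sqrt m}(1,\dots,1)^{\mathsf T}\) to an orthonormal basis of \(\complexes^m\) by Gram--Schmidt and invoke the fact that a square matrix with orthonormal columns also has orthonormal rows; column orthogonality then delivers \(\sum_k c_{kj}=0\) for free. Your route is existential where the paper's is constructive, trading the roots-of-unity computation for two standard linear-algebra facts; both are complete proofs. As you note yourself, once the direct construction is in place the inductive framing and the \(m=2\) warm-up are dispensable (and the attempted one-step recursion from \(m-1\) to \(m\) would indeed not rescale correctly, so it is right that you abandoned it).
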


\begin{proof}
Let \(\set{\basisket[1],\basisket[2],\dots,\basisket[n]}\) be any orthonormal basis of~\(\hilbertspace\) with \(\basisket[m]\coloneqq\gket\), and let
\[
\gket[s]
\coloneqq\frac{1}{\sqrt{m}}\sum_{k=1}^me^{2\pi\im\frac{sk}{m}}\basisket[k],
\text{ for~\(s\in\set{1,\dots,m}\).}
\]
Then by \cref{lem:normalisation}, \(\gket[s]\) has norm~\(1\) for all~\(s\in\set{1,2,\dots,m}\).
For all~\(r,s\in\set{1,\dots,m}\), we use the conjugate symmetry of the inner product to find that
\begin{align*}
\gbrakets{r}{s}
&=\group[\bigg]{\frac{1}{\sqrt{m}}\sum_{k=1}^me^{-2\pi\im\frac{rk}{m}}\basisbra[k]}
\group[\bigg]{\frac{1}{\sqrt{m}}\sum_{\ell=1}^me^{2\pi\im\frac{s\ell}{m}}\basisket[\ell]}
=\frac{1}{m}\sum_{k=1}^m\sum_{\ell=1}^m
e^{-2\pi\im\frac{rk}{m}}e^{2\pi\im\frac{s\ell}{m}}\basisbrakets{k}{\ell}\\
&=\frac{1}{m}\sum_{k=1}^m\sum_{\ell=1}^m
e^{-2\pi\im\frac{rk}{m}}e^{2\pi\im\frac{s\ell}{m}}\delta_{k\ell}
=\frac{1}{m}\sum_{k=1}^me^{-2\pi\im\frac{rk}{m}}e^{2\pi\im\frac{sk}{m}}
=\frac{1}{m}\sum_{k=1}^me^{2\pi\im\frac{(s-r)k}{m}}
=\delta_{rs},
\end{align*}
where the second equality is explained by the linearity properties of the inner product, the third equality follows from the orthonormality of the basis~\(\set{\basisket[1],\basisket[2],\dots,\basisket[n]}\) and the last equality identifies the sum of the terms of a finite geometric sequence.
Therefore, \(\gket[r]\) and \(\gket[s]\) are orthogonal for~\(r\neq s\), and since they all have norm~\(1\), they're normal.

Finally, after identifying once again the sum of the terms of a finite geometric sequence, we find that
\begin{align*}
\frac{1}{\sqrt{m}}\sum_{s=1}^m\gket[s]
&=\frac{1}{m}\sum_{s=1}^m\sum_{k=1}^me^{2\pi\im\frac{sk}{m}}\basisket[k]
=\frac{1}{m}\sum_{k=1}^m\group[\bigg]{\sum_{s=1}^me^{2\pi\im\frac{sk}{m}}}\basisket[k]
=\frac{1}{m}\sum_{k=1}^mm\delta_{km}\basisket[k]
=\basisket[m]\\
&=\gket.
\qedhere
\end{align*}
\end{proof}

\noindent The final lemma provides a useful alternative expression for~\(\gbra\measurement{A}\gket\).

\begin{lemma}\label{lem:quadratic:form:in:basis}
Consider any collection~\(\eigket[1],\dots,\eigket[m]\) of mutually orthogonal eigenstates of a Hermitian operator~\(\measurement{A}\) on a Hilbert space~\(\hilbertspace\) with corresponding eigenvalues \(\eigval_1,\dots,\eigval_m\), and any linear combination~\(\gket\coloneqq\sum_{k=1}^m\con_k\eigket[k]\) of these eigenstates with~\(\con_1,\dots,\con_m\in\complexes\) and such that \(\sum_{k=1}^m\modulus{\con_k}^2=1\).
Then \(\gket\) is normal and \(\gbra\measurement{A}\gket=\sum_{k=1}^m\modulus{\con_k}^2\eigval_k\).
\end{lemma}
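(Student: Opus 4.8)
The plan is to split the claim into its two halves, both of which are essentially immediate consequences of material already in hand. For the normality of~\(\gket\), I would simply invoke \cref{lem:normalisation}: since \(\eigket[1],\dots,\eigket[m]\) are mutually orthogonal states and \(\gket=\sum_{k=1}^m\con_k\eigket[k]\) with \(\sum_{k=1}^m\modulus{\con_k}^2=1\), that lemma directly yields \(\braket{\psi}{\psi}=1\). (The hypothesis that the \(\eigket[k]\) are eigen\emph{states} is exactly what guarantees they are normal, so \cref{lem:normalisation} applies verbatim.)

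For the value of the quadratic form, I would first use linearity of~\(\measurement{A}\) together with the eigenvalue equations \(\measurement{A}\eigket[k]=\eigval_k\eigket[k]\) to write
\[
\measurement{A}\gket=\sum_{k=1}^m\con_k\measurement{A}\eigket[k]=\sum_{k=1}^m\con_k\eigval_k\eigket[k].
\]
Then I would pair this with \(\gbra=\sum_{\ell=1}^m\con_\ell^{*}\eigbra[\ell]\), expand using the (conjugate-)linearity of the inner product, and collapse the double sum via the orthonormality relations \(\eigbrakets{\ell}{k}=\delta_{\ell k}\):
\[
\gbra\measurement{A}\gket
=\sum_{\ell=1}^m\sum_{k=1}^m\con_\ell^{*}\con_k\eigval_k\eigbrakets{\ell}{k}
=\sum_{k=1}^m\con_k^{*}\con_k\eigval_k
=\sum_{k=1}^m\modulus{\con_k}^2\eigval_k,
\]
which is the desired identity. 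One small point worth a remark: the eigenvalues~\(\eigval_k\) of a Hermitian operator are real (noted after Postulate~\cref{post:qm:eigenvalues}), so the right-hand side is manifestly real, consistent with \cref{eq:hermitian:inner:product}; this is a sanity check rather than a step of the proof.

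There is no real obstacle here — the only thing to be slightly careful about is bookkeeping with the two independent summation indices and the complex conjugate on the bra side, and making sure the orthonormality (not merely orthogonality) of the \(\eigket[k]\) is what forces \(\eigbrakets{\ell}{k}=\delta_{\ell k}\); both are already built into the hypotheses via the word ``eigenstates''. So the proof is essentially a two-line computation preceded by a citation of \cref{lem:normalisation}.
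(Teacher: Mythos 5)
Your proposal is correct and follows essentially the same route as the paper's proof: normality via \cref{lem:normalisation}, then expansion of \(\gbra\measurement{A}\gket\) using the eigenvalue equations, (conjugate-)linearity of the inner product, and orthonormality of the \(\eigket[k]\). No gaps.
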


\begin{proof}
That \(\gket\) is normal follows from \cref{lem:normalisation}.
Observe that
\[
\gbra=\sum_{k=1}^m\con_k^*\eigbra[k]
\text{ and }
\measurement{A}\gket
=\measurement{A}\sum_{\ell=1}^m\con_\ell\eigket[\ell]
=\sum_{\ell=1}^m\con_\ell\measurement{A}\eigket[\ell]
=\sum_{\ell=1}^m\con_\ell\eigval_\ell\eigket[\ell],
\]
where the last equality holds because \(\eigket[\ell]\) is an eigenstate of~\(\measurement{A}\) with eigenvalue~\(\eigval_\ell\); so if we use the linearity properties of the inner product, we find that, indeed,
\[
\gbra\measurement{A}\gket
=\sum_{k=1}^m\sum_{\ell=1}^m\con_k^*\con_\ell\eigval_\ell\eigbrakets{k}{\ell}
=\sum_{k=1}^m\sum_{\ell=1}^m\con_k^*\con_\ell\eigval_\ell\delta_{k\ell}
=\sum_{k=1}^m\modulus{\con_k}^2\eigval_k,
\]
where the second equality follows from the orthonormality of the~\(\eigket[1],\dots,\eigket[m]\).
\end{proof}

It will also be useful to remember that the set~\(\measurements\) of all Hermitian operators on a finite-dimensional complex Hilbert space~\(\hilbertspace\) constitutes a finite-dimensional real linear space: if \(\hilbertspace\) has dimension~\(n\), then \(\measurements\) has dimension~\(n^2\).
The so-called \emph{Frobenius inner product} \cite[Exercise 2.39]{nielsen2010:quantum}, defined by
\[
\inprod{\measurement{A}}{\measurement{B}}\coloneqq\trace{\measurement{A}\measurement{B}}
\text{ for all~\(\measurement{A},\measurement{B}\in\measurements\)}
\]
turns \(\measurements\) into a real Hilbert space.\footnote{This is because any finite-dimensional linear space with an inner product is a Hilbert space.}
The so-called \emph{Frobenius norm}~\(\frobnorm{\bolleke}\) that's associated with this inner product, is defined by
\[
\frobnorm{\measurement{A}}
\coloneqq\sqrt{\inprod{\measurement{A}}{\measurement{A}}}
=\sqrt{\trace{\measurement{A}\measurement{A}}}
\text{ for all~\(\measurement{A}\in\measurements\)}.
\]
It turns \(\measurements\) into a normed linear space, with which we can associate a topology of open sets, a topological interior operator~\(\interior(\bolleke)\) and a topological closure operator~\(\closure(\bolleke)\).

\section{Our decision-theoretic approach}\label{sec:our:approach}
Let's now describe our approach to deriving a generalisation of Born's rule, in the more general version of \cref{eq:qm:expectation}, from a number of assumptions, inspired by \cref{post:qm:kets,post:qm:operator,post:qm:eigenvalues,post:qm:deterministic}, the relevant non-probabilistic postulates of quantum mechanics.

\subsection{The set-up}\label{sec:setup}
First, we consider a subject, whom we'll call You.
You are concerned with the value that the state of a quantum system under consideration takes in the set of all its possible values, which in principle, by \cref{post:qm:kets}, is given by the set~\(\statespace\) of all states, or normalised kets.
\emph{You are uncertain about this state}: You may have some knowledge about the actual value that this state takes in the set~\(\statespace\), but this knowledge may not be sufficient for You to determine this actual value with certainty.
To honour a time-tested tradition, we'll denote this possibly unknown state by a \emph{capital letter}~\(\uket\).
On the other hand, lower case letters will be used to denote, amongst other things, the states~\(\gket\in\statespace\) that are the candidate values for~\(\uket\).

In a second step, we recall that we can get information about the unknown state~\(\uket\in\statespace\) through performing measurements, described by Hermitian operators~\(\measurement{A}\in\measurements\), as postulated by \cref{post:qm:operator}.
If You were to perform a measurement~\(\measurement{A}\in\measurements\) on the system, its outcome would be uncertain too, for two reasons: firstly because You are (or may be) uncertain about the state the system is in, and secondly because even in a perfectly known state, the Postulate~\cref{post:qm:eigenvalues} typically leaves unspecified which of the eigenvalues~\(\eigval\in\spectrum{A}\) of the Hermitian operator~\(\measurement{A}\) the measurement would actually yield.

We'll consider that after You've performed the measurement~\(\measurement{A}\), what You'll get as a pay-off will be the actual outcome~\(\eigval\in\spectrum{A}\) of the measurement, expressed in units of some linear utility scale --- also called \emph{utiles}.
Your entertaining beliefs about which value~\(\uket\) assumes in~\(\statespace\) might cause You to prefer performing some measurements over others, because You believe they will lead to better pay-offs.

This is a simple idea, but important for what is to come, so let's look at a simple example to better explain where it comes from.
It's the first in a series of instalments of a running example that we'll use to illustrate various ideas throughout the text.

\begin{example}\label{example:preference:between:measurements}
Consider a qubit \cite{nielsen2010:quantum}, which is a quantum system with a two-dimensional Hilbert space, spanned by two orthogonal basis states~\(\zeroket\) and~\(\oneket\), so
\[
\hilbertspace
=\cset{a\zeroket+b\oneket}{a,b\in\complexes},
\]
and the set of all possible states is then, due to \Cref{lem:normalisation},
\[
\statespace
=\cset{a\zeroket+b\oneket}{a,b\in\complexes\text{ and }\modulus{a}^2+\modulus{b}^2=1}.
\]
We consider the unknown ket~\(\uket\coloneqq A\zeroket+B\oneket\), where~\(A\) and~\(B\) are uncertain variables in~\(\complexes\) such that \(\modulus{A}^2+\modulus{B}^2=1\).
For any given~\(\alpha,\beta\in\reals\), we consider the Hermitian operators~\(\indmeasurement{C}{\alpha},\indmeasurement{D}{\beta}\in\measurements\) defined by
\[
\indmeasurement{C}{\alpha}\coloneqq\alpha\zeroket\zerobra-\oneket\onebra
\text{ and }
\indmeasurement{D}{\beta}\coloneqq-\zeroket\zerobra+\beta\oneket\onebra,
\]
so \(\indmeasurement{C}{\alpha}\) has eigenstates~\(\zeroket\) and~\(\oneket\) with respective eigenvalues~\(\alpha\) and~\(-1\), and \(\indmeasurement{D}{\beta}\) has eigenstates~\(\zeroket\) and~\(\oneket\) with respective eigenvalues~\(-1\) and~\(\beta\) [see \Cref{prop:basis:and:eigenvalues}].

Now, suppose that You strongly believe that the qubit is in the \(\zeroket\) state, so in other words, that \(\modulus{B}=0\) and~\(\modulus{A}=1\).
Because You accept \cref{post:qm:deterministic}, this implies a strong belief, on Your part, that the outcome of the measurement~\(\indmeasurement{C}{\alpha}\) will be~\(\alpha\), and correspondingly, that the outcome of the measurement~\(\indmeasurement{D}{\beta}\) will be~\(-1\).
You'll therefore prefer performing the measurement~\(\indmeasurement{C}{\alpha}\) to performing the measurement~\(\indmeasurement{D}{\beta}\) as long as \(\alpha\) is sufficiently larger than~\(-1\).
\end{example}

It's this idea, namely that \emph{Your having beliefs about what value the (possibly) unknown state~\(\uket\) assumes in the possibility space~\(\statespace\) may lead You to having preferences between performing different measurements}, on which we'll build our theory.
It brings us squarely into a traditional context of decision-making under uncertainty \cite{walley1991,finetti19745,aumann1962,aumann1964,anscombe1963,nau2006,zaffalon2017:incomplete:preferences,cooman2021:archimedean:choice}, for which we're about to spell out the more important details, in the form of a number of background assumptions and postulates.

\subsection{Decision-theoretic background}\label{sec:decision:theoretic:background}
There are quite a number of ways in which so-called rational decision-making can be approached.
Many, if not most, of them start with the basic set-up where, on the one hand, there are \emph{acts}~\(a\) that You're invited to choose --- express a preference --- between, and where, on the other hand, the so-called \emph{consequence}~\(c\) of choosing an act~\(a\) may depend on what is often referred to as the \emph{state of the world}~\(\omega\), which is considered to be unknown, or uncertain.

In the simplest case, where there are only a finite number~\(n\) of acts \(a_k\), \(k=1,\dots,n\) and a finite number~\(m\) of possible states of the world \(\omega_\ell\), \(\ell=1,\dots, m\), the decision problem can be summarised in an act-state table, where each entry \(c_{k\ell}\) is the consequence of choosing act~\(a_k\) when the actual state of the world turns out to be~\(\omega_\ell\):
\begin{center}
\begin{tabular}{c|cccc}
& \(\omega_1\) & \(\omega_2\) & \(\cdots\) & \(\omega_m\)\\
\hline
\(a_1\) & \(c_{11}\) & \(c_{12}\) & \(\cdots\) & \(c_{1m}\)\\
\(a_2\) & \(c_{21}\) & \(c_{22}\) & \(\cdots\) & \(c_{2m}\)\\
\vdots & \vdots & \vdots & \(\cdots\) & \vdots \\
\(a_n\) & \(c_{n1}\) & \(c_{n2}\) & \(\cdots\) & \(c_{nm}\)
\end{tabular}
\end{center}

If we denote the set of all consequences by~\(C\), then the acts~\(a\) can be seen as, or identified with, maps from the set of all states~\(\Omega\) to~\(C\); in the simple example above, we then have that \(a_k(\omega_\ell)=c_{k\ell}\).
Savage \cite{savage1972} considers specific preference relations~\(\betterthanorequal\) on the (set of all) acts~\(A\), and his approach consists in providing axioms for the sets of acts~\(A\) and consequences~\(C\) and for the preference relation~\(\betterthanorequal\) on~\(A\) that guarantee that there's some probability on the state space~\(\Omega\) with corresponding expectation operator~\(E\) and some so-called \emph{utility function}~\(U\colon C\to\reals\) such that
\begin{equation}\label{eq:decision:via:expected:utility}
a\betterthanorequal b
\ifandonlyif E(U\circ a)\geq E(U\circ b)
\text{ for all~\(a,b\in A\).}
\end{equation}
A number of his axioms have the effect of ensuring that the sets of acts and consequences are \emph{sufficiently rich} and closed under convex mixtures.
We see that, in other words, probabilities, utilities and expectations on this way of thinking are tools that can be used to conveniently represent Your preferences between acts; and they can be constructed from the acts, consequences and preferences.

One specific aspect of Savage's approach stands out in the light of what we want to come to next.
He only allows for a \emph{total}, or \emph{linear}, \emph{ordering} of acts: it must be that \(a\betterthanorequal b\) or \(b\betterthanorequal a\), for all~\(a,b\in A\).
This allows him to define a \emph{strict preference}~\(\betterthan\) by letting \(a\betterthan b\ifandonlyif b\not\betterthanorequal a\), for all~\(a,b\in A\).
In other words, if \(a\not\betterthan b\) and \(b\not\betterthan a\), then it must be that both \(b\betterthanorequal a\) and \(a\betterthanorequal b\), which means that You have no option but to consider \(a\) and \(b\) equivalent: there's no room for incomparability or indecision in Savage's set-up.

We find Savage's blanket totality requirement too strong, generally speaking, for a number of reasons.
The first is that, as Savage himself indicates \cite[Sec.~2.7]{savage1972}, there seems to be no {\itshape a priori} reason not to allow for incomparability or indecision when trying to model Your rational preferences: indecision can be perfectly rational, even though allowing for it tends to complicate things.
A reluctance to complicate matters seems to have been Savage's prime reason for not pursuing this idea of allowing for indecision:
\begin{quote}
There is some temptation to explore the possibilities of analysing preference amongst acts as a {\bfseries partial ordering}, that is, in effect to replace part 1 of the definition of simple ordering by the very weak proposition~\(\mathrm{f}\leq\mathrm{f}\), admitting that some pairs of acts are incomparable.
This would seem to give expression to introspective sensations of indecision or vacillation, which we may be reluctant to identify with indifference.
My own conjecture is that it would prove a blind alley losing much in power and advancing little, if at all, in realism; but only an enthusiastic exploration could shed real light on the question.
\end{quote}
The second reason has a more positive flavour: not allowing for incomparability or indecision tends to hide or ignore important \emph{inferential} aspects of specifying preferences.
Indeed, there's a very strong analogy with propositional logic, where it's generally accepted to be of crucial importance to allow for sets of propositions that are deductively closed but \emph{not complete} in the sense that they leave the truth states of some propositions undecided; see for instance the detailed discussion in Ref.~\cite{cooman2003a}.

It's interesting to mention already here that Wallace \cite{wallace2003:defending:deutsch,wallace2007:improving:deutsch}, as we'll discuss in \cref{sec:wallace}, uses Savage's approach to refine Deutsch's argumentation \cite{deutsch1999:quantum:decisions} (discussed in \cref{sec:deutsch}) in a multiverse context; it's therefore not very surprising that his argument leads to a model for the uncertainty in quantum mechanics that describes Your preferences by a probability--utility pair, as alluded to above.

Here, we want to follow a different route, for the reasons already mentioned, and not impose the totality of preference orderings as a foundational part of our set-up.
If we do assume totality of preference relations in certain places and instances, we will be clear about it, and mention it explicitly.
In all other cases and instances, we'll assume partially ordered preferences to be the fallback models.
To see where this different route could lead, we have a brief look at some of the relevant literature.

\citeauthor{anscombe1963} \cite{anscombe1963} and \citeauthor{aumann1962} \cite{aumann1962,aumann1964}, as well as a few authors in their wake, amongst whom \citeauthor{seidenfeld1995} \cite{seidenfeld1995} and \citeauthor{nau2006} \cite{nau2006}, have followed a different approach than Savage to modelling decision-making, by (essentially) considering as the set of consequences~\(C\) the simplex of all probability mass functions --- so-called \emph{lotteries} --- on some set of rewards~\(R\).
Taking the set of acts~\(A\) to be all maps from the set of states of the world~\(\Omega\) to these lotteries makes sure that \(A\) is sufficiently rich: it contains all constant maps --- the lotteries --- and is closed under convex mixtures.
The state-dependent lotteries that constitute the acts in~\(A\) are also called \emph{horse lotteries}.
Two differences with Savage's approach stand out: (i) in most of the above-mentioned papers \cite{aumann1962,aumann1964,seidenfeld1995,nau2006}, the preference relations~\(\betterthanorequal\) and~\(\betterthan\) needn't reflect totality, and therefore still allow for incomparability between acts; and (ii) in some of them \cite{seidenfeld1995}, the authors also let go of conditions of Archimedeanity that allow the preferences to be represented by (sets of) real-valued utilities.

Bruno de Finetti \cite{finetti1937,finetti1970}, on the other hand, followed a simpler --- and therefore less general --- route, where contrary to the above-mentioned approaches, the utility function~\(U\) isn't derived from Your preferences, but is instead assumed to exist extraneously to the decision problem.
This then essentially implies that the consequences~\(c\in C\) in the decision problem can now themselves be identified with utilities, that is, assumed to be expressed in units of some predetermined linear utility scale.
These could, for instance, be tickets in a lottery for a single desirable prize, and in this sense, this simpler take on preference modelling can also be viewed as a special case of the above-mentioned horse lottery approach, where the doubleton set of rewards consists of winning and not winning the prize; see also Refs.~\cite{cooman2021:archimedean:choice,nau2006,zaffalon2017:incomplete:preferences}.
The acts can now be seen as \emph{uncertain rewards}: they're maps~\(a\colon\Omega\to\reals\) that associate with each state of the world~\(\omega\) the reward~\(a(\omega)\) for taking action~\(a\), expressed in utiles.
The preference relations~\(\betterthanorequal\) and~\(\betterthan\) now express preferences between such uncertain rewards.
We'll see further on in \cref{sec:deutsch} that Deutsch's take \cite{deutsch1999:quantum:decisions} on framing quantum uncertainty in a decision-theoretic context can be seen as inspired by de Finetti's approach.

On de Finetti's approach, Your preference ordering is still ideally total, as well as Archimedean, but later approaches \cite{walley1991,walley2000,seidenfeld1995} let go of these requirements, to allow for incomparability or indecision, and to incorporate preferences that aren't necessarily representable by the ordering on the real numbers.
Doing so has led to a rich literature on so-called \emph{sets of desirable} (or favourable) \emph{gambles}; see for instance Refs.~\cite{couso2011:desirable,cooman2010,moral2005b,quaeghebeur2012:itip,quaeghebeur2015:statement,debock2015:thesis,zaffalon2017:incomplete:preferences}.
It's in this generalisation of de Finetti's approach that we intend to develop our argumentation: on the one hand, because it allows for dealing with partial preferences, and on the other, because it's technically the least complicated.
We'll leave the more general but technically more involved approaches for future research.

It follows from the discussion near the beginning of this section that each measurement~\(\measurement{A}\in\measurements\) can be considered as an `act'
\begin{equation*}
\act{A}
\coloneqq\text{ ``perform the measurement~\(\measurement{A}\) on the quantum system''},
\end{equation*}
and these acts are in a one-to-one correspondence with the elements of the (real) linear space~\(\measurements\) of all measurements.
If we consider the states of the world~\(\omega\) to be the elements~\(\fket\) of the state space~\(\statespace\), all that still needs to be specified to complete the basic description of the decision problem, is the consequences, or in other words, the utilities (expressed in utiles) that correspond to any~\(\act{A}\) in each state of the world~\(\fket\).
This brings us to the first background assumption.

\begin{DMbackass}[DTB]\label{backass:dm:utility:function}
With every measurement~\(\measurement{A}\in\measurements\) on the Hilbert space~\(\hilbertspace\), You associate a \emph{reward function}, which is a map \(\utility{A}\colon\statespace\to\reals\), such that the real number~\(\utility{A}(\fket)\) is the reward, expressed in utiles, for~\(\act{A}\) when the quantum system under consideration system is in state~\(\fket\).
\end{DMbackass}
\noindent In other words, the uncertain real number~\(\utility{A}(\uket)\) is the uncertain reward associated with~\(\act{A}\); this reward is typically uncertain because the system state~\(\uket\) is unknown.
The reward functions~\(\utility{A}\) constitute a subset
\begin{equation*}
\utilities\coloneqq\cset{\utility{A}}{\measurement{A}\in\measurements}
\end{equation*}
of the linear space~\(\realmaps\) of all real-valued maps on~\(\statespace\).
Since we can associate an uncertain reward~\(\wval(\uket)\) with every reward function~\(\wval\in\utilities\), we'll agree to call these reward functions~\(\wval\) \emph{uncertain rewards} as well.
We stress that, in a general decision-making context, the reward functions are Yours to choose or determine, but we'll argue further on that in the specific context of quantum mechanics, a number of simple postulates make sure that this is no longer the case and that You're left with no choice about what these reward functions look like.

This first background assumption fits nicely within de Finetti's approach, but its consequences are stronger than they might seem at first encounter.
For a start, if You know that the quantum system is in a state~\(\fket\) --- so if You know that \(\uket=\fket\) --- then the reward~\(\utility{A}(\uket)\) that You get from performing any measurement~\(\measurement{A}\) is no longer uncertain, because You know that it must be equal to the real number~\(\utility{A}(\fket)\).
And since for any other measurement~\(\measurement{B}\) the reward~\(\utility{B}(\uket)\) is also known to You to be equal to the real number~\(\utility{B}(\fket)\), comparing the measurements~\(\measurement{A}\) and~\(\measurement{B}\) on the basis of their (now certain) rewards will become equivalent to comparing real numbers: the higher its certain real reward is, the better You'll prefer a measurement.
In other words, if You know that \(\uket=\fket\), then Your preferences between acts/measurements will necessariy be \emph{total} in the sense that You'll strictly prefer measurement~\(\measurement{A}\) to measurement~\(\measurement{B}\) if and only if \(\utility{A}(\fket)>\utility{B}(\fket)\), and You'll be indifferent between \(\measurement{A}\) and~\(\measurement{B}\) if and only if \(\utility{A}(\fket)=\utility{B}(\fket)\).
In this sense, the \emph{starting point} for our approach is similar to Deutsch's \cite{deutsch1999:quantum:decisions} and Wallace's \cite{wallace2003:defending:deutsch,wallace2007:improving:deutsch}, as we'll see in \cref{sec:deutsch,sec:wallace}, respectively: \emph{acts/measurements are linearly ordered when You know what state the system is in}.

Moreover, if You know that \(\uket=\fket\), then \(\utility{A}(\fket)\) is a one-number summary of all the possible outcomes~\(\eigval\in\spectrum{A}\) that the measurement~\(\measurement{A}\) may yield for a system in that state~\(\fket\).
\cref{backass:dm:utility:function} essentially assumes that such a one-number summary is possible, and implies that it's the only thing that matters for Your decisions when You know the state of the system.
What it doesn't do, however, is to specify, or impose restrictions on, what these reward functions may look like.
For that, we'll use specific postulates about the reward functions in the next section.

In general, when You don't know with certainty what value \(\uket\) assumes, we'll still assume that You can order the acts \(\act{A}\), but we'll not require, as Savage would have us do, that this ordering should be total.

\begin{DMbackass}[DTB]\label{backass:dm:preference:relation}
Your beliefs about the value of~\(\uket\) in~\(\statespace\) lead You to strictly prefer some uncertain rewards in~\(\utilities\) to others, which leads to a strict preference relation~\(\betterthan\) on the set of uncertain rewards~\(\utilities\) and therefore also on the set of acts, or measurements,~\(\measurements\).
\end{DMbackass}
\noindent By a \emph{strict preference} relation, we mean an irreflexive and transitive binary relation (a so-called \emph{strict partial order}).\footnote{When \(\utilities\) is a vector space, as we'll argue that it is further on, we'll assume that it's also a vector ordering.}
We see this requirement as fairly weak, and in some sense minimal, as there are only two ways it can be violated: first, if there's some \(\utility{A}\in\utilities\) that's strictly preferred to itself, which seems to run counter to the preference being called `strict'; and second, if there are \(\utility{A},\utility{B},\utility{C}\in\utilities\) such that \(\utility{A}\betterthan\utility{B}\) and \(\utility{B}\betterthan\utility{C}\) but not \(\utility{A}\betterthan{\utility{C}}\), which is a violation of transitivity.\footnote{That we're only looking at \emph{strict} preferences here, implies that we'll remain silent about what it means for two uncertain rewards to be \emph{equivalent} to each other, or for You to be \emph{indifferent} between them. As we'll see, restricting ourselves to strict preferences is already enough to get to a notion of quantum probabilities. Also dealing with indifference would lead to a richer framework, but for reasons of parsimony, we'll leave that for future research. See for instance Ref.~\cite{quaeghebeur2015:statement} for ways of going beyond strict preferences.}
We repeat that there's no totality assumption: it's \emph{not assumed} that You think incomparable measurements --- measurements~\(\measurement{A}\) and \(\measurement{B}\) for which neither \(\utility{A}(\uket)>\utility{B}(\uket)\) nor \(\utility{A}(\uket)<\utility{B}(\uket)\) holds --- are necessarily equivalent, so that You're indifferent between them.
We'll dive into the many and rich details of such belief representations in \cref{sec:desirability,sec:lower:upper:previsions,sec:coherent:previsions} further on.

In summary, the background assumptions \cref{backass:dm:utility:function,backass:dm:preference:relation} allow You to represent Your beliefs about where the state~\(\uket\) is by means of a strict partial (vector) ordering on the reward functions, but the ordering is required to be total --- or linear --- and in particular determined by linear ordering of the real numbers~\(\utilitypure{\bolleke}(\fket)\) only when You know with certainty that \(\uket=\fket\).

\subsection{The reward function postulates}\label{sec:decision:theoretic:postulates}
The assumptions \cref{backass:dm:utility:function,backass:dm:preference:relation} fix the context for our argument.
Against this backdrop, we'll now formulate four decision-theoretic postulates; more precisely, these are postulates that deal with the reward function aspect of our decision-making framework.
As we're about to argue, they're inspired by, and in a sense based on, the non-probabilistic postulates of quantum mechanics.
We'll prove in \cref{sec:proofs} that they determine what the uncertain rewards~\(\utility{A}\) look like: You don't get to play any role in fixing their shape, the postulates~\cref{post:dm:eigenket,post:dm:different:eigenspaces,post:dm:additivity,post:dm:continuity} we're about to introduce, will do that for You.

Every one of these four postulates is in itself intended to capture a simple idea, and we'll devote some attention to trying to point out what the four relevant ideas are.
We intend them to capture what is `essential' about the decision problem in quantum mechanics to allow us to recover Born's rule as a special case.
We emphatically don't want to claim that these central ideas can't be clarified further, or stated more succinctly or elegantly.
Nor do we necessarily believe that there are no simpler, weaker or more parsimonious sets of postulates that may lead to the same conclusions.
\emph{Our aim here is, simply stated, to propose a collection of postulates that we feel are convincing and natural enough, and which allow us to fix the reward functions.}

Let's begin with the simplest postulate, which fixes the utility gauge of the uncertain rewards.
It deals directly with the case governed by \cref{post:qm:deterministic}, where the system resides in an eigenstate~\(\eigket\) of the measurement operator~\(\measurement{A}\in\measurements\).
Since the outcome of the measurement is then necessarily the corresponding eigenvalue~\(\eigval\), we want this to also be the utility You obtain from performing the measurement in that state: the reward received should equal the outcome of the measurement in case of certainty.
This can be seen as a convention --- as are essentially all ways of fixing a gauge --- and it seems the simplest one that allows us to connect the reward --- the utility --- received from performing a measurement to its outcome.

\begin{RFpost}[RF]\label{post:dm:eigenket}
Let \(\eigspace[\eigval]\) be the eigenspace corresponding to an eigenvalue~\(\eigval\) of a Hermitian operator~\(\measurement{A}\) on a Hilbert space~\(\hilbertspace\).
Then necessarily \(\utility{A}(\eigket)=\eigval\) for all~\(\eigket\in\eigspace[\eigval]\) with \(\eigbraket=1\).
\end{RFpost}

The second postulate is a fairly strong one.
We'll first give a general formulation and then spend some effort in justifying it, through a number of examples and through an alternative formulation that perhaps captures its essence more clearly.
It's important to realise here that this postulate --- as does the next one --- identifies the values of reward functions across different state spaces, and therefore allows us to express powerful invariance properties.

\begin{RFpost}[RF]\label{post:dm:different:eigenspaces}
Consider any Hermitian operator~\(\measurement{A}\coloneqq\sum_{k=1}^r\eigval_k\projection[{\eigspace[k]}]\) on a Hilbert space~\(\hilbertspace_1\), with (distinct) real eigenvalues~\(\eigval_1,\dots,\eigval_r\) corresponding to respective mutually orthogonal eigenspaces~\(\eigspace[1],\dots,\eigspace[r]\) that span~\(\hilbertspace_1\).
Similarly, consider a Hermitian operator~\(\measurement{B}\coloneqq\sum_{k=1}^r\eigval_k\projection[{\alteigspace[k]}]\) on a Hilbert space~\(\hilbertspace_2\), with the same eigenvalues~\(\eigval_1,\dots,\eigval_r\), corresponding to respective mutually orthogonal eigenspaces~\(\alteigspace[1],\dots,\alteigspace[r]\) that span~\(\hilbertspace_2\).
Choose any normalised~\(\eigket[k]\in\eigspace[k]\) and~\(\alteigket[k]\in\alteigspace[k]\), and any~\(\con_k\in\complexes\) such that \(\sum_{k=1}^r\abs{\con_k}^2=1\), and consider the states~\(\fket[\measurement{A}]\coloneqq\sum_{k=1}^r\con_k\eigket[k]\in\hilbertspace_1\) and \(\fket[{\measurement{B}}]\coloneqq\sum_{k=1}^r\con_{k}\alteigket[k]\in\hilbertspace_2\).
Then \(\utility{A}(\fket[\measurement{A}])=\utility{B}(\fket[{\measurement{B}}])\).
\end{RFpost}
\noindent In essence, this requires that if a state~\(\fket\) is a superposition of eigenstates of a measurement~\(\measurement{A}\) corresponding to distinct eigenvalues, then \(\utility{A}\group{\fket}\) should depend \emph{only} on the superposition weights and on these eigenvalues, but not on the eigenstates themselves, nor on the Hilbert space they're embedded in.

The following examples are meant to illustrate two different applications of this second postulate and will provide intuition about what it entails and why it could be considered reasonable as an invariance requirement.

\begin{example}
Let's go back to our example involving a single qubit and assume that it represents the spin of an electron.
Assume that the respective states~\(\zeroket\) and~\(\oneket\), which constitute an orthonormal basis for the Hilbert space~\(\hilbertspace\), represent spin up and spin down in some physical direction, labelled as the \(z\)-direction.
We consider a measurement in this physical \(z\)-direction that is represented by the Hermitian operator~\(\measurement{A}\coloneqq\oneket\onebra-\zeroket\zerobra\in\measurements\); it yields the outcome~\(+1\) for spin up in this direction and \(-1\) for spin down.
The corresponding orthogonal eigenspaces are, respectively, \(\eigspace[+]=\linspanof{\set{\oneket}}\) and \(\eigspace[-]=\linspanof{\set{\zeroket}}\).

However, we can also use the transformed basis with states~\(\ket{0'}\coloneqq\oneket\) and~\(\ket{1'}\coloneqq\zeroket\) to model the same system.
Consider the operator~\(\measurement{B}\coloneqq\ket{1'}\bra{1'}-\ket{0'}\bra{0'}\), which is now a measurement in the negative physical \(z\)-direction.
\(\measurement{B}=-\measurement{A}\) has the same eigenvalues \(+1\) and \(-1\) as \(\measurement{A}\), but with exchanged eigenspaces~\(\alteigspace[+]=\linspanof{\set{\ket{1'}}}=\linspanof{\set{\ket{0}}}=\eigspace[-]\) and \(\alteigspace[-]=\linspanof{\set{\ket{0'}}}=\linspanof{\set{\ket{1}}}=\eigspace[+]\).

Now consider any~\(\con_1,\con_2\in\complexes\) such that \(\abs{\con_1}^2+\abs{\con_2}^2=1\), and the states~\(\fket[\measurement{A}]\coloneqq\alpha_1\ket{1}+\alpha_2\ket{0}\) and \(\fket[\measurement{B}]\coloneqq\alpha_1\ket{1'}+\alpha_2\ket{0'}=\alpha_1\ket{0}+\alpha_2\ket{1}\).

Even though the measurements~\(\measurement{A}\) and~\(\measurement{B}\), as well as the corresponding states~\(\fket[\measurement{A}]\) and~\(\fket[\measurement{B}]\) are clearly different, the second situation is a mere \emph{relabelling} of the first, and Postulate~\cref{post:dm:different:eigenspaces} therefore requires the corresponding rewards~\(\utility{B}(\fket[\measurement{B}])\) and~\(\utility{A}(\fket[\measurement{A}])\) in both situations to be the same.
\end{example}

\begin{example}
Consider an experiment where we conduct a measurement on two independent qubits.
The first qubit system is the one we're interested in, while the second qubit system isn't really of any interest to us; for example, this could be one of the spins of some free electron in extragalactic space.
We can then consider both systems simultaneously and regard them as one larger system.
In quantum mechanics, such a composition of two systems with respective Hilbert spaces~\(\hilbertspace\) and~\(\althilbertspace\) is described by the tensor product~\(\hilbertspace\tensortimes\althilbertspace\); see for instance Ref.~\cite[Sec.~II.4]{reed2003:methods:1}.

We define a measurement on this composite system as follows.
On the first system in a state~\(\gket\), we perform a measurement~\(\measurement{C}\), while leaving the second system untouched in some state~\(\fket\).
This leads to the tensor product~\(\measurement{C}\tensortimes\identity\) of the measurement~\(\measurement{C}\) on the first system with the identity measurement~\(\identity\) on the second one: \((\measurement{C}\tensortimes\identity)(\gket\tensortimes\fket)=(\measurement{C}\gket)\tensortimes\fket\).
Postulate~\cref{post:dm:different:eigenspaces} then implies that \(\utilitypure{\measurement{C}\tensortimes\identity}(\gket\tensortimes\fket)=\utility{C}(\gket)\).\footnote{Apply the postulate with \(\hilbertspace_1\instantiateas\hilbertspace\), \(\hilbertspace_2\instantiateas\hilbertspace\tensortimes\althilbertspace\), \(\measurement{A}\instantiateas\measurement{C}\), \(\measurement{B}\instantiateas\measurement{C}\tensortimes\identity\), \(\alteigspace[k]\instantiateas\eigspace[k]\tensortimes\althilbertspace\), \(\alteigket[k]\instantiateas\ket{a_k}\tensortimes\fket\), \(\fket[\measurement{A}]\instantiateas\gket\) and \(\fket[\measurement{B}]\instantiateas\gket\tensortimes\fket\).}
In more words, the reward~\(\utilitypure{\measurement{C}\tensortimes\identity}(\gket\tensortimes\fket)\) for this extended measurement on the larger system in the extended (so-called decoherent) state~\(\gket\tensortimes\fket\) shouldn't differ from the reward~\(\utility{C}(\gket)\) for the measurement on the first system in the state~\(\gket\); this is justifiable because the second system is independent of the first and the extended measurement leaves that second system untouched.
\end{example}
\noindent The first of the two examples above illustrates that \cref{post:dm:different:eigenspaces} requires reward functions to be \emph{invariant under relabelling}, and the second example hints at \emph{invariance under} a specific type of \emph{compression} of the Hilbert space.
We now give an alternative but equivalent formulation of the postulate that elucidates these two types of invariance requirements more clearly.

\medskip\noindent
{\bfseries Reward function postulate 2 (\cref{post:dm:different:eigenspaces}*, alternative formulation).}
\par\noindent{\itshape Consider any Hermitian operator~\(\measurement{A}\coloneqq\sum_{k=1}^r\eigval_k\projection[{\eigspace[k]}]\) on a Hilbert space~\(\hilbertspace_1\), with (distinct) real eigenvalues~\(\eigval_1,\dots,\eigval_r\) corresponding to respective mutually orthogonal eigenspaces~\(\eigspace[1],\dots,\eigspace[r]\) that span~\(\hilbertspace_1\).
Also consider any~\(r\)-dimensional Hilbert space~\(\hilbertspace_2\) with any orthonormal basis~\(\set{\alteigket[1]\dots,\alteigket[r]}\) and the Hermitian operator~\(\measurement{B}\coloneqq\sum_{k=1}^r\eigval_k\alteigket[k]\alteigbra[k]\) on~\(\hilbertspace_2\) with the same eigenvalues~\(\eigval_1,\dots,\eigval_r\) as~\(\measurement{A}\).
Choose any normalised~\(\eigket[k]\in\eigspace[k]\) and any~\(\con_k\in\complexes\) such that \(\sum_{k=1}^r\abs{\con_k}^2=1\), and consider the states~\(\fket[\measurement{A}]\coloneqq\sum_{k=1}^r\con_k\eigket[k]\in\hilbertspace_1\) and \(\fket[{\measurement{B}}]\coloneqq\sum_{k=1}^r\con_{k}\alteigket[k]\in\hilbertspace_2\).
Then \(\utility{A}(\fket[\measurement{A}])=\utility{B}(\fket[{\measurement{B}}])\).}

\begin{proof}[Brief argument that \cref{post:dm:different:eigenspaces}* and \cref{post:dm:different:eigenspaces} are equivalent]
\cref{post:dm:different:eigenspaces}* clearly follows from \cref{post:dm:different:eigenspaces} as a special case. That it also implies \cref{post:dm:different:eigenspaces} can be seen by applying it twice in opposite directions.
\end{proof}
\noindent This definitely exhibits the compression part of the invariance requirement: each eigenspace can be \emph{compressed} into a one-dimensional space without affecting the reward function.
Since, moreover, the choice of orthonormal basis in \(\hilbertspace_2\) has no effect, this shows that \emph{unitary transformations} (and in particular relabelling) have no impact on the reward function either.

The third postulate exploits the linearity of the utility and considers two measurements~\(\measurement{A}\) and~\(\measurement{B}\) with the same orthogonal eigenspaces~\(\eigspace[k]\) but possibly different corresponding eigenvalues~\(\lambda_k\) and~\(\mu_k\).
The sum~\(\measurement{A}+\measurement{B}\) then also has the same eigenspaces~\(\eigspace[k]\) with corresponding eigenvalues~\(\lambda_k+\mu_k\).
If the state~\(\fket\) is in one of these eigenspaces~\(\eigspace[k]\), then \cref{post:dm:eigenket} guarantees that \(\wval_{\measurement{A}+\measurement{B}}(\fket)=\lambda_k+\mu_k=\utility{A}(\fket)+\utility{B}(\fket)\), so the uncertain reward is additive on each of the eigenspaces~\(\eigspace[k]\).
We now require that this additivity should be extended from the eigenspaces to the entire Hilbert space~\(\hilbertspace\).

\begin{RFpost}[RF]\label{post:dm:additivity}
Consider any two Hermitian operators of the form~\(\measurement{A}\coloneqq\sum_{k=1}^r\eigval_k\projection[{\eigspace[k]}]\) and~\(\measurement{B}\coloneqq\sum_{k=1}^r\mu_k\projection[{\eigspace[k]}]\) on a Hilbert space~\(\hilbertspace\), where the \(r\) eigenspaces~\(\eigspace[k]\) are mutually orthogonal and span \(\hilbertspace\), where \(\eigval_k,\mu_k\in\reals\) and then all the~\(\lambda_k\) are distinct and all the~\(\mu_k\) are distinct.
Then \(\wval_{\measurement{A}+\measurement{B}}(\fket)=\utility{A}(\fket)+\utility{B}(\fket)\) for all~\(\fket\in\statespace\).
\end{RFpost}

The fourth and final postulate deals with the continuity of the uncertain rewards, the underlying idea being that if You're no longer able to distinguish between states, You shouldn't be able to distinguish between the corresponding rewards either: the reward functions~\(\utility{A}\) should be continuous in their state argument.

\begin{RFpost}[RF]\label{post:dm:continuity}
Let \(\measurement{A}\) be a Hermitian operator on a Hilbert space~\(\hilbertspace\) and let \(\fket[n]\) be any sequence of states, then \(\fket=\lim_{n\to+\infty}\fket[n]\) implies that \(\utility{A}(\fket)=\lim_{n\to+\infty}\utility{A}(\fket[n])\).
\end{RFpost}
\noindent Observe, by the way, that the continuity of the norm~\(\norm{\bolleke}\) in the topology it induces on the Hilbert space~\(\hilbertspace\), implies that the limit \(\fket=\lim_{n\to+\infty}\fket[n]\) of the sequence of states~\(\fket[n]\) is a state as well, so it makes sense to consider the value~\(\utility{A}(\fket)\) of the reward function~\(\utility{A}\) in that limit state~\(\fket\).

We now come to our main result.
To formulate it, we introduce the \emph{specific} so-called \emph{reward assignation}~\(\theutilitymap\colon\measurements\to\theutilities\colon\measurement{A}\mapsto\theutility{A}\), with
\begin{equation*}
\theutility{A}(\fket)\coloneqq\fbra\measurement{A}\fket
\text{ for all~\(\fket\in\statespace\)}
\end{equation*}
and with corresponding \emph{set of uncertain rewards}
\begin{equation*}
\theutilities
\coloneqq\cset{\theutility{A}}{\measurement{A}\in\measurements}
=\cset{\dotbra\measurement{A}\dotket}{\measurement{A}\in\measurements}.
\end{equation*}
We can and will prove that the postulates~\cref{post:dm:eigenket,post:dm:different:eigenspaces,post:dm:additivity,post:dm:continuity} determine the reward functions~\(\utility{A}\) unequivocally, in the sense that they imply that
\begin{equation}\label{eq:born:a:la:us}
\utility{A}(\fket)
=\theutility{A}(\fket)
=\fbra\measurement{A}\fket
\text{ for all }\measurement{A}\in\measurements
\text{ and all }\fket\in\statespace.
\end{equation}
We postpone the detailed and quite formal mathematical argumentation for this interesting result until \cref{sec:proofs}, so we can now, in the intervening sections, concentrate on a discussion of its implications.

\section{The basic decision-theoretic models}\label{sec:desirability}
Let's look at a specific decision problem involving the unknown state~\(\uket\) of a quantum system with Hilbert space~\(\hilbertspace\), with a corresponding set of measurement operators~\(\measurements\).
In one of its more general forms, decision theory will now use a \emph{strict (partial) vector ordering} to express Your preferences between acts, and therefore, indirectly, Your beliefs about~\(\uket\).

\subsection{Mathematical preliminaries}\label{sec:isomorphism}
Before delving into the details of this ordering, let's take a few moments to contemplate what it is that's being ordered, namely the acts~\(\act{A}\), which can be identified with the measurements~\(\measurement{A}\in\measurements\).
Our next result is based on the discussion in the previous section, and in particular on~\cref{eq:born:a:la:us}, and shows that the acts~\(\act{A}\), and the measurements~\(\measurement{A}\), are also in a one-to-one correspondence with the uncertain rewards~\(\theutility{A}\).

\begin{proposition}\label{prop:linear:isomorphism:measurements:and:rewards}
The reward assignation~\(\theutilitymap\) is a linear isomorphism between the real linear spaces~\(\measurements\) and~\(\theutilities\).
\end{proposition}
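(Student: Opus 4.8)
The plan is to verify the two defining properties of a linear isomorphism — linearity and bijectivity — by exploiting the structure of $\measurements$ described in \cref{cor:basis:and:eigenvalues} and the explicit formula $\theutility{A}(\fket)=\fbra\measurement{A}\fket$.

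First I would establish linearity of $\theutilitymap$. Given $\measurement{A},\measurement{B}\in\measurements$ and $a,b\in\reals$, the map $a\measurement{A}+b\measurement{B}$ is again Hermitian, and for every $\fket\in\statespace$ the bilinearity (over $\reals$) of the map $\measurement{C}\mapsto\fbra\measurement{C}\fket$ gives $\theutility{a\measurement{A}+b\measurement{B}}(\fket)=\fbra(a\measurement{A}+b\measurement{B})\fket=a\fbra\measurement{A}\fket+b\fbra\measurement{B}\fket=a\theutility{A}(\fket)+b\theutility{B}(\fket)$, i.e. $\theutilitymap(a\measurement{A}+b\measurement{B})=a\theutilitymap(\measurement{A})+b\theutilitymap(\measurement{B})$ as elements of $\realmaps$. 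Since $\theutilities=\cset{\theutility{A}}{\measurement{A}\in\measurements}$ is the image of the linear space $\measurements$ under a linear map, it is automatically a linear subspace of $\realmaps$, and $\theutilitymap\colon\measurements\to\theutilities$ is surjective by construction. So the only real content is injectivity.

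For injectivity, by linearity it suffices to show that $\theutility{A}=0$ (the zero map on $\statespace$) forces $\measurement{A}=\zero$. Suppose $\fbra\measurement{A}\fket=0$ for all $\fket\in\statespace$. Write $\measurement{A}=\sum_{k=1}^m\eigval_k\projection[{\eigspace[\eigval_k]}]$ by \cref{cor:basis:and:eigenvalues}, pick an orthogonal basis of eigenstates $\set{\eigket[1],\dots,\eigket[n]}$ via \cref{prop:basis:and:eigenvalues}, and evaluate: $\eigbra[j]\measurement{A}\eigket[j]=\eigval_j=0$ for each $j$ (each $\eigket[j]$ is a state, hence in $\statespace$), so every eigenvalue vanishes and $\measurement{A}=\zero$. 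A cleaner alternative, which I would present, is the standard polarisation-type argument: for arbitrary nonzero $\gket,\fket\in\hilbertspace$ one recovers $\gbra\measurement{A}\fket$ from the four quantities $\braketwithop{\psi+\phi}{\measurement{A}}{\psi+\phi}$, $\braketwithop{\psi-\phi}{\measurement{A}}{\psi-\phi}$, $\braketwithop{\psi+i\phi}{\measurement{A}}{\psi+i\phi}$, $\braketwithop{\psi-i\phi}{\measurement{A}}{\psi-i\phi}$ (after normalising each of the four vectors, which is where $\statespace$ rather than all of $\hilbertspace$ is used), so $\theutility{A}\equiv0$ on $\statespace$ forces $\gbra\measurement{A}\fket=0$ for all $\gket,\fket$, hence $\measurement{A}=\zero$.

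The main obstacle — really the only subtlety — is that $\theutility{A}$ is defined only on the sphere $\statespace$, not on all of $\hilbertspace$, so I cannot directly invoke the familiar fact that a Hermitian (or general) operator is determined by its full quadratic form on the vector space. The fix is exactly the normalisation step in the polarisation identity above: each of the four combining vectors $\gket\pm\fket$, $\gket\pm i\fket$ is rescaled to a unit vector before applying the hypothesis, and the scaling factors are tracked through the identity. Homogeneity of $\measurement{C}\mapsto\braketwithop{\psi}{\measurement{C}}{\psi}$ of degree $2$ in the state makes this bookkeeping routine. Once injectivity is in hand, $\theutilitymap$ is a linear bijection from $\measurements$ onto $\theutilities$, which is the claim; a remark that it is in fact isometric for the Frobenius norm and the supremum norm on $\statespace$ could be added but is not needed.
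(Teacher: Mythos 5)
Your proof is correct. The linearity and surjectivity parts are the same as the paper's, and your first injectivity argument is essentially the paper's as well, in slightly streamlined form: the paper diagonalises the operator and evaluates the quadratic form on arbitrary normalised superpositions \(\sum_k\con_k\eigket[k]\) (via its \cref{lem:quadratic:form:in:basis}), concluding that \(\sum_k\modulus{\con_k}^2\eigval_k=0\) for all admissible coefficients forces every \(\eigval_k=0\); you observe that taking \(\con_j=1\) and the rest zero --- i.e.\ evaluating at the eigenstates themselves, which lie in \(\statespace\) --- already suffices. Your preferred polarisation route is a genuinely different, equally valid alternative: degree\nobreakdash-2 homogeneity of \(\gket\mapsto\gbra\measurement{A}\gket\) extends the vanishing of the quadratic form from the unit sphere \(\statespace\) to all of \(\hilbertspace\), and complex polarisation then recovers \(\gbra\measurement{A}\fket\) for all kets, giving \(\measurement{A}=\zero\) without invoking the spectral theorem. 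What that buys is independence from diagonalisability (it would survive in settings where an eigenbasis isn't available), at the cost of the normalisation bookkeeping you correctly flag; the eigenbasis argument buys brevity and stays entirely within the lemmas the paper has already set up. Both correctly address the one real subtlety, namely that \(\theutility{A}\) is only defined on the sphere \(\statespace\) rather than on all of \(\hilbertspace\).
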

\noindent The proof is straightforward, but we include it for the sake of completeness.

\begin{proof}
Since, clearly, \(\measurements\) is a real linear space, it suffices that prove that (i) the reward assignation~\(\theutilitymap\) is linear; and (ii) that it's a bijection.

For~(i), consider any~\(\measurement{A},\measurement{B}\in\measurements\), any~\(\alpha,\beta\in\reals\) and any~\(\gket\in\statespace\), then \((\alpha\measurement{A}+\beta\measurement{B})\gket=\alpha\measurement{A}\gket+\beta\measurement{B}\gket\), so the bi-linearity of the inner product then guarantees that
\begin{align*}
\uval_{\alpha\measurement{A}+\beta\measurement{B}}(\gket)
&=\gbra(\alpha\measurement{A}+\beta\measurement{B})\gket
=\gbra(\alpha\measurement{A}\gket+\beta\measurement{B}\gket)
=\alpha\gbra\measurement{A}\gket+\beta\gbra\measurement{B}\gket\\
&=\alpha\theutility{A}(\gket)+\beta\theutility{B}(\gket),
\end{align*}
and therefore, indeed, \(\uval_{\alpha\measurement{A}+\beta\measurement{B}}=\alpha\theutility{A}+\beta\theutility{B}\), so the reward assignation~\(\theutilitymap\) is linear.

For~(ii), it suffices to prove that the reward assignation~\(\theutilitymap\) is one-to-one, as it's clearly onto by the definition of~\(\theutilities\).
So, consider any~\(\measurement{A},\measurement{B}\in\measurements\) and assume that \(\theutility{A}=\theutility{B}\), then we must show that \(\measurement{A}=\measurement{B}\).
If we let \(\measurement{C}\coloneqq\measurement{A}-\measurement{B}\), then we infer from the linearity of~\(\theutilitymap\) that \(\theutility{C}=\theutility{A}-\theutility{B}=0\), and we must prove that \(\measurement{C}=\zero\).
Since \(\measurement{C}\) is Hermitian, it has an orthogonal collection of eigenstates~\(\set{\eigket[1],\dots,\eigket[n]}\) that constitutes a basis for~\(\hilbertspace\).
We'll denote by \(\eigval_k\) the eigenvalue of~\(\measurement{C}\) that corresponds to the eigenstate~\(\eigket[k]\), for~\(k\in\set{1,\dots,n}\).
Consider any complex numbers~\(\con_1,\dots,\con_n\) with \(\sum_{k=1}^n\modulus{\con_k}^2=1\), and let \(\gket\coloneqq\sum_{k=1}^n\con_k\eigket[k]\).
Then it follows from the assumption and \cref{lem:quadratic:form:in:basis} that \(0=\theutility{C}(\gket)=\sum_{k=1}^n\modulus{\con_k}^2\eigval_k\), and since this must hold for all possible choices of the~\(\con_1,\dots,\con_n\), we infer that, necessarily, \(\eigval_1=\dots=\eigval_n=0\).
But then, indeed, by \cref{prop:basis:and:eigenvalues}, \(\measurement{C}=\sum_{k=1}^n\eigval_k\eigket[k]\eigbra[k]=\zero\).
\end{proof}

\noindent For this reason, we'll identify the measurements~\(\measurement{A}\) and the corresponding uncertain rewards~\(\theutility{A}=\dotbra\measurement{A}\dotket\), and the real linear spaces~\(\measurements\) and~\(\theutilities\).
There are two particular aspects of this identification that deserve extra attention.

First, for any real number~\(\mu\), the Hermitian operator~\(\mu\identity\) satisfies \((\mu\identity)\fket=\mu\fket\) for all~\(\fket\in\hilbertspace\) and therefore has a single eigenvalue~\(\mu\) with corresponding eigenspace~\(\hilbertspace\).
Postulate~\cref{post:qm:deterministic} --- or in this case equivalently \cref{post:qm:eigenvalues} --- then guarantees that the corresponding measurement always produces the outcome~\(\mu\) with certainty.
The reward assignation~\(\theutilitymap\) takes these constant measurements~\(\mu\identity\) to the (constant) maps~\(\uval_{\mu\identity}=\dotbra\mu\identity\dotket=\mu\dotbra\identity\dotket=\mu\).
We'll identify in our notations the real number~\(\mu\) and the constant map that assumes the value~\(\mu\).

Second, there are a few (vector) orderings of the real linear space~\(\theutilities\) that have a natural interpretation and will play an important role in what follows.
We begin with the so-called \emph{weak Pareto} or \emph{weak dominance} ordering~\(\geq\), which is the partial vector ordering on~\(\theutilities\) defined by
\[
\uval\geq\vval
\ifandonlyif\group{\forall\gket\in\statespace}\uval(\gket)\geq\vval(\gket),
\text{ for all~\(\uval,\vval\in\theutilities\),}
\]
and the corresponding strict vector ordering~\(\alwaysbetterthan\), also called \emph{weak strict dominance}, given by
\begin{align*}
\uval\alwaysbetterthan\vval
&\ifandonlyif\uval\geq\vval\text{ and }\uval\neq\vval\\
&\ifandonlyif\group{\forall\gket\in\statespace}\uval(\gket)\geq\vval(\gket)
\text{ and }\group{\exists\gket\in\statespace}\uval(\gket)>\vval(\gket),
\text{ for all~\(\uval,\vval\in\theutilities\).}
\end{align*}
There's also the \emph{strong dominance} ordering~\(\alwaysstrictlybetterthan\), which is the strict vector ordering on~\(\theutilities\) defined by
\[
\uval\alwaysstrictlybetterthan\vval
\ifandonlyif\group{\forall\gket\in\statespace}\uval(\gket)>\vval(\gket),
\text{ for all~\(\uval,\vval\in\theutilities\).}
\]
The (inverse of the) linear isomorphism~\(\theutilitymap\) induces corresponding vector orderings on the Hermitian operators, completely characterised as follows:
\begin{equation*}
\left.
\begin{aligned}
\measurement{A}\geq\zero
&\ifandonlyif\theutility{A}\geq\theutility{0}
\ifandonlyif\group{\forall\gket\in\statespace}\theutility{A}(\gket)\geq0
\ifandonlyif\min\spectrum{A}\geq0\\
\measurement{A}\alwaysbetterthan\zero
&\ifandonlyif\measurement{A}\geq\zero\text{ and }\measurement{A}\neq\zero\\
\measurement{A}\alwaysstrictlybetterthan\zero
&\ifandonlyif\theutility{A}\alwaysstrictlybetterthan\theutility{0}
\ifandonlyif\group{\forall\gket\in\statespace}\theutility{A}(\gket)>0
\ifandonlyif\min\spectrum{A}>0
\end{aligned}
\right\}
\text{ for all~\(\measurement{A}\in\measurements\).}
\end{equation*}
These orderings refer to well-known notions for Hermitian operators, some of which we had occasion to mention in our discussion of density operators in \cref{sec:quantum:mechanics:probabilistic}.
For instance, \(\measurement{A}\geq\zero\) means that \(\measurement{A}\) is \emph{positive semidefinite}, \(\measurement{A}\gneq\zero\) means that \(\measurement{A}\) is \emph{positive semidefinite and non-zero} and \(\measurement{A}\alwaysstrictlybetterthan\zero\) means that \(\measurement{A}\) is \emph{positive definite}.

On our way of looking at things, \(\measurement{A}\geq\measurement{B}\), or equivalently \(\measurement{A}-\measurement{B}\geq\zero\), means that \(\act{A}\) always produces an uncertain reward~\(\theutility{A}(\uket)\) that's at least as high as the uncertain reward~\(\theutility{B}(\uket)\) produced by~\(\act{B}\), regardless of what value~\(\uket\) assumes in~\(\statespace\).
That \(\measurement{A}\alwaysbetterthan\measurement{B}\), or equivalently \(\measurement{A}-\measurement{B}\alwaysbetterthan\zero\), means that, in addition, the corresponding rewards aren't equal for all values that~\(\uket\) may assume in~\(\statespace\).
We'll denote by~\(\posmeasurements\) the set of all non-null measurements with a non-negative uncertain reward, or in other words, all non-null positive semidefinite Hermitian operators.
Also, \(\negmeasurements\coloneqq-\posmeasurements\) is the set of all non-null negative semidefinite Hermitian operators.

Similarly, \(\measurement{A}\alwaysstrictlybetterthan\measurement{B}\), or equivalently  \(\measurement{A}-\measurement{B}\alwaysstrictlybetterthan\zero\), means that \(\act{A}\) always produces an uncertain reward~\(\theutility{A}(\uket)\) that's strictly higher than the uncertain reward~\(\theutility{B}(\uket)\) produced by~\(\act{B}\), regardless of what value~\(\uket\) assumes in~\(\statespace\).
We'll denote by \(\strictlyposmeasurements\) the set of all measurements with a positive uncertain reward, or in other words, all positive definite Hermitian operators.
Also, \(\strictlynegmeasurements\coloneqq-\strictlyposmeasurements\) is the set of all negative definite Hermitian operators.

\begin{example}\label{example:utility:maps}
To illustrate these ideas, we go back to the example of a qubit, and we refer to Ref.~\cite[Sec.~2.1.3]{nielsen2010:quantum} for the mathematical background details.

All Hermitian operators on its Hilbert space~\(\hilbertspace\) can be written as linear combinations of the identity operator~\(\identity\) and the Pauli operators~\(\paulix\), \(\pauliy\) and~\(\pauliz\).
In other words, for any~\(\measurement{A}\in\measurements\), there are scalars~\(w,x,y,z\in\reals\) such that \(\measurement{A}=w\identity+x\paulix+y\pauliy+z\pauliz\).

Moreover, for any~\(\alpha,\beta\in\complexes\), it holds that
\begin{multline*}
\paulix\group[\big]{\alpha\zeroket+\beta\oneket}=\beta\zeroket+\alpha\oneket
\text{ and }
\pauliy\group[\big]{\alpha\zeroket+\beta\oneket}=-i\beta\zeroket+i\alpha\oneket\\
\text{ and }
\pauliz\group[\big]{\alpha\zeroket+\beta\oneket}=\alpha\zeroket-\beta\oneket,
\end{multline*}
and therefore
\begin{multline*}
\uval_{\paulix}\group[\big]{\alpha\zeroket+\beta\oneket}=\alpha\beta^*+\alpha^*\beta
\text{ and }
\uval_{\pauliy}\group[\big]{\alpha\zeroket+\beta\oneket}=\im(\alpha\beta^*-\alpha^*\beta)\\
\text{ and }
\uval_{\pauliz}\group[\big]{\alpha\zeroket+\beta\oneket}=\alpha\alpha^*-\beta\beta^*,
\end{multline*}
so
\[
\theutility{A}\group[\big]{(\alpha\zeroket+\beta\oneket}
=\alpha\alpha^*(w+z)+\beta\beta^*(w-z)+\alpha\beta^*(x+\im y)+\alpha^*\beta(x-\im y).
\]
We also find after some algebraic manipulations that the eigenvalues~\(\lambda\) of~\(\measurement{A}\) are given by
\begin{equation}\label{eq:qubit:eigenvalues}
\lambda=w\pm\sqrt{x^2+y^2+z^2},
\end{equation}
and therefore
\begin{align}
\possemidefmeasurements
&=\cset{w\identity+x\paulix+y\pauliy+z\pauliz}{w,x,y,z\in\reals\text{ and }\sqrt{x^2+y^2+z^2}\leq w}\label{eq:qubit:possemidefcone}\\
\posmeasurements
&=\cset{w\identity+x\paulix+y\pauliy+z\pauliz}{w,x,y,z\in\reals\text{ and }\sqrt{x^2+y^2+z^2}\leq w\neq0}\label{eq:qubit:poscone}\\
\strictlyposmeasurements
&=\cset{w\identity+x\paulix+y\pauliy+z\pauliz}{w,x,y,z\in\reals\text{ and }\sqrt{x^2+y^2+z^2}<w}.\qedhere
\end{align}
\end{example}

We close this mathematical digression with a very brief foray into topology.
The so-called \emph{supremum} norm of a map~\(g\colon\statespace\to\reals\) is defined by~\(\supnorm{g}\coloneqq\sup_{\fket\in\statespace}\abs{g(\fket)}\).
It turns \(\theutilities\), and therefore indirectly also \(\measurements\) via the (inverse of the) linear isomorphism~\(\theutilitymap\), into a normed linear space; simply observe that all the elements~\(\theutility{A}\) of~\(\theutilities\) are \emph{bounded} --- have bounded supremum norm~\(\supnorm{\theutility{A}}\).
Indeed, for any~\(\measurement{A}\in\measurements\), we know from~\Cref{prop:basis:and:eigenvalues} that there's a basis of eigenvectors~\(\eigket[1],\dots,\eigket[n]\) of~\(\measurement{A}\) for the Hilbert space~\(\hilbertspace\), with corresponding eigenvalues~\(\eigval_1,\dots,\eigval_n\), so taking into account~\Cref{lem:normalisation,lem:quadratic:form:in:basis},
\begin{align}
\supnorm{\measurement{A}}
\coloneqq&\supnorm{\theutility{A}}
=\sup_{\fket\in\statespace}\abs{\theutility{A}(\fket)}
=\sup\cset[\bigg]{\abs[\bigg]{\sum_{k=1}^n\modulus{\con_k}^2\eigval_k}}{\sum_{k=1}^n\modulus{\con_k}^2=1}
\notag\\
=&\max\cset{\abs{\eigval}}{\eigval\in\spectrum{A}}.
\label{eq:supnorm}
\end{align}
Observe, by the way, that for the standard definition of the so-called \emph{operator norm} \cite[Ch.~23]{schechter1997} on the normed linear space~\(\hilbertspace\), there's a related result: taking into account that \(\adjoint{\measurement{A}}\measurement{A}=\measurement{A}^2\) is also Hermitian with the same eigenstates as~\(\measurement{A}\), and with eigenvalues that are the squares of the corresponding eigenvalues of~\(\measurement{A}\), we find that
\begin{align}
\opnorm{\measurement{A}}
\coloneqq&\sup_{\fket\in\hilbertspace\setminus\set{0}}\frac{\norm{\measurement{A}\fket}}{\norm{\fket}}
=\sup_{\fket\in\hilbertspace\setminus\set{0}}\frac{\sqrt{\fbra\adjoint{\measurement{A}}\measurement{A}\fket}}{\sqrt{\fbraket}}
=\sup_{\fket\in\statespace}\sqrt{\fbra\adjoint{\measurement{A}}\measurement{A}\fket}
=\supnorm{\adjoint{\measurement{A}}\measurement{A}}\notag\\
=&\max\cset{\abs\eigval}{\eigval\in\spec\group{\adjoint{\measurement{A}}\measurement{A}}}
=\max\cset{\eigval^2}{\eigval\in\spectrum{A}}.
\label{eq:opnorm:hermitian}
\end{align}

Since the real linear space~\(\measurements\) is finite-dimensional, all norms are equivalent --- lead to the same topology.
In \Cref{sec:hermitian:operators:basics} we came across the Frobenius norm~\(\frobnorm{\bolleke}\) on~\(\measurements\), which therefore gives rise to the same topology of open sets, the same topological interior operator~\(\interior(\bolleke)\) and the same topological closure operator~\(\closure(\bolleke)\), as the supremum norm~\(\supnorm{\bolleke}\) and the operator norm \(\opnorm{\bolleke}\) do.

Observe, by the way, that a similar argumentation, again based on~\Cref{lem:normalisation,lem:quadratic:form:in:basis}, allows us to infer that
\begin{align}
\inf\theutility{A}
\coloneqq&\inf_{\fket\in\statespace}\theutility{A}(\fket)
=\inf\cset[\bigg]{\sum_{k=1}^n\modulus{\con_k}^2\eigval_k}{\sum_{k=1}^n\modulus{\con_k}^2=1}
\notag\\
=&\min\cset{\eigval}{\eigval\in\spectrum{A}}
=\min\spectrum{A}.
\label{eq:infimum:utility}
\end{align}
and similarly that \(\sup\theutility{A}=\max\spectrum{A}\).

\subsection{Sets of desirable measurements}\label{sec:sets:of:desirable:measurements}
We'll take \(\theutility{A}\betterthan\theutility{B}\) to mean that, based on Your beliefs about the value of~\(\uket\) in~\(\statespace\), You strictly prefer\footnote{Such \emph{strict preference} can be given the following operationalisable meaning: You strictly prefer~\(\theutility{A}(\uket)\) to \(\theutility{B}(\uket)\) if You accept the uncertain reward~\(\theutility{A}(\uket)-\theutility{B}(\uket)\) but don't want to give it away; see for instance Ref.~\cite{quaeghebeur2015:statement} for a thorough discussion of such preferences, also leading to a justification for the axioms~\labelcref{ax:preference:irreflexitivy,ax:preference:transitivity,ax:preference:background,ax:preference:additivity,ax:preference:scaling} and \labelcref{ax:desirability:strict,ax:desirability:additivity,ax:desirability:scaling,ax:desirability:background}.} the uncertain reward~\(\theutility{A}(\uket)\) corresponding to measurement~\(\measurement{A}\) to the uncertain reward~\(\theutility{B}(\uket)\) corresponding to measurement~\(\measurement{B}\).
Thus, Your beliefs lead You to a (strict) preference relation~\(\betterthan\) on the set~\(\theutilities\): it collects those couples~\((\theutility{A},\theutility{B})\) for which You strictly prefer~\(\theutility{A}(\uket)\) to~\(\theutility{B}(\uket)\).
\emph{We'll say that Your beliefs are represented by the preference ordering~\(\betterthan\).}

As we'll see further on in \cref{sec:inference}, we'll want to allow for the possibility that the ordering~\(\betterthan\) is only a partial representation (also called an \emph{assessment}) of Your beliefs: You may not have all the time and resources needed to give an account of all Your preferences between the (infinitely many) uncertain rewards in~\(\theutilities\).
This is one practical reason why we don't require the ordering~\(\betterthan\) to be total.
Another, more fundamental, reason is that You may not have at Your disposal all the information that would lead You to impose a total ordering on the uncertain rewards in~\(\theutilities\).

But we do require the ordering~\(\betterthan\) to take into account the rational implications of those preferences that You do express, and to also satisfy certain non-inferential rationality requirements, such as respecting those \emph{background preferences} that should always be present, regardless of any information or beliefs You might have about the value of~\(\uket\) in~\(\statespace\).
These requirements, both inferential and non-inferential, are captured by the notion of \emph{coherence}, as expressed by the axioms~\labelcref{ax:preference:irreflexitivy,ax:preference:transitivity,ax:preference:background,ax:preference:additivity,ax:preference:scaling}, or equivalently \labelcref{ax:desirability:strict,ax:desirability:additivity,ax:desirability:scaling,ax:desirability:background}, further on.

Because we've argued above that the linear isomorphism~\(\theutilitymap\) allows us to move freely from measurements~\(\measurement{A}\) to uncertain rewards~\(\theutility{A}\) and backwards, we see that we can readily interpret a preference \(\theutility{A}\betterthan\theutility{B}\) as a preference between measurements (or between the corresponding acts): \(\measurement{A}\betterthan\measurement{B}\ifandonlyif\theutility{A}\betterthan\theutility{B}\).\footnote{In Appendix~\labelcref{app:POVMs}, we give an indication of how this preference ordering can be extended from so-called \emph{projective measurements} \(\measurement{A}\in\measurements\) to the more general case of POVM measurements.}

Once the step of focusing on such preference relations is taken, there's not much else we can do but apply the existing theory for choosing between uncertain rewards --- see, for instance Refs.~\cite{finetti19745,walley1991,debock2018,augustin2013:itip,seidenfeld1995,troffaes2013:lp,quaeghebeur2015:statement} and the discussion in \cref{sec:decision:theoretic:background} --- and translating everything back to preferences between measurements.
In doing so, we'll mainly follow the lead taken by Benavoli, Facchini and Zaffalon in their earlier work \cite{benavoli2019:computational,benavoli2016:quantum}, but we'll also be adding a few interesting details as we go along.
This is our programme for the remainder of this section and \cref{sec:lower:upper:previsions,sec:coherent:previsions}.

The notion of coherence captures the \emph{minimal} rationality requirements that we'll want Your preferences to satisfy.
We call a binary preference ordering~\(\betterthan\) on the space of measurements~\(\measurements\) \emph{coherent} if it satisfies the following conditions:
\begin{enumerate}[label={\upshape PO\arabic*.},ref={\upshape PO\arabic*},leftmargin=*,widest=5]
\item\label{ax:preference:irreflexitivy} \(\measurement{A}\notbetterthan\measurement{A}\) for all~\(\measurement{A}\in\measurements\);\hfill[irreflexivity]
\item\label{ax:preference:transitivity} \(\measurement{A}\betterthan\measurement{B}\) and \(\measurement{B}\betterthan\measurement{C}\then\measurement{A}\betterthan\measurement{C}\) for all~\(\measurement{A},\measurement{B},\measurement{C}\in\measurements\);\hfill[transitivity]
\item\label{ax:preference:scaling} \(\measurement{A}\betterthan\measurement{B}\then\lambda\measurement{A}\betterthan\lambda\measurement{B}\) for all~\(\measurement{A},\measurement{B}\in\measurements\) and all~\(\lambda\in\posreals\);\hfill[positive scaling]
\item\label{ax:preference:background} \(\measurement{A}\alwaysbetterthan\measurement{B}\then\measurement{A}\betterthan\measurement{B}\) for all~\(\measurement{A},\measurement{B}\in\measurements\).\hfill[monotonicity]
\item\label{ax:preference:additivity} \(\measurement{A}\betterthan\measurement{B}\then\group{\measurement{A}+\measurement{C}}\betterthan\group{\measurement{B}+\measurement{C}}\) for all~\(\measurement{A},\measurement{B},\measurement{C}\in\measurements\);\hfill[additivity]
\end{enumerate}
What lies behind these coherence requirements?

Axioms~\labelcref{ax:preference:irreflexitivy,ax:preference:transitivity} reflect the strict partial order aspect of the preference.
We don't require the ordering to be \emph{total}, by the way: \(\measurement{A}\notbetterthan\measurement{B}\) \emph{and} \(\measurement{B}\notbetterthan\measurement{A}\) needn't imply that \(\measurement{A}=\measurement{B}\); it may be that You're \emph{indifferent} between the different uncertain rewards~\(\theutility{A}(\uket)\) and~\(\theutility{B}(\uket)\) in the sense that You're willing to exchange any one of the two for the other, but alternatively \emph{also} that to You, \(\measurement{A}\) and \(\measurement{B}\) are \emph{incomparable} in the sense that You don't feel able or compelled to express a (weak or strict) preference between the uncertain rewards~\(\theutility{A}(\uket)\) and~\(\theutility{B}(\uket)\).

Axioms~\labelcref{ax:preference:additivity,ax:preference:scaling} turn the preference ordering into a \emph{vector} ordering that's compatible with, or preserved under, the addition and scalar multiplication of measurements.
This is a reflection of the linearity of the utility scale that rewards are expressed in, and which we assumed from the outset in \cref{sec:our:approach}.

And, finally, Axiom~\labelcref{ax:preference:background} expresses that Your preferences should take into account the natural \emph{background ordering}~\(\alwaysbetterthan\) between measurements.
If  \(\measurement{A}\alwaysbetterthan\measurement{B}\), or equivalently, \(\theutility{A}\alwaysbetterthan\theutility{B}\) then, regardless of Your beliefs about the unknown state~\(\uket\), You ought to always strictly prefer \(\measurement{A}\) over~\(\measurement{B}\), as the uncertain reward~\(\theutility{A}\) can never be lower than the uncertain reward~\(\theutility{B}\) in any given state, while for some states it will be strictly higher.\footnote{Most, if not all, of what we'll discuss in \cref{sec:sets:of:desirable:measurements,sec:inference,sec:lower:upper:previsions,sec:coherent:previsions} remains, mutatis mutandis, valid if we consider \(\alwaysstrictlybetterthan\) rather than \(\alwaysbetterthan\) as the background ordering and add the monotonicity requirement that \(\measurement{A}\betterthan\measurement{B}\) and \(\measurement{B}\geq\measurement{C}\) should imply that also \(\measurement{A}\betterthan\measurement{C}\), or equivalently, that \(\measurement{A}\betterthan\zero\) and \(\measurement{B}\geq\measurement{A}\) should imply that also \(\measurement{B}\betterthan\zero\); this connects directly to the remark in footnote~\ref{fn:alternative:desirability}.}
We can now summarise the result of our argumentation so far in the following way.

\begin{DMsumm}
Your beliefs about the value of~\(\uket\) in~\(\statespace\) lead You to strictly prefer some uncertain rewards in~\(\theutilities\) to others, which leads to a strict preference relation~\(\betterthan\) on the set of uncertain rewards~\(\theutilities\), and therefore also on the set of measurements~\(\measurements\), satisfying the coherence requirements~\labelcref{ax:preference:irreflexitivy,ax:preference:transitivity,ax:preference:background,ax:preference:additivity,ax:preference:scaling}.
\end{DMsumm}

Because a coherent~\(\betterthan\) is a strict vector ordering, we can represent it in a mathematically equivalent manner by the following convex cone of measurements:
\begin{equation*}
\desirset\coloneqq\cset{\measurement{A}\in\measurements}{\measurement{A}\betterthan\zero}.
\end{equation*}
Its elements are the measurements that You (strictly) prefer to the zero measurement~\(\zero\), which gives You zero utility and is therefore often called the \emph{status quo}.
We'll therefore call such measurements \emph{desirable} (to You), and the corresponding convex cone~\(\desirset\) a \emph{{\SDM}} (for You).
Observe that, by \labelcref{ax:preference:additivity},
\begin{equation}\label{eq:desirability:ordering:correspondence}
\measurement{A}\betterthan\measurement{B}
\ifandonlyif\measurement{A}-\measurement{B}\betterthan\zero
\ifandonlyif\measurement{A}-\measurement{B}\in\desirset,
\text{ for any two measurements~\(\measurement{A},\measurement{B}\in\measurements\)},
\end{equation}
which confirms that the ordering~\(\betterthan\) is indeed completely determined by the convex cone~\(\desirset\), and vice versa.

Such {\SDMs}~\(\desirset\) therefore constitute an \emph{equivalent representation} to preference orderings~\(\betterthan\).
The basic rationality requirements that are typically imposed on them are:
\begin{enumerate}[label={\upshape D\arabic*.},ref={\upshape D\arabic*},leftmargin=*,widest=4,series=desirability]
\item\label{ax:desirability:strict} \(\zero\notin\desirset\);
\item\label{ax:desirability:additivity} if \(\measurement{A},\measurement{B}\in\desirset\) then also \(\measurement{A}+\measurement{B}\in\desirset\);
\item\label{ax:desirability:scaling} if \(\measurement{A}\in\desirset\) and \(\lambda\in\posreals\) then also \(\lambda\measurement{A}\in\desirset\);
\item\label{ax:desirability:background} if \(\measurement{A}\alwaysbetterthan\zero\) then \(\measurement{A}\in\desirset\).
\end{enumerate}
They're the one-to-one counterparts, in that order and via \cref{eq:desirability:ordering:correspondence} and therefore \labelcref{ax:preference:additivity}, for the axioms~\labelcref{ax:preference:irreflexitivy,ax:preference:transitivity,ax:preference:background,ax:preference:scaling}.\footnote{In other words, and more explicitly, under \labelcref{ax:preference:additivity}, each of the requirements \labelcref{ax:preference:irreflexitivy,ax:preference:transitivity,ax:preference:background,ax:preference:scaling} for the preference ordering~\(\betterthan\) is equivalent to one of the requirements \labelcref{ax:desirability:strict,ax:desirability:additivity,ax:desirability:scaling,ax:desirability:background} for the {\SDM}~\(\desirset\), namely the one with the same number.}
Any subset~\(\desirset\) of~\(\measurements\) with these four properties is called a \emph{coherent} {\SDM} (in~\(\measurements\)).
Axiom~\labelcref{ax:desirability:strict} essentially states that Your ordering~\(\betterthan\) must be strict, and together with \labelcref{ax:desirability:additivity} guarantees, via \cref{eq:desirability:ordering:correspondence}, that You have a strict preference ordering between acts.
Axioms~\labelcref{ax:desirability:additivity,ax:desirability:scaling} state that \(\desirset\) must be a convex cone, so they --- in combination with \cref{eq:desirability:ordering:correspondence} and therefore \labelcref{ax:preference:additivity} --- also reflect the linearity of the utility scale.
The final axiom~\labelcref{ax:desirability:background} simply requires that non-null measurements that never yield a negative outcome --- that are positive semidefinite --- must be desirable to You.\footnote{\label{fn:alternative:desirability}For some applications, it may be preferable to replace \labelcref{ax:desirability:background} with the weaker condition that \(\desirset\) should include all positive definite measurements, and then also to explicitly add the requirement that \(\measurement{A}\in\desirset\) and \(\measurement{B}\geq\measurement{A}\) imply that \(\measurement{B}\in\desirset\); see \labelcref{ax:desirability:monotone} further on. Doing this wouldn't affect the main conclusions in this paper. Again, most of what we'll discuss in \cref{sec:sets:of:desirable:measurements,sec:inference,sec:lower:upper:previsions,sec:coherent:previsions} remains, mutatis mutandis, valid if we consider \(\alwaysstrictlybetterthan\) rather than \(\alwaysbetterthan\) as the background ordering.}

It follows readily from the coherence axioms~\labelcref{ax:desirability:strict,ax:desirability:additivity,ax:desirability:background} that a coherent {\SDM}~\(\desirset\) also has the following properties, where we denote by \(\negsemidefmeasurements\coloneqq-\possemidefmeasurements\) the set of all negative semidefinite measurements:
\begin{enumerate}[resume*=desirability,widest=6]
\item\label{ax:desirability:apl} \(\desirset\cap\negsemidefmeasurements=\emptyset\);
\item\label{ax:desirability:monotone} if \(\measurement{A}\in\desirset\) and \(\measurement{B}\geq\measurement{A}\) then also \(\measurement{B}\in\desirset\).
\end{enumerate}

We've been talking about desirable \emph{measurements}, but we want to remind  readers at this point that a measurement~\(\measurement{A}\) is considered to be desirable because You strictly prefer the associated uncertain reward~\(\theutility{A}(\uket)\) to the status quo~\(0\).
The desirability of measurements always goes back to the desirability of the associated uncertain rewards and thus to the desirability of \emph{gambles} --- bounded real-valued maps.
The coherence axioms~\labelcref{ax:desirability:strict,ax:desirability:additivity,ax:desirability:scaling,ax:desirability:background} are direct translations of coherence axioms for the desirability of gambles that appear in the relevant imprecise probabilities literature \cite{walley2000,quaeghebeur2015:statement,quaeghebeur2012:itip,decooman2015:coherent:predictive:inference,cooman2010,benavoli2016:quantum,cooman2021:archimedean:choice}, and in that sense the desirable measurements framework we're about to explore below, can be seen as a direct application of imprecise probabilities ideas to quantum theory.

It's important to stress here that there's \emph{nothing essentially probabilistic} about this representation of Your beliefs.
To give a simple example, the so-called \emph{vacuous} {\SDM}~\(\desirset[\mathrm{vac}]\coloneqq\cset{\measurement{A}\in\measurements}{\measurement{A}\alwaysbetterthan\zero}=\posmeasurements\) is coherent and therefore allowable as a candidate model for Your preferences.
It represents Your complete ignorance about~\(\uket\) in the sense that it reflects Your (quite conservative) lack of any inclination to engage in any~\(\act{A}\) whose reward function~\(\theutility{A}\) has a negative value in some states.
Recall that, for the corresponding \emph{vacuous strict preference ordering}~\(\alwaysbetterthan\),
\begin{equation*}
\measurement{A}\alwaysbetterthan\measurement{B}
\ifandonlyif\measurement{A}-\measurement{B}\alwaysbetterthan\zero
\ifandonlyif\measurement{A}\neq\measurement{B}\text{ and }\min\spec(\measurement{A}-\measurement{B})\geq0,
\text{ for all~\(\measurement{A},\measurement{B}\in\measurements\)}.
\end{equation*}
How this condition is expressed in terms of the eigenvalues of \emph{different}~\(\measurement{A}\) and \(\measurement{B}\) depends on how `commensurate' \(\measurement{A}\) and \(\measurement{B}\) are: the simplest condition arises when \(\measurement{A}\) and \(\measurement{B}\) commute, as then all eigenvalues of~\(\measurement{A}\) must be at least as large as the \emph{corresponding} eigenvalues of~\(\measurement{B}\) (with at least one eigenvalue being strictly larger).

Of course, many other, and arguably more interesting, types of preference orders are also allowed by this {\SDM} approach, and we'll have occasion to study quite a few of them in the following sections.
But before doing so, we want to briefly draw attention to an interesting aspect of our allowing Your preference ordering to be partial and not insisting on totality, as is often done \cite{savage1972,finetti19745,deutsch1999:quantum:decisions,wallace2003:defending:deutsch,wallace2007:improving:deutsch,wallace2009:born:arxiv}.
This is the type of approach advocated in the so-called imprecise probabilities literature \cite{walley1991,troffaes2013:lp,augustin2013:itip,debock2018,zaffalon2017:incomplete:preferences}, which allows the \emph{inferential aspects} of uncertain reasoning to come to the fore, as already hinted at in the Introduction.

\subsection{Conservative inference for desirability}\label{sec:inference}
When we want to model Your beliefs about the system under consideration, it may not be feasible to ask You to come up with all the measurements that are desirable to You --- or all Your strict preferences for that matter.
It may be much more reasonable to expect only a \emph{partial assessment}~\(\assessment\subseteq\measurements\) of the measurements --- uncertain rewards --- You deem desirable.
And because the coherence axioms~\labelcref{ax:desirability:strict,ax:desirability:additivity,ax:desirability:scaling,ax:desirability:background} are closed under taking arbitrary intersections, we can use the idea of closure, or conservative inference, to infer from this partial assessment the smallest --- most conservative --- coherent {\SDMs} that it includes, similarly to what is done in propositional logic by taking the deductive closure of a collection of propositions.
In this analogy, the measurements in~\(\posmeasurements\) play the role of the logical tautologies, and the measurements in~\(\negsemidefmeasurements\) that of the logical contradictions.

Let's investigate exactly how this conservative inference with sets of desirable measurements works: we want to find out \emph{if} and \emph{how} we can associate with a partial assessment~\(\assessment\) the most conservative coherent {\SDM} that includes it, where by `more conservative' we mean `containing fewer desirable measurements', so being smaller in terms of set inclusion.
The \emph{if} question is related to the notion of \emph{consistency}: we'll call an assessment~\(\assessment\) \emph{consistent} if it can be extended to --- is included in --- some coherent {\SDM}.
We're about to discover that the answers to the \emph{if} and the \emph{how} questions are intimately related.

We start by defining the \emph{positive hull} operator, which associates with any set of measurements~\(\assessment\subseteq\measurements\) the smallest convex cone that includes~\(\assessment\):
\[
\posi(\assessment)
\coloneqq\cset[\bigg]{\sum_{k=1}^n\lambda_k\measurement{A}_k}{n\in\naturals,\lambda_k>0,\measurement{A}_k\in\assessment}.
\]
It's now clear that \(\ext\group{\assessment} \coloneqq\posi(\assessment\cup\posmeasurements)\) is the smallest {\SDM} that includes~\(\assessment\) and satisfies axioms~\labelcref{ax:desirability:additivity,ax:desirability:scaling,ax:desirability:background}.
Indeed, any coherent {\SDM}~\(\desirset\) that includes~\(\assessment\) must also include~\(\assessment\cup\posmeasurements\) by \labelcref{ax:desirability:background}.
But \(\assessment\cup\posmeasurements\subseteq\desirset\) implies that also \(\ext\group{\assessment} =\posi\group{\assessment\cup\posmeasurements}\subseteq\posi\group{\desirset}=\desirset\), where the last equality follows from \labelcref{ax:desirability:additivity,ax:desirability:scaling}.
Moreover, the convex cone~\(\ext\group{\assessment} \) itself satisfies~\labelcref{ax:desirability:additivity,ax:desirability:scaling,ax:desirability:background}.
It's now easy to see that \emph{the assessment~\(\assessment\) is consistent if and only if \(\zero\notin\ext\group{\assessment} \), and that in that case \(\ext\group{\assessment} \) is the smallest coherent {\SDM} that includes~\(\assessment\).}

Indeed, if \(\zero\notin\ext\group{\assessment} \), then \(\ext\group{\assessment} \) satisfies all coherence axioms~\labelcref{ax:desirability:additivity,ax:desirability:scaling,ax:desirability:background,ax:desirability:strict} and is therefore the smallest coherent {\SDM} that includes~\(\assessment\), and therefore the most conservative coherent model that's compatible with Your assessment.
There are then typically infinitely many coherent extensions of~\(\assessment\), each one of which will include \(\ext\group{\assessment} \) and will therefore be more committal, or less conservative, in the sense that more desirable measurements will be included.
Your assessment is then indeed \emph{consistent}, because it can be extended to the coherent {\SDM}~\(\ext\group{\assessment} \).
If, on the other hand, \(\zero\in\ext\group{\assessment} \) then there's no compatible coherent model, meaning that Your assessment~\(\assessment\) is \emph{inconsistent}.

Since, moreover,
\begin{align}
\ext\group{\assessment}
&=\posi\group[\big]{\assessment\cup\posmeasurements}
=\posi\group[\big]{\posmeasurements}\cup\posi\group{\assessment}\cup\group[\big]{\posi\group{\assessment}+\posi\group{\posmeasurements}}\notag\\
&=\posmeasurements\cup\posi\group{\assessment}\cup\group[\big]{\posi\group{\assessment}+\posmeasurements}
=\posmeasurements\cup\group[\big]{\posi\group{\assessment}+\possemidefmeasurements},
\label{eq:natural:extension}
\end{align}
we can simplify the consistency condition as follows:
\begin{equation}\label{eq:consistency}
\zero\notin\ext\group{\assessment}
\ifandonlyif\zero\notin\posi\group{\assessment}+\possemidefmeasurements\\
\ifandonlyif\posi\group{\assessment}\cap\negsemidefmeasurements=\emptyset.
\end{equation}
It should be clear from this discussion that the operator~\(\ext\) acts like a deductive closure, or conservative inference, operator that associates with any assessment~\(\assessment\) the set \(\ext\group{\assessment} \) of all measurements whose desirability can be deduced from~\(\assessment\) taking into account the `production axioms' \labelcref{ax:desirability:additivity,ax:desirability:scaling,ax:desirability:background}.

\begin{example}\label{example:desirability:and:conservative:inference}
To illustrate these ideas, we go back to the example of a qubit.
If You have no knowledge at all about the system's state, then the set~\(\posmeasurements\) of all non-null positive semidefinite Hermitian operators is a reasonable model for Your desirable measurements.

Suppose You make the assessment that \(\indmeasurement{A}{o}=w_o\identity+x_o\paulix+y_o\pauliy+z_o\pauliz\) is desirable as well, for some~\(w_o,x_o,y_o,z_o\in\reals\).
If we recall \cref{eq:natural:extension}, then we see that the most conservative {\SDM} that corresponds to the assessment~\(\assessment\coloneqq\set{\indmeasurement{A}{o}}\) is
\begin{align}
\ext\group{\assessment}
&=\posmeasurements\cup\group[\big]{\posi\group{\assessment}+\possemidefmeasurements}
=\posmeasurements\cup\group[\big]{\cset{a\indmeasurement{A}{o}}{a>0}+\possemidefmeasurements}
\label{eq:qubit:natural:extension}\\
&=\Big\{(aw_o+w)\identity+(ax_o+x)\paulix+(ay_o+y)\pauliy+(az_o+z)\pauliz\colon\notag\\
&\hspace{18em}(a,w)>0\text{ and }w\geq\sqrt{x^2+y^2+z^2}\Big\}\notag\\
&=\Big\{w\identity+x\paulix+y\pauliy+z\pauliz\colon\notag\\
&\hspace{3em}(a,w-aw_o)>0\text{ and }w-aw_o\geq\sqrt{(x-ax_o)^2+(y-ay_o)^2+(z-az_o)^2}\Big\},
\label{eq:qubit:natural:extension:detailed}
\end{align}
where we let `\((a,b)>0\)' be a shorthand for `\(a,b\geq0\text{ and }a+b>0\)'; the third equality follows from~\cref{eq:qubit:possemidefcone,eq:qubit:poscone} and the final equality from the replacement \(ax_o+x\instantiateas x\), \(ay_o+y\instantiateas y\), \(az_o+z\instantiateas z\) and \(aw_o+w\instantiateas w\).
\cref{eq:consistency} guarantees that the consistency condition~\(\zero\notin\ext\group{\assessment} \) will be satisfied if and only if \(\indmeasurement{A}{o}\notin\negsemidefmeasurements\), which is in its turn equivalent to~\(\sqrt{x_o^2+y_o^2+z_o^2}>-w_o\), by \cref{eq:qubit:possemidefcone}.
In that case, Your assessment that \(\indmeasurement{A}{o}\) is desirable, is consistent, and \(\ext(\set{\indmeasurement{A}{o}})\) is the set of all measurements that You ought to then also deem desirable, as a result of Your making the desirability assessment~\(\assessment=\set{\indmeasurement{A}{o}}\).
\end{example}

To summarise: it's a consequence of our assumptions and argumentation that Your beliefs about the value of the uncertain quantum state~\(\uket\) in~\(\statespace\) can be modelled by a strict vector ordering on the linear space of uncertain rewards~\(\theutilities\), or equivalently, on the linear space of measurements~\(\measurements\), satisfying the axioms~\labelcref{ax:preference:irreflexitivy,ax:preference:transitivity,ax:preference:background,ax:preference:additivity,ax:preference:scaling}.
Such a strict vector ordering was then found to be mathematically equivalent to a coherent set of desirable measurements~\(\desirset\), which satisfies the axioms~\labelcref{ax:desirability:strict,ax:desirability:additivity,ax:desirability:scaling,ax:desirability:background}.
Any such coherent~\(\desirset\) is then a \emph{candidate model} for Your uncertainty about the system state~\(\uket\), very much like in classical probability theory, any probability measure is candidate model for your beliefs about the outcome of a classical experiment.
The specific~\(\desirset\) that You choose to use is then the one that You find most appropriate for Your beliefs about the system state~\(\uket\), the one that according to You expresses Your epistemic uncertainty about the system state~\(\uket\).
We'll see further on in \cref{prop:mix} that some of these coherent sets -- the weakly maximal ones -- correspond to density operators, so our theory encompasses the standard quantum mechanical formalism for describing Your epistemic uncertainty about the system state.

There are many ways You could come up with a coherent set of desirable measurements~\(\desirset\) that reflects Your beliefs.
One such way, based on partial assessments and conservative inference, has been discussed above, and we've tried to clarify how it could be implemented in practice in the various instalments of our qubit system example thus far.
Other possibilities are open to You, however, which are rather more closely related to the established culture of using density operators to represent Your epistemic uncertainty about a quantum system's state, and which we'll discuss in the next two sections.

\section{Archimedean models: coherent lower and upper previsions}\label{sec:lower:upper:previsions}
Coherent {\SDMs}~\(\desirset\), our most general models for describing Your uncertainty about the system state~\(\uket\), are indeed very general and can be quite complex.
Their complexity can be reduced somewhat by limiting ourselves to the special case of so-called \emph{Archimedean} models, where the associated preference ordering is essentially reduced to comparing real numbers.
This simplification is achieved by looking at \emph{coherent lower} and \emph{upper previsions}, which are studied in the imprecise probabilities literature; see for instance, Refs.~\cite{walley1991,augustin2013:itip,troffaes2013:lp,walley2000,cooman2021:archimedean:choice}.
Doing so will take us closer to the language of probabilities, but we'll warn in Section~\labelcref{sec:a:warning} against too naive an enthusiasm in making such interpretational jumps.

We're now going to discuss how such coherent lower (and upper) previsions can be derived from coherent sets of desirable measurements, and what their properties are.
In \cref{sec:coherent:previsions}, we'll then turn to the special case of so-called \emph{coherent previsions}, which are coherent lower and upper previsions that coincide, and show how they provide a connection with density operators, the existing models for uncertainty about a quantum state.
An overview of the relationship between the various types of models introduced and described in this paper is given in \cref{fig:models} near the end of that section.

In the imprecise probabilities literature, coherent lower and upper previsions are typically defined for \emph{gambles}, which are bounded real-valued maps interpreted as uncertain rewards.
To apply the theory to the present context, where we'll define them for \emph{measurements}, rather than gambles, two roads lie open to us, and we could venture upon both.
The more circuitous one consists in recalling from \cref{sec:isomorphism} that measurements~\(\measurement{A}\) are in a one-to-one relationship with their reward functions~\(\theutility{A}\), which are gambles.
Everything that can be done for gambles \cite{walley1991,augustin2013:itip,troffaes2013:lp,walley2000} can therefore also indirectly be done for measurements, through the identification between the two.
The more direct road, which we'll follow below for the most part, exploits that coherent lower and upper previsions have also been defined and studied more generally and directly for abstract vectors that live in a normed linear space \cite{cooman2021:archimedean:choice}, rather than for just for gambles.
The discussion below for measurements can therefore also be seen as a direct instantiation of that study.
Even though this means that, in much of the discussion below, we could content ourselves by merely stating results and referring to the above-mentioned literature for their proofs, we'll nevertheless provide short proofs for all of them, to make the discussion sufficiently self-contained.

\subsection{Buying and selling price functionals for measurements}
Let's start with a coherent {\SDM}~\(\desirset\) as a model for Your beliefs about~\(\uket\).
Its elements~\(\measurement{A}\in\desirset\) are the measurements (acts) You strictly prefer to the status quo~\(\zero\).
We now define Your corresponding \emph{supremum buying price}~\(\lowprev[\desirset]\group{\measurement{A}}\) for measurement~\(\measurement{A}\in\measurements\) as
\begin{equation}\label{eq:define:lower:prevision}
\lowprev[\desirset]\group{\measurement{A}}
\coloneqq\sup\cset{\alpha\in\reals}{\measurement{A}-\alpha\identity\in\desirset}
=\sup\cset{\alpha\in\reals}{\measurement{A}\betterthan\alpha\identity},
\end{equation}
the supremum (certain) price --- amount of utility --- You're willing to pay to acquire the uncertain reward~\(\theutility{A}(\uket)\) associated with performing measurement~\(\measurement{A}\).
Let's explain better.
The statement `\(\measurement{A}-\alpha\identity\in\desirset\)' means that the uncertain reward~\(\theutilitypure{\measurement{A}-\alpha\identity}(\uket)=\theutility{A}(\uket)-\alpha\theutility{I}(\uket)=\theutility{A}(\uket)-\alpha\) is desirable to You, or in other words, that You find it desirable to get the uncertain reward~\(\theutility{A}(\uket)\) and give away the fixed amount of utility~\(\alpha\) --- to buy the uncertain reward~\(\theutility{A}(\uket)\) for a price~\(\alpha\).
Hence, our use of the name `supremum buying price' for~\(\lowprev[\desirset]\group{\measurement{A}}\).

Similarly, we define Your \emph{infimum selling price}~\(\uppprev[\desirset]\group{\measurement{A}}\) for measurement~\(\measurement{A}\in\measurements\) as
\begin{equation}\label{eq:define:upper:prevision}
\uppprev[\desirset]\group{\measurement{A}}
\coloneqq\inf\cset{\beta\in\reals}{\beta\identity-\measurement{A}\in\desirset}
=\inf\cset{\beta\in\reals}{\beta\identity\betterthan\measurement{A}}
=-\lowprev[\desirset]\group{-\measurement{A}},
\end{equation}
the infimum (certain) price --- amount of utility --- You're willing to receive to give away the uncertain reward~\(\theutility{A}(\uket)\) associated with performing measurement~\(\measurement{A}\).
We'll also call \(\lowprev[\desirset]\) and \(\uppprev[\desirset]\) \emph{price functionals} for the {\SDM}~\(\desirset\).

What we've done so far, is to start from a coherent set of desirable measurements~\(\desirset\) and to derive from it Your corresponding buying price functional~\(\lowprev[\desirset]\) and selling price functional~\(\uppprev[\desirset]\).
But we could also go the other way around, and start from buying and/or selling price \emph{assessments} to come up with a desirability assessment~\(\assessment\), which in turn leads to a coherent set~\(\ext\group{\assessment} \) through conservative inference.
Let's see how this could work in practice.

\begin{example}\label{example:lower:and:upper:previsions}
We return to our running example about the qubit system, where we now consider the measurement~\(\paulix\).

Were You, for instance, to be certain that the outcome of the measurement~\(\paulix\) will be~\(+1\), then You'd find the transactions \(\paulix-\alpha\identity\) desirable for all~\(\alpha<1\), corresponding to a supremum desirable buying price of~\(1\) for~\(\paulix\).
Conversely, if You were certain that the outcome of the measurement~\(\paulix\) will be~\(-1\), then You'd find the transactions \(\beta\identity-\paulix\) desirable for all~\(\beta>1\), corresponding to an infimum desirable selling price of~\(1\) for~\(\paulix\).

But suppose You were, instead, to consider it (strictly) more likely that measuring the spin in the \(x\)-direction will yield the outcome~\(+1\) than that it will yield the other possible outcome~\(-1\).
How could You express that particular belief?
Well, in that case You'd strictly prefer getting the reward for the measurement~\(\paulix\) to giving it away and paying some non-negative fixed amount of utility~\(\beta\geq0\) for it, so You'd find the transaction~\(\paulix-(-\paulix-\beta\identity)=2\paulix+\beta\identity\) desirable for all real~\(\beta\geq0\), or by~\labelcref{ax:desirability:scaling}, You'd find \(\paulix-\alpha\identity\) desirable for all real~\(\alpha\leq0\).
Hence, Your belief corresponds to a desirability assessment \(\assessment\coloneqq\cset{\paulix-\alpha\identity}{\alpha\leq0}\) and leads to a supremum desirable buying price of~\(0\) for~\(\paulix\).

To see what the consequences of this assessment~\(\assessment\) are, we can use conservative inference to find the smallest coherent set of desirable measurements~\(\ext\group{\assessment} \) that includes~\(\assessment\).
With the notation~\(\indmeasurement{A}{o}=w_o\identity+x_o\paulix+y_o\pauliy+z_o\pauliz\) from Instalment~\labelcref{example:desirability:and:conservative:inference}, we see that
\begin{equation*}
\posi\group{\assessment}+\posmeasurements
=\posi\group{\set{\indmeasurement{A}{o}}}+\posmeasurements,
\end{equation*}
if we let \(\indmeasurement{A}{o}=\paulix\), so \((w_o,x_o,y_o,z_o)=(0,1,0,0)\).
This implies that \(\ext\group{\assessment} =\ext(\set{\indmeasurement{A}{o}})=\ext(\set{\paulix})\), so we can use the arguments from Instalment~\labelcref{example:desirability:and:conservative:inference} to calculate \(\ext\group{\assessment} \).
It's then easy to see that \(0=-w_o<\sqrt{x_o^2+y_o^2+z_o^2}=1\), so this assessment is consistent, and therefore \(\ext\group{\assessment} \) as characterised by~\cref{eq:qubit:natural:extension:detailed} is the smallest coherent {\SDM} that contains~\(\indmeasurement{A}{o}=\paulix\) and therefore represents Your assessment.

Your assessment \(\assessment\) has consequences for the supremum buying and infimum selling prices for measurements other than~\(\paulix\), which we can calculate via the coherent set \(\ext\group{\assessment} \) of desirable measurements that it leads to.
Indeed, with \(\measurement{A}\coloneqq w\identity+x\paulix+y\pauliy+z\pauliz\) and \(\alpha\in\reals\), we find that
\begin{align*}
\measurement{A}-\alpha\identity\in\posi(\assessment)+\possemidefmeasurements
&\ifandonlyif\group{\exists a>0}\alpha\leq w-\sqrt{(x-a)^2+y^2+z^2}\\
\shortintertext{and}
\measurement{A}-\alpha\identity\in\posmeasurements
&\ifandonlyif\group[\big]{w\neq\alpha\text{ and }\alpha\leq w-\sqrt{x^2+y^2+z^2}},
\end{align*}
where we started from \cref{eq:qubit:natural:extension} and used \cref{eq:qubit:possemidefcone,eq:qubit:poscone}.
These conditions, together with~\cref{eq:define:lower:prevision}, allow us to find the values for~\(\lowprev[\ext\group{\assessment} ]\group{\measurement{A}}\) and~\(\uppprev[\ext\group{\assessment} ]\group{\measurement{A}}=-\lowprev[\ext\group{\assessment} ]\group{-\measurement{A}}\) in any measurement~\(\measurement{A}\in\measurements\).
After some calculations, we find that
\begin{multline}\label{eq:qubit:lower:upper:pauli}
\lowprev[\ext\group{\assessment} ]\group{w\identity+x\paulix+y\pauliy+z\pauliz}
=\begin{cases}
w-\sqrt{y^2+z^2}&\text{if \(x\geq0\)}\\
w-\sqrt{x^2+y^2+z^2}&\text{if \(x\leq0\)}
\end{cases}
\text{ and }\\
\uppprev[\ext\group{\assessment} ]\group{w\identity+x\paulix+y\pauliy+z\pauliz}
=\begin{cases}
w+\sqrt{x^2+y^2+z^2}&\text{if \(x\geq0\)}\\
w+\sqrt{y^2+z^2}&\text{if \(x\leq0\)},
\end{cases}
\end{multline}
so, in particular,
\begin{equation}\label{eq:qubit:lower:upper:pauli:special:cases}
\left\{
\begin{aligned}
\lowprev[\ext\group{\assessment} ]\group{\paulix}&=0\\
\uppprev[\ext\group{\assessment} ]\group{\paulix}&=1
\end{aligned}
\right.
\text{ and }
\left\{
\begin{aligned}
\lowprev[\ext\group{\assessment} ]\group{\pauliy}&=-1\\
\uppprev[\ext\group{\assessment} ]\group{\pauliy}&=1
\end{aligned}
\right.
\text{ and }
\left\{
\begin{aligned}
\lowprev[\ext\group{\assessment} ]\group{\paulix-\pauliy}&=-1\\
\uppprev[\ext\group{\assessment} ]\group{\paulix-\pauliy}&=\sqrt2.
\end{aligned}
\right.
\end{equation}
The discussion in \cref{sec:coherent:previsions}, and in particular in \cref{sec:a:warning}, will allow us to conclude that, on a more traditional view of probability in quantum mechanics, \(\lowprev[\ext\group{\assessment} ]\group{\measurement{A}}\) and \(\uppprev[\ext\group{\assessment} ]\group{\measurement{A}}\) \emph{can} be interpreted as tight lower and upper bounds on the expected values of the outcome of the measurement~\(\measurement{A}\), based on Your assessment \(\assessment\).
We now see in the expressions above that Your assessment that \(1\) is a likelier outcome than~\(-1\) for the measurement~\(\paulix\), allows You to tightly bound the expected value of its outcome to lie between~\(0\) and~\(1\).
For the expected outcome of the measurement~\(\pauliy\), Your assessment \(\assessment\) leads to the \emph{vacuous} bounds \(-1\) and \(1\), so it's really totally non-informative about it.

By the way, the difference \(\uppprev[\ext\group{\assessment} ]\group{\measurement{A}}-\lowprev[\ext\group{\assessment} ]\group{\measurement{A}}\) is called the \emph{imprecision}, or \emph{bid-ask spread}, for~\(\measurement{A}=w\identity+x\paulix+y\pauliy+z\pauliz\), and it's given by \(\sqrt{x^2+y^2+z^2}+\sqrt{y^2+z^2}\).

Although typically somewhat more involved, essentially the same type of arguments can be used to find \(\ext\group{\assessment} \) and \(\lowprev[\ext\group{\assessment} ]\) and \(\uppprev[\ext\group{\assessment} ]\) for more involved assessments~\(\assessment\) and more general Hilbert spaces~\(\hilbertspace\), and they will typically involve solving some kind of semidefinite programming problem \cite[Sec.~4.6.2]{boyd:2004:convex:optimization}.
\end{example}

\subsection{Coherent lower and upper previsions}
We've seen that price functionals can be derived from desirability assessments and that, conversely, specifying buying and selling prices corresponds to specific desirability assessments.
Let's explore the connection a bit further.

First, all the price functionals~\(\lowprev[\desirset]\) that are obtained by letting~\(\desirset\) range over all the coherent {\SDMs} are exactly all the real functionals~\(\lowprev\colon\measurements\to\reals\) that satisfy the following properties:
\begin{enumerate}[label={\upshape LP\arabic*.},ref={\upshape LP\arabic*},leftmargin=*,series=LP]
\item\label{ax:lower:prevision:bounds} \(\lowprev\group{\measurement{A}}\geq\min\spectrum{A}\) for all~\(\measurement{A}\in\measurements\);\hfill[bounds]
\item\label{ax:lower:prevision:superadditivity} \(\lowprev\group{\measurement{A}+\measurement{B}}\geq\lowprev\group{\measurement{A}}+\lowprev\group{\measurement{B}}\) for all~\(\measurement{A},\measurement{B}\in\measurements\);\hfill[super-additivity]
\item\label{ax:lower:prevision:scaling} \(\lowprev\group{\lambda\measurement{A}}=\lambda\lowprev\group{\measurement{A}}\) for all~\(\measurement{A}\in\measurements\) and all real~\(\lambda\geq0\).\hfill[non-negative homogeneity]
\end{enumerate}

\begin{proof}
First, let's assume that \(\desirset\) is a coherent {\SDM}, and show that the corresponding price functional~\(\lowprev[\desirset]\) is real-valued and satisfies \labelcref{ax:lower:prevision:bounds,ax:lower:prevision:superadditivity,ax:lower:prevision:scaling}.
First, consider any~\(\measurement{A}\in\measurements\), then it follows from \labelcref{ax:desirability:monotone} that the set \(L_{\measurement{A}}\coloneqq\cset{\alpha\in\reals}{\measurement{A}-\alpha\identity\in\desirset}\) is decreasing and therefore bounded above by any element of its set-theoretic complement \(\reals\setminus L_{\measurement{A}}=\cset{\alpha\in\reals}{\measurement{A}-\alpha\identity\notin\desirset}\).
Since it readily follows from \labelcref{ax:desirability:apl} that \(\max\spectrum{A}\notin L_{\measurement{A}}\) and from \labelcref{ax:desirability:background} that \((-\infty,\min\spectrum{A})\subseteq L_{\measurement{A}}\), we find that \(\min\spectrum{A}\leq\sup L_{\measurement{A}}\leq\max\spectrum{A}\) and therefore that \(\lowprev[\desirset]\group{\measurement{A}}=\sup L_{\measurement{A}}\) is real.
This also shows that \(\lowprev[\desirset]\) satisfies~\labelcref{ax:lower:prevision:bounds}.
Also, consider any~\(\alpha,\beta\in\reals\) such that \(\measurement{A}-\alpha\identity\in\desirset\) and \(\measurement{B}-\beta\identity\in\desirset\), then \(\measurement{A}+\measurement{B}-(\alpha+\beta)\identity\in\desirset\) by \labelcref{ax:desirability:additivity}.
Invoking \cref{eq:define:lower:prevision} now leads to the conclusion that \(\lowprev[\desirset]\) satisfies~\labelcref{ax:lower:prevision:superadditivity}.
For~\labelcref{ax:lower:prevision:scaling}, first infer from~\labelcref{ax:lower:prevision:bounds,ax:lower:prevision:superadditivity} for~\(\measurement{A}=\measurement{B}=\zero\) that \(\lowprev[\desirset](\zero)=0\), so we need only check that \(\lowprev[\desirset]\) satisfies~\labelcref{ax:lower:prevision:scaling} for~\(\lambda>0\).
Now observe that, by \labelcref{ax:desirability:scaling}, \(\lambda\measurement{A}-\alpha\identity\in\desirset\ifandonlyif\measurement{A}-\nicefrac{\alpha}{\lambda}\identity\in\desirset\), and invoke \cref{eq:define:lower:prevision}.

Conversely, assume that the real functional~\(\lowprev\colon\measurements\to\reals\) satisfies \labelcref{ax:lower:prevision:bounds,ax:lower:prevision:superadditivity,ax:lower:prevision:scaling}, then we must show that there's some coherent {\SDM}~\(\desirset\) such that \(\lowprev=\lowprev[\desirset]\).
If we let
\begin{equation}\label{eq:desirs:from:lowprev:basis}
\desirset[\lowprev]
\coloneqq\cset{\measurement{A}\in\measurements}{\lowprev\group{\measurement{A}}>0\text{ or }\measurement{A}\alwaysbetterthan\zero},
\end{equation}
then
\begin{equation}\label{eq:desirs:from:lowprev:first}
\desirset[\lowprev]'
\coloneqq\cset{\measurement{A}\in\measurements}{\lowprev\group{\measurement{A}}>0}
\subseteq\desirset[\lowprev]
\subseteq\cset{\measurement{A}\in\measurements}{\lowprev\group{\measurement{A}}\geq0}
\eqqcolon\desirset[\lowprev]'',
\end{equation}
where the second inclusion follows from~\labelcref{ax:lower:prevision:bounds}.
Also observe that by the constant additivity property~\labelcref{ax:lower:prevision:constant:additivity}, which as we'll show further on follows from \labelcref{ax:lower:prevision:bounds,ax:lower:prevision:superadditivity,ax:lower:prevision:scaling}, it holds that
\begin{equation}\label{eq:desirs:from:lowprev:second}
\measurement{A}-\alpha\identity\in\desirset[\lowprev]'
\ifandonlyif\lowprev\group{\measurement{A}}>\alpha
\text{ and }
\measurement{A}-\alpha\identity\in\desirset[\lowprev]''
\ifandonlyif\lowprev\group{\measurement{A}}\geq\alpha
\text{ for all~\(\alpha\in\reals\)}.
\end{equation}
If we now combine \cref{eq:desirs:from:lowprev:first,eq:desirs:from:lowprev:second} with \cref{eq:define:lower:prevision}, we find that \(\lowprev=\lowprev[{\desirset[\lowprev]}]\), so it only remains to show that \(\desirset[\lowprev]\) is (a) coherent ({\SDM}), or in other words satisfies~\labelcref{ax:desirability:strict,ax:desirability:additivity,ax:desirability:scaling,ax:desirability:background}.
For~\labelcref{ax:desirability:strict}, simply observe that we've already proved above that \(\lowprev(\zero)=0\), so indeed \(\zero\notin\desirset[\lowprev]\).
For~\labelcref{ax:desirability:additivity}, consider any~\(\measurement{A},\measurement{B}\in\desirset[\lowprev]\), then there are a number of possibilities.
If \(\measurement{A}\alwaysbetterthan\zero\) and \(\measurement{B}\alwaysbetterthan\zero\), then clearly also \(\measurement{A}+\measurement{B}\alwaysbetterthan\zero\), and therefore \(\measurement{A}+\measurement{B}\in\desirset[\lowprev]\).
If \(\lowprev\group{\measurement{A}}>0\) and \(\lowprev\group{\measurement{B}}>0\), then by~\labelcref{ax:lower:prevision:superadditivity} also \(\lowprev\group{\measurement{A}+\measurement{B}}\geq\lowprev\group{\measurement{A}}+\lowprev\group{\measurement{B}}>0\), and therefore \(\measurement{A}+\measurement{B}\in\desirset[\lowprev]\).
For the remaining possibilities, we may assume without loss of generality that \(\lowprev\group{\measurement{A}}>0\) and \(\measurement{B}\alwaysbetterthan\zero\).
But then \(\lowprev\group{\measurement{B}}\geq0\) by \labelcref{ax:lower:prevision:bounds}, and therefore by~\labelcref{ax:lower:prevision:superadditivity} also \(\lowprev\group{\measurement{A}+\measurement{B}}\geq\lowprev\group{\measurement{A}}+\lowprev\group{\measurement{B}}>0\), whence, here too, \(\measurement{A}+\measurement{B}\in\desirset[\lowprev]\).
That \(\desirset[\lowprev]\) satisfies~\labelcref{ax:desirability:scaling} follows readily from~\labelcref{ax:lower:prevision:scaling}, and, finally, \(\desirset[\lowprev]\) satisfies~\labelcref{ax:desirability:background} by construction.
\end{proof}

We see that, in our present context, there's something very special about the real functionals~\(\lowprev\colon\measurements\to\reals\) that satisfy the so-called \emph{coherence conditions}~\labelcref{ax:lower:prevision:bounds,ax:lower:prevision:superadditivity,ax:lower:prevision:scaling}: each of them is a buying price functional for some coherent set of desirable measurements.
Adopting Walley's \cite{walley1991} nomenclature,\footnote{Since, as we've shown in \cref{eq:infimum:utility}, \(\min\spectrum{A}=\inf\theutility{A}\) for all \(\measurement{A}\in\measurements\), the correspondence between our coherence requirements for lower previsions on the linear space of measurements and Walley's coherence result for lower previsions on a subspace of the space of all gambles, is immediate; see \cite[Thm.~2.8.4]{walley1991}.} we'll call them \emph{coherent lower previsions} on~\(\measurements\).
They're worthy of study by themselves, as alternative, and as we'll see further on, almost equivalent but slightly less expressive, models for Your uncertainty about the state of a quantum system.

The corresponding \emph{coherent upper previsions}~\(\uppprev\) are related to the coherent lower previsions~\(\lowprev\) through the \emph{conjugacy relationship}: \(\uppprev\group{\measurement{A}}=-\lowprev\group{-\measurement{A}}\) for all~\(\measurement{A}\in\measurements\).

Coherent lower and upper previsions also always satisfy the following useful properties, which are immediate consequences of the coherence conditions~\labelcref{ax:lower:prevision:bounds,ax:lower:prevision:superadditivity,ax:lower:prevision:scaling}.
The convergence in~\labelcref{ax:lower:prevision:convergence} is convergence in supremum norm~\(\supnorm{\bolleke}\), or equivalently, since all norms are equivalent --- lead to the same topology --- on the finite dimensional space~\(\measurements\), convergence in Frobenius norm~\(\frobnorm{\bolleke}\) or in operator norm~\(\opnorm{\bolleke}\).
\begin{enumerate}[resume*=LP,widest=6]
\item\label{ax:lower:prevision:more:bounds} \(\min\spectrum{A}\leq\lowprev\group{\measurement{A}}\leq\uppprev\group{\measurement{A}}\leq\max\spectrum{A}\);\upshape\hfill[more bounds]
\item\label{ax:lower:prevision:mixed:additivity} \(\lowprev\group{\measurement{A}}+\lowprev\group{\measurement{B}}\leq\lowprev\group{\measurement{A}+\measurement{B}}\leq\lowprev\group{\measurement{A}}+\uppprev\group{\measurement{B}}\leq\uppprev\group{\measurement{A}+\measurement{B}}\leq\uppprev\group{\measurement{A}}+\uppprev\group{\measurement{B}}\);
\item\label{ax:lower:prevision:constant:additivity} \(\uppprev\group{\measurement{A}+\mu\identity}=\uppprev\group{\measurement{A}}+\mu\identity\) and~\(\lowprev\group{\measurement{A}+\mu}=\lowprev\group{\measurement{A}}+\mu\);\upshape\hfill[constant additivity]
\item\label{ax:lower:prevision:monotonicity} if \(\measurement{A}\geq\measurement{B}\) then \(\uppprev\group{\measurement{A}}\geq\uppprev\group{\measurement{B}}\) and~\(\lowprev\group{\measurement{A}}\geq\lowprev\group{\measurement{B}}\);\upshape\hfill[monotonicity]
\item\label{ax:lower:prevision:convergence} if the sequence \(\indmeasurement{A}{n}\in\measurements\) converges to the measurement~\(\measurement{A}\), then \(\uppprev\group{\indmeasurement{A}{n}}\to\uppprev\group{\measurement{A}}\) and~\(\lowprev\group{\indmeasurement{A}{n}}\to\lowprev\group{\measurement{A}}\).\upshape\hfill[continuity]
\end{enumerate}
\begin{proof}
\labelcref{ax:lower:prevision:more:bounds}.
The first and third inequalities follow directly from~\labelcref{ax:lower:prevision:bounds} and conjugacy, so we concentrate on the second inequality.
Simply observe that
\[
0
=\lowprev\group{\zero}
=\lowprev\group{\measurement{A}-\measurement{A}}
\geq\lowprev\group{\measurement{A}}+\lowprev\group{-\measurement{A}}
=\lowprev\group{\measurement{A}}-\uppprev\group{\measurement{A}},
\]
where the first equality follows from~\labelcref{ax:lower:prevision:scaling} and the first inequality from~\labelcref{ax:lower:prevision:superadditivity}.

\labelcref{ax:lower:prevision:mixed:additivity}.
It suffices to prove the second equality, as all the other inequalities will follow then by also invoking~\labelcref{ax:lower:prevision:superadditivity} and conjugacy.
Now simply observe that
\[
\lowprev\group{\measurement{A}}
=\lowprev\group{\measurement{A}+\measurement{B}-\measurement{B}}
\geq\lowprev\group{\measurement{A}+\measurement{B}}+\lowprev\group{-\measurement{B}}
=\lowprev\group{\measurement{A}+\measurement{B}}-\uppprev\group{\measurement{B}},
\]
where the first inequality follows from~\labelcref{ax:lower:prevision:superadditivity}.

\labelcref{ax:lower:prevision:constant:additivity}.
It suffices to prove the second equality, as the first then follows by conjugacy.
First, infer from~\labelcref{ax:lower:prevision:more:bounds} that \(\lowprev\group{\mu\identity}=\uppprev\group{\mu\identity}=\mu\), and then invoke~\labelcref{ax:lower:prevision:mixed:additivity} to find that
\[
\lowprev\group{\measurement{A}}+\mu
=\lowprev\group{\measurement{A}}+\lowprev\group{\mu\identity}
\leq\lowprev\group{\measurement{A}+\mu\identity}
\leq\lowprev\group{\measurement{A}}+\uppprev\group{\mu\identity}
=\lowprev\group{\measurement{A}}+\mu.
\]

\labelcref{ax:lower:prevision:monotonicity}.
We concentrate on proving the second implication, as the first will then follow from conjugacy.
Assume that \(\measurement{A}\geq\measurement{B}\), or equivalently, \(\measurement{A}-\measurement{B}\geq\zero\), then
\[
0
\leq\min\spec\group{\measurement{A}-\measurement{B}}
\leq\lowprev\group{\measurement{A}-\measurement{B}}
\leq\lowprev\group{\measurement{A}}+\uppprev\group{-\measurement{B}}
=\lowprev\group{\measurement{A}}-\lowprev\group{\measurement{B}},
\]
where the first inequality follows from the definition of the vector ordering~\(\geq\), the second inequality follows from \labelcref{ax:lower:prevision:bounds} and the third inequality from~\labelcref{ax:lower:prevision:mixed:additivity}.

\labelcref{ax:lower:prevision:convergence}.
We concentrate on proving the second implication, as the first will then follow from conjugacy.
Simply observe that
\[
\lowprev\group{\measurement{A}}-\lowprev\group{\measurement{B}}
=\lowprev\group{\measurement{A}}+\uppprev\group{-\measurement{B}}
\leq\uppprev\group{\measurement{A}-\measurement{B}}
\leq\max\spec\group{\measurement{A}-\measurement{B}}
\leq\supnorm{\measurement{A}-\measurement{B}},
\]
where the first inequality follows from~\labelcref{ax:lower:prevision:mixed:additivity}, the second one from~\labelcref{ax:lower:prevision:more:bounds} and the third one from \cref{eq:supnorm}.
Since, similarly, \(\lowprev\group{\measurement{B}}-\lowprev\group{\measurement{A}}\leq\supnorm{\measurement{A}-\measurement{B}}\), we find that
\begin{equation*}
\abs{\lowprev\group{\measurement{A}}-\lowprev\group{\measurement{B}}}
\leq\supnorm{\measurement{A}-\measurement{B}}
\text{ for all~\(\measurement{A},\measurement{B}\in\measurements\)},
\end{equation*}
which even proves the Lipschitz continuity of~\(\lowprev\).
\end{proof}

To better understand the relationship between coherent sets of desirable measurements and coherent lower previsions, we observe that
\begin{equation}\label{eq:desirs:lowprev}
\measurement{A}\in\interior\group{\desirset}
\ifandonlyif\lowprev[\desirset]\group{\measurement{A}}>0
\text{ and }
\measurement{A}\in\closure\group{\desirset}
\ifandonlyif\lowprev[\desirset]\group{\measurement{A}}\geq0,
\text{ for all~\(\measurement{A}\in\measurements\)}.
\end{equation}

\begin{proof}
To prove~\cref{eq:desirs:lowprev}, consider any~\(\measurement{A}\in\measurements\).

Assume that \(\measurement{A}\in\interior\group{\desirset}\), implying that there's some real~\(\epsilon>0\) such that the open ball~\(B_\epsilon(\measurement{A})\coloneqq\cset{\measurement{B}\in\measurements}{\supnorm{\measurement{B}-\measurement{A}}<\epsilon}\subseteq\desirset\).
Since \(\supnorm{\group{\measurement{A}-\nicefrac\epsilon2\identity}-\measurement{A}}=\nicefrac\epsilon2\), we find that \(\measurement{A}-\nicefrac\epsilon2\identity\in\desirset\), and therefore, indeed, \(\lowprev[\desirset]\group{\measurement{A}}\geq\nicefrac\epsilon2>0\).

Assume that \(\measurement{A}\notin\interior\group{\desirset}\), and therefore \(\measurement{A}\in\closure\group{\measurements\setminus\desirset}\), implying that there's some sequence \(\measurement{A}_n\in\measurements\setminus\desirset\) such that \(\supnorm{\measurement{A}-\measurement{A}_n}\to0\).
But \(\measurement{A}_n\notin\desirset\) implies that \(\lowprev[\desirset]\group{\measurement{A}_n}\leq0\) [use \cref{lem:desirs:from:lowprev}], and therefore the continuity of~\(\lowprev[\desirset]\) [use~\labelcref{ax:lower:prevision:convergence}] implies that also \(\lowprev[\desirset]\group{\measurement{A}}\leq0\).

Assume that \(\measurement{A}\in\closure\group{\desirset}\), which implies that there's some sequence \(\measurement{A}_n\in\desirset\) such that \(\supnorm{\measurement{A}-\measurement{A}_n}\to0\).
But \(\measurement{A}_n\in\desirset\) implies that \(\lowprev[\desirset]\group{\measurement{A}_n}\geq0\), and therefore the continuity of~\(\lowprev[\desirset]\) [use~\labelcref{ax:lower:prevision:convergence}] implies that also \(\lowprev[\desirset]\group{\measurement{A}}\geq0\).

Assume that \(\measurement{A}\notin\closure\group{\desirset}\) and therefore \(\measurement{A}\in\interior\group{\measurements\setminus\desirset}\), implying that there's some real~\(\epsilon>0\) such that the open ball~\(B_\epsilon(\measurement{A})\coloneqq\cset{\measurement{B}\in\measurements}{\supnorm{\measurement{B}-\measurement{A}}<\epsilon}\subseteq\measurements\setminus\desirset\).
Since \(\supnorm{\group{\measurement{A}+\nicefrac\epsilon2\identity}-\measurement{A}}=\nicefrac\epsilon2\), we find that \(\measurement{A}+\nicefrac\epsilon2\identity\notin\desirset\), which implies that \(\lowprev[\desirset]\group{\measurement{A}+\nicefrac\epsilon2\identity}\leq0\) [use \cref{lem:desirs:from:lowprev}] and therefore \(\lowprev[\desirset]\group{\measurement{A}}\leq-\nicefrac\epsilon2<0\) [use~\labelcref{ax:lower:prevision:constant:additivity}].
\end{proof}

\begin{lemma}\label{lem:desirs:from:lowprev}
Consider any coherent {\SDM}~\(\desirset\) and let~\(\lowprev[\desirset]\) be the corresponding price functional.
Then, for any~\(\measurement{A}\in\measurements\), we have that \(\measurement{A}\notin\desirset\then\lowprev[\desirset]\group{\measurement{A}}\leq0\).
\end{lemma}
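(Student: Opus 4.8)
The plan is to unwind the definition of the supremum buying price and argue by contraposition. Recall from \cref{eq:define:lower:prevision} that $\lowprev[\desirset]\group{\measurement{A}}=\sup\cset{\alpha\in\reals}{\measurement{A}-\alpha\identity\in\desirset}$, and that (as already established in the proof that price functionals are coherent lower previsions, where it was shown that the defining set contains $(-\infty,\min\spectrum{A})$ and is bounded above by $\max\spectrum{A}$) this supremum is a well-defined real number. Hence, to conclude $\lowprev[\desirset]\group{\measurement{A}}\leq0$ it suffices to show that no strictly positive $\alpha$ lies in the defining set, i.e.\ that $\measurement{A}-\alpha\identity\notin\desirset$ for every $\alpha\in\posreals$.

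First I would observe that for any $\alpha\in\posreals$ the operator $\alpha\identity$ is positive definite, hence in particular positive semidefinite and non-zero, so $\alpha\identity\alwaysbetterthan\zero$; by the background axiom~\labelcref{ax:desirability:background} this gives $\alpha\identity\in\desirset$. Now suppose, towards a contradiction, that $\measurement{A}-\alpha\identity\in\desirset$ for some $\alpha\in\posreals$. Applying additivity~\labelcref{ax:desirability:additivity} to $\measurement{A}-\alpha\identity\in\desirset$ and $\alpha\identity\in\desirset$ yields $\measurement{A}=\group{\measurement{A}-\alpha\identity}+\alpha\identity\in\desirset$, contradicting the hypothesis $\measurement{A}\notin\desirset$. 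Therefore $\measurement{A}-\alpha\identity\notin\desirset$ for all $\alpha\in\posreals$, so $\cset{\alpha\in\reals}{\measurement{A}-\alpha\identity\in\desirset}\subseteq(-\infty,0]$, whence $\lowprev[\desirset]\group{\measurement{A}}\leq0$.

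There is no real obstacle here: the whole argument is a one-line consequence of the convex-cone structure of $\desirset$, namely axioms~\labelcref{ax:desirability:additivity,ax:desirability:background}, which guarantee that adding a positive multiple of $\identity$ to a non-desirable measurement keeps it non-desirable. The only point deserving a moment's care is that the supremum defining $\lowprev[\desirset]\group{\measurement{A}}$ is genuinely finite and real — so that the inequality "$\leq0$" is substantive rather than vacuous — but this was already dealt with in the earlier coherence proof and may simply be invoked.
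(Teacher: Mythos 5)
Your proof is correct and is essentially the paper's argument: the paper notes that the set $\cset{\alpha\in\reals}{\measurement{A}-\alpha\identity\in\desirset}$ is decreasing (a fact that rests on exactly the axioms~\labelcref{ax:desirability:additivity,ax:desirability:background} you invoke) and fails to contain $0$, hence its supremum is at most $0$. Your version just unpacks that monotonicity step directly for $\alpha>0$, so there is nothing to add.
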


\begin{proof}
That \(\measurement{A}\notin\desirset\), or in other words, that \(\measurement{A}-0\identity\notin\desirset\), implies that \(0\) dominates all elements of the decreasing set~\(L_{\measurement{A}}=\cset{\alpha\in\reals}{\measurement{A}-\alpha\identity\in\desirset}\) and therefore dominates its supremum~\(\lowprev[\desirset]\group{\measurement{A}}\).
\end{proof}

\noindent \cref{eq:desirs:lowprev} tells us that all measurements~\(\measurement{A}\) with a positive supremum buying price~\(\lowprev[\desirset]\group{\measurement{A}}\) lie definitely inside~\(\desirset\), and those with a negative supremum buying price definitely outside~\(\desirset\).
It follows at once from \cref{eq:desirs:lowprev} that
\[
\measurement{A}\in\closure(\desirset)\setminus\interior(\desirset)
\ifandonlyif\lowprev[\desirset]\group{\measurement{A}}=0,
\text{ for all~\(\measurement{A}\in\measurements\)},
\]
so for the so-called \emph{marginally desirable} measurements in the topological boundary of the convex cone~\(\desirset\), which are the measurements with supremum buying price zero, their supremum buying price typically won't determine whether they belong to~\(\desirset\), unless~\(\desirset\) is known to satisfy additional properties, besides coherence.\footnote{\dots{} such as, for instance, being open.}

The upshot of all this is that the price functional~\(\lowprev[\desirset]\) characterises the {\SDM}~\(\desirset\) it's derived from \emph{only up to the measurements in its topological boundary}.
Coherent lower previsions therefore have at most as much, and typically (slightly) less, representational power than coherent sets of desirable measurements: the correspondence between coherent sets of desirable measurements and coherent lower previsions is typically many-to-one.
As is apparent from \cref{eq:desirs:from:lowprev:basis} in the proof above that investigates the correspondence between the two models, one way to make a coherent lower prevision~\(\lowprev\) correspond to exactly one strict preference~\(\betterthan\), is to pick the one given by
\begin{equation}\label{eq:preference:and:lower:prevision}
\measurement{A}\betterthan\measurement{B}
\ifandonlyif\measurement{A}-\measurement{B}\in\desirset[\lowprev]
\ifandonlyif\group[\big]{\lowprev\group{\measurement{A}-\measurement{B}}>0\text{ or }\measurement{A}\alwaysbetterthan\measurement{B}}
\text{ for all~\(\measurement{A},\measurement{B}\in\measurements\)}.
\end{equation}

To summarise, using a coherent lower prevision~\(\lowprev\) to describe Your beliefs about the system state amounts to specifying a supremum buying price~\(\lowprev\group{\measurement{A}}\) for all possible measurements~\(\measurement{A}\in\measurements\).
That it satisfies the coherence requirements~\labelcref{ax:lower:prevision:bounds,ax:lower:prevision:superadditivity,ax:lower:prevision:scaling} guarantees that there's always at least one, but typically more than one, coherent set of desirable measurements~\(\desirset\) that corresponds to it in the sense that \(\lowprev[\desirset]=\lowprev\).
All these sets differ only in their topological boundary, which is the set of marginally desirable measurements~\(\cset{\measurement{A}\in\measurements}{\lowprev\group{\measurement{A}}=0}\).

To conclude this discussion of coherent lower and upper previsions, let's consider a few special cases.
At one extreme end, we have the price functionals for the vacuous {\SDM}~\(\desirset[\mathrm{vac}]=\posmeasurements\), which are the so-called \emph{vacuous} lower and upper previsions, defined by
\begin{equation*}
\lowprev[\mathrm{vac}]\group{\measurement{A}}=\min\spectrum{A}
\text{ and }
\uppprev[\mathrm{vac}]\group{\measurement{A}}=\max\spectrum{A}
\text{ for all~\(\measurement{A}\in\measurements\)}.
\end{equation*}
Here, the \emph{bid-ask spread}~\(\uppprev[\mathrm{vac}]\group{\measurement{A}}-\lowprev[\mathrm{vac}]\group{\measurement{A}}=\max\spectrum{A}-\min\spectrum{A}\) between Your selling and buying prices for measurements~\(\measurement{A}\) is as large as allowed by the coherence requirements [in particular~\labelcref{ax:lower:prevision:bounds}].

At the other extreme end, we recover the cases that are most often considered in more classical approaches to decision theory, where the bid-ask spread~\(\uppprev\group{\measurement{A}}-\lowprev\group{\measurement{A}}\) between Your infimum selling and supremum buying prices for measurements~\(\measurement{A}\) is as small as it's allowed to be by the coherence requirements [in particular~\labelcref{ax:lower:prevision:more:bounds}], namely \(0\) everywhere.
This very special subclass of the coherent lower previsions is the type of uncertainty model we'll now turn to in the coming section.

\section{Coherent previsions and density operators}\label{sec:coherent:previsions}

\subsection{Bruno de Finetti's coherent previsions}\label{sec:finetti}
Coherent lower previsions \(\lowprev\) with the smallest allowable bid-ask spread are \emph{self-conjugate} in the sense that \(\lowprev=\uppprev\), so we can and will use the simpler notation~\(\linprev\) for~\(\lowprev=\uppprev\) in this special case.
They're at the same time coherent lower previsions and coherent upper previsions.

According to the discussion and definitions in the previous section, \(\linprev\group{\measurement{A}}\) is then at the same time Your supremum price for buying the uncertain reward~\(\theutility{A}(\uket)\) and Your infimum price for selling it, and can therefore also be seen as \emph{Your fair price} for the uncertain reward~\(\theutility{A}(\uket)\).

In conjunction with self-conjugacy, the coherence conditions~\labelcref{ax:lower:prevision:bounds,ax:lower:prevision:superadditivity,ax:lower:prevision:scaling} turn into
\begin{enumerate}[label={\upshape CP\arabic*.},ref={\upshape CP\arabic*},leftmargin=*]
\item\label{ax:linprev:bounds} \(\linprev\group{\measurement{A}}\geq\min\spectrum{A}\) for all~\(\measurement{A}\in\measurements\);\hfill[bounds]
\item\label{ax:linprev:additivity} \(\linprev\group{\measurement{A}+\measurement{B}}=\linprev\group{\measurement{A}}+\linprev\group{\measurement{B}}\) for all~\(\measurement{A},\measurement{B}\in\measurements\).\hfill[additivity]
\end{enumerate}
Homogeneity then follows from these two coherence conditions:
\begin{enumerate}[label={\upshape CP\arabic*}.,ref={\upshape CP\arabic*},labelwidth=*,leftmargin=*,resume]
\item\label{ax:linprev:homogeneity} \(\linprev\group{\lambda\measurement{A}}=\lambda\linprev\group{\measurement{A}}\) for all~\(\measurement{A}\in\measurements\) and~\(\lambda\in\reals\).\hfill[homogeneity]
\end{enumerate}

\begin{proof}
First, \labelcref{ax:linprev:additivity} implies that \labelcref{ax:linprev:homogeneity} holds for all rational~\(\lambda\).
It's clear that we can extend the equality to the reals, provided we can show that \labelcref{ax:linprev:bounds,ax:linprev:additivity} imply the continuity of~\(\linprev\).
To this end, observe that \labelcref{ax:linprev:additivity} implies in particular that \(\linprev\group{\zero}=0\) and \(\linprev\group{-\measurement{A}}=-\linprev\group{\measurement{A}}\) for all~\(\measurement{A}\in\measurements\).
But then also
\begin{align*}
\linprev\group{\measurement{A}}-\linprev\group{\measurement{B}}
&=\linprev\group{\measurement{A}}+\linprev\group{-\measurement{B}}\\
&=\linprev\group{\measurement{A}-\measurement{B}}=-\linprev\group{-\group{\measurement{A}-\measurement{B}}}\\
&\leq\max\spec\group{\measurement{A}-\measurement{B}}
\leq\supnorm{\measurement{A}-\measurement{B}},
\end{align*}
where the second equality follows from~\labelcref{ax:linprev:additivity}, the first inequality follows from~\labelcref{ax:linprev:bounds}, and the second inequality from \cref{eq:supnorm}.
Similarly, \(\linprev\group{\measurement{B}}-\linprev\group{\measurement{A}}\leq\supnorm{\measurement{A}-\measurement{B}}\), so
\begin{equation*}
\abs{\linprev\group{\measurement{A}}-\linprev\group{\measurement{B}}}
\leq\supnorm{\measurement{A}-\measurement{B}}
\text{ for all~\(\measurement{A},\measurement{B}\in\measurements\)},
\end{equation*}
which even proves the Lipschitz continuity of~\(\linprev\).

It's now clearly enough to prove that we can derive \labelcref{ax:linprev:bounds,ax:linprev:additivity} from the conditions~\labelcref{ax:lower:prevision:bounds,ax:lower:prevision:superadditivity,ax:lower:prevision:scaling} and self-conjugacy, which tells us that \(\linprev=\lowprev=\uppprev\).
\labelcref{ax:linprev:bounds} is clearly equivalent to \labelcref{ax:lower:prevision:bounds}.
For \labelcref{ax:linprev:additivity}, consider any~\(\measurement{A},\measurement{B}\in\measurements\).
Self-conjugacy implies that \(\lowprev\group{\measurement{A}}+\lowprev\group{\measurement{B}}=\uppprev\group{\measurement{A}}+\uppprev\group{\measurement{B}}\).
On the other hand, it follows from \labelcref{ax:lower:prevision:mixed:additivity} [which follows from \labelcref{ax:lower:prevision:superadditivity}] that \(\lowprev\group{\measurement{A}}+\lowprev\group{\measurement{B}}\leq\lowprev\group{\measurement{A}+\measurement{B}}\leq\uppprev\group{\measurement{A}+\measurement{B}}\leq\uppprev\group{\measurement{A}}+\uppprev\group{\measurement{B}}\) and therefore, indeed, that \(\lowprev\group{\measurement{A}}+\lowprev\group{\measurement{B}}=\lowprev\group{\measurement{A}+\measurement{B}}\).
\end{proof}

We'll call any self-conjugate coherent lower prevision a \emph{coherent prevision}, and denote the set of all such coherent previsions on~\(\measurements\) by~\(\linprevs\).
The coherent previsions are therefore the real functionals on~\(\measurements\) that satisfy the coherence conditions~\labelcref{ax:linprev:additivity,ax:linprev:bounds}.
It's easy to see that they can also be characterised exactly as the real functionals~\(\linprev\colon\measurements\to\reals\) that are (i) \emph{linear} in the sense that
\begin{enumerate}[label={\upshape L\arabic*}.,ref={\upshape L\arabic*},leftmargin=*]
\item\label{ax:prevision:linear} \(\linprev\group{\lambda\measurement{A}+\mu\measurement{B}}=\lambda\linprev\group{\measurement{A}}+\mu\linprev\group{\measurement{B}}\) for all~\(\measurement{A},\measurement{B}\in\measurements\) and all~\(\lambda,\mu\in\reals\);
\end{enumerate}
(ii) \emph{positive} in the sense that
\begin{enumerate}[label={\upshape L\arabic*}.,ref={\upshape L\arabic*},leftmargin=*,resume]
\item\label{ax:prevision:positive} \(\linprev\group{\measurement{A}}\geq0\) for all~\(\measurement{A}\geq\zero\);
\end{enumerate}
and (iii) \emph{normalised} in the sense that
\begin{enumerate}[label={\upshape L\arabic*}.,ref={\upshape L\arabic*},leftmargin=*,resume]
\item\label{ax:prevision:norm} \(\linprev\group{\identity}=1\).
\end{enumerate}
\begin{proof}
Clearly, \labelcref{ax:prevision:linear} is equivalent to \labelcref{ax:linprev:additivity,ax:linprev:homogeneity}. Condition \labelcref{ax:prevision:positive} is an immediate consequence of \labelcref{ax:linprev:bounds}.
Applying \labelcref{ax:linprev:bounds} to~\(\identity\) and \(-\identity\) leads to \labelcref{ax:prevision:norm}.
It now only remains to prove that \labelcref{ax:linprev:bounds} follows from \labelcref{ax:prevision:linear,ax:prevision:positive,ax:prevision:norm}.
Consider, to this end, any measurement~\(\measurement{A}\) and let \(c\coloneqq\min\spectrum{A}\), then \(\measurement{A}-c\identity\possemidef\zero\) and therefore \(\linprev\group{\measurement{A}-c\identity}\geq0\), due to \labelcref{ax:prevision:positive}.
\labelcref{ax:prevision:linear,ax:prevision:norm} now ensure that \(\linprev\group{\measurement{A}-c\identity}=\linprev\group{\measurement{A}}-c\linprev\group{\identity}=\linprev\group{\measurement{A}}-c\).
\end{proof}

\noindent If we follow the standard functional-analytic approach in defining the following operator norm
\[
\opnorm{\Gamma}
\coloneqq\sup_{\measurement{A}\neq\zero}\frac{\abs[\big]{\Gamma(\measurement{A})}}{\supnorm{\measurement{A}}}
\]
on (linear) real functionals~\(\Gamma\colon\measurements\to\reals\), then it follows from \cref{eq:supnorm}, \labelcref{ax:lower:prevision:more:bounds} and \labelcref{ax:prevision:norm} that
\begin{equation}\label{eq:boundedness}
\opnorm{\linprev}=1\text{ for all self-conjugate coherent lower previsions~\(\linprev\)};
\end{equation}
they're bounded and therefore continuous, as also shown directly in the proof of \labelcref{ax:linprev:homogeneity} above.

For all these reasons, our coherent previsions can also be seen as coherent previsions or fair prices in Bruno de Finetti's sense; see for instance Refs.~\cite{finetti1937} and \cite[Ch.~3 and App.]{finetti19745}.
Coherent previsions on real Hilbert spaces are indeed the probability models envisaged by de Finetti, but we must caution against interpreting them without further ado as expectation operators associated with a (finitely additive) probability on some sample space, as this is an interpretational step that asks for further justification, which isn't provided by our argumentation above; see also the discussion further on in \cref{sec:a:warning}.

Interestingly, we can use the coherent previsions, or rather, sets of them, to characterise the coherent \emph{lower} previsions, of which they're special cases.
Taking the lower envelope of \emph{any} non-empty set of coherent previsions~\(\solp\subseteq\linprevs\) leads to a coherent lower prevision~\(\lowprev[\solp]\group{\bolleke}\coloneqq\inf\cset{\linprev\group{\bolleke}}{\linprev\in\solp}\).
But, even stronger than this, it's a consequence of the Hahn--Banach theorem (or, in this finite-dimensional context, of Minkowski's hyperplane separation theorem) that any coherent lower prevision~\(\lowprev\) is the lower envelope of a (non-empty) \emph{convex and closed}\footnote{\dots\ in the (so-called weak\(^\star\)) topology of pointwise convergence on the set of all linear functionals on the linear space~\(\measurements\).} set of coherent previsions~\(\linprevs(\lowprev)\), uniquely defined by
\begin{equation}\label{eq:solp:from:lowprev}
\linprevs(\lowprev)
\coloneqq\cset{\linprev\in\linprevs}{\group{\forall\measurement{A}\in\measurements}\lowprev\group{\measurement{A}}\leq\linprev\group{\measurement{A}}},
\end{equation}
in the sense that
\begin{equation}\label{eq:lowprev:from:solp}
\lowprev\group{\measurement{A}}
=\min\cset[\big]{\linprev\group{\measurement{A}}}{\linprev\in\linprevs(\lowprev)}
\text{ for all~\(\measurement{A}\in\measurements\)}.
\end{equation}
This result, also known as the \emph{Lower Envelope Theorem}, guarantees that there's a one-to-one correspondence between coherent lower previsions and non-empty convex closed sets of coherent previsions.
Following Levi's \cite{levi1980a} terminology, we'll call the convex closed set~\(\linprevs(\lowprev)\) the \emph{credal set} that corresponds to, and characterises, the coherent lower prevision~\(\lowprev\).
We refer to Ref.~\cite[Cor.~13]{cooman2021:archimedean:choice} for a more detailed argumentation, or alternatively to Refs.~\cite[Ch.~3]{walley1991} and~\cite[Ch.~8]{troffaes2013:lp} for proofs in the context of gambles that can be readily transported to the present context.
See also Ref.~\cite[Prop.~IV.3]{benavoli2016:quantum}, where \citeauthor{benavoli2016:quantum} are the first to introduce credal sets in the quantum mechanical context, with a somewhat different interpretation.

\begin{example}\label{example:coherent:previsions}
Let's return to our qubit system and consider an arbitrary coherent prevision~\(\linprev\) on~\(\measurements\).
Since any Hermitian operator~\(\measurement{C}\in\measurements\) can be written as \(\measurement{C}=w\identity+x\paulix+y\pauliy+z\pauliz\) for some unique~\((w,x,y,z)\in\reals^4\), we find that, due to the linear character of~\(\linprev\) [coherence property~\labelcref{ax:prevision:linear}],
\begin{equation}\label{eq:qubit:linprev:form}
\linprev\group{\measurement{C}}
=w+x\linprev\group{\paulix}+y\linprev\group{\pauliy}+z\linprev\group{\pauliz},
\end{equation}
where we also took into account that \(\linprev\group{\identity}=1\), by coherence property~\labelcref{ax:prevision:norm}.
We see that the coherent prevision~\(\linprev\) is therefore completely determined by its values~\(\linprev[x]\coloneqq\linprev\group{\paulix}\), \(\linprev[y]\coloneqq\linprev\group{\pauliy}\) and \(\linprev[z]\coloneqq\linprev\group{\pauliz}\) in the Pauli operators.
We can find the constraints imposed by coherence on these real numbers by looking at the coherence condition~\labelcref{ax:lower:prevision:more:bounds}, which, applied to any~\(\measurement{C}\), tells us that \(\min\spectrum{C}\leq\linprev\group{\measurement{C}}\leq\max\spectrum{C}\).
If we now recall~\cref{eq:qubit:eigenvalues} for the eigenvalues of~\(\measurement{C}\), we see that this can be rewritten as
\[
w-\sqrt{x^2+y^2+z^2}
\leq w+x\linprev[x]+y\linprev[y]+z\linprev[z]
\leq w+\sqrt{x^2+y^2+z^2},
\]
or equivalently, as
\[
\group{x\linprev[x]+y\linprev[y]+z\linprev[z]}^2\leq x^2+y^2+z^2,
\text{ for all~\(x,y,z\in\reals\)}.
\]
Combining this with the Cauchy--Schwarz inequality, which can be rewritten as
\[
\linprev[x]^2+\linprev[y]^2+\linprev[z]^2
=\max_{(x,y,z)\in\reals^3\setminus\set{(0,0,0)}}
\frac{\group{x\linprev[x]+y\linprev[y]+z\linprev[z]}^2}{x^2+y^2+z^2},
\]
leads to
\begin{equation}\label{eq:qubit:linprev:bounds}
\linprev[x]^2+\linprev[y]^2+\linprev[z]^2\leq1,
\end{equation}
which, taken together with~\(\linprev\group{\identity}=1\), is all that coherence imposes on the values of a coherent prevision~\(\linprev\) in the Pauli operators, and therefore on its values in any~\(\measurement{C}\in\measurements\) through~\cref{eq:qubit:linprev:form}.

As we've seen in Instalment~\labelcref{example:lower:and:upper:previsions}, the assessment \(\assessment\coloneqq\set{\paulix}\) corresponds to a lower prevision~\(\lowprev[\ext\group{\assessment} ]\).
Taking into account \cref{eq:qubit:lower:upper:pauli}, we find that a coherent prevision~\(\linprev\) belongs to the credal set~\(\linprevs(\lowprev[\ext\group{\assessment} ])\) that corresponds to this lower prevision when
\begin{equation*}
\begin{cases}
w-\sqrt{y^2+z^2}&\text{if \(x\geq0\)}\\
w-\sqrt{x^2+y^2+z^2}&\text{if \(x\leq0\)}
\end{cases}
\leq w+x\linprev[x]+y\linprev[y]+z\linprev[z]
\text{ for all~\((x,y,y,z)\in\reals^4\)},
\end{equation*}
and \(\linprev[x]^2+\linprev[y]^2+\linprev[z]^2\leq1\) [these are the general coherence bounds established in \cref{eq:qubit:linprev:bounds}].
After some algebra, we find that the credal set~\(\linprevs(\lowprev[\ext\group{\assessment} ])\) is completely determined by the conditions
\begin{equation}\label{eq:qubit:linprevs:constraints}
\linprev[x]\geq0
\text{ and }
\linprev[x]^2+\linprev[y]^2+\linprev[z]^2\leq1.
\end{equation}
We then infer from~\cref{eq:lowprev:from:solp} and conjugacy that, to find \(\lowprev[\ext\group{\assessment} ]\group{\measurement{C}}\) and \(\uppprev[\ext\group{\assessment} ]\group{\measurement{C}}\), we have to minimise respectively maximise \(w+x\linprev[x]+y\linprev[y]+z\linprev[z]\) subject to the constraints in \cref{eq:qubit:linprevs:constraints}.
Some small effort shows that this leads to the same values as those given in \cref{eq:qubit:lower:upper:pauli:special:cases} for the particular choices \(\paulix\), \(\pauliy\) and \(\paulix-\pauliy\) for~\(\measurement{C}\).
These calculations correspond to solving \emph{dual} semidefinite programming problems for the ones mentioned near the end of Instalment~\labelcref{example:lower:and:upper:previsions} of our running example.
\end{example}

\subsection{Density operators}\label{sec:density:operators}
In the more standard treatment of uncertainty in quantum mechanics, as discussed in \cref{sec:quantum:mechanics:probabilistic}, uncertainty about the state is often expressed through density operators.
For a system in a mixed state, described by such a density operator~\(\density\), the \emph{expected outcome} of the measurement~\(\measurement{A}\) on the system was found to be \(\expec[\density]{\measurement{A}}=\trace{\density\measurement{A}}\); see \cref{eq:qm:expectation}.
This formalism doesn't come, however, with a clear view on how to interpret the probabilities and expected values that play a role in it.

As it's one of our aims in this paper to provide a decision-theoretic interpretation for such density operators and expected outcomes, by relating them to our coherent lower previsions and {\SDMs}, we'll now show that the special subclass of the coherent previsions on~\(\measurements\) provide us with an appropriate means of connecting density operators with our decision-theoretic {\SDM} approach.
To do so, we'll again follow the path that was cleared by \citeauthor{benavoli2016:quantum} in Ref.~\cite[Sec.~IV]{benavoli2016:quantum}, by focusing on the mathematical links between coherent previsions and density operators.
We postpone a discussion of how to interpret these links to \cref{sec:a:warning}.

We begin this brief mathematical exploration with the observation that expected outcomes make for perfectly acceptable coherent previsions.

\begin{proposition}\label{prop:from:density:to:linprev}
For every density operator~\(\density\in\densities\), the real functional~\(\linprev[\density]\group{\bolleke}\coloneqq\expec[\density]{\bolleke}\coloneqq\trace{\density\,\bolleke}\) is a coherent prevision on~\(\measurements\).
\end{proposition}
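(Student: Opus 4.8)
The plan is to appeal to the characterisation, obtained just above, of the coherent previsions on~\(\measurements\) as precisely the real functionals \(\linprev\colon\measurements\to\reals\) that are linear~\cref{ax:prevision:linear}, positive~\cref{ax:prevision:positive} and normalised~\cref{ax:prevision:norm}. Accordingly, it suffices to verify that \(\linprev[\density]\colon\measurement{A}\mapsto\trace{\density\measurement{A}}\) has these three properties. Equivalently, one could verify the coherence conditions~\cref{ax:linprev:bounds,ax:linprev:additivity} directly, which I note below as an alternative for the one non-immediate step.

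First I would check that \(\linprev[\density]\) is indeed real-valued. Since \(\density\) and any~\(\measurement{A}\in\measurements\) are Hermitian, the cyclic property of the trace yields \(\trace{\density\measurement{A}}^*=\trace{\adjoint{\group{\density\measurement{A}}}}=\trace{\adjoint{\measurement{A}}\adjoint{\density}}=\trace{\measurement{A}\density}=\trace{\density\measurement{A}}\), so \(\linprev[\density]\group{\measurement{A}}\in\reals\). Linearity~\cref{ax:prevision:linear} is then immediate from the linearity of operator composition in its right argument together with the linearity of the trace: \(\trace{\density\group{\lambda\measurement{A}+\mu\measurement{B}}}=\lambda\trace{\density\measurement{A}}+\mu\trace{\density\measurement{B}}\) for all~\(\measurement{A},\measurement{B}\in\measurements\) and~\(\lambda,\mu\in\reals\). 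And normalisation~\cref{ax:prevision:norm} holds because \(\trace{\density\identity}=\trace{\density}=1\) by the very definition of a density operator.

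The only step requiring a small argument is positivity~\cref{ax:prevision:positive}. Here I would invoke \cref{prop:qm:density} to write \(\density=\sum_{k=1}^m\con_k\gket[k]\gbra[k]\) with states~\(\gket[k]\in\statespace\) and \(\con_1,\dots,\con_m\in[0,1]\) summing to~\(1\). For any positive semidefinite~\(\measurement{A}\in\measurements\), the cyclic property of the trace gives \(\trace{\gket[k]\gbra[k]\measurement{A}}=\trace{\gbra[k]\measurement{A}\gket[k]}=\gbra[k]\measurement{A}\gket[k]\), and \(\measurement{A}\geq\zero\) means that \(\fbra\measurement{A}\fket\geq0\) for all~\(\fket\), so in particular \(\gbra[k]\measurement{A}\gket[k]\geq0\) for each~\(k\); hence, by linearity of the trace, \(\linprev[\density]\group{\measurement{A}}=\sum_{k=1}^m\con_k\gbra[k]\measurement{A}\gket[k]\geq0\). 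For the alternative route, the same decomposition together with \(\gbra[k]\measurement{A}\gket[k]\geq\min\spectrum{A}\) — which is \cref{eq:infimum:utility} — gives the bound~\cref{ax:linprev:bounds}, while additivity~\cref{ax:linprev:additivity} is again just linearity of the trace. There is no genuine obstacle here: the statement is a short verification whose only mildly non-routine ingredients are the reality of \(\trace{\density\measurement{A}}\) (cyclicity of the trace) and the convex decomposition of \(\density\) furnished by \cref{prop:qm:density}.
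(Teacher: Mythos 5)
Your proposal is correct and rests on exactly the same ingredients as the paper's proof: the convex decomposition of \(\density\) supplied by \cref{prop:qm:density}, together with the linearity and cyclic character of the trace. The only (cosmetic) difference is that you verify the linear--positive--normalised characterisation~\labelcref{ax:prevision:linear,ax:prevision:positive,ax:prevision:norm} rather than the coherence conditions~\labelcref{ax:linprev:bounds,ax:linprev:additivity} directly; since the equivalence of the two is established immediately beforehand, and you in any case sketch the latter route as an alternative, this is the same argument in a different dress.
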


\begin{proof}
The additivity condition~\labelcref{ax:linprev:additivity} follows at once from the linearity of the trace, so it only remains to prove the bounds condition~\labelcref{ax:linprev:bounds}.
Since~\(\density\) is a density operator, \cref{prop:qm:density} tells us that there are kets~\(\gket[1]\), \dots, \(\gket[q]\) in~\(\statespace\) and real numbers~\(\con_1,\con_2,\dots,\con_q\in[0,1]\) such that \(\sum_{k=1}^q\con_k=1\) and \(\density=\sum_{k=1}^q\con_k\gket[k]\gbra[k]\).
For every~\(\measurement{A}\in\measurements\), we then find that, indeed,
\[
\trace{\density\measurement{A}}
=\sum_{k=1}^q\con_k\trace{\gket[k]\gbra[k]\measurement{A}}
=\sum_{k=1}^q\con_k\gbra[k]\measurement{A}\gket[k]
\geq\min\spectrum{A},
\]
where the first equality follows from the linearity of the trace, the second equality from its cyclic character, and the inequality from~\Cref{lem:quadratic:form:in:basis} and the fact that \(\sum_{k=1}^q\con_k=1\).
\end{proof}

To prove a converse result, namely, that for every coherent prevision~\(\linprev\) there's a (unique) density operator~\(\density\) such that \(\linprev\group{\bolleke}=\trace{\density\,\bolleke}\), we need a bit of mathematical background on Hilbert spaces.
We've already had occasion to mention that we can provide the real linear space~\(\measurements\) with the Frobenius inner product, defined by \(\inprod{\measurement{A}}{\measurement{B}}\coloneqq\trace{\measurement{A}\measurement{B}}\) for all~\(\measurement{A},\measurement{B}\in\measurements\), which turns \(\measurements\) into a real Hilbert space.
We can then apply to this Hilbert space~\(\measurements\) the Riesz representation theorem, which states that continuous --- or equivalently, bounded --- linear functionals on a Hilbert space are in a one-to-one relationship with its vectors.

\begin{lemma}[Riesz representation theorem {\protect\cite[Thm.~II.4]{reed2003:methods:1}}]\label{lem:riesz}
For any continuous linear functional~\(\Gamma\) on~\(\measurements\), there's a \emph{unique}~\(\indmeasurement{B}{\Gamma}\in\measurements\) such that \(\Gamma(\measurement{A})=\trace{\indmeasurement{B}{\Gamma}\measurement{A}}\) for all~\(\measurement{A}\in\measurements\).
\end{lemma}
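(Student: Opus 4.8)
The plan is to trade the abstract Hilbert-space statement for its elementary finite-dimensional incarnation. Recall from \Cref{sec:hermitian:operators:basics} that \(\measurements\) is a real Hilbert space of dimension \(n^2\) under the Frobenius inner product \(\inprod{\measurement{A}}{\measurement{B}}=\trace{\measurement{A}\measurement{B}}\), with associated norm \(\frobnorm{\bolleke}\). Since \(\measurements\) is finite-dimensional, every linear functional on it is automatically continuous, so the continuity hypothesis on \(\Gamma\) plays no real role in the argument and is retained only for agreement with the cited general theorem. The first step is therefore simply to fix a Frobenius-orthonormal basis \(\set{\measurement{E}_1,\dots,\measurement{E}_{n^2}}\) of \(\measurements\), whose existence is the usual Gram--Schmidt fact in any finite-dimensional inner product space.

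For existence, I would expand an arbitrary \(\measurement{A}\in\measurements\) in this basis as \(\measurement{A}=\sum_{k=1}^{n^2}\inprod{\measurement{E}_k}{\measurement{A}}\measurement{E}_k=\sum_{k=1}^{n^2}\trace{\measurement{E}_k\measurement{A}}\measurement{E}_k\), then apply the linearity of \(\Gamma\) and of the trace to get
\[
\Gamma(\measurement{A})
=\sum_{k=1}^{n^2}\trace{\measurement{E}_k\measurement{A}}\,\Gamma(\measurement{E}_k)
=\trace*{\Bigl(\sum_{k=1}^{n^2}\Gamma(\measurement{E}_k)\measurement{E}_k\Bigr)\measurement{A}}.
\]
Hence \(\indmeasurement{B}{\Gamma}\coloneqq\sum_{k=1}^{n^2}\Gamma(\measurement{E}_k)\measurement{E}_k\) represents \(\Gamma\); it lies in \(\measurements\) because it's a real linear combination of Hermitian operators and hence Hermitian.

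For uniqueness, suppose both \(\measurement{B}\) and \(\measurement{B}'\) in \(\measurements\) satisfy \(\Gamma(\measurement{A})=\trace{\measurement{B}\measurement{A}}=\trace{\measurement{B}'\measurement{A}}\) for all \(\measurement{A}\in\measurements\). Then \(\trace{(\measurement{B}-\measurement{B}')\measurement{A}}=0\) for every \(\measurement{A}\in\measurements\); specialising to \(\measurement{A}=\measurement{B}-\measurement{B}'\), which is again Hermitian and so an admissible choice, gives \(\frobnorm{\measurement{B}-\measurement{B}'}^2=\trace{(\measurement{B}-\measurement{B}')^2}=0\), whence \(\measurement{B}=\measurement{B}'\) by positive-definiteness of the Frobenius inner product.

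I expect no genuine obstacle here, since this is a textbook result and the finite-dimensional version is entirely elementary; the only points to watch are that \(\trace{\measurement{A}\measurement{B}}\) really is a symmetric, positive-definite real bilinear form on \(\measurements\) — already recorded in \Cref{sec:hermitian:operators:basics} — and that the operator written down actually lands back in \(\measurements\) rather than merely in the complex matrices. Should one ever want the full infinite-dimensional statement instead, the basis expansion above would be replaced by the standard closed-kernel argument: if \(\Gamma\neq0\), pick a unit \(\measurement{Z}\) orthogonal to the closed hyperplane \(\kernel\Gamma\), observe that \(\measurement{A}-\tfrac{\Gamma(\measurement{A})}{\Gamma(\measurement{Z})}\measurement{Z}\in\kernel\Gamma\) for every \(\measurement{A}\), and take inner products with \(\measurement{Z}\); but that machinery is unnecessary in the present setting.
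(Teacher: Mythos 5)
Your proof is correct. Note that the paper itself offers no proof of this lemma: it simply invokes the general Riesz Representation Theorem for Hilbert spaces (citing Reed and Simon) applied to the real Hilbert space~\(\measurements\) equipped with the Frobenius inner product, whose Hilbert-space structure was recorded in \Cref{sec:hermitian:operators:basics}. You instead give a self-contained, elementary finite-dimensional argument: existence by expanding in a Frobenius-orthonormal basis and reading off the representer \(\indmeasurement{B}{\Gamma}=\sum_{k}\Gamma(\measurement{E}_k)\measurement{E}_k\), and uniqueness by testing against \(\measurement{A}=\measurement{B}-\measurement{B}'\) and using positive-definiteness of the Frobenius norm. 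Both steps are sound, and the two points you flag are precisely the ones that need checking when specialising the abstract theorem: the representer is Hermitian because the coefficients \(\Gamma(\measurement{E}_k)\) are real (so the real linear combination stays inside \(\measurements\) rather than escaping into the complex matrix algebra), and continuity of \(\Gamma\) is automatic in finite dimensions, so the hypothesis is only cosmetic here. What the paper's route buys is brevity and a formulation that survives verbatim in infinite dimensions; what your route buys is that nothing is taken on faith beyond Gram--Schmidt and the positive-definiteness of \(\trace{\measurement{C}^2}\) for Hermitian~\(\measurement{C}\). Either is acceptable in this finite-dimensional setting.
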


We now have the tools to prove the desired one-to-one relationship; see also the discussion in Section~IV of \cite{benavoli2016:quantum}, where Proposition~IV.6 and Theorem~IV.4 turn out to be closely related to what we have here.

\begin{proposition}\label{prop:one-to-one}
There's a one-to-one correspondence~\(\todensity\) between coherent previsions~\(\linprev\) on~\(\measurements\) and density operators~\(\density\) on~\(\hilbertspace\), through~\(\linprev\group{\bolleke}=\trace{\density\,\bolleke}\) with \(\density=\todensity(\linprev)\).
This bijection preserves convex combinations, in the sense that \(\todensity(\alpha\linprev[1]+(1-\alpha)\linprev[2])=\alpha\todensity(\linprev[1])+(1-\alpha)\todensity(\linprev[2])\), for all real~\(\alpha\in[0,1]\) and all~\(\linprev[1],\linprev[2]\in\linprevs\).
\end{proposition}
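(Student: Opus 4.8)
The plan is to build the bijection around \Cref{prop:from:density:to:linprev}, which already hands us a map $\density\mapsto\linprev[\density]\coloneqq\trace{\density\,\bolleke}$ from $\densities$ into $\linprevs$, and then to show that this map is a convexity-preserving bijection whose inverse is the desired $\todensity$. The main external tool will be the Riesz Representation Theorem (\Cref{lem:riesz}), applied to $\measurements$ equipped with the Frobenius inner product $\inprod{\measurement{A}}{\measurement{B}}=\trace{\measurement{A}\measurement{B}}$.

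First I would check injectivity of $\density\mapsto\linprev[\density]$: if $\trace{\density[1]\measurement{A}}=\trace{\density[2]\measurement{A}}$ for all $\measurement{A}\in\measurements$, then, since $\density[1]-\density[2]\in\measurements$, the choice $\measurement{A}\instantiateas\density[1]-\density[2]$ gives $\frobnorm{\density[1]-\density[2]}^2=\trace[\big]{(\density[1]-\density[2])^2}=0$, so $\density[1]=\density[2]$. Then, for surjectivity, I would take an arbitrary coherent prevision $\linprev$; it is a continuous linear functional on $\measurements$ — by \cref{eq:boundedness}, equivalently by the Lipschitz continuity established in the proof that \labelcref{ax:linprev:homogeneity} follows from \labelcref{ax:linprev:bounds,ax:linprev:additivity} — so \Cref{lem:riesz} produces a unique $\densf\in\measurements$ with $\linprev(\measurement{A})=\trace{\densf\measurement{A}}$ for all $\measurement{A}\in\measurements$. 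What remains is to argue that $\densf$ is a density operator: it is Hermitian because $\densf\in\measurements$; it has unit trace because $\trace{\densf}=\trace{\densf\identity}=\linprev(\identity)=1$ by the normalisation condition \labelcref{ax:prevision:norm}; and it is positive semidefinite because, for every $\gket\in\hilbertspace$, the operator $\gket\gbra\possemidef\zero$, whence the positivity condition \labelcref{ax:prevision:positive} together with the cyclic property of the trace yields $\gbra\densf\gket=\trace{\densf\gket\gbra}=\linprev(\gket\gbra)\geq0$. Hence $\densf\in\densities$, and setting $\todensity(\linprev)\coloneqq\densf$ provides a genuine two-sided inverse of $\density\mapsto\linprev[\density]$.

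Lastly, I would derive convexity preservation from the uniqueness clause of \Cref{lem:riesz} and the linearity of the trace: for $\linprev[1],\linprev[2]\in\linprevs$, $\alpha\in[0,1]$ and $\density[i]\coloneqq\todensity(\linprev[i])$ with $i\in\set{1,2}$,
\[
\group[\big]{\alpha\linprev[1]+(1-\alpha)\linprev[2]}(\measurement{A})
=\trace[\big]{\group[\big]{\alpha\density[1]+(1-\alpha)\density[2]}\measurement{A}}
\quad\text{for all }\measurement{A}\in\measurements,
\]
so uniqueness forces $\todensity\group[\big]{\alpha\linprev[1]+(1-\alpha)\linprev[2]}=\alpha\density[1]+(1-\alpha)\density[2]$. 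None of this is deep; the only step that takes a moment's care is the verification that the Riesz representative is positive semidefinite, which the substitution $\measurement{A}\instantiateas\gket\gbra$ settles cleanly — this is, in effect, one half of the standard correspondence between density operators and positive normalised functionals.
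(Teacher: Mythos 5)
Your proposal is correct and follows essentially the same route as the paper: both rest on the Riesz Representation Theorem for the Frobenius inner product, extract unit trace from the normalisation of \(\linprev\), and establish positive semidefiniteness by evaluating on rank-one projections \(\gket\gbra\) via the cyclic property of the trace. The only cosmetic differences are that you make injectivity explicit with the Frobenius-norm argument (the paper gets it from the uniqueness clause of Riesz) and that you invoke positivity~\labelcref{ax:prevision:positive} where the paper uses the bounds condition~\labelcref{ax:linprev:bounds} on the projection's spectrum, which are interchangeable here.
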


\begin{proof}
We begin with the first statement.
By~\cref{prop:from:density:to:linprev}, it suffices to prove that with any coherent prevision~\(\linprev\in\linprevs\) there corresponds a unique density operator~\(\density\) such that \(\linprev\group{\bolleke}=\trace{\density\,\bolleke}\).
We can then denote this~\(\density\) by~\(\todensity(\linprev)\).
We know that \(\linprev\) is bounded and therefore continuous, from the discussion leading up to~\cref{eq:boundedness}, so we can infer from \cref{lem:riesz} that there's a unique~\(\indmeasurement{B}{\linprev}\in\measurements\) such that \(\linprev\group{\bolleke}=\trace{\indmeasurement{B}{\linprev}\bolleke}\).
It's enough, therefore, to prove that the Hermitian operator~\(\indmeasurement{B}{\linprev}\) is a density operator, or in other words, that \(\indmeasurement{B}{\linprev}\geq0\) and that \(\trace{\indmeasurement{B}{\linprev}}=1\).

First, observe that \(\trace[\big]{\indmeasurement{B}{\linprev}}=\trace[\big]{\indmeasurement{B}{\linprev}\identity}=\linprev\group{\identity}=1\), where the last equality follows from~\labelcref{ax:prevision:norm}.

Next, consider any~\(\gket\in\statespace\).
Then for the projection~\(\projection[\gket]\coloneqq\gket\gbra\in\measurements\), we find that \(\spectrumpure{\projection[\gket]}\subseteq\set{0,1}\) and therefore, by the bounds condition~\labelcref{ax:linprev:bounds}, that \(\linprev\group{\projection[\gket]}\geq0\).
As the trace is cyclic, this leads to~\(0\leq\linprev\group[\big]{\projection[\gket]}=\trace[\big]{\indmeasurement{B}{\linprev}\projection[\gket]}=\trace[\big]{\indmeasurement{B}{\linprev}\gket\gbra}=\gbra\indmeasurement{B}{\linprev}\gket\), which implies that, indeed, \(\indmeasurement{B}{\linprev}\geq0\).

To prove that the second statement, consider any real~\(\alpha\in[0,1]\) and any~\(\linprev[1],\linprev[2]\in\linprevs\), and let \(\density[1]\coloneqq\todensity(\linprev[1])\) and \(\density[2]\coloneqq\todensity(\linprev[2])\).
And, by the linearity of the trace,
\begin{align*}
\trace[\big]{\group[\big]{\alpha\density[1]+(1-\alpha)\density[2]}\bolleke}
&=\alpha\trace[\big]{\density[1]\bolleke}+(1-\alpha)\trace[\big]{\density[2]\bolleke}\\
&=\alpha\linprev[1](\bolleke)+(1-\alpha)\linprev[2](\bolleke)
=\group[\big]{\alpha\linprev[1]+(1-\alpha)\linprev[2]}(\bolleke).\qedhere
\end{align*}
\end{proof}

More generally, when Your beliefs about~\(\uket\) are modelled by a coherent lower prevision~\(\lowprev\) that isn't self-conjugate, then we've already established that this~\(\lowprev\) can be written as the lower envelope of the convex closed set of coherent previsions, or credal set, \(\linprevs(\lowprev)\).
Because it preserves convex combinations, the map~\(\todensity\) turns this set into the \emph{convex and closed\footnote{\dots\ in the topology on~\(\densities\) induced by the map~\(\todensity\), characterised by the fact that a sequence of densities~\(\density[n]\) converges to the density~\(\density\) if and only if corresponding sequence of linear functionals~\(\trace{\density[n]\bullet}\) converges (pointwise, or equivalently, in norm, due to the finite dimension of the underlying Hilbert space~\(
\hilbertspace\)) to the linear functional~\(\trace{\density\bullet}\).} set of density operators}
\begin{align}
\densities(\lowprev)
&\coloneqq\todensity\group{\linprevs(\lowprev)}
=\cset{\todensity(\linprev)}{\linprev\in\linprevs(\lowprev)}\notag\\
&=\cset[big]{\density\in\densities}
{\group{\forall\measurement{A}\in\measurements}\trace{\density\measurement{A}}\geq\lowprev\group{\measurement{A}}},
\label{eq:from:lowprev:to:set:of:densities}
\end{align}
and, clearly, this set of density operators completely determines the coherent lower prevision~\(\lowprev\) in the sense that
\begin{equation}\label{eq:from:set:of:densities:to:lowprev}
\lowprev\group{\measurement{A}}
=\min\cset{\trace{\density\measurement{A}}}{\density\in\densities(\lowprev)}
\text{ for all~\(\measurement{A}\in\measurements\)}.
\end{equation}
This is the generalisation of Born's Rule, in its most general version~\eqref{eq:qm:expectation}, that we derive from our decision-theoretic argumentation, where the single density operator~\(\density\) is replaced by a convex closed set~\(\densities(\lowprev)\) of them, and where the expected outcome~\(\expec[\density]{\measurement{A}}\) of a measurement~\(\measurement{A}\) is replaced its lower prevision~\(\lowprev\group{\measurement{A}}\).

One meaning that our decision-theoretic approach attributes to this lower prevision~\(\lowprev\) is, on the one hand, that it characterises (up to border behaviour, see the previous section) Your preferences between the acts~\(\act{A}\) associated with measurements~\(\measurement{A}\).
On the other hand, we also recall that \(\lowprev(\measurement{A})\) was defined as Your supremum price for buying the uncertain reward~\(\theutility{A}(\uket)\) associated with performing the measurement~\(\measurement{A}\) on the quantum system in its unknown state~\(\uket\).
And finally, if we were to allow ourselves to follow a more standard approach to probability in quantum mechanics, \(\lowprev\group{\measurement{A}}\) could be seen as a tight lower bound on the expected values of the measurement~\(\measurement{A}\), as is indicated by \cref{eq:from:set:of:densities:to:lowprev}.

We've summarised the relationship between the various types of models introduced and described in this paper in \cref{fig:models}.

\begin{figure}[h]
\centering\small
\def\colsep{8mm}
\def\rowsep{8mm}
\def\boxwidth{3.25cm}
\def\boxheight{1cm}
\def\boxinnersep{3pt}
\def\boxrounding{2pt}
\begin{tikzpicture}[scale=1, every node/.style={transform shape}]
\tikzset{block/.style={draw,rectangle,rounded corners=\boxrounding,align=center,text width=\boxwidth,minimum height=\boxheight,inner sep=\boxinnersep},
emptyblock/.style={draw=white,rectangle,rounded corners=\boxrounding,align=center,text width=\boxwidth,minimum height=\boxheight,inner sep=\boxinnersep},
conn/.style={-Latex},
bconn/.style={<->, >=Latex}
}
\node[block] (B1) {Coherent sets~\(\desirset\) of desirable measurements satisfying \labelcref{ax:desirability:strict,ax:desirability:additivity,ax:desirability:scaling,ax:desirability:background}};
\node[emptyblock, right=\colsep of B1] (C2R1) {}; 
\node[block, right=\colsep of C2R1] (B2) {Strict vector orderings~\(\betterthan\) satisfying \labelcref{ax:preference:irreflexitivy,ax:preference:transitivity,ax:preference:scaling,ax:preference:background,ax:preference:additivity}};
\node[draw=none, minimum height=0pt, minimum width=0pt, inner sep=0pt] at (C2R1) {}; 
\node[block, below=\rowsep of B1] (B3) {Coherent lower previsions~\(\lowprev\) satisfying \labelcref{ax:lower:prevision:bounds,ax:lower:prevision:superadditivity,ax:lower:prevision:scaling}};
\node[block, right=\colsep of B3] (B4) {Convex closed sets~\(\solp\) of coherent previsions};
\node[block, right=\colsep of B4] (B5) {Convex closed sets~\(\densities\) of density operators};
\node[block, below=\rowsep of B3] (B6) {Coherent previsions~\(\linprev\) satisfying \labelcref{ax:linprev:additivity,ax:linprev:bounds}};
\node[emptyblock, right=\colsep of B6] (C2R3) {}; 
\node[block, right=\colsep of C2R3] (B7) {Density operators~\(\density\)};
\node[draw=none, minimum height=0pt, minimum width=0pt, inner sep=0pt] at (C2R3) {}; 
\draw[bconn] (B1) -- (B2);
\draw[conn]  (B1) -- (B3);
\draw[bconn] (B3) -- (B4);
\draw[bconn] (B4) -- (B5);
\draw[conn,densely dashed]  (B3) -- (B6);
\draw[conn,densely dashed]  (B5) -- (B7);
\draw[bconn] (B6) -- (B7);
\end{tikzpicture}
\caption{The relationship between the various types of models for uncertainty about the quantum state in this paper; double arrows indicate a one-to-one correspondence; a solid single arrow indicates a many-to one correspondence, in the sense that each member of the origin set corresponds to a single member of the target set, but a member of the target set may correspond to multiple members of the origin set; a dashed single arrow points towards a subclass, in the sense that all the members in the target set correspond to members of the origin set, but not vice versa. The top row represents the most general models and the bottom row the least general ones, which are also the ones typically used in quantum mechanics to date.}
\label{fig:models}
\end{figure}

\begin{example}\label{example:density:operators}
Let's now find out how to apply these ideas in our running example.

First, let's consider a density operator~\(\density\).
Since \(\density\) is in particular Hermitian, we know that there are~\(a,b,c,d\in\reals\) such that \(\density=d\identity+a\paulix+b\pauliy+c\pauliz\); its eigenvalues are then \(d\pm\sqrt{a^2+b^2+c^2}\), due to~\cref{eq:qubit:eigenvalues}.
Expressing that \(\trace{\density}=1\) and \(\density\geq0\) therefore leads to the conditions:
\begin{equation}\label{eq:qubit:density:constraints}
d=\frac12\text{ and }a^2+b^2+c^2\leq\frac14.
\end{equation}
Now consider any measurement~\(\measurement{C}\in\measurements\) and let, as before, \(\measurement{C}=w\identity+x\paulix+y\pauliy+z\pauliz\), with \(x,y,z,w\in\reals\).
Another application of the argumentation behind~\cref{eq:qubit:eigenvalues} tells us that \(\trace{\density\measurement{C}}\) is twice the \(\identity\)-component of~\(\density\measurement{C}\).
Moreover, by the properties of Pauli operators \cite[p.~418]{cohen1977:quantum:1}, we find that
\begin{align*}
\density\measurement{C}
&=\group[\Big]{\frac12\identity+a\paulix+b\pauliy+c\pauliz}\group{w\identity+x\paulix+y\pauliy+z\pauliz}\\
&=\group[\Big]{\frac12w+ax+by+cz}\identity+\group{\dots}\paulix+\group{\dots}\pauliy+\group{\dots}\pauliz
\end{align*}
and therefore \(\trace{\density\measurement{C}}=w+2(ax+by+cz)\).
If we now consider the coherent prevision~\(\linprev\) that corresponds to~\(\density\) in the sense that \(\linprev\group{\bolleke}=\trace{\density\,\bolleke}\), and invoke \cref{eq:qubit:linprev:form}, we find that
\[
w+2(ax+by+cz)
=w+x\linprev[x]+y\linprev[y]+z\linprev[z]
\text{ for all~\(w,x,y,z\in\reals\)}.
\]
This tells us that \(\linprev[x]=2a\), \(\linprev[y]=2b\) and \(\linprev[z]=2c\), and therefore
\begin{equation}\label{eq:qubit:from:linprev:to:density}
\density=\frac12\identity+\frac{\linprev[x]\paulix+\linprev[y]\pauliy+\linprev[z]\pauliz}2,
\end{equation}
while the constraints~\eqref{eq:qubit:density:constraints} turn into~\(\linprev[x]^2+\linprev[y]^2+\linprev[z]^2\leq1\), which is in complete accordance with~\cref{eq:qubit:linprev:bounds}.

The elements~\(\density\) of the closed convex set~\(\densities(\lowprev[\ext\group{\assessment} ])\) are therefore characterised by
\[
\density=\frac12\identity+\frac{\linprev[x]\paulix+\linprev[y]\pauliy+\linprev[z]\pauliz}2
\text{ with }
\linprev[x]\geq0
\text{ and }
\linprev[x]^2+\linprev[y]^2+\linprev[z]^2\leq1.
\qedhere
\]
\end{example}

\subsection{A few words of caution \dots\ and a curious observation}\label{sec:a:warning}
It's important to stress again that the version of Born's rule~\eqref{eq:from:set:of:densities:to:lowprev} that we've derived from our assumptions, doesn't start from a probability argument, but rather from a preference argument, which then leads to coherent lower previsions, or sets of coherent previsions.
We feel this distinction to be quite important, and we'll use this and the next section to clarify our point.

The lower previsions~\(\lowprev\) that we've introduced here, are functionals that serve as an alternative characterisation for Your preferences, in the sense that, up to boundary behaviour, \(\measurement{A}\betterthan\measurement{B}\) if and only if \(\lowprev(\measurement{A}-\measurement{B})>0\) or \(\measurement{A}\alwaysbetterthan\measurement{B}\); see \cref{eq:preference:and:lower:prevision} and the discussion leading to it.

We'll see further on that this will definitely be the case if a coherent set of desirable measurement is \emph{maximal}, that is, not included in any other such set, for which an equivalent condition is that
\begin{enumerate}[label={\upshape M}.,ref={\upshape M},labelwidth=*,leftmargin=*]
\item\label{ax:desirs:maximality} \(\set{\measurement{A},-\measurement{A}}\cap\desirset\neq\emptyset\) for all non-null~\(\measurement{A}\in\measurements\).\hfill[maximality]
\end{enumerate}
\begin{proof}[Proof of the equivalence]
Assume that the coherent {\SDM}~\(\desirset\) is maximal and consider any non-null~\(\measurement{A}\in\measurements\) such that \(-\measurement{A}\notin\desirset\), then we need to prove that \(\measurement{A}\in\desirset\).
We're done if we can show that \(\desirset\cup\set{\measurement{A}}\) is consistent, because then \(\ext(\desirset\cup\set{\measurement{A}})\) will be coherent and therefore equal to~\(\desirset\) by its maximality; and clearly \(\measurement{A}\in\ext(\desirset\cup\set{\measurement{A}})\).
Suppose towards contradiction that \(\desirset\cup\set{\measurement{A}}\) isn't consistent, then we infer from \cref{eq:consistency} that \(\zero\in\posi\group{\desirset\cup\set{\measurement{A}}}+\possemidefmeasurements\).
Taking into account that
\begin{align*}
\posi\group{\desirset\cup\set{\measurement{A}}}+\possemidefmeasurements
&=\group[\Big]{\posi\group{\desirset}\cup\posi\group{\measurement{A}}\cup\group[\big]{\posi\group{\desirset}+\posi\group{\set{\measurement{A}}}}}+\possemidefmeasurements\\
&=\group[\big]{\desirset\cup\posi\group{\set{\measurement{A}}}\cup\group{\desirset+\posi\group{\set{\measurement{A}}}}}+\possemidefmeasurements\\
&=\group{\desirset+\possemidefmeasurements}\cup\group{\posi\group{\set{\measurement{A}}}+\possemidefmeasurements}\cup\group{\desirset+\posi\group{\set{\measurement{A}}}+\possemidefmeasurements}\\
&=\desirset\cup\group{\posi\group{\set{\measurement{A}}}+\possemidefmeasurements}\cup\group{\desirset+\posi\group{\set{\measurement{A}}}},
\end{align*}
where the second equality follows from \(\desirset=\posi\group{\desirset}\) [use the coherence of~\(\desirset\) and \labelcref{ax:desirability:additivity,ax:desirability:scaling}] and the last equality follows from \(\desirset+\possemidefmeasurements=\desirset\) [use the coherence of~\(\desirset\) and \labelcref{ax:desirability:monotone}], we then find that, necessarily, \(-\measurement{A}\in\posmeasurements\) [because \(\measurement{A}\neq\zero\)] or \(-\measurement{A}\in\desirset\), contradicting our assumption that \(-\measurement{A}\notin\desirset\) [use the coherence of~\(\desirset\) and \labelcref{ax:desirability:background}].

Conversely, assume that \(\desirset\) satisfies the condition~\labelcref{ax:desirs:maximality}, then we must show that \(\desirset\) is maximal.
Let \(\desirset'\) be any coherent {\SDM} that includes~\(\desirset\), assume towards contradiction that the inclusion is strict and consider any~\(\measurement{A}\in\desirset'\setminus\desirset\).
But then \(-\measurement{A}\in\desirset\), and therefore also \(-\measurement{A}\in\desirset'\).
Since, by assumption, also \(\measurement{A}\in\desirset'\), we infer from the coherence of~\(\desirset'\) [namely, \labelcref{ax:desirability:additivity}] that \(\zero=-\measurement{A}+\measurement{A}\in\desirset'\), contradicting the coherence of~\(\desirset'\) [namely, \labelcref{ax:desirability:strict}].
\end{proof}
\noindent Indeed, the self-conjugacy of~\(\lowprev[\desirset]\) is equivalent to~\(\desirset\) satisfying the so-called \emph{weak maximality} condition, which is formally similar to, but somewhat weaker than, the condition~\labelcref{ax:desirs:maximality} for maximality; see also Refs.~\cite{cooman2021:archimedean:choice,debock2019:interpretation} for related discussion.
\begin{enumerate}[label={\upshape WM}.,ref={\upshape WM},labelwidth=*,leftmargin=*]
\item\label{ax:desirs:mixingness} \(\set[\big]{\measurement{A}+\epsilon\identity,-\measurement{A}+\epsilon\identity}\cap\desirset\neq\emptyset\) for all~\(\measurement{A}\in\measurements\) and all real~\(\epsilon>0\).\hfill[weak maximality]
\end{enumerate}
\noindent Weak maximality means that, if we include Your marginal preferences to come to a notion of weak preference, You always have a(t least a weak) preference between any~\(\measurement{A}\) and its additive inverse~\(-\measurement{A}\).
This statement is clarified and made more precise by the following proposition, which established a direct connection between weakly maximal coherent sets of desirable measurements and coherent previsions.

\begin{proposition}\label{prop:mix}
A coherent {\SDM}~\(\desirset\) is weakly maximal if and only if the corresponding lower prevision~\(\lowprev[\desirset]\) is a coherent prevision, so \(\lowprev[\desirset]=\uppprev[\desirset]=\linprev[\desirset]\).
In that case, the corresponding credal set~\(\linprevs(\linprev[\desirset])\) is the singleton~\(\set{\linprev[\desirset]}\).
\end{proposition}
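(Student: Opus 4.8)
The plan is to prove the stated equivalence by two implications, invoking the fact (established in \cref{sec:finetti}) that a coherent lower prevision is a coherent prevision \emph{exactly} when it is self-conjugate, and then to read off the singleton claim directly from the description of the credal set in \cref{eq:solp:from:lowprev}. Throughout I would keep two elementary facts at hand. First, by \labelcref{ax:lower:prevision:more:bounds} one always has \(\lowprev[\desirset]\group{\measurement{A}}\leq\uppprev[\desirset]\group{\measurement{A}}\), so proving self-conjugacy amounts to proving the reverse inequality. Second, by \cref{eq:desirs:lowprev} (and \(\interior\group{\desirset}\subseteq\desirset\)) one has \(\lowprev[\desirset]\group{\measurement{A}}>0\then\measurement{A}\in\desirset\); dually, \(\measurement{A}\in\desirset\) puts \(0\) into \(\cset{\alpha\in\reals}{\measurement{A}-\alpha\identity\in\desirset}\) and hence forces \(\lowprev[\desirset]\group{\measurement{A}}\geq0\) via \cref{eq:define:lower:prevision}.

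For the ``only if'' direction, assume \(\desirset\) is weakly maximal, i.e.\ satisfies~\labelcref{ax:desirs:mixingness}, and fix \(\measurement{A}\in\measurements\); it suffices to show \(\uppprev[\desirset]\group{\measurement{A}}\leq\lowprev[\desirset]\group{\measurement{A}}\). I would argue by contradiction: if \(\lowprev[\desirset]\group{\measurement{A}}<\uppprev[\desirset]\group{\measurement{A}}\), choose \(\gamma\) strictly between these two numbers and set \(\delta\coloneqq\frac{1}{2}\min\set{\gamma-\lowprev[\desirset]\group{\measurement{A}},\,\uppprev[\desirset]\group{\measurement{A}}-\gamma}>0\). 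Applying~\labelcref{ax:desirs:mixingness} to the operator \(\measurement{A}-\gamma\identity\) with \(\epsilon\coloneqq\delta\) gives that either \(\measurement{A}-(\gamma-\delta)\identity\in\desirset\) or \((\gamma+\delta)\identity-\measurement{A}\in\desirset\). In the first case \cref{eq:define:lower:prevision} yields \(\lowprev[\desirset]\group{\measurement{A}}\geq\gamma-\delta\), and the choice of \(\delta\) makes \(\gamma-\delta>\lowprev[\desirset]\group{\measurement{A}}\); in the second case \cref{eq:define:upper:prevision} yields \(\uppprev[\desirset]\group{\measurement{A}}\leq\gamma+\delta\), and the choice of \(\delta\) makes \(\gamma+\delta<\uppprev[\desirset]\group{\measurement{A}}\). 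Either alternative is a contradiction, so \(\lowprev[\desirset]=\uppprev[\desirset]\) everywhere and \(\lowprev[\desirset]\) is a coherent prevision, which we may then denote \(\linprev[\desirset]\).

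For the ``if'' direction, assume \(\lowprev[\desirset]\) is a coherent prevision and write \(\linprev\coloneqq\lowprev[\desirset]\). By linearity~\labelcref{ax:prevision:linear} one has \(\linprev\group{-\measurement{A}}=-\linprev\group{\measurement{A}}\), and together with normalisation~\labelcref{ax:prevision:norm} also \(\linprev\group{\measurement{A}+\epsilon\identity}=\linprev\group{\measurement{A}}+\epsilon\) for all \(\measurement{A}\in\measurements\) and \(\epsilon>0\). Given such \(\measurement{A}\) and \(\epsilon\): if \(\linprev\group{\measurement{A}}\geq0\) then \(\linprev\group{\measurement{A}+\epsilon\identity}\geq\epsilon>0\), so \(\measurement{A}+\epsilon\identity\in\desirset\) by the observation above; and if \(\linprev\group{\measurement{A}}<0\) then \(\linprev\group{-\measurement{A}}>0\), so \(\linprev\group{-\measurement{A}+\epsilon\identity}>0\) and \(-\measurement{A}+\epsilon\identity\in\desirset\). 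In either case \(\set{\measurement{A}+\epsilon\identity,-\measurement{A}+\epsilon\identity}\cap\desirset\neq\emptyset\), which is~\labelcref{ax:desirs:mixingness}. For the singleton claim, \(\linprev[\desirset]\in\linprevs\) lies in \(\linprevs(\linprev[\desirset])\) by \cref{eq:solp:from:lowprev}, and any \(\linprev\in\linprevs(\linprev[\desirset])\) satisfies \(\linprev\group{\measurement{A}}\geq\linprev[\desirset]\group{\measurement{A}}\) and \(\linprev\group{-\measurement{A}}\geq\linprev[\desirset]\group{-\measurement{A}}\) for all \(\measurement{A}\); applying \(\linprev\group{-\measurement{A}}=-\linprev\group{\measurement{A}}\) and the same for \(\linprev[\desirset]\) forces \(\linprev=\linprev[\desirset]\), so \(\linprevs(\linprev[\desirset])=\set{\linprev[\desirset]}\).

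The step I expect to be the main obstacle is the ``only if'' direction: weak maximality delivers membership of \(\desirset\) only up to an arbitrarily small additive slack \(\epsilon\identity\), so the real content is choosing the separating constant \(\gamma\) and the slack \(\delta\) carefully enough that the slackened membership statement handed to us by~\labelcref{ax:desirs:mixingness} still contradicts the \emph{sharp} supremum/infimum definitions of \(\lowprev[\desirset]\) and \(\uppprev[\desirset]\). Once that bookkeeping is in place the remaining arguments are routine.
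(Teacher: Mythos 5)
Your proposal is correct and follows essentially the same route as the paper's proof: in the ``only if'' direction you apply weak maximality to a shifted operator \(\measurement{A}-\gamma\identity\) to squeeze \(\uppprev[\desirset]\) down to \(\lowprev[\desirset]\) (the paper does this directly at the level \(\uppprev[\desirset]\group{\measurement{A}}-\epsilon\) rather than via a midpoint contradiction, but the mechanism is identical), and in the ``if'' direction you use the fact that a strictly positive lower prevision forces membership of \(\desirset\), which is the contrapositive of the paper's appeal to \cref{lem:desirs:from:lowprev}. Your explicit argument for the singleton claim correctly fills in what the paper dismisses as immediate.
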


\begin{proof}
First, assume that \(\desirset\) is weakly maximal.
Consider any~\(\measurement{A}\in\measurements\) and any real~\(\epsilon>0\), then it follows from \cref{eq:define:upper:prevision} that \((\uppprev[\desirset]\group{\measurement{A}}-\epsilon)-\measurement{A}\notin\desirset\).
So it follows from \labelcref{ax:desirs:mixingness} that \(\measurement{A}-(\uppprev[\desirset]\group{\measurement{A}}-\epsilon)\in\desirset\), and therefore \cref{eq:define:lower:prevision} implies that \(\uppprev[\desirset]\group{\measurement{A}}-\epsilon\leq\lowprev[\desirset]\group{\measurement{A}}\), implying that \(\lowprev[\desirset]\group{\measurement{A}}\geq\uppprev[\desirset]\group{\measurement{A}}\) and therefore also that \(\lowprev[\desirset]\group{\measurement{A}}=\uppprev[\desirset]\group{\measurement{A}}\), taking into account \labelcref{ax:lower:prevision:more:bounds}.

Conversely, assume that \(\uppprev[\desirset]=\lowprev[\desirset]\) is a coherent prevision~\(\linprev[\desirset]\).
Assume towards contradiction that there are \(\measurement{A}\in\measurements\) and real~\(\epsilon>0\) such that both \(\measurement{A}+\epsilon\identity\notin\desirset\) and \(-\measurement{A}+\epsilon\identity\notin\desirset\).
Hence, both \(\linprev[\desirset](\measurement{A}+\epsilon\identity)=\lowprev[\desirset](\measurement{A}+\epsilon\identity)\leq0\) and \(\linprev[\desirset](-\measurement{A}+\epsilon\identity)=\lowprev[\desirset](-\measurement{A}+\epsilon\identity)\leq0\), by \cref{lem:desirs:from:lowprev}.
Using~\labelcref{ax:prevision:linear,ax:prevision:norm}, this implies that both \(\linprev[\desirset](\measurement{A})\leq-\epsilon\) and \(-\linprev[\desirset](\measurement{A})\leq-\epsilon\), a contradiction.

The rest of the proof is now immediate.
\end{proof}

In the special, so-called \emph{precise}, case that is characterised by \cref{prop:mix}, we find that the uncertainty is described similarly as on a more standard account of uncertainty in quantum mechanics, because we've seen in \cref{prop:one-to-one} that the coherent previsions~\(\linprev\) on~\(\measurements\) are in a one-to-one correspondence with the density operators~\(\density\) in~\(\densities\).
In fact, we then retrieve a version of Born's rule in the presence of \emph{epistemic uncertainty} about the state of the system, as \cref{prop:one-to-one} tells us that
\[
\linprev\group{\measurement{A}}
=\trace{\density\measurement{A}}
=\expec[\density]{\measurement{A}}
\text{ for all~\(\measurement{A}\in\measurements\)},
\]
where the last equality follows from \cref{eq:qm:expectation}.
We see that when we let \(\density\coloneqq\todensity(\linprev)\), Your fair price~\(\linprev(\measurement{A})=\trace{\density\measurement{A}}\) for the measurement~\(\measurement{A}\) can be made to correspond to the expected value~\(\expec[\density]{\measurement{A}}\) of the outcome of a measurement as considered in the standard approach to quantum mechanics, and discussed in \cref{sec:quantum:mechanics:probabilistic}.

But, we have to be cautious in making the connection between our approach and the more standard one.
In the way we've set up the argument in this paper, the decision problem comes first --- is \emph{primary} --- and the linear price functional~\(\linprev[\desirset]\) is a mathematical tool that, under certain rather restrictive conditions, allows us to characterise Your preferences as captured in~\(\desirset\), in the sense that
\[
\measurement{A}\in\desirset
\ifandonlyif\group[\big]{\linprev[\desirset](\measurement{A})>0\text{ or }\measurement{A}\alwaysbetterthan\zero}.
\]
It turns out that this \(\linprev[\desirset]\) can be characterised mathematically by a density operator~\(\density[\desirset]=\todensity(\linprev[\desirset])\), which, through \cref{prop:qm:density}, is often given a probabilistic interpretation.\footnote{This interpretation is not without its problems, because the `decomposition' in \cref{prop:qm:density} typically isn't unique.}
But this probabilistic interpretation, and the corresponding probabilistic interpretation of the trace~\(\trace{\density[\desirset]\measurement{A}}\) as the expected value of the outcome of the measurement~\(\measurement{A}\), is only \emph{secondary}, or \emph{derivative}.
Indeed, what we have shown above is that \(\linprev\group{\measurement{A}}=\trace{\density\measurement{A}}\) is Your fair price for the uncertain reward~\(\theutility{A}(\uket)\), which \emph{isn't necessarily the same thing as Your expected value for the outcome of the measurement~\(\measurement{A}\)}.

\subsection{An interesting special case}\label{sec:born}
To shed still more light on this issue, let's now consider the special case that You know with certainty that the state is some specific~\(\fket\) in~\(\statespace\) --- so You know that \(\uket=\fket\).\footnote{Recall that the state of the system is actually identified by a ray in Hilbert space~\(\hilbertspace\), of which~\(\fket\) is only one of the elements, which completely determines it. By `You know that \(\uket=\fket\)', we mean that You know that the system is in the state~\(\fket\), up to a phase factor.}
This could for instance be the case after You've just performed a measurement~\(\measurement{B}\) on the quantum system and observed an outcome --- eigenvalue --- with a one-dimensional eigenspace; see \cref{post:qm:after:measurement}.
Through what coherent {\SDM}~\(\desirset\) can this knowledge You have be represented?
Let's answer that question using the tools we acquired for doing so in \cref{sec:inference}.

As we mentioned before right after announcing \cref{backass:dm:utility:function} in \cref{sec:decision:theoretic:background}, there's now no longer any uncertainty about the reward You'll get from performing any measurement~\(\measurement{A}\): it's the real number~\(\theutility{A}(\fket)=\fbra\measurement{A}\fket\), and You'll clearly strictly prefer that to the status quo provided that \(\fbra\measurement{A}\fket>0\).

In other words, Your knowledge leads You to the desirability assessment
\begin{equation*}
\assessment_{\fket}
\coloneqq\cset{\measurement{A}\in\measurements}{\fbra\measurement{A}\fket>0}.
\end{equation*}
Check that \(\assessment_{\fket}=\posi\group{\assessment_{\fket}}=\assessment_{\fket}+\posmeasurements\), so we can infer from \cref{eq:natural:extension} that
\begin{equation*}
\ext\group{\assessment_{\fket}}
=\posmeasurements\cup\posi\group{\assessment+\posmeasurements}
=\posmeasurements\cup\assessment_{\fket}.
\end{equation*}
Since \(\zero\notin\ext\group{\assessment_{\fket}}\), we see that the desirability assessment~\(\assessment_{\fket}\) is consistent, so the smallest coherent set of desirable measurements that includes it, is given by
\begin{equation*}
\desirset[\fket]\coloneqq\ext\group{\assessment_{\fket}}=\posmeasurements\cup\assessment_{\fket}.
\end{equation*}
Check that \(\assessment_{\fket}\cap\set{\measurement{A}+\epsilon\identity,-\measurement{A}+\epsilon\identity}\neq\emptyset\) for all real~\(\epsilon>0\) and~\(\measurement{A}\in\measurements\), which guarantees that \(\desirset[\fket]\) is weakly maximal, i.e.~satisfies the criterion~\labelcref{ax:desirs:mixingness}.
\cref{prop:mix} then guarantees that the corresponding buying price functional~\(\lowprev[{\desirset[\fket]}]\) is a \emph{coherent prevision}, which we'll also denote by~\(\linprev[\fket]\) and which is then given by
\begin{align*}
\linprev[\fket]\group{\measurement{A}}
\coloneqq&\sup\cset{\alpha\in\reals}{\measurement{A}-\alpha\identity\in\desirset[\fket]}
=\sup\cset{\alpha\in\reals}{\fbraketwithop{\measurement{A}}>\alpha\text{ or }\measurement{A}\alwaysbetterthan\alpha\identity}\\
=&\sup\cset{\alpha\in\reals}{\fbraketwithop{\measurement{A}}>\alpha\text{ or }(\min\spectrum{A}\geq\alpha\text{ and }\measurement{A}\neq\alpha\identity)}\\
=&\fbraketwithop{\measurement{A}}
=\theutility{A}(\fket),
\text{ for all~\(\measurement{A}\in\measurements\)},
\end{align*}
where the penultimate equality follows from the fact that \(\fbraketwithop{\measurement{A}}\geq\min\spectrum{A}\) [see \cref{eq:infimum:utility}].
Again, we retrieve a form of our version of Born's rule, now for the case that the state of the system is known with certainty; see \cref{eq:qm:conditional:expectation}.
Clearly, our derivation for it in this special case is merely an application of conservative inference and deriving the buying price functional based on the desirability assessment~\(\assessment_{\fket}\).

In other words, and summarising, if You know that \(\uket=\fket\), then \(\theutility{A}(\fket)=\fbraketwithop{\measurement{A}}\) is necessarily \emph{Your fair price for performing the measurement~\(\measurement{A}\)}.
Again, this isn't necessarily the same thing as Your expected value for the outcome of the measurement~\(\measurement{A}\).

Stating that Your fair price \(\theutility{A}(\fket)=\fbraketwithop{\measurement{A}}\) for performing the measurement~\(\measurement{A}\) \emph{is} the same thing as Your expected value for the outcome of the measurement~\(\measurement{A}\), is tantamount to making an \emph{extra assumption}, namely that there are probabilities for each of the possible outcomes~\(\eigval\in\spectrum{A}\) of the measurement~\(\measurement{A}\) and that these probabilities are given by \cref{eq:qm:born}.
Making this extra assumption therefore amounts to accepting \cref{post:qm:born}, which is something we've been wanting to avoid all along in this paper.

We'll now explain why it \emph{might} make sense to make this extra assumption, all the while realising that it \emph{is} indeed an additional step to take.
In \cref{backass:dm:utility:function} we assumed the existence of a reward function~\(\utility{A}\) such that \(\utility{A}(\fket)\) is the reward for performing measurement~\(\measurement{A}\) when the system is in the state~\(\fket\) and there intended to be a one-number summary of the possible outcomes~\(\eigval\in\spectrum{A}\) that the measurement~\(\measurement{A}\) may yield in the state~\(\fket\).
The exact form of the reward function~\(\utility{A}\) was then fixed by the later postulates~\cref{post:dm:eigenket,post:dm:additivity,post:dm:different:eigenspaces,post:dm:continuity} to be~\(\utility{A}=\theutility{A}=\dotbra\measurement{A}\dotket\).
Interestingly, these postulates make sure that the one-number summary~\(\theutility{A}(\fket)\) behaves linearly in \(\measurement{A}\) and therefore \emph{acts as if it were an expectation}:
\begin{align*}
\theutility{A}(\fket)
=\fbraketwithop{\measurement{A}}
=\fbraketwithop[\bigg]{\smashoperator[r]{\sum_{\eigval\in\spectrum{A}}}\lambda\projection[\eigval]}
=\smashoperator[r]{\sum_{\eigval\in\spectrum{A}}}\lambda\fbraketwithop{\projection[\eigval]}
=\smashoperator[r]{\sum_{\eigval\in\spectrum{A}}}\lambda\theutilitypure{\projection[\eigval]}(\fket),
\end{align*}
where the \(\theutilitypure{\projection[\eigval]}(\fket)=\fbraketwithop{\projection[\eigval]}\) then act as if they were the expected outcomes of the measurements~\(\projection[\eigval]\).
Since the only possible outcomes of a projection~\(\projection[\eigval]\) are \(0\) and \(1\), the one-number summary~\(\theutilitypure{\projection[\eigval]}(\fket)\) then also acts as if it were the probability of its outcome being~\(1\), or in other words, the probability of the measurement~\(\measurement{A}\) producing the outcome~\(\lambda\), in accordance with Born's rule \cref{post:qm:born}.

This special case of our decision-theoretic argument brings us quite close to a result by Deutsch, which has a similar interpretation, but follows a different argumentation, as we'll see next.

\section{Deutsch's decision-theoretic approach}\label{sec:deutsch}
In an important and seminal paper \cite{deutsch1999:quantum:decisions}, David Deutsch presented a different, and earlier, approach to what we're attempting here.

Deutsch aimed at deriving the probabilistic postulate~\cref{post:qm:born} of quantum mechanics --- or Born's rule in its simplest version~\eqref{eq:qm:born} --- from the non-probabilistic ones and straightforward `non-probabilistic' decision-making principles.
While our argumentation in this paper is inspired by Deutsch's take on this issue, it's also rather different in quite a number of respects, in that we're not necessarily convinced that the `non-probabilistic' decision-making principles he suggested are all that straightforward.
This is why we made quite some effort in this paper to clearly state exactly what decision-making approach we follow, which assumptions we make and which we don't make.
We believe our derivation can be defended on any interpretation of quantum mechanics that accepts the postulates~\cref{post:qm:kets,post:qm:operator,post:qm:eigenvalues,post:qm:deterministic} (and \cref{post:qm:after:measurement}) and the reward function postulates~\cref{post:dm:additivity,post:dm:continuity,post:dm:eigenket,post:dm:different:eigenspaces}, which make sense against the decision-making background (\cref{backass:dm:preference:relation,backass:dm:utility:function}) sketched in \cref{sec:decision:theoretic:background}.

To allow readers to better see the similarities and differences between Deutsch's argument and ours, we now give a brief account of it, which should also be compared to and contrasted with the approach we've set up in \cref{sec:our:approach,sec:desirability}.

Deutsch considers a quantum system whose states belong to some complex Hilbert space~\(\hilbertspace\).
He focuses on a measurement~\(\measurement{A}\in\measurements\) on this quantum system, which can be in any of several possible states~\(\gket\in\statespace\) right before the measurement is carried out.

His decision-making set-up considers a subject, whom we'll also call \emph{You} and who can choose between playing several games of the type
\begin{equation*}
\game{\psi}
\coloneqq\text{ ``perform the measurement~\(\measurement{A}\) on the quantum system in state~\(\gket\)''},
\end{equation*}
one for each~\(\gket\in\statespace\) and~\(\measurement{A}\in\measurements\).
The outcome of~\(\game{\psi}\) is unknown to You before playing the game --- doing the measurement --- but, after the measurement, You'll receive as a pay-off the actual outcome of the measurement, expressed in units of some linear utility scale.
While the latter is an assumption we also make in our set-up in \cref{sec:setup}, our general approach is different in that Deutsch focuses on the case that You know what the system's state is, while we allow You to be uncertain about it from the outset.

The \emph{first assumption} that Deutsch makes,\footnote{The order in which we list these assumptions doesn't follow the order in which Deutsch presents them, but rather reflects our attempt at structuring his argumentation to allow for a comparison with our approach.} is that each game~\(\game{\psi}\) has some real \emph{value}~\(\valuation(\gket)\), intended as a one-number summary of the possible pay-offs that the game \(\game{\psi}\) may yield.
He seems to be inspired by de Finetti's \cite{finetti1937} definition of a prevision for a random quantity as a fair price for it when he defines this value as
\begin{quote}
the utility of a hypothetical pay-off such that [You are] indifferent to playing the game and receiving that pay-off unconditionally.
\end{quote}
Since it follows from the non-probabilistic postulates of quantum mechanics [in particular \cref{post:qm:deterministic}] that performing the measurement~\(\alpha\identity\) yields the real outcome~\(\alpha\) unconditionally --- meaning independently of~\(\gket\) ---, this implies in effect that \(\valuation[\alpha\identity]=\alpha\).
Hence, there is --- You are assumed to have --- some \emph{valuation}~\(\valuationsymbol\colon\measurements\times\statespace\to\reals\colon(\measurement{A},\fket)\mapsto\valuation(\fket)\) that assigns a \emph{value} \(\valuation(\gket)\) to each~\(\game{\psi}\) in such a way that
\begin{equation}\label{eq:valuation:definition}
\valuation(\gket)=\valuation[\alpha\identity](\gket)
\ifandonlyif\game{\psi}\equiv\game[\alpha]{\psi},
\end{equation}
where we define the statement~`\(\game{\psi}\equiv\game[\alpha]{\psi}\)' to mean that You're \emph{indifferent} between the game~\(\game{\psi}\) and the game~\(\game[\alpha]{\psi}\) with unconditional pay-off~\(\alpha\).
That there should be such a real number for all such~\(\game{\psi}\), is an assumption that is very close to our decision-theoretic background assumption \cref{backass:dm:utility:function}, where we assume that there's some reward function\footnote{Recall that, initially, we use the notation~\(\utility{A}\) for any such reward function, and we reserve the notation~\(\theutility{A}\) for the unique reward function that turns out to satisfy our reward function postulates.}~\(\utility{A}\colon\statespace\to\reals\) such that \(\utility{A}(\gket)\) is the reward, expressed in utiles, for~\(\act{A}\) when the quantum system under consideration system is in state~\(\gket\), intended as a one-number summary of the possible pay-offs that the measurement~\(\measurement{A}\) may yield on a system in state~\(\gket\).

Deutsch's further assumptions deal with the nature of Your `value of a game', and he argues that they're strong enough to fix its form.
The \emph{third assumption} is that You adhere to the \emph{principle of substitutability} in composite games, which are games that involve subgames.
It means that if any of the subgames is replaced by some game of equal value, the value of the composite game remains the same to You.

The \emph{fourth and final assumption} that Deutsch makes, is what he calls the \emph{zero-sum rule}, and it concerns Your attitude towards games where You can take one of two possible roles --- say, \(R\) and \(S\) --- with the following property: whenever You were to receive a pay-off~\(x\) in role~\(R\), You would receive~\(-x\) in role~\(S\).
The rule then states that if \(\valuation[R]\) and \(\valuation[S]\) are Your respective values for playing the game in roles~\(R\) and~\(S\), then it must be that \(\valuation[R]+\valuation[S]=0\).

Using these decision-theoretic assumptions, together with the non-probabilistic quantum mechanical postulates, Deutsch argues that Your valuation must then necessarily be given by
\begin{equation}\label{eq:born:a:la:deutsch}
\valuation(\fket)=\fbra\measurement{A}\fket
\text{ for all }\measurement{A}\in\measurements\text{ and }\fket\in\statespace.
\end{equation}
Recall that, according to Deutsch's argumentation, this \(\valuation(\gket)\) is the fair price You're willing to pay for playing the game~\(\game{\psi}\), or in other words, for performing the measurement~\(\measurement{A}\) when the quantum system's state is~\(\gket\).
It's in this specific sense that Deutsch's argument leads to a justification of, and interpretation for, Born's rule as put forth in Postulate~\cref{post:qm:born}, and in particular \cref{eq:qm:conditional:expectation}.
This is comparable to our justification for it using the postulates~\cref{post:dm:additivity,post:dm:continuity,post:dm:eigenket,post:dm:different:eigenspaces} and the background assumptions~\cref{backass:dm:preference:relation,backass:dm:utility:function}, and caveats about the interpretation of these results apply here that are similar to the ones we formulated for our derivation near the end of \cref{sec:born,sec:a:warning}.

On Deutsch's approach, the system state~\(\gket\) is assumed to be known, or given, and the value~\(\valuation(\gket)\) of~\(\game{\psi}\) is Your fair price for performing the measurement~\(\measurement{A}\) upon the system in that state~\(\gket\), in the sense that You deem performing the measurement and getting as a reward its outcome in units of a linear utility to be equivalent to getting the fixed amount~\(\valuation(\gket)\) of that linear utility.
His way of dealing with the problem is, like ours, very much de Finetti-like, in that he starts with a pre-determined linear utility and implicitly assumes that all games~\(\game{\psi}\) with~\(\measurement{A}\in\measurements\), for a fixed~\(\gket\in\statespace\) can be compared to one another through the intervention of the values that their valuation~\(\valuation(\gket)\) assumes on the real line;\footnote{Bruno de Finetti \cite{finetti1937,finetti19745} likewise assumes that all uncertain quantities~\(X\) can be compared to one another through the intervention of the values that their prevision, or fair price,~\(P(X)\) assume on the real line.} see also our discussion on decision-theoretic foundations in \cref{sec:decision:theoretic:background}.

It should be clear that Deutsch's decision problem, and the context he places it in, is in its conception a bit different from ours.
We start from the assumption that You are (or at least may be) uncertain about the state~\(\uket\) the system is in, and that You express Your beliefs about what that state is by expressing preferences between the uncertain rewards~\(\utility{A}(\uket)\) for different measurements~\(\measurement{A}\).
In addition, our postulates, as announced in \cref{sec:decision:theoretic:postulates}, are quite different from Deutsch's.
But, as we point out in \cref{sec:born}, in the special case that You know that the state~\(\uket\) of the system is actually \(\gket\), we're still led to the same conclusion as Deutsch: You must have a \emph{fair price} for the uncertain reward associated with performing the measurement~\(\measurement{A}\) in that --- now certain --- state, \emph{and} that fair price must be given by \(\theutility{A}(\gket)=\gbraketwithop{\measurement{A}}\).
In summary, to the extent that Deutsch's argumentation is correct,\footnote{See also the discussion in the next section about objections against Deutsch's approach, and the observation in footnote~\ref{fn:unitary:invariance} that our postulate \cref{post:dm:different:eigenspaces} meets those objections.} it leads to similar conclusions to ours \emph{when the quantum systems's state is known}: Born's rule is encapsulated in the reward function that can be derived from the decision-theoretic assumptions that are being made.
Both sets of assumptions have quite a different flavour, though, and Deutsch's approach is tailored to defending the Everettian interpretation of quantum mechanics, which isn't necessarily what we're after.

To be sure, while Deutsch's discussion in Ref.~\cite{deutsch1999:quantum:decisions} deals directly only with the case that You know the system to be in some pure state~\(\gket\), he does consider that You might be uncertain about the quantum state, by mentioning the possibility that the system might be in a mixed state.
Even if he leaves the discussion of mixed states implicit, he does hint, at the end of Section~3 there, at ways to deal with them:
\begin{quote}
Generalizing these results to cases where [the quantum system] is not in a pure state is trivial if [the system] is part of a larger system that is in a pure state, for then every measurement on [the system] is also a measurement on the larger system. Further generalisation to exotic situations in which the universe as a whole may be in a mixed state [\dots] is left as an exercise for the readers.
\end{quote}
It appears from the first sentence of this quote that Deutsch wants to treat epistemic uncertainty about the system's state by using the well-known observation that a mixed state can be treated as a pure state for a larger system, which would make his earlier arguments about how to derive Born's rule for pure states amenable to mixed states as well.
But, since he assumes a linear ordering of games/measurements for such pure states, it seems fair to infer that he'd be willing to order games/measurements linearly also when the system's state isn't perfectly known to You.\footnote{This of course assumes that You're able to identify the pure state that the larger system is in, which is quite a strong assumption to make. The strength of this assumption is a good indicator of the strength of the \emph{totality} assumption for Your preference ordering. In addition, it's not entirely clear to us why Deutsch would consider a situation where You have epistemic uncertainty about the `state of the universe as a whole' --- that the universe as a whole is in a mixed state --- to be `exotic'.}

Our willingness to work, in that case, with \emph{partial} preference orderings --- which aren't reducible to mixed states nor to pure states in higher-dimensional state spaces ---  when dealing with Your epistemic uncertainty about the system's state, can therefore only be seen as a fundamental departure from Deutsch's way of thinking.
It's interesting in this light that, as we've seen in \cref{sec:density:operators}, such partial preference orderings are mathematically (almost-)equivalent to convex closed sets of mixed states --- density operators --- and therefore to \emph{sets of pure states} for a higher-dimensional system.

\section{Wallace's decision-theoretic approach}\label{sec:wallace}
Soon after Deutsch published his argument, \citeauthor{barnum2000:quantum} \cite{barnum2000:quantum} pointed to a flaw in it: in order for Deutsch's proof to work, another critical \emph{assumption} is necessary, namely that Your value for the game~\(\game{\psi}\) should equal Your value for game~\(\game[B]{\phi}\), where \(\measurement{B}=\measurement{U}\measurement{A}\adjoint{\measurement{U}}\) and \(\fket=\measurement{U}\gket\), and where \(\measurement{U}\) is some unitary operator on~\(\hilbertspace\).
In words, a unitary transformation of the Hilbert space and the corresponding measurements mustn't have any influence on Your valuation of the game.\footnote{\label{fn:unitary:invariance}This invariance under unitary transformations is in particular implied by our postulate \cref{post:dm:different:eigenspaces}.}
Wallace \cite{wallace2003:defending:deutsch} has argued that the above condition can be derived in the context of the Everettian interpretation and has continued to improve on this work \cite{wallace2007:improving:deutsch,wallace2009:born:arxiv}.

However, in contradistinction with Deutsch, who we've already argued relies on de Finetti's approach in representing Your preferences between games involving measurements, Wallace uses Savage's approach; see \cref{sec:decision:theoretic:background} for a short summary of Savage's ideas.
Because Wallace's argument is much more detailed and intricate, we'll try to hint at its decision-theoretic essence by using what we believe to be relevant quotations from his more recent work \cite{wallace2009:born:arxiv} and providing them with our interpretations,\footnote{We're referring to the arXiv version, which is dated `October 22, 2018', of a chapter published in 2010 as \emph{How to prove the Born Rule} in Ref.~\cite{saunders2018:many:worlds}.} where we've preserved the wording, but adapted the mathematical notation to accord with ours.
We'll discuss the differences with our approach as we go along.
\begin{quote}
A quantum state is to be prepared in some superposition; the system is measured in some basis; a bet is made by the agent on the outcome of that measurement.
Our agent knows (we assume) that the Everett interpretation is correct; he is also assumed to know the universal quantum state, or at least the state of his branch.
[\dots]
His preferences can be represented by an ordering relation on these bets.
Since (in Everettian quantum mechanics, at any rate) preparations, measurements and payments made to agents are all physical processes, there is a certain simplification available: any preparation-followed-by-measurement-followed-by-payments can be represented by a single unitary transformation.
So our agent’s rational preference is actually representable by an ordering on unitary transformations.
\end{quote}
So, we see that You are to express preferences, in a quantum state~\(\gket\), between acts, which are the unitary transformations~\(\operator{U}\) that are available to You in that state.
\begin{quote}
We now need to represent the agent’s preferences between acts.
Since those preferences may well depend on the state, we write it as follows: if the agent prefers (at~\(\gket\)) act~\(\operator{U}\) to act~\(\operator{U}'\), we write
\[
\operator{U}\succ^{\gket}\operator{U}'.
\]
To be meaningful, of course, this requires that \(\operator{U}\) and \(\operator{U}'\) are both available at \(\gket\)'s macrostate.
So \(\succ^{\gket}\) is to be a two-place relation on the set of acts available at that macrostate.
\end{quote}
So, like Deutsch, and unlike us, Wallace assumes the quantum system to be in a known (macro)state.
Unlike Deutsch, and like us, he starts with a strict preference ordering that You have on a set of acts and where each act has an uncertain reward.
But as we'll see, unlike both Deutsch and us, he doesn't assume that the rewards can be expressed in some predetermined linear utility scale; like Savage, he `constructs' the utilities from the properties of the preference ordering and the richness of the act (and reward) space.

Indeed, Wallace then imposes a number of (so-called \emph{richness}) axioms, or assumptions, that guarantee that the set of acts (and rewards) is rich enough, and a number of (so-called \emph{rationality}) axioms on Your preferences on such acts, the most important of which for our present discussion is the following:
\begin{quote}
{\bfseries Ordering:} The relation~\(\succeq^{\gket}\) is a total ordering for each~\(\psi\) on the set of acts available at~\(\psi\), for each~\(\psi\) (that is: it's transitive, irreflexive and asymmetric, and if we define~\(\operator{U}\sim^{\gket}\operator{V}\) as holding whenever \(\operator{U}\succ^{\gket}\operator{V}\) and \(\operator{V}\succ^{\gket}\operator{U}\) fail to hold, then \(\sim^{\gket}\) is an equivalence relation).
\end{quote}
First off, we believe the symbol~`\(\succeq^{\gket}\)' to be a typo for `\(\succ^{\gket}\)', as the former symbol is nowhere defined in Wallace's text and the `weak' aspect that the extra horizontal bar might suggest is contradicted by the irreflexivity requirement.
So, this axiom, fully in line with Savage's approach, requires that Your preference ordering~\(\succ^{\gket}\) should be a strict partial ordering\footnote{A strict partial ordering is an irreflexive and transitive binary relation; the asymmetry is then implied.} (as is also assumed on our approach) that has the additional \emph{totality} property: the associated relation~\(\sim^{\gket}\), which generally speaking represents both indifference and incomparability, must actually be an equivalence relation and can therefore only represent indifference; incomparability in a given state is excluded on Wallace's approach (as it is on ours as well).

Based on his richness and rationality axioms and following the path blazed by Savage, Wallace is able to provide an ingenious argument to show that there's a utility function on the rewards, such that the strict weak ordering~\(\succ^{\gket}\) on the acts~\(\operator{U}\) can be represented by the strict total ordering of their expected utilities, where the underlying probabilities are provided by Born's rule.

\section{What are the advantages to our approach?}\label{sec:our:approach:is different}
It's clear that the decision problem that Wallace considers, is quite similar to (if more involved than) Deutsch's, in that acts are compared for a quantum system in a given, perfectly known, quantum state~\(\gket\).
Contrary to our approach, there's no uncertainty about the quantum system's state.

In our more general decision problem --- whose solution coincides with Deutsch's and Wallace's solutions in the special case that the system state is indeed known --- we allow for strict partial (preference) orderings \emph{that needn't be total}, so we allow for incomparability between acts: \(\measurement{A}\notbetterthan\measurement{B}\) and \(\measurement{B}\notbetterthan\measurement{A}\) needn't imply that You are indifferent between the measurements~\(\measurement{A}\) and~\(\measurement{B}\): You may also hold them to be incomparable, and then You won't be able to compare them on a linear scale.

Indeed, much of what we do in this paper amounts to showing that this total ordering requirement isn't necessary \emph{in our more general decision problem}: we can still derive interesting results, and in particular also recover Born's rule as a special case, without it, using our decision-theoretic assumptions.
At the same time, we've shown in the previous sections that in doing so, we uncover an interesting mathematical toolbox for representing, and making conservative inferences about, uncertainty about quantum systems.
We start out from the assumption that You are (or may be) uncertain about the state~\(\uket\) of the quantum system under consideration and that performing measurements~\(\measurement{A}\) on the system, where the rewards are expressed in units of some predetermined linear utility, will therefore yield uncertain rewards, which You can order via a strict ordering that is allowed to be merely \emph{partial when You don't know the system's state}.
This freedom has at least two advantages, besides its making the resulting theory more general: (i) it's more realistic because specifying only a partial order is a lot easier for You to do, and all the more so if You only have limited --- finite --- time and resources at Your disposal; and (ii) it uncovers the constructive conservative inference mechanism that underlies doing inferences about the state of a quantum system, that tends to get hidden by a focus on total orderings.\footnote{The analogy with propositional logic is telling: (coherent) partial orderings correspond to deductively closed sets and the total orderings to maximal deductively closed sets, or complete theories; focussing on complete theories alone hides the deductive character of propositional logic and it surfaces only when other, non-maximal, deductively closed sets are considered.}

Born's rule is therefore still recovered, even if we allow for partial orderings to represent Your preferences in our more general decision problem where the system state is unknown to You.

\section{Formulation and proof of the main theorem}\label{sec:proofs}
Let's now prove the main result in this paper, which is, essentially, that the postulates~\cref{post:dm:eigenket,post:dm:different:eigenspaces,post:dm:additivity,post:dm:continuity} guarantee that the shape of the reward functions~\(\utility{A}\) is completely fixed, as in \cref{eq:born:a:la:us}.

The exact formulation in \cref{thm:utility} below takes some care and preparation, however.
We'll need to be able to switch freely between various finite-dimensional Hilbert spaces, which is why we'll need to consider the set~\(\allhilbertspaces\) of all of them.

We'll use the notation~\(\hilbertspace,\hilbertspace',\hilbertspace''\) for elements of~\(\allhilbertspaces\).
Every such Hilbert space comes with its own inner product and its own real linear space of Hermitian operators, and we'll always use the same respective notations~\(\dotbraket\) and~\(\measurements\) for these, as it will always be clear from the context which Hilbert space they're associated with.

In each such finite-dimensional Hilbert space~\(\hilbertspace\in\allhilbertspaces\), there are of course infinitely many ways to associate a reward function~\(\utility{A}\in\realmaps\) with each measurement~\(\measurement{A}\in\measurements\), and each of them corresponds to a so-called \emph{reward assignation}~\(\utilitymap\colon\measurements\to\realmaps\colon\measurement{A}\mapsto\utility{A}\) \emph{for}~\(\hilbertspace\).
Recall that we denote by~\(\realmaps\) the set of all real-valued maps on the state space~\(\statespace\).
We'll denote by~\(\utilitymaps\) the set of all such possible reward assignations~\(\utilitymap\) for the Hilbert space~\(\hilbertspace\).

Each of the many ways of doing this for all possible finite-dimensional Hilbert spaces is captured in a specific map~\(\assignation\) on the set~\(\allhilbertspaces\) that selects a reward assignation~\(W(\hilbertspace)\in\utilitymaps\) for each Hilbert space~\(\hilbertspace\in\allhilbertspaces\).
We'll call any such map~\(\assignation\colon\allhilbertspaces\to\utilitymaps\) a \emph{reward assignation system}.
We'll use the notation~\(\utilitymap\) for the reward assignation~\(W(\hilbertspace)\), as it will always be clear from the context what Hilbert space~\(\hilbertspace\) we're working in.

The task before us, now, is to show that the postulates~\cref{post:dm:eigenket,post:dm:different:eigenspaces,post:dm:additivity,post:dm:continuity} allow for only one specific reward assignation system, namely the one specified by~\cref{eq:born:a:la:us}.
We'll say that a reward assignation system~\(\assignation\) \emph{obeys} a specific postulate if the statement in the postulate is true for the reward assignations~\(\utilitymap=\assignation(\hilbertspace)\) of all the possible instantiations~\(\hilbertspace\) of all the various Hilbert spaces mentioned in the postulate.

\begin{theorem}\label{thm:utility}
There's a unique reward assignation system~\(\theassignation\) that obeys Postulates~\cref{post:dm:eigenket,post:dm:different:eigenspaces,post:dm:additivity,post:dm:continuity}.
For all Hilbert spaces~\(\hilbertspace\in\allhilbertspaces\) and all corresponding~\(\measurement{A}\in\measurements\), its corresponding reward function~\(\theutility{A}\colon\statespace\to\reals\) is given by
\begin{equation}\label{eq:born:a:la:us:in:the:theorem}
\theutility{A}(\gket)=\gbra\measurement{A}\gket
\text{ for all }\gket\in\statespace.
\end{equation}
\end{theorem}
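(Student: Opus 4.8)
The plan is to prove existence by a direct verification and uniqueness by a chain of reductions that pin down the reward of a measurement on an arbitrary state in terms of its rewards on eigenstates. For existence, one checks that the reward assignation system $\theassignation$ given by~\cref{eq:born:a:la:us:in:the:theorem} obeys all four postulates: \cref{post:dm:eigenket} holds by a one-line computation; \cref{post:dm:different:eigenspaces} holds because \cref{lem:quadratic:form:in:basis} evaluates both of the rewards in its statement to $\sum_k\abs{\con_k}^2\eigval_k$; \cref{post:dm:additivity} is the bilinearity of the inner product; and \cref{post:dm:continuity} follows from the continuity of the inner product together with the boundedness of~$\measurement{A}$. Since $\theassignation$ is a system, this settles every finite-dimensional Hilbert space at once.

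For uniqueness, let $\assignation$ be any reward assignation system that obeys \cref{post:dm:eigenket,post:dm:different:eigenspaces,post:dm:additivity,post:dm:continuity}. Fix a Hilbert space~$\hilbertspace$, a measurement~$\measurement{A}$ and a state~$\gket$, and use \cref{cor:basis:and:eigenvalues} to write $\measurement{A}=\sum_{k=1}^m\eigval_k\projection[{\eigspace[\eigval_k]}]$ with $m$ distinct real eigenvalues, together with $\gket=\sum_{k=1}^m\con_k\eigket[k]$, where $\eigket[k]\in\eigspace[\eigval_k]$ is a unit eigenstate and $\sum_k\abs{\con_k}^2=1$. If $m=1$ then $\measurement{A}=\eigval_1\identity$ and \cref{post:dm:eigenket} already gives the claim, so assume $m\ge2$. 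Applying the compressed reformulation of \cref{post:dm:different:eigenspaces} shows that $\utility{A}(\gket)$ depends only on the pairs $(\con_k,\eigval_k)$ and is invariant under permuting them; applying it once more, with the phases of the $\con_k$ absorbed into the choice of the $\eigket[k]$, shows that the value is unchanged if each $\con_k$ is replaced by $\abs{\con_k}$. It therefore suffices to show that $\utility{B}(\gket')=\sum_k\abs{\con_k}^2\eigval_k$ whenever $\measurement{B}=\sum_{k=1}^m\eigval_k\alteigket[k]\alteigbra[k]$ is diagonal with distinct eigenvalues on an $m$-dimensional space and $\gket'=\sum_k\abs{\con_k}\alteigket[k]$; by \cref{lem:quadratic:form:in:basis} this equals $\gbra\measurement{A}\gket$.

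The core is the equal-weight case. Write $v(\eigval_1,\dots,\eigval_m)$ for the reward of the diagonal measurement with those eigenvalues, evaluated at the uniform state $\gket[u]\coloneqq\frac1{\sqrt m}\sum_k\alteigket[k]$. From \cref{post:dm:additivity}, $\utility{}$ is additive for summands with distinct eigenvalues, so $\utilitypure{\zero}=0$, $\utilitypure{-\measurement{B}}=-\utility{B}$, and $v$ is additive whenever both of its arguments have pairwise distinct entries; from \cref{post:dm:different:eigenspaces}, $v$ is symmetric; and from \cref{post:dm:eigenket}, $v(t,\dots,t)=\utilitypure{t\identity}(\gket[u])=t$, because $\gket[u]$ is an eigenstate of $t\identity$. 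Writing $\eigval_k=(\eigval_k-\bar{\eigval})+\bar{\eigval}$ with $\bar{\eigval}\coloneqq\frac1m\sum_k\eigval_k$, and inserting a generic auxiliary vector so that every intermediate tuple has distinct entries, these three properties reduce the desired identity $v(\eigval_1,\dots,\eigval_m)=\bar{\eigval}$ to the statement that $v$ vanishes on every tuple with distinct entries summing to zero. The restriction of $v$ to that hyperplane extends to an additive, hence $\rationals$-linear, $S_m$-invariant map $\Phi$; averaging over $S_m$ (legitimate for a $\rationals$-linear map, and sending every hyperplane vector to $0$) gives $\Phi\equiv0$. Hence $v\equiv\bar{\eigval}$, with no appeal to continuity.

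For amplitudes with $\abs{\con_k}^2=n_k/N\in\rationals$, I would then expand: in an $N$-dimensional space with an orthonormal basis partitioned into groups $G_k$ of size $n_k$, take the eigenspaces of a measurement to be the spans of the $G_k$, all carrying eigenvalue~$\eigval_k$, and the eigenstates $\eigket[k]$ to be the uniform superpositions within the groups; then $\sum_k\abs{\con_k}\eigket[k]$ is exactly the uniform state over all $N$ basis vectors, so \cref{post:dm:different:eigenspaces} identifies $\utility{B}(\gket')$ with the reward of this expanded measurement on that uniform state. The expanded measurement is diagonal in the chosen basis; writing it as a sum of two diagonal measurements with $N$ pairwise distinct eigenvalues and using \cref{post:dm:additivity} together with the equal-weight result yields $\frac1N\sum_k n_k\eigval_k=\sum_k\abs{\con_k}^2\eigval_k$. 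Since the states with rational $\abs{\con_k}^2$ are dense in the state space, \cref{post:dm:continuity} extends the identity to all $\gket'$, which completes the proof. The main obstacle I anticipate is bookkeeping rather than depth: every invocation of \cref{post:dm:different:eigenspaces,post:dm:additivity} must be checked against its distinctness- and equal-eigenspace hypotheses (which forces the genericity choices above), and the two purely algebraic steps — well-definedness of the additive extension $\Phi$ and the $S_m$-averaging — must be made explicit, since they take the place of the regularity assumption one would ordinarily invoke to solve Cauchy's functional equation; \cref{post:dm:continuity} is used only at the very end.
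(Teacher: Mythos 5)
Your proposal is correct, and its overall architecture coincides with the paper's: verify the four postulates for \(\theassignation\) directly; for uniqueness, absorb the phases of the amplitudes into the eigenstates, settle the equal-amplitude case by a symmetrization argument, pass to rational squared amplitudes by splitting each eigenstate into equally weighted orthogonal pieces in a larger space, and finish with density and \cref{post:dm:continuity}. You deviate in two places worth recording. First, for the equal-amplitude case the paper works at the operator level: it sums the \(m\) cyclic permutations of the diagonal operator, observes that this sum has the uniform superposition as an eigenstate, and applies \cref{post:dm:eigenket}; additivity for commuting operators (\cref{prop:equi_additivity_post}, proved by the same ``add large distinct offsets'' perturbation you call genericity) then divides by \(m\). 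You instead symmetrize the abstract function \(v\) over all of \(S_m\) and solve a Cauchy-type functional equation on the zero-sum hyperplane; this is the same idea, but it costs you the well-definedness check for the additive extension \(\Phi\), which the operator-level cyclic sum sidesteps — you correctly flag this as the delicate point, and it does go through once full additivity of \(v\) (not just on distinct-entry tuples) is first established via the perturbation trick. Second, for rational squared amplitudes the paper tensors with an auxiliary space and uses \cref{lem:uniform:expansion:in:basis} plus a Gram--Schmidt completion, whereas you expand directly into an \(N\)-dimensional space whose basis is partitioned into groups, exploiting the fact that \cref{post:dm:different:eigenspaces} already allows multi-dimensional eigenspaces. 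Your route is the more economical one here: it shows the tensor-product machinery is a convenience rather than a necessity, and by always working with the distinct-eigenvalue decomposition of \cref{cor:basis:and:eigenvalues} it also lets you dispense with the auxiliary \cref{prop:different:eigenvectors}. The remaining caveats you list (vanishing amplitudes, checking the distinctness hypotheses at each invocation of \cref{post:dm:different:eigenspaces,post:dm:additivity}) are indeed only bookkeeping.
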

\noindent We want to stress here that \cref{eq:born:a:la:us:in:the:theorem} determines the reward assignation system~\(\theassignation\) fully: for any Hilbert space~\(\hilbertspace\), it specifies the values~\(\theutility{A}\) of the reward assignation~\(\theutilitymap=\theassignation(\hilbertspace)\) in all Hermitian operators~\(\measurement{A}\in\measurements\) on~\(\hilbertspace\).

In our proof for this result further on, we'll rely heavily on the additivity property of the uncertain rewards for commuting measurements.
Before we can formulate this property in \cref{prop:equi_additivity_post} and prove it as a special consequence of Postulate~\cref{post:dm:additivity}, it's useful to recall the following essential property of commuting Hermitian operators.

\begin{proposition}[\protect{\cite[Thm.~2.2]{nielsen2010:quantum}}]\label{prop:commuting:operators}
Two Hermitian operators~\(\measurement{A},\measurement{B}\) on a Hilbert space~\(\hilbertspace\) commute, meaning that \(\commutator{\measurement{A}}{\measurement{B}}\coloneqq\measurement{A}\measurement{B}-\measurement{B}\measurement{A}=\hat{0}\), if and only if there's some orthonormal basis~\(\set{\eigket[1],\dots,\eigket[n]}\) of~\(\hilbertspace\) such that each of its elements~\(\eigket[\ell]\) is an eigenstate for both operators.
\end{proposition}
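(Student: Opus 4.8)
The plan is to prove both implications, leaning throughout on the spectral decomposition of a single Hermitian operator that we already have available as \cref{cor:basis:and:eigenvalues}. The substantive content lies entirely in the ``only if'' direction; the ``if'' direction is a one-line verification.

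First I would dispatch the easy (``if'') direction. Assuming an orthonormal basis $\set{\eigket[1],\dots,\eigket[n]}$ of simultaneous eigenstates exists, with $\measurement{A}\eigket[\ell]=\eigval_\ell\eigket[\ell]$ and $\measurement{B}\eigket[\ell]=\alteigval_\ell\eigket[\ell]$ for each~$\ell$, I would simply compute $\measurement{A}\measurement{B}\eigket[\ell]=\eigval_\ell\alteigval_\ell\eigket[\ell]=\measurement{B}\measurement{A}\eigket[\ell]$. Thus $\measurement{A}\measurement{B}$ and $\measurement{B}\measurement{A}$ agree on a basis of~$\hilbertspace$ and are therefore equal, i.e.\ $\commutator{\measurement{A}}{\measurement{B}}=\zero$.

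For the substantive (``only if'') direction, suppose $\commutator{\measurement{A}}{\measurement{B}}=\zero$. By \cref{cor:basis:and:eigenvalues} I would write $\measurement{A}=\sum_{k=1}^m\eigval_k\projection[{\eigspace[\eigval_k]}]$ with distinct real eigenvalues $\eigval_1,\dots,\eigval_m$ and mutually orthogonal eigenspaces $\eigspace[\eigval_1],\dots,\eigspace[\eigval_m]$ that span~$\hilbertspace$. The key observation is that each eigenspace $\eigspace[\eigval_k]$ is invariant under~$\measurement{B}$: if $\fket\in\eigspace[\eigval_k]$, so that $\measurement{A}\fket=\eigval_k\fket$, then commutativity yields $\measurement{A}(\measurement{B}\fket)=\measurement{B}(\measurement{A}\fket)=\eigval_k(\measurement{B}\fket)$, whence $\measurement{B}\fket\in\eigspace[\eigval_k]$ too. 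I would then restrict $\measurement{B}$ to each $\eigspace[\eigval_k]$: since that subspace is $\measurement{B}$-invariant and $\measurement{B}$ is Hermitian on~$\hilbertspace$, the restriction $\measurement{B}|_{\eigspace[\eigval_k]}$ is a Hermitian operator on the finite-dimensional Hilbert space $\eigspace[\eigval_k]$ with the inherited inner product, so \cref{cor:basis:and:eigenvalues} applies once more and furnishes an orthonormal basis of $\eigspace[\eigval_k]$ made of eigenstates of~$\measurement{B}$. As every vector of $\eigspace[\eigval_k]$ is automatically an eigenstate of~$\measurement{A}$ for the eigenvalue~$\eigval_k$, each such basis vector is a simultaneous eigenstate of $\measurement{A}$ and~$\measurement{B}$. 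Collecting these bases over $k=1,\dots,m$ then produces an orthonormal basis of~$\hilbertspace$ — orthonormality across distinct~$k$ is automatic because the eigenspaces are mutually orthogonal, and spanning is automatic because they span~$\hilbertspace$ — whose elements are all simultaneous eigenstates, exactly as required.

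The only point demanding some care, and the closest thing to an obstacle, is verifying that $\measurement{B}|_{\eigspace[\eigval_k]}$ is genuinely a Hermitian operator on the subspace so that the spectral result can be reapplied. This needs both the $\measurement{B}$-invariance established above — so that $\measurement{B}$ really maps $\eigspace[\eigval_k]$ into itself and the restriction is well defined — and the identity $\inprod{\measurement{B}\fket}{\gket}=\inprod{\fket}{\measurement{B}\gket}$ for $\fket,\gket\in\eigspace[\eigval_k]$, which is inherited verbatim from the Hermiticity of~$\measurement{B}$ on all of~$\hilbertspace$. Everything else is routine bookkeeping.
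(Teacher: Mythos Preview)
Your proof is correct and is the standard textbook argument. The paper itself does not prove this proposition: it is stated with a citation to \cite[Thm.~2.2]{nielsen2010:quantum} and used without further justification, so there is no paper proof to compare against.
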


\begin{proposition}\label{prop:equi_additivity_post}
Consider any reward assignation system~\(\assignation\) that obeys Postulate~\cref{post:dm:additivity}, and any two commuting Hermitian operators~\(\measurement{A},\measurement{B}\) on a Hilbert space~\(\hilbertspace\).
Then \(\wval_{\measurement{A}+\measurement{B}}(\fket)=\utility{A}(\fket)+\utility{B}(\fket)\) for all~\(\fket\in\statespace\).
\end{proposition}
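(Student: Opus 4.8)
The plan is to reduce the general commuting case to a handful of applications of Postulate~\cref{post:dm:additivity}, which in its stated form only covers operators sharing a \emph{common} eigenspace decomposition and, crucially, requires the eigenvalues of each summand to be \emph{distinct}. The only obstruction is that the common eigenvalues of $\measurement{A}$ and $\measurement{B}$ may be degenerate, so \cref{post:dm:additivity} cannot be invoked directly; the remedy is to peel off a generic non-degenerate perturbation $\measurement{T}$ and apply \cref{post:dm:additivity} to the (now non-degenerate) pieces.

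First I would fix a common eigenbasis. Since $\measurement{A}$ and $\measurement{B}$ commute, \cref{prop:commuting:operators} supplies an orthonormal basis $\set{\eigket[1],\dots,\eigket[n]}$ of $\hilbertspace$ with $\measurement{A}\eigket[\ell]=a_\ell\eigket[\ell]$ and $\measurement{B}\eigket[\ell]=b_\ell\eigket[\ell]$ for some real $a_\ell,b_\ell$. Writing $\eigspace[\ell]\coloneqq\linspanof{\set{\eigket[\ell]}}$, the subspaces $\eigspace[1],\dots,\eigspace[n]$ are mutually orthogonal, one-dimensional, and span~$\hilbertspace$, and every operator that appears below is of the form $\sum_{\ell=1}^n c_\ell\projection[{\eigspace[\ell]}]$ for suitable real coefficients $c_\ell$, so the only hypothesis of \cref{post:dm:additivity} left to check at each step is distinctness of those coefficients.

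Next I would pin down the perturbation. I would choose reals $t_1,\dots,t_n$ — picking them one index at a time, each choice avoiding only a finite set of forbidden values — such that the numbers $t_\ell$ are mutually distinct, the numbers $a_\ell-t_\ell$ are mutually distinct, and the numbers $b_\ell+t_\ell$ are mutually distinct. Setting $\measurement{T}\coloneqq\sum_{\ell=1}^n t_\ell\projection[{\eigspace[\ell]}]$, $\measurement{A}'\coloneqq\measurement{A}-\measurement{T}$ and $\measurement{B}'\coloneqq\measurement{B}+\measurement{T}$, each of $\measurement{T}$, $-\measurement{T}$, $\measurement{A}'$, $\measurement{B}'$ then has $n$ \emph{distinct} real eigenvalues with the same eigenspaces $\eigspace[\ell]$, and moreover $\measurement{A}'+\measurement{B}'=\measurement{A}+\measurement{B}$.

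Finally, fixing $\fket\in\statespace$, I would invoke \cref{post:dm:additivity} five times and assemble: (a) to $\zero=\zero+\zero$ taken as $0\projection[\hilbertspace]+0\projection[\hilbertspace]$, for which the distinct-eigenvalue clause is vacuous, giving $\wval_{\zero}(\fket)=0$; (b) to $\measurement{T}+(-\measurement{T})=\zero$, giving $\wval_{-\measurement{T}}(\fket)=-\wval_{\measurement{T}}(\fket)$; (c) to $\measurement{A}=\measurement{A}'+\measurement{T}$, giving $\wval_{\measurement{A}'}(\fket)=\utility{A}(\fket)-\wval_{\measurement{T}}(\fket)$; (d) to $\measurement{B}=\measurement{B}'+(-\measurement{T})$, giving $\wval_{\measurement{B}'}(\fket)=\utility{B}(\fket)-\wval_{-\measurement{T}}(\fket)$; and (e) to $\measurement{A}+\measurement{B}=\measurement{A}'+\measurement{B}'$, giving $\wval_{\measurement{A}+\measurement{B}}(\fket)=\wval_{\measurement{A}'}(\fket)+\wval_{\measurement{B}'}(\fket)$. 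Substituting (c) and (d) into (e) and then using (b) collapses the $\measurement{T}$-terms and yields $\wval_{\measurement{A}+\measurement{B}}(\fket)=\utility{A}(\fket)+\utility{B}(\fket)$, which is the claim. The one delicate point — and the thing to be careful about — is the bookkeeping: each of the five invocations must genuinely satisfy the hypotheses of \cref{post:dm:additivity} (same eigenspaces, distinct eigenvalues within each summand), and in particular the facts $\wval_{\zero}=0$ and $\wval_{-\measurement{T}}=-\wval_{\measurement{T}}$ must be obtained from \cref{post:dm:additivity} alone, without leaning on Postulate~\cref{post:dm:eigenket}, since the proposition only assumes obedience to \cref{post:dm:additivity}.
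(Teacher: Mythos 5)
Your argument is correct: each of the five invocations of \cref{post:dm:additivity} genuinely satisfies its hypotheses (the two summands are always written over the same mutually orthogonal one-dimensional subspaces spanning \(\hilbertspace\) --- or over the single subspace \(\hilbertspace\) itself in step (a), where the distinctness clause is vacuous --- and each summand has distinct coefficients), the generic shifts \(t_1,\dots,t_n\) exist because each successive choice only has to avoid finitely many forbidden values, and the final telescoping is sound. The paper's proof rests on the same core idea --- break the possible degeneracy of the shared eigenvalues by a generic diagonal perturbation and then apply \cref{post:dm:additivity} repeatedly --- but arranges the bookkeeping differently: it writes \(\measurement{A}=\measurement{A}'+\measurement{A}''\) and \(\measurement{B}=\measurement{B}'+\measurement{B}''\) with \(\measurement{A}''=\measurement{B}''=-\sum_{k}k\alpha\,\eigket[k]\eigbra[k]\) for a single sufficiently large \(\alpha>2(\supnorm{\measurement{A}}+\supnorm{\measurement{B}})\), which makes all six relevant eigenvalue lists (those of \(\measurement{A}'\), \(\measurement{A}''\), \(\measurement{B}'\), \(\measurement{B}''\), \(\measurement{A}'+\measurement{B}'\) and \(\measurement{A}''+\measurement{B}''\)) distinct simultaneously, so that five direct applications of the postulate telescope to the result without ever needing the auxiliary identities \(\wval_{\zero}=0\) and \(\wval_{-\measurement{T}}=-\wval_{\measurement{T}}\). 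Your antisymmetric perturbation \(\pm\measurement{T}\) buys you \(\measurement{A}'+\measurement{B}'=\measurement{A}+\measurement{B}\) exactly, at the price of having to establish those two identities --- which you rightly insist on extracting from \cref{post:dm:additivity} alone, and do so validly. Both routes work, and at the same cost of five applications of the postulate.
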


\begin{proof}
\cref{prop:commuting:operators} guarantees that there's some orthogonal basis of states~\(\set{\eigket[1],\dots,\eigket[n]}\) that are eigenstates for both \(\measurement{A}\) and \(\measurement{B}\).
Let \(\lambda_1,\dots,\lambda_n\) and \(\mu_1,\dots,\mu_n\) be the (not necessarily distinct) corresponding eigenvalues of~\(\measurement{A}\) and \(\measurement{B}\) respectively, then \(\measurement{A}=\sum_{k=1}^n\lambda_k\eigket[k]\eigbra[k]\) and \(\measurement{B}=\sum_{k=1}^n\mu_k\eigket[k]\eigbra[k]\), by \cref{prop:basis:and:eigenvalues}.
Let \(\lambda_{\mathrm{max}}\coloneqq\max_{k=1}^n\abs{\lambda_k}=\supnorm{\measurement{A}}\), \(\mu_{\mathrm{max}}\coloneqq\max_{k=1}^n\abs{\mu_k}=\supnorm{\measurement{B}}\) and \(\con\coloneqq2(\lambda_{\mathrm{max}}+\mu_{\mathrm{max}})+\epsilon\) with \(\epsilon>0\).
Also let \(\lambda_k'\coloneqq\lambda_k+k\con\), \(\lambda_k''\coloneqq-k\con\), \(\mu_k'\coloneqq\mu_k+k\con\) and \(\mu_k''\coloneqq-k\con\) for all~\(k\in\set{1,\dots,n}\).
Then, on the one hand,
\begin{equation}\label{eq:equi_additivity_post:sums}
\lambda_k=\lambda_k'+\lambda_k''\text{ and }\mu_k=\mu_k'+\mu_k''
\text{ for all~\(k\in\set{1,\dots,n}\)},
\end{equation}
while, on the other hand,
\begin{equation}\label{eq:equi_additivity_post:distinct}
\left.
\begin{aligned}
&\lambda_k'+\mu_k'\neq\lambda_\ell'+\mu_\ell'\text{ and }\lambda_k''+\mu_k''\neq\lambda_\ell''+\mu_\ell''\\
&\lambda_k'\neq\lambda_\ell'\text{ and }\lambda_k''\neq\lambda_\ell''\\
&\mu_k'\neq\mu_\ell'\text{ and }\mu_k''\neq\mu_\ell''\\
\end{aligned}
\right\}
\text{ for all~\(k,\ell\in\set{1,\dots,n}\) such that \(k\neq\ell\).}
\end{equation}
Let \(\measurement{A}'\coloneqq\sum_{k=1}^n\lambda_k'\eigket[k]\eigbra[k]\) and \(\measurement{A}''\coloneqq\sum_{k=1}^n\lambda_k''\eigket[k]\eigbra[k]\), and similarly \(\measurement{B}'\coloneqq\sum_{k=1}^n\mu_k'\eigket[k]\eigbra[k]\) and \(\measurement{B}''\coloneqq\sum_{k=1}^n\mu_k''\eigket[k]\eigbra[k]\), then it follows from \cref{eq:equi_additivity_post:sums} that \(\measurement{A}=\measurement{A}'+\measurement{A}''\) and \(\measurement{B}=\measurement{B}'+\measurement{B}''\).
Moreover, \cref{eq:equi_additivity_post:distinct} guarantees that the measurements~\(\measurement{A}'\) and~\(\measurement{A}''\),  as well as the measurements~\(\measurement{B}'\) and~\(\measurement{B}''\), have distinct eigenvalues.
Postulate~\cref{post:dm:additivity} then ensures that
\begin{equation}\label{eq:equi_additivity_post:ab1}
\wval_{\measurement{A}}
=\wval_{\measurement{A}'}+\wval_{\measurement{A}''}
\text{ and }
\wval_{\measurement{B}}
=\wval_{\measurement{B}'}+\wval_{\measurement{B}''}.
\end{equation}
But \cref{eq:equi_additivity_post:distinct} also guarantees that the measurements~\(\measurement{A}'\) and~\(\measurement{B}'\), as well as the measurements~\(\measurement{A}''\) and~\(\measurement{B}''\), have distinct eigenvalues, so we can equally well apply~\cref{post:dm:additivity} to find that
\begin{equation}\label{eq:equi_additivity_post:ab2}
\wval_{\measurement{A}'+\measurement{B}'}
=\wval_{\measurement{A}'}+\wval_{\measurement{B}'}
\text{ and }
\wval_{\measurement{A}''+\measurement{B}''}
=\wval_{\measurement{A}''}+\wval_{\measurement{B}''}.
\end{equation}
Finally, \cref{eq:equi_additivity_post:distinct} guarantees that the measurements~\(\measurement{A}'+\measurement{B}'\) and \(\measurement{A}''+\measurement{B}''\) have distinct eigenvalues, and since \(\measurement{A}+\measurement{B}=\group{\measurement{A}'+\measurement{B}'}+\group{\measurement{A}''+\measurement{B}''}\), we can again use Postulate~\cref{post:dm:additivity} to get
\begin{equation}\label{eq:equi_additivity_post:ab3}
\wval_{\measurement{A}+\measurement{B}}
=\wval_{\measurement{A}'+\measurement{B}'}+\wval_{\measurement{A}''+\measurement{B}''}.
\end{equation}
But then, summarising,
\begin{align*}
\wval_{\measurement{A}+\measurement{B}}
\overset{\textrm{\eqref{eq:equi_additivity_post:ab3}}}{=}\wval_{\measurement{A}'+\measurement{B}'}+\wval_{\measurement{A}''+\measurement{B}''}
&\overset{\textrm{\eqref{eq:equi_additivity_post:ab2}}}{=}(\wval_{\measurement{A}'}+\wval_{\measurement{B}'})+(\wval_{\measurement{A}''}+\wval_{\measurement{B}''})\\
&=(\wval_{\measurement{A}'}+\wval_{\measurement{A}''})+(\wval_{\measurement{B}}+\wval_{\measurement{B}''})
\overset{\textrm{\eqref{eq:equi_additivity_post:ab1}}}{=}\wval_{\measurement{A}}+\wval_{\measurement{B}}.
\qedhere
\end{align*}
\end{proof}

We'll also rely on the following consequence of \cref{post:dm:different:eigenspaces}, in a form that doesn't require the eigenvalues considered to be different.

\begin{proposition}\label{prop:different:eigenvectors}
Consider any reward assignation system~\(\assignation\) that obeys Postulate~\cref{post:dm:different:eigenspaces}, any Hermitian operator~\(\measurement{A}\) on a Hilbert space~\(\hilbertspace_1\), and any orthogonal basis of its eigenstates~\(\set{\eigket[1],\dots,\eigket[n]}\) with corresponding (not necessarily distinct) real eigenvalues~\(\eigval_1\), \dots, \(\eigval_n\), implying that \(\measurement{A}=\sum_{k=1}^n\eigval_k\eigket[k]\eigbra[k]\).
Also consider a Hilbert space~\(\hilbertspace_2\) and an Hermitian operator~\(\measurement{B}\) on~\(\hilbertspace_2\) with the same eigenvalues \(\eigval_1\), \dots, \(\eigval_n\) and with orthogonal eigenstates~\(\alteigket[1]\), \dots, \(\alteigket[n]\) corresponding to these respective eigenvalues.
Choose any~\(\con_k\in\complexes\) such that \(\sum_{k=1}^n\abs{\con_k}^2=1\), and consider the states~\(\fket[\measurement{A}]\coloneqq\sum_{k=1}^n\con_k\eigket[k]\in\hilbertspace_1\) and \(\fket[{\measurement{B}}]\coloneqq\sum_{k=1}^n\con_{k}\alteigket[k]\in\hilbertspace_2\).
Then \(\utility{A}(\fket[\measurement{A}])=\utility{B}(\fket[{\measurement{B}}])\).\end{proposition}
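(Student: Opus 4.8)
The plan is to deduce \cref{prop:different:eigenvectors} directly from Postulate~\cref{post:dm:different:eigenspaces}, with no perturbation argument: the only gap between the two statements is that \cref{post:dm:different:eigenspaces} is phrased in terms of eigenspaces attached to \emph{distinct} eigenvalues, whereas here the listed eigenvalues~\(\eigval_1,\dots,\eigval_n\) may repeat. So I would first merge the repeated eigenvalues. Let \(\nu_1,\dots,\nu_r\) be the distinct values occurring among \(\eigval_1,\dots,\eigval_n\), and for each~\(j\in\set{1,\dots,r}\) put \(K_j\coloneqq\cset{k\in\set{1,\dots,n}}{\eigval_k=\nu_j}\), \(\eigspace[j]\coloneqq\linspanof{\cset{\eigket[k]}{k\in K_j}}\) and \(\alteigspace[j]\coloneqq\linspanof{\cset{\alteigket[k]}{k\in K_j}}\) (reading \(\set{\alteigket[1],\dots,\alteigket[n]}\) as an orthonormal basis of \(\hilbertspace_2\), as the statement implicitly intends). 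Since the \(\eigket[k]\) form an orthogonal basis of \(\hilbertspace_1\) and the \(\alteigket[k]\) one of \(\hilbertspace_2\), the \(\eigspace[j]\) are mutually orthogonal and span \(\hilbertspace_1\), the \(\alteigspace[j]\) are mutually orthogonal and span \(\hilbertspace_2\), each \(\eigspace[j]\) (resp.\ \(\alteigspace[j]\)) is by a dimension count exactly the \(\nu_j\)-eigenspace of \(\measurement{A}\) (resp.\ \(\measurement{B}\)), and collecting terms in \(\measurement{A}=\sum_{k=1}^n\eigval_k\eigket[k]\eigbra[k]\) (and the analogue for \(\measurement{B}\)) gives, via \cref{cor:basis:and:eigenvalues}, \(\measurement{A}=\sum_{j=1}^r\nu_j\projection[{\eigspace[j]}]\) and \(\measurement{B}=\sum_{j=1}^r\nu_j\projection[{\alteigspace[j]}]\), now with \emph{distinct} eigenvalues.

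Next I would rewrite the superpositions \(\fket[{\measurement{A}}]=\sum_{k=1}^n\con_k\eigket[k]\) and \(\fket[{\measurement{B}}]=\sum_{k=1}^n\con_k\alteigket[k]\) as superpositions over the \(\eigspace[j]\) resp.\ \(\alteigspace[j]\), with a \emph{common} weight vector. For each~\(j\) set \(\beta_j\coloneqq\sqrt{\sum_{k\in K_j}\abs{\con_k}^2}\geq0\); orthonormality of the \(\eigket[k]\) (resp.\ the \(\alteigket[k]\)) gives \(\sum_{j=1}^r\beta_j^2=\sum_{k=1}^n\abs{\con_k}^2=1\), makes \(\sum_{k\in K_j}\con_k\eigket[k]\) a vector of norm \(\beta_j\) in \(\eigspace[j]\), and makes \(\sum_{k\in K_j}\con_k\alteigket[k]\) a vector of the \emph{same} norm \(\beta_j\) in \(\alteigspace[j]\) --- the two norms coincide precisely because \(\measurement{A}\) and \(\measurement{B}\) carry the common eigenvalue labelling \(k\mapsto\eigval_k\). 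When \(\beta_j>0\) I set \(\ket{u_j}\coloneqq\beta_j^{-1}\sum_{k\in K_j}\con_k\eigket[k]\in\eigspace[j]\) and \(\ket{v_j}\coloneqq\beta_j^{-1}\sum_{k\in K_j}\con_k\alteigket[k]\in\alteigspace[j]\), both normalised; when \(\beta_j=0\) I pick arbitrary normalised \(\ket{u_j}\in\eigspace[j]\) and \(\ket{v_j}\in\alteigspace[j]\) (these subspaces are non-trivial, as \(K_j\neq\emptyset\)). By construction \(\sum_{j=1}^r\beta_j\ket{u_j}=\sum_{k=1}^n\con_k\eigket[k]=\fket[{\measurement{A}}]\) and \(\sum_{j=1}^r\beta_j\ket{v_j}=\sum_{k=1}^n\con_k\alteigket[k]=\fket[{\measurement{B}}]\), the indices \(j\) with \(\beta_j=0\) contributing \(0\) on both sides since then all \(\con_k\) with \(k\in K_j\) vanish.

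At this point the data \(\measurement{A}=\sum_{j=1}^r\nu_j\projection[{\eigspace[j]}]\), \(\measurement{B}=\sum_{j=1}^r\nu_j\projection[{\alteigspace[j]}]\), the distinct eigenvalues \(\nu_1,\dots,\nu_r\), the normalised kets \(\ket{u_j}\in\eigspace[j]\) and \(\ket{v_j}\in\alteigspace[j]\), and the weights \(\beta_j\) with \(\sum_{j=1}^r\abs{\beta_j}^2=1\) satisfy exactly the hypotheses of Postulate~\cref{post:dm:different:eigenspaces}, which \(\assignation\) obeys. Invoking it yields \(\utility{A}\group[\big]{\sum_{j=1}^r\beta_j\ket{u_j}}=\utility{B}\group[\big]{\sum_{j=1}^r\beta_j\ket{v_j}}\), i.e.\ \(\utility{A}(\fket[{\measurement{A}}])=\utility{B}(\fket[{\measurement{B}}])\), as wanted.

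The argument is essentially bookkeeping, and the step I would watch most carefully is the middle paragraph: one must check that the per-eigenspace weights extracted from \(\fket[{\measurement{A}}]\) and from \(\fket[{\measurement{B}}]\) are the \emph{same} numbers \(\beta_j\) --- so that a single weight vector feeds into \cref{post:dm:different:eigenspaces} --- which hinges on the two orthonormal families sharing the eigenvalue labelling, and one must not forget to supply normalised representatives \(\ket{u_j}\), \(\ket{v_j}\) of the eigenspaces even for the degenerate indices \(j\) with \(\beta_j=0\), where they contribute nothing but are still demanded by the postulate. No continuity or additivity assumption is needed; \cref{post:dm:different:eigenspaces} alone suffices.
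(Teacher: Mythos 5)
Your argument is, in its mechanics, the paper's own proof: group the indices by distinct eigenvalue into sets \(K_j\), extract a common non-negative weight \(\beta_j\) per group, pick normalised representatives of each eigenspace (arbitrarily when \(\beta_j=0\)), and feed the resulting data into Postulate~\cref{post:dm:different:eigenspaces}. The one place you diverge --- and it matters --- is the decision to read \(\set{\alteigket[1],\dots,\alteigket[n]}\) as an orthonormal basis of \(\hilbertspace_2\). The proposition does not assume this, and the paper's own use of it needs the general case: in the proof of \cref{prop:intermediate:step:rational} the proposition is invoked with \(\hilbertspace_2\instantiateas\hilbertspace\tensortimes\hilbertspace'\) and \(\measurement{B}\instantiateas\measurement{A}\tensortimes\identity\), where the \(n\) orthogonal eigenstates \(\eigket[k]\tensortimes\fket\) span only an \(n\)-dimensional slice of the \((n\cdot\dim\hilbertspace')\)-dimensional space. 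In that situation your ``dimension count'' identifying \(\linspanof{\cset{\alteigket[k]}{k\in K_j}}\) with the \(\nu_j\)-eigenspace of \(\measurement{B}\) fails, so \(\measurement{B}\neq\sum_{j=1}^r\nu_j\projection[{\alteigspace[j]}]\) for your choice of \(\alteigspace[j]\), and the hypotheses of \cref{post:dm:different:eigenspaces} are not met as you have set them up. The repair is exactly what the paper does: take \(\alteigspace[j]\) to be the \emph{full} \(\nu_j\)-eigenspace of \(\measurement{B}\) (these are mutually orthogonal and span \(\hilbertspace_2\), since the spectrum of \(\measurement{B}\) is \(\set{\nu_1,\dots,\nu_r}\)), observe that your normalised ket \(\ket{v_j}\) still lies inside it, and apply the postulate unchanged. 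With that single correction your proof coincides with the paper's; everything else, including the careful handling of the degenerate indices with \(\beta_j=0\) and the observation that the two superpositions share the same weight vector, is exactly right.
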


\begin{proof}
We may assume without loss of generality that there are~\(r\in\naturals\) distinct eigenvalues~\(\alteigval_k\in\reals\) with corresponding multiplicities~\(m_k\in\naturals\), so we have for the eigenvalues of~\(\measurement{A}\) that \(\eigval_\ell=\alteigval_k\) for all~\(\ell\in\set{M_{k-1}+1,\dots,M_{k}}\) and \(k\in\set{1,\dots,r}\), \(M_0\coloneqq0\) and \(M_k\coloneqq\sum_{\ell=1}^km_\ell\).
The eigenspace for~\(\measurement{A}\) that corresponds to the eigenvalue~\(\mu_k\) is then given by \(\eigspace[k]\coloneqq\linspanof{\cset{\eigket[\ell]}{\ell\in\set{M_{k-1}+1,\dots,M_k}}}\).
Correspondingly, we see that the linear space~\(\linspanof{\cset{\alteigket[\ell]}{\ell\in\set{M_{k-1}+1,\dots,M_k}}}\) is a subspace of the eigenspace~\(\alteigspace[k]\) for~\(\measurement{B}\) corresponding to the eigenvalue~\(\alteigval_k\).
The \(r\) orthogonal eigenspaces~\(\eigspace[k]\) span~\(\hilbertspace_1\), and the \(r\) orthogonal eigenspaces~\(\alteigspace[k]\) span~\(\hilbertspace_2\), by \cref{cor:basis:and:eigenvalues}.

For any~\(k\in\set{1,\dots,r}\), choose the complex number~\(\altcon_k\) in such a way that \(\abs{\altcon_k}^2=\sum_{\ell=M_{k-1}+1}^{M_k}\abs{\con_\ell}^2\).
If \(\beta_k\neq0\), then let
\[
\ket{c_k}
\coloneqq\frac1{\beta_k}\sum_{\ell=M_{k-1}+1}^{M_k}\con_\ell\eigket[\ell]
\text{ and }
\ket{d_k}
\coloneqq\frac1{\beta_k}\sum_{\ell=M_{k-1}+1}^{M_k}\con_\ell\alteigket[\ell].
\]
If \(\beta_k=0\), then let \(\ket{c_k}\) be any state in~\(\eigspace[k]\) and \(\ket{d_k}\) any state in~\(\alteigspace[k]\).
It follows that \(\ket{c_k}\) is always a state in~\(\eigspace[k]\) and that \(\ket{d_k}\) is always a state in~\(\alteigspace[k]\), and also that \(\sum_{k=1}^r\abs{\beta_k}^2=\sum_{\ell=1}^n\abs{\con_\ell}^2=1\), \(\fket[\measurement{A}]=\sum_{\ell=1}^n\con_{\ell}\eigket[\ell]=\sum_{k=1}^r\beta_k\ket{c_k}\) and \(\fket[{\measurement{B}}]=\sum_{\ell=1}^n\con_{\ell}\alteigket[\ell]=\sum_{k=1}^r\beta_k\ket{d_k}\).
Now apply \cref{post:dm:different:eigenspaces} [with \(\eigval_k\instantiateas\alteigval_k\), \(\con_k\instantiateas\altcon_k\), \(\eigket[k]\instantiateas\ket{c_k}\) and \(\alteigket[k]\instantiateas\ket{d_k}\)] to conclude that \(\utility{A}(\fket[\measurement{A}])=\utility{B}(\fket[{\measurement{B}}])\).
\end{proof}

To make the proof of \cref{thm:utility} easier to digest, we'll split the argument into several successive propositions that are stepping stones on our way to the main result, and which allow us to uncover the form of the reward function~\(\utility{A}(\gket)\) in increasingly more general types of arguments~\(\gket\in\statespace\).

In a first step, we manage to fix the value of the reward function in equal-amplitude superpositions of eigenstates of a measurement operator.

\begin{proposition}\label{prop:intermediate:step:uniform}
Consider any reward assignation system~\(\assignation\) that obeys Postulates~\cref{post:dm:eigenket,post:dm:different:eigenspaces,post:dm:additivity}; any Hilbert space~\(\hilbertspace\in\allhilbertspaces\), with corresponding reward assignation~\(\utilitymap\coloneqq\assignation(\hilbertspace)\); and any measurement~\(\measurement{A}\in\measurements\), with any orthogonal basis of eigenstates~\(\set{\eigket[1],\dots,\eigket[n]}\).
Then
\begin{equation*}
\utility{A}(\gket[m])=\gbra[m]\measurement{A}\gket[m]
\text{ for all~\(\gket[m]\coloneqq\frac{1}{\sqrt{m}}\sum_{k=1}^m\eigket[k]\), with~\(1\leq m\leq n\)}.
\end{equation*}
\end{proposition}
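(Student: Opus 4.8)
The plan is to prove the slightly restated identity \(\utility{A}(\gket[m])=\frac{1}{m}\sum_{k=1}^{m}\eigval_k\), where \(\eigval_1,\dots,\eigval_n\) are the (not necessarily distinct) eigenvalues of \(\measurement{A}\) associated with the chosen eigenstates \(\eigket[1],\dots,\eigket[n]\), so that \(\measurement{A}=\sum_{k=1}^{n}\eigval_k\eigket[k]\eigbra[k]\) by \cref{prop:basis:and:eigenvalues}. This suffices, since \cref{lem:quadratic:form:in:basis} applied to \(\gket[m]=\sum_{k=1}^{m}\frac{1}{\sqrt{m}}\eigket[k]\) gives exactly \(\gbra[m]\measurement{A}\gket[m]=\sum_{k=1}^{m}\frac{1}{m}\eigval_k=\frac{1}{m}\sum_{k=1}^{m}\eigval_k\). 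Only Postulates~\cref{post:dm:eigenket,post:dm:different:eigenspaces,post:dm:additivity} will be used — the continuity postulate \cref{post:dm:continuity} plays no role here — and in the forms already distilled as \cref{prop:different:eigenvectors} (relabelling invariance, valid for repeated eigenvalues) and \cref{prop:equi_additivity_post} (additivity for commuting measurements).

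First I would introduce, for each \(j\in\set{0,1,\dots,m-1}\), the Hermitian operator \(\indmeasurement{A}{j}\coloneqq\sum_{k=1}^{m}\eigval_{\sigma_j(k)}\eigket[k]\eigbra[k]+\sum_{k=m+1}^{n}\eigval_k\eigket[k]\eigbra[k]\), where \(\sigma_j\) is the cyclic permutation of \(\set{1,\dots,m}\) given by \(\sigma_j(k)\coloneqq((k+j-1)\bmod m)+1\); thus \(\indmeasurement{A}{0}=\measurement{A}\), each \(\indmeasurement{A}{j}\) is diagonal in the common orthonormal basis \(\set{\eigket[1],\dots,\eigket[n]}\) (so any two of them commute), and \(\indmeasurement{A}{j}\) differs from \(\measurement{A}\) only in that its eigenvalues on the first \(m\) eigenstates have been permuted. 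Enumerating the eigenstates of \(\indmeasurement{A}{j}\) so that the one carrying eigenvalue \(\eigval_k\) is \(\eigket[\sigma_j^{-1}(k)]\) for \(k\leq m\) and \(\eigket[k]\) for \(k>m\), and taking coefficients \(\con_k\coloneqq\frac{1}{\sqrt{m}}\) for \(k\leq m\) and \(\con_k\coloneqq0\) otherwise, \cref{prop:different:eigenvectors} applies with \(\hilbertspace_1=\hilbertspace_2=\hilbertspace\) and with \(\measurement{A}\) and \(\indmeasurement{A}{j}\) in the roles of \(\measurement{A}\) and \(\measurement{B}\): in both copies of \(\hilbertspace\) the resulting superposition is \(\frac{1}{\sqrt{m}}\sum_{k=1}^{m}\eigket[k]=\gket[m]\) — here it is crucial that \(\gket[m]\) is an \emph{equal}-amplitude superposition, so permuting the first \(m\) eigenstates leaves it unchanged — and therefore \(\indutility{A}{j}(\gket[m])=\utility{A}(\gket[m])\) for every \(j\).

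Next I would sum these up: put \(\measurement{C}\coloneqq\sum_{j=0}^{m-1}\indmeasurement{A}{j}\). Since a finite sum of operators diagonal in a fixed basis is again diagonal in that basis, each partial sum commutes with the next summand, so iterating \cref{prop:equi_additivity_post} yields \(\utility{C}(\gket[m])=\sum_{j=0}^{m-1}\indutility{A}{j}(\gket[m])=m\,\utility{A}(\gket[m])\). On the other hand, for each fixed \(k\leq m\), as \(j\) ranges over \(\set{0,\dots,m-1}\) the index \(\sigma_j(k)\) ranges over all of \(\set{1,\dots,m}\), so \(\measurement{C}=S\sum_{k=1}^{m}\eigket[k]\eigbra[k]+m\sum_{k=m+1}^{n}\eigval_k\eigket[k]\eigbra[k]\) with \(S\coloneqq\sum_{l=1}^{m}\eigval_l\); in particular \(\measurement{C}\) acts as the scalar \(S\) on \(\linspanof{\set{\eigket[1],\dots,\eigket[m]}}\), a subspace that contains the normalised vector \(\gket[m]\). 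Hence \(\gket[m]\) lies in the eigenspace of \(\measurement{C}\) for the eigenvalue \(S\), and \cref{post:dm:eigenket} forces \(\utility{C}(\gket[m])=S\). Comparing the two expressions for \(\utility{C}(\gket[m])\) gives \(m\,\utility{A}(\gket[m])=\sum_{l=1}^{m}\eigval_l\), which is the claim.

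The computations are all elementary; the one place needing care — and the closest thing to an obstacle — is matching the hypotheses of \cref{prop:different:eigenvectors} honestly when eigenvalues repeat: one must \emph{choose} (rather than merely invoke) an enumeration of the eigenstates of \(\indmeasurement{A}{j}\) pairing \(\eigval_k\) with \(\eigket[\sigma_j^{-1}(k)]\), and then verify that this choice, together with the chosen coefficients, reproduces \(\gket[m]\) on both sides. One should also check that the ``tail'' terms built from \(\eigket[m+1],\dots,\eigket[n]\) cause no trouble: they do not, because \(\gket[m]\) is orthogonal to each of them and sits entirely inside the \(S\)-eigenspace of \(\measurement{C}\). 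It is precisely to tolerate repeated eigenvalues that the ``not necessarily distinct'' versions \cref{prop:equi_additivity_post,prop:different:eigenvectors} are used instead of \cref{post:dm:additivity,post:dm:different:eigenspaces} themselves.
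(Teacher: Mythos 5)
Your proposal is correct and follows essentially the same route as the paper's own proof: the same cyclic-permutation family of operators (yours permutes the eigenvalues over the fixed eigenstates, the paper's permutes the eigenstates carrying each eigenvalue — the same operators up to reindexing), the same invocation of \cref{prop:different:eigenvectors} to equate \(\indutility{A}{j}(\gket[m])\) with \(\utility{A}(\gket[m])\), the same iterated use of \cref{prop:equi_additivity_post} on the sum \(\measurement{C}\), and the same observation that \(\gket[m]\) lies in the \(\bigl(\sum_{\ell=1}^m\eigval_\ell\bigr)\)-eigenspace of \(\measurement{C}\) so that \cref{post:dm:eigenket} pins down \(\utility{C}(\gket[m])\). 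Your care in matching the hypotheses of \cref{prop:different:eigenvectors} under repeated eigenvalues is exactly the point the paper's proof also relies on.
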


\begin{proof}
Before we start, infer from~\cref{lem:normalisation} that all~\(\gket[m]\in\statespace\).
Also, denote by \(\eigval_k\) the eigenvalue of~\(\measurement{A}\) corresponding to the eigenstate~\(\eigket[k]\), for all~\(k\in\set{1,\dots,n}\).

We begin with the simplest case that~\(m=1\), so with \(\gket[1]=\eigket[1]\).
\cref{post:dm:eigenket} then guarantees that, on the one hand, \(\utility{A}(\gket[1])=\utility{A}(\eigket[1])=\eigval_1\).
On the other hand, we infer from \cref{lem:quadratic:form:in:basis} that \(\eigval_1=\eigbra[1]\measurement{A}\eigket[1]=\gbra[1]\measurement{A}\gket[1]\).
Hence, indeed, \(\utility{A}(\gket[1])=\gbra[1]\measurement{A}\gket[1]\).

Next, we turn to the more involved case that~\(1<m\leq n\).
We denote by \(\permutation_m\colon\set{1,\dots,n}\to\set{1,\dots,n}\) the cyclic permutation of the first~\(m\) indices, defined by
\[
\permutation_m(k)
\coloneqq
\begin{cases}
k+1&\text{if }k+1\leq m\\
1&\text{if }k=m\\
k&\text{if }m<k\leq n
\end{cases}
\text{ for all }k\in\set{1,2,\dots,n}.
\]
For any~\(\ell\in\set{0,1,\dots,m}\), we let \(\permutation_m^\ell\) denote the result of applying \(\permutation_m\) \(\ell\) times, so \(\permutation_m^\ell\) satisfies
\[
\permutation_m^\ell(k)
=\begin{cases}
k+\ell&\text{if }k+\ell\leq m\\
k+\ell-m&\text{if }m<k+\ell\leq m+\ell\\
k&\text{if }m<k\leq n
\end{cases}
\text{ for all }k\in\set{1,2,\dots,n}.
\]
Also observe that
\begin{equation}\label{eq:cyclic:permutations:inverse}
\group{\permutation_m^\ell}^{-1}
=\permutation_m^{m-\ell}
\text{ for all }\ell\in\set{0,1,2,\dots,m}.
\end{equation}
Using these permutations, we can now construct \(m\) Hermitian operators~\(\indmeasurement{A}{\ell}\in\measurements\) as follows [see \cref{prop:basis:and:eigenvalues}]:
\[
\indmeasurement{A}{\ell}\coloneqq\sum_{k=1}^n\eigval_{k}\eigket[\permutation_m^\ell(k)]\eigbra[\permutation_m^\ell(k)],
\text{ for~\(\ell\in\set{0,\dots,m-1}\)},
\]
is the Hermitian operator with eigenstates~\(\eigket[\permutation_m^\ell(k)]\) corresponding to the respective real eigenvalues~\(\eigval_{k}\) for~\(k\in\set{1,\dots,n}\).
To use \cref{prop:different:eigenvectors}, with \(\hilbertspace_1\instantiateas\hilbertspace\), \(\hilbertspace_2\instantiateas\hilbertspace\), \(\measurement{A}\instantiateas\measurement{A}\), \(\measurement{B}\instantiateas\indmeasurement{A}{\ell}\) and \(\fket[{\measurement{A}}]\instantiateas\gket[m]\) for all \(\ell\in\set{0,1,\dots,m-1}\), we define
\[
\fket[{\measurement{B}}]
=\fket[{\indmeasurement{A}{\ell}}]
\coloneqq\frac{1}{\sqrt{m}}\sum_{k=1}^m\eigket[\permutation_m^\ell(k)]
=\frac{1}{\sqrt{m}}\sum_{k=1}^m\eigket[k]
=\gket[m]
\text{ for all~\(\ell\in\set{0,1,\dots,m-1}\)},
\]
and then \cref{prop:different:eigenvectors} implies that
\[
\utility{A}(\gket[m])
=\utility{A}(\fket[{\measurement{A}}])
=\utility{B}(\fket[{\measurement{B}}])
=\indutility{A}{\ell}(\gket[m])
\text{ for all~\(\ell\in\set{0,1,\dots,m-1}\)},
\]
and therefore also that
\begin{equation}\label{eq:intermediate:step:uniform:one}
\sum_{\ell=0}^{m-1}\indutility{A}{\ell}(\gket[m])=m\utility{A}(\gket[m]).
\end{equation}
Furthermore, because we infer from \Cref{prop:basis:and:eigenvalues} that the elements of the orthonormal basis~\(\set{\eigket[1],\dots,\eigket[n]}\) are the eigenstates for each of the Hermitian operators~\(\indmeasurement{A}{0}\), \(\indmeasurement{A}{1}\), \dots, \(\indmeasurement{A}{m-1}\), we infer from \cref{prop:commuting:operators} that these operators commute: \(\commutator{\indmeasurement{A}{r}}{\indmeasurement{A}{s}}=0\) for all~\(r,s\in\set{0,1,\dots,m-1}\).
But then clearly also \(\commutator{\indmeasurement{A}{r}}{\sum_{k=0}^{r-1}\indmeasurement{A}{k}}=0\) for all~\(r\in\set{1,\dots,m-1}\).
Successively applying \cref{prop:equi_additivity_post} \(m-1\) times, and letting \(\measurement{C}\coloneqq\sum_{\ell=0}^{m-1}\indmeasurement{A}{\ell}\), we then find for the left-hand side of \cref{eq:intermediate:step:uniform:one} that
\begin{equation}\label{eq:intermediate:step:uniform:two}
\sum_{\ell=0}^{m-1}\indutility{A}{\ell}(\gket[m])
=\utility{C}(\gket[m]).
\end{equation}
Moreover, we recall from the definition of~\(\indmeasurement{A}{0}\), \dots, \(\indmeasurement{A}{m-1}\) that
\begin{align*}
\measurement{C}
&=\smashoperator{\sum_{\ell=0}^{m-1}}\indmeasurement{A}{\ell}
=\smashoperator[l]{\sum_{\ell=0}^{m-1}}\smashoperator[r]{\sum_{k=1}^n}\eigval_{k}\eigket[\permutation_m^\ell(k)]\eigbra[\permutation_m^\ell(k)]
\overset{\textrm{\eqref{eq:cyclic:permutations:inverse}}}{=}\smashoperator[l]{\sum_{\ell=0}^{m-1}}\smashoperator[r]{\sum_{r=1}^n}\eigval_{\permutation_m^{m-\ell}(r)}\eigket[r]\eigbra[r]
=\smashoperator[l]{\sum_{s=1}^{m}}\smashoperator[r]{\sum_{r=1}^n}\eigval_{\permutation_m^{s}(r)}\eigket[r]\eigbra[r]\\
&=\smashoperator[l]{\sum_{r=1}^m}\smashoperator[r]{\sum_{s=1}^{m}}\eigval_{\permutation_m^s(r)}\eigket[r]\eigbra[r]
+\smashoperator[l]{\sum_{r=m+1}^n}\smashoperator[r]{\sum_{s=1}^{m}}\eigval_{\permutation_m^s(r)}\eigket[r]\eigbra[r]
=\smashoperator[l]{\sum_{r=1}^m}\group[\Big]{\smashoperator[r]{\sum_{\ell=1}^{m}}\eigval_\ell}\eigket[r]\eigbra[r]
+\smashoperator{\sum_{r=m+1}^n}(m\eigval_r)\eigket[r]\eigbra[r].
\end{align*}
This, together with \cref{prop:basis:and:eigenvalues}, tells us that the Hermitian operator~\(\measurement{C}\) has the same eigenvalue~\(\sum_{\ell=1}^m\eigval_\ell\) corresponding to each of the eigenstates~\(\eigket[k]\) for~\(k\in\set{1,2,\dots,m}\).
Therefore, the linear combination~\(\gket[m]=\frac{1}{\sqrt{m}}\sum_{k=1}^m\eigket[k]\) of these~\(m\) eigenstates will also be an eigenstate of~\(\measurement{C}\) with this same eigenvalue~\(\sum_{\ell=1}^m\eigval_\ell\).
\cref{post:dm:eigenket} then guarantees that \(\utility{C}(\gket[m])=\sum_{\ell=1}^m\eigval_\ell\), so ~\cref{eq:intermediate:step:uniform:two,eq:intermediate:step:uniform:one} then yield that \(\utility{A}(\gket[m])=\frac1m\sum_{\ell=1}^m\eigval_\ell\).
Since, on the other hand, \cref{lem:quadratic:form:in:basis} guarantees that \(\gbra[m]\measurement{A}\gket[m]=\sum_{\ell=1}^m\frac{1}{m}\eigval_\ell\), we're done.
\end{proof}

In a second step, we succeed in fixing the value of the reward function in real linear combinations \(\sum_{k=1}^n\con_k\eigket[k]\) of the eigenstates~\(\eigket[k]\) of a measurement operator~\(\measurement{A}\), provided that the squares \(\alpha_k^2\) of their real amplitudes~\(\alpha_k\) are rational.

\begin{proposition}\label{prop:intermediate:step:rational}
Consider any reward assignation system~\(\assignation\) that obeys Postulates~\cref{post:dm:eigenket,post:dm:different:eigenspaces,post:dm:additivity}, any~\(n\)-dimen\-sional Hilbert space~\(\hilbertspace\in\allhilbertspaces\) with corresponding reward assignation~\(\utilitymap\coloneqq\assignation(\hilbertspace)\) and any measurement~\(\measurement{A}\in\measurements\) with any orthogonal basis of eigenstates~\(\set{\eigket[1],\dots,\eigket[n]}\).
Then
\begin{multline*}
\utility{A}(\gket[m])=\gbra[m]\measurement{A}\gket[m]
\text{ for all~\(\gket[m]\coloneqq\sum_{k=1}^m\sqrt{q_k}\eigket[k]\),}\\
\text{ with~\(1\leq m\leq n\) and~\(q_1,\dots,q_m\in\posrationals\) such that \(\smash{\sum_{k=1}^m}q_k=1\)}.
\end{multline*}
\end{proposition}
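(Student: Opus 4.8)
The plan is to reduce the rational-amplitude case to the equal-amplitude case already settled in \cref{prop:intermediate:step:uniform}, by \enquote{unfolding} $\gket[m]$ inside a larger Hilbert space via the invariance postulate \cref{post:dm:different:eigenspaces}. Write the positive rationals as $q_k=p_k/N$ over a common denominator $N\in\naturals$, with $p_1,\dots,p_m\in\naturals$, so that $\sum_{k=1}^m p_k=N$. Let $\eigval_1,\dots,\eigval_n$ be the (possibly repeated) eigenvalues of $\measurement{A}$ attached to the orthogonal basis of eigenstates $\set{\eigket[1],\dots,\eigket[n]}$, and let $\nu_1,\dots,\nu_r$ denote the distinct values among them.

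First I would build a Hilbert space $\hilbertspace'\in\allhilbertspaces$ of dimension $N+(n-m)$, with an orthonormal basis consisting of kets $\ket{e_{k,\ell}}$ for $k\in\set{1,\dots,m}$ and $\ell\in\set{1,\dots,p_k}$, together with kets $\ket{f_k}$ for $k\in\set{m+1,\dots,n}$, and the Hermitian operator $\measurement{B}\coloneqq\sum_{k=1}^m\eigval_k\sum_{\ell=1}^{p_k}\ket{e_{k,\ell}}\bra{e_{k,\ell}}+\sum_{k=m+1}^n\eigval_k\ket{f_k}\bra{f_k}$ on it (cf.\ \cref{prop:basis:and:eigenvalues}). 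The point is that $\measurement{A}$ and $\measurement{B}$ have exactly the same $r$ eigenvalues $\nu_1,\dots,\nu_r$, with corresponding eigenspaces $\eigspace[\nu_j]\coloneqq\linspanof{\cset{\eigket[k]}{\eigval_k=\nu_j}}$ spanning $\hilbertspace$ and $\alteigspace[\nu_j]\coloneqq\linspanof{\cset{\ket{e_{k,\ell}}}{k\leq m,\,\eigval_k=\nu_j}\cup\cset{\ket{f_k}}{k>m,\,\eigval_k=\nu_j}}$ spanning $\hilbertspace'$ — these eigenspaces typically having different dimensions, which is precisely what \cref{post:dm:different:eigenspaces} tolerates.

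Next I would apply \cref{post:dm:different:eigenspaces} with $\hilbertspace_1\instantiateas\hilbertspace$, $\hilbertspace_2\instantiateas\hilbertspace'$ and eigenvalues $\nu_1,\dots,\nu_r$. For each $j$ set $Q_j\coloneqq\sum_{k\leq m,\,\eigval_k=\nu_j}q_k$; when $Q_j>0$ take $\con_j\coloneqq\sqrt{Q_j}$, $\ket{v_j}\coloneqq\frac{1}{\sqrt{Q_j}}\sum_{k\leq m,\,\eigval_k=\nu_j}\sqrt{q_k}\eigket[k]\in\eigspace[\nu_j]$ and $\ket{w_j}\coloneqq\frac{1}{\sqrt{NQ_j}}\sum_{k\leq m,\,\eigval_k=\nu_j}\sum_{\ell=1}^{p_k}\ket{e_{k,\ell}}\in\alteigspace[\nu_j]$; when $Q_j=0$ take $\con_j\coloneqq0$ with $\ket{v_j},\ket{w_j}$ arbitrary unit eigenstates of the respective operators. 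A short computation (using $\sum_{k\leq m,\,\eigval_k=\nu_j}p_k=NQ_j$) shows that all $\ket{v_j},\ket{w_j}$ have norm $1$, that $\sum_j\abs{\con_j}^2=\sum_{k=1}^m q_k=1$, that $\sum_j\con_j\ket{v_j}=\gket[m]$, and that $\ket{u}\coloneqq\sum_j\con_j\ket{w_j}=\frac{1}{\sqrt{N}}\sum_{k=1}^m\sum_{\ell=1}^{p_k}\ket{e_{k,\ell}}$ is an equal-amplitude superposition of $N$ of the orthonormal eigenstates of $\measurement{B}$. \cref{post:dm:different:eigenspaces} then yields $\utility{A}(\gket[m])=\utility{B}(\ket{u})$.

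Finally I would apply \cref{prop:intermediate:step:uniform} to $\measurement{B}$ on $\hilbertspace'$ (with $N$ in the role of its parameter $m$), using an orthogonal basis of eigenstates of $\measurement{B}$ in which the $N$ kets $\ket{e_{k,\ell}}$ are listed first — the ordering of that basis being irrelevant in the proposition — to get $\utility{B}(\ket{u})=\bra{u}\measurement{B}\ket{u}$; and since each $\ket{e_{k,\ell}}$ is an eigenstate of $\measurement{B}$ for $\eigval_k$, \cref{lem:quadratic:form:in:basis} evaluates this to $\sum_{k=1}^m\sum_{\ell=1}^{p_k}\frac{1}{N}\eigval_k=\sum_{k=1}^m q_k\eigval_k$. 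As \cref{lem:quadratic:form:in:basis} likewise gives $\gbra[m]\measurement{A}\gket[m]=\sum_{k=1}^m q_k\eigval_k$, the two sides agree and we are done. The essential idea is the unfolding construction of $(\hilbertspace',\measurement{B})$; I expect the only real obstacle to be the bookkeeping around degenerate eigenvalues — it is important to group eigenstates into the genuine eigenspaces $\eigspace[\nu_j]$, $\alteigspace[\nu_j]$ before invoking \cref{post:dm:different:eigenspaces}, since that postulate is stated for \emph{distinct} eigenvalues, and to check carefully that $\ket{u}$ really is the required uniform superposition of eigenstates of $\measurement{B}$.
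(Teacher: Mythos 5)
Your proof is correct, and it follows the same overall strategy as the paper --- unfold \(\gket[m]\) into an equal-amplitude superposition of eigenstates living in a larger Hilbert space, transfer the reward there with the invariance postulate, and then invoke \cref{prop:intermediate:step:uniform} --- but the construction of that larger space is genuinely different. The paper takes the tensor product \(\hilbertspace\tensortimes\hilbertspace'\) with an auxiliary \(N\)-dimensional space, sets \(\measurement{B}\coloneqq\measurement{A}\tensortimes\identity\), and uses \cref{lem:uniform:expansion:in:basis} to split one fixed auxiliary state \(\fket\) into \(p_k\) orthogonal equal-amplitude pieces for each \(k\), so that \(\gket[m]\tensortimes\fket\) becomes the required uniform superposition; the reward is then transferred via \cref{prop:different:eigenvectors}, the already-derived consequence of \cref{post:dm:different:eigenspaces} that tolerates repeated eigenvalues. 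You instead build the target space and operator by hand as a direct orthogonal sum of dimension \(N+(n-m)\), which lets you dispense with tensor products and with \cref{lem:uniform:expansion:in:basis} altogether; the price is that you invoke \cref{post:dm:different:eigenspaces} in its raw form and must therefore group the repeated eigenvalues into the genuine eigenspaces \(\eigspace[\nu_j]\), \(\alteigspace[\nu_j]\) yourself --- bookkeeping that the paper's route hides inside \cref{prop:different:eigenvectors}, and which you carry out correctly (the normalisations of \(\ket{v_j}\) and \(\ket{w_j}\), the identity \(\sum_{k\leq m,\,\eigval_k=\nu_j}p_k=NQ_j\), the treatment of the degenerate case \(Q_j=0\), and the final evaluation via \cref{lem:quadratic:form:in:basis} all check out). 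Your version is arguably the more elementary of the two; the paper's tensor-product version buys reuse of machinery (\cref{lem:uniform:expansion:in:basis} and \cref{prop:different:eigenvectors}) that is in any case needed elsewhere.
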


\begin{proof}
Before we start, infer from~\cref{lem:normalisation} that all~\(\gket[m]\in\statespace\).
Due to \cref{prop:basis:and:eigenvalues}, we know that \(\measurement{A}=\sum_{k=1}^n\eigval_n\eigket[k]\eigbra[k]\), with \(\eigval_k\) the eigenvalue corresponding to eigenstate~\(\eigket[k]\).

First, since \(q_1,\dots,q_m\) are positive rational numbers, there are natural numbers~\(r\) and \(p_1,\dots,p_m\) such that \(p_k=q_kr\) for all~\(k\in\set{1,\dots,m}\).
Since \(\sum_{k=1}^mq_k=1\), we also have that \(\sum_{k=1}^mp_k=r\), and then clearly \(\gket[m]=\frac{1}{\sqrt{r}}\sum_{k=1}^m\sqrt{p_k}\eigket[k]\).
We'll also consider an arbitrary \(r\)-dimensional Hilbert space, which we denote by~\(\hilbertspace'\).

Define the Hermitian operator~\(\measurement{B}\coloneqq\measurement{A}\tensortimes\identity=\sum_{k=1}^n\eigval_k\eigket[k]\eigbra[k]\tensortimes\identity\) on the tensor product space~\(\hilbertspace\tensortimes\hilbertspace'\).
For the remainder of the argument, we'll now consider an arbitrary but fixed~\(\fket\in\statespace'\).
Then \(\eigket[k]\tensortimes\fket\) is an eigenstate of~\(\measurement{B}=\measurement{A}\tensortimes\identity\) with corresponding eigenvalue~\(\eigval_k\), and all these \(n\) eigenstates~\(\eigket[k]\tensortimes\fket\) are mutually orthogonal.

We now want to make use of \cref{prop:different:eigenvectors}, with \(\hilbertspace_1=\hilbertspace\), \(\hilbertspace_2=\hilbertspace\tensortimes\hilbertspace'\), \(\measurement{A}\instantiateas\measurement{A}\), \(\measurement{B}\instantiateas\measurement{B}\), \(\fket[{\measurement{A}}]\instantiateas\gket[m]\).
We then also let \(\fket[{\measurement{B}}]\instantiateas\gket[m]\tensortimes\fket=\frac{1}{\sqrt{r}}\sum_{k=1}^m\sqrt{p_k}\eigket[k]\tensortimes\fket\).
Since each \(\eigket[k]\tensortimes\fket\) is an eigenstate of~\(\measurement{B}\) corresponding to the eigenvalue~\(\eigval_k\), it follows from \cref{prop:different:eigenvectors} that
\begin{equation}\label{eq:intermediate:step:rational:one}
\utility{A}(\gket[m])=\utility{B}(\gket[m]\tensortimes\fket).
\end{equation}
Now, use \cref{lem:uniform:expansion:in:basis} to construct from the chosen and fixed~\(\fket\in\statespace'\), for any~\(k\in\set{1,\dots,m}\), the \(p_k\)~mutually orthogonal states~\(\indfket{k}{1},\dots,\indfket{k}{p_k}\in\statespace'\) such that \(\fket=\frac{1}{\sqrt{p_k}}\sum_{\ell=1}^{p_k}\indfket{k}{\ell}\) [which is possible since \(0<p_k\leq r\)].
Due to the bi-linearity of the tensor product, we then find that
\begin{equation}\label{eq:intermediate:step:rational:two}
\gket[m]\tensortimes\fket
=\group[\Bigg]{\frac{1}{\sqrt{r}}\sum_{k=1}^m\sqrt{p_k}\eigket[k]}\tensortimes\group[\bigg]{\frac{1}{\sqrt{p_k}}\sum_{\ell=1}^{p_k}\ket{\phi^k_\ell}}
=\frac{1}{\sqrt{r}}\sum_{k=1}^m\sum_{\ell=1}^{p_k}\eigket[k]\tensortimes\ket{\phi^k_\ell}.
\end{equation}
This is now an equal-amplitude superposition of the~\(r=\sum_{k=1}^mp_k\) mutually orthogonal eigenstates~\(\eigket[k]\tensortimes\ket{\phi^k_\ell}\)  of~\(\measurement{B}\) with corresponding eigenvalue~\(\eigval_k\), where \(\ell\in\set{1,\dots,p_k}\) for~\(k\in\set{1,\dots,m}\).
For all \(k\in\set{1,2,\dots,n}\), we define the one-dimensional linear subspaces~\(\subspace_k\coloneqq\linspanof{\set{\eigket[k]}}\).
For any fixed~\(k\in\set{1,2,\dots,m}\), we can use the Gram\textendash Schmidt procedure to extend the orthonormal collection~\(\eigket[k]\tensortimes\ket{\phi^k_1},\eigket[k]\tensortimes\ket{\phi^k_2},\dots,\eigket[k]\tensortimes\ket{\phi^k_{p_k}}\) --- with~\(p_k\) elements --- to an orthonormal basis~\(\basis_k\) for~\(\subspace_k\tensortimes\hilbertspace'\) --- with~\(r\) elements.
Similarly, for any~\(k\in\set{m+1,\dots,n}\), we define an orthonormal basis~\(\basis_k\) for~\(\subspace_k\tensortimes\hilbertspace'\) --- with~\(r\) elements.
As the direct sum of these orthogonal linear subspaces~\(\subspace_k\tensortimes\hilbertspace'\) is the complete Hilbert space~\(\hilbertspace\tensortimes\hilbertspace'\), the union of the bases \(\bigcup^n_{k=1}\basis_k\) is a basis of eigenstates for the Hilbert space~\(\hilbertspace\tensortimes\hilbertspace'\) that contains all the orthogonal eigenstates~\(\eigket[k]\tensortimes\ket{\phi^k_\ell}\) in \cref{eq:intermediate:step:rational:two}.
Since \(\gket[m]\tensortimes\fket=\frac{1}{\sqrt{r}}\sum_{k=1}^m\sum_{\ell=1}^{p_k}\eigket[k]\tensortimes\ket{\phi^k_\ell}\), we're therefore in a position to apply \cref{prop:intermediate:step:uniform} and find that
\begin{align*}
\utility{B}\group{\gket[m]\tensortimes\fket}
&=\group{\gbra[m]\tensortimes\fbra}\measurement{B}\group{\gket[m]\tensortimes\fket}
=\group{\gbra[m]\tensortimes\fbra}\measurement{A}\tensortimes\identity\group{\gket[m]\tensortimes\fket}\\
&=\gbra[m]\measurement{A}\gket[m]\fbra\identity\fket
=\gbra[m]\measurement{A}\gket[m]\fbraket\\
&=\gbra[m]\measurement{A}\gket[m].
\end{align*}
This, combined with \cref{eq:intermediate:step:rational:one}, shows that \(\utility{A}(\gket[m])=\gbra[m]\measurement{A}\gket[m]\), so we're done.
\end{proof}

In the next and penultimate step, we move from amplitudes that are the square roots of rationals to all non-negative real amplitudes.

\begin{proposition}\label{prop:intermediate:step:real}
Consider any reward assignation system~\(\assignation\) that obeys Postulates~\cref{post:dm:eigenket,post:dm:different:eigenspaces,post:dm:additivity,post:dm:continuity}, any Hilbert space~\(\hilbertspace\in\allhilbertspaces\) with corresponding reward assignation~\(\utilitymap\coloneqq\assignation(\hilbertspace)\), and any measurement~\(\measurement{A}\in\measurements\) with any orthogonal basis of eigenstates~\(\set{\eigket[1],\dots,\eigket[n]}\).
Then
\begin{multline*}
\utility{A}(\gket)
=\gbra\measurement{A}\gket
\text{ for all~\(\gket\coloneqq\sum_{k=1}^n\con_k\eigket[k]\in\statespace\)},\\
\text{ with~\(\con_1,\dots,\con_n\in\nonnegreals\) and~\(\smash{\sum_{k=1}^n}\con_k^2=1\)}.
\end{multline*}
\end{proposition}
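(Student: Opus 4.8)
The plan is to derive \cref{prop:intermediate:step:real} from \cref{prop:intermediate:step:rational} by a density-and-continuity argument, invoking Postulate~\cref{post:dm:continuity} for the first time. Fix a measurement~\(\measurement{A}\in\measurements\) with an orthogonal basis of eigenstates~\(\set{\eigket[1],\dots,\eigket[n]}\) and corresponding real eigenvalues~\(\eigval_1,\dots,\eigval_n\), and a state~\(\gket=\sum_{k=1}^n\con_k\eigket[k]\) with \(\con_k\in\nonnegreals\) and \(\sum_{k=1}^n\con_k^2=1\). Since the ordering of an orthogonal basis of eigenstates is immaterial in \cref{prop:intermediate:step:rational}, I would first reorder so that the strictly positive amplitudes come first: there is some~\(m\in\set{1,\dots,n}\) with \(\con_1,\dots,\con_m>0\) and \(\con_{m+1}=\dots=\con_n=0\), so that \(\gket=\sum_{k=1}^m\con_k\eigket[k]\).

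Next I would build a sequence of approximating states that \cref{prop:intermediate:step:rational} already covers. For each~\(j\in\naturals\), choose positive rationals~\(q^{(j)}_1,\dots,q^{(j)}_m\in\posrationals\) with \(\sum_{k=1}^mq^{(j)}_k=1\) and \(q^{(j)}_k\to\con_k^2\) as~\(j\to+\infty\); this is possible by first picking rationals within~\(\nicefrac1j\) of each~\(\con_k^2\) and then renormalising (the renormalisation factor tends to~\(1\) and is positive for~\(j\) large). Put \(\gket[j]\coloneqq\sum_{k=1}^m\sqrt{q^{(j)}_k}\eigket[k]\); by \cref{lem:normalisation} each~\(\gket[j]\in\statespace\), and \cref{prop:intermediate:step:rational} together with \cref{lem:quadratic:form:in:basis} gives \(\utility{A}(\gket[j])=\gbra[j]\measurement{A}\gket[j]=\sum_{k=1}^mq^{(j)}_k\eigval_k\). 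Since \(\sqrt{q^{(j)}_k}\to\con_k\) for each~\(k\) (continuity of the square root on~\(\nonnegreals\)) and the sum is finite, \(\gket[j]\to\gket\) in the norm topology on~\(\hilbertspace\).

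Finally, Postulate~\cref{post:dm:continuity} yields \(\utility{A}(\gket)=\lim_{j\to+\infty}\utility{A}(\gket[j])=\lim_{j\to+\infty}\sum_{k=1}^mq^{(j)}_k\eigval_k=\sum_{k=1}^m\con_k^2\eigval_k\), whereas \cref{lem:quadratic:form:in:basis} applied directly to~\(\gket\) gives \(\gbra\measurement{A}\gket=\sum_{k=1}^n\con_k^2\eigval_k=\sum_{k=1}^m\con_k^2\eigval_k\), the terms with~\(k>m\) vanishing because \(\con_k=0\). Comparing the two expressions completes the proof. I do not expect a genuine obstacle here: the only points needing a little care are the reordering that moves the vanishing amplitudes aside so that \cref{prop:intermediate:step:rational} applies verbatim, and the elementary construction of positive rational weights that sum exactly to~\(1\) while converging to the~\(\con_k^2\); the rest is routine continuity bookkeeping.
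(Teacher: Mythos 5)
Your proof is correct and follows essentially the same route as the paper's: approximate the squared amplitudes by positive rationals summing to one, apply \cref{prop:intermediate:step:rational} and \cref{lem:quadratic:form:in:basis} to the resulting states, and pass to the limit via \cref{post:dm:continuity}. The only cosmetic difference is that you discard the zero amplitudes by reordering the basis before approximating, whereas the paper keeps all \(n\) amplitudes and approximates the zero ones by small positive rationals; both are valid since \cref{prop:intermediate:step:rational} permits either choice.
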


\begin{proof}
We'll denote by \(\eigval_k\) the eigenvalue of~\(\measurement{A}\) corresponding to the eigenstate~\(\eigket[k]\), for all \(k\in\set{1,\dots,n}\).
Since \(\rationals\) is dense in~\(\reals\), there are \(n\) sequences \(\prescript{}{m}q_k\in\rationals\), \(k\in\set{1,\dots,n}\) such that for all~\(m\in\naturals\): (i) \(\sum_{k=1}^n\prescript{}{m}q_k=1\); (ii) \(\prescript{}{m}q_k>0\) for all~\(k\in\set{1,\dots,n}\); and (iii) \(\lim_{m\to+\infty}\prescript{}{m}q_k=\con_k^2\) for all~\(k\in\set{1,\dots,n}\).
We define a sequence of states as follows: let \(\gket[m]\coloneqq\sum_{k=1}^n\sqrt{\prescript{}{m}q_k}\eigket[k]\) for all~\(m\in\naturals\) .
Then \(\lim_{m\to+\infty}\gket[m]=\gket\), because it follows from our assumptions that \(\lim_{m\to+\infty}\sqrt{\prescript{}{m}q_k}=\con_k\) for all~\(k\in\set{1,\dots,n}\).
Now invoke \cref{post:dm:continuity} to find that, on the one hand,
\begin{equation*}
\utility{A}(\gket)
=\lim_{m\to+\infty}\utility{A}(\gket[m])
=\lim_{m\to+\infty}\utility{A}\group[\bigg]{\sum_{k=1}^n\sqrt{\prescript{}{m}q_k}\eigket[k]}
=\lim_{m\to+\infty}\sum_{k=1}^n\prescript{}{m}q_k\eigval_k
=\sum_{k=1}^n\con_k^2\eigval_k,
\end{equation*}
where for the third equality, we used \cref{prop:intermediate:step:rational,lem:quadratic:form:in:basis}.
On the other hand, we also infer from \cref{lem:quadratic:form:in:basis} that \(\gbra\measurement{A}\gket=\sum_{k=1}^n\abs{\con_k}^2\eigval_k=\sum_{k=1}^n\con_k^2\eigval_k\), so we're done.
\end{proof}

With all these stepping stones in place, it's now fairly straightforward to prove the existence and uniqueness result in~\cref{thm:utility}.

\begin{proof}[Proof of \cref{thm:utility}]
We begin with the existence part: it's enough to show that the reward assignation system~\(\theassignation\) obeys all four postulates~\cref{post:dm:eigenket,post:dm:different:eigenspaces,post:dm:additivity,post:dm:continuity}.

\underline{\upshape{DT\labelcref{post:dm:eigenket}}}.
Fix any Hilbert space~\(\hilbertspace\) and any Hermitian operator~\(\measurement{A}\) on~\(\hilbertspace\).
If \(\eigket\) is an eigenstate associated with an eigenvalue~\(\lambda\) of~\(\measurement{A}\), then, indeed, \(\theutility{A}(\eigket)=\eigbra\measurement{A}\eigket=\braket{a}{\lambda a}=\lambda\eigbraket=\lambda\).

\underline{\upshape{DT\labelcref{post:dm:different:eigenspaces}}}.
Fix any two Hilbert spaces~\(\hilbertspace_1\) and \(\hilbertspace_2\), and consider the Hermitian operator~\(\measurement{A}\coloneqq\sum_{k=1}^r\eigval_k\projection[{\eigspace[k]}]\) on~\(\hilbertspace_1\), with (distinct) real eigenvalues~\(\eigval_1,\dots,\eigval_r\) corresponding to the respective mutually orthogonal eigenspaces~\(\eigspace[1],\dots,\eigspace[r]\) that span~\(\hilbertspace_1\).
Similarly, consider the Hermitian operator~\(\measurement{B}\coloneqq\sum_{k=1}^r\eigval_k\projection[{\alteigspace[k]}]\) on~\(\hilbertspace_2\), with the same (distinct) eigenvalues~\(\eigval_1,\dots,\eigval_r\), corresponding to the respective mutually orthogonal eigenspaces~\(\alteigspace[1],\dots,\alteigspace[r]\) that span~\(\hilbertspace_2\).
Fix any states~\(\eigket[k]\in\eigspace[k]\) and~\(\alteigket[k]\in\alteigspace[k]\), and any~\(\con_k\in\complexes\) such that \(\sum_{k=1}^r\abs{\con_k}^2=1\).
Consider the states~\(\fket[\measurement{A}]\coloneqq\sum_{k=1}^r\con_k\eigket[k]\in\hilbertspace_1\) and \(\fket[{\measurement{B}}]\coloneqq\sum_{k=1}^r\con_{k}\alteigket[k]\in\hilbertspace_2\).
Then it follows from~\cref{lem:quadratic:form:in:basis} that \(\theutility{A}(\fket[\measurement{A}])=\fbra[\measurement{A}]\measurement{A}\fket[\measurement{A}]=\sum_{k=1}^r\modulus{\alpha_k}^2\lambda_k\), and similarly, that \(\theutility{B}(\fket[\measurement{B}])=\fbra[\measurement{B}]\measurement{B}\fket[\measurement{B}]=\sum_{k=1}^r\modulus{\alpha_k}^2\lambda_k\).
Hence, indeed, \(\theutility{A}(\fket[\measurement{A}])=\theutility{B}(\fket[\measurement{B}])\).

\underline{\upshape{DT\labelcref{post:dm:additivity}}}.
Fix any Hilbert space~\(\hilbertspace\), then we find for any two Hermitian operators~\(\measurement{A}\) and~\(\measurement{B}\) on~\(\hilbertspace\) [regardless of whether they have the same eigenspaces or not] that
\[
\uval_{\measurement{A}+\measurement{B}}(\fket)
=\fbra(\measurement{A}+\measurement{B})\fket
=\fbra\measurement{A}\fket+\fbra\measurement{B}\fket
=\theutility{A}(\fket)+\theutility{B}(\fket).
\]

\underline{\upshape{DT\labelcref{post:dm:continuity}}}.
Fix any Hilbert space~\(\hilbertspace\) and any Hermitian operator~\(\measurement{A}\) on~\(\hilbertspace\).
The argument is a standard one, but we include it here for the sake of completeness.
Consider any sequence of states~\(\fket[n]\) with \(\fket=\lim_{n\to+\infty}\fket[n]\), or in other words, \(\norm{\fket-\fket[n]}\to0\).
If we also consider the sequence of kets~\(\gket[n]\coloneqq\measurement{A}\fket[n]\), then
\begin{equation}\label{eq:utility:one}
\norm{\measurement{A}\fket-\gket[n]}
=\norm{\measurement{A}\fket-\measurement{A}\fket[n]}
=\norm{\measurement{A}\group{\fket-\fket[n]}}
\overset{\textrm{\eqref{eq:opnorm:hermitian}}}{\leq}\opnorm{\measurement{A}}\cdot\norm{\fket-\fket[n]}\to0,
\end{equation}
so \(\measurement{A}\fket[n]\to\measurement{A}\fket\): any Hermitian operator is continuous.
Moreover,
\begin{align*}
\abs{\theutility{A}(\fket)-\theutility{A}(\fket[n])}
&=\abs{\fbra\measurement{A}\fket-\fbra[n]\measurement{A}\fket[n]}\\
&=\abs{\fbra\measurement{A}\fket-\fbra\measurement{A}\fket[n]+\fbra\measurement{A}\fket[n]-\fbra[n]\measurement{A}\fket[n]}\\
&\overset{\eqref{eq:hermitian:inner:product}}{=}\abs{\fbra\measurement{A}\fket-\fbra\measurement{A}\fket[n]+\fbra[n]\measurement{A}\fket^*-\fbra[n]\measurement{A}\fket[n]^*}\\
&\leq\abs{\fbra\measurement{A}\fket-\fbra\measurement{A}\fket[n]}+\abs{\fbra[n]\measurement{A}\fket-\fbra[n]\measurement{A}\fket[n]}\\
&=\abs{\fbra\measurement{A}\group{\fket-\fket[n]}}+\abs{\fbra[n]\measurement{A}\group{\fket-\fket[n]}}\\
&\leq2\norm{\measurement{A}\group{\fket-\fket[n]}}\overset{\eqref{eq:utility:one}}{\to}0,
\end{align*}
where the second inequality follows from the Cauchy--Schwartz inequality and the fact that \(\norm{\fket}=\norm{\fket[n]}=1\).
This latest result can also be seen as a direct consequence of the continuity of the inner product in the topology generated by the associated norm.
Hence, indeed \(\theutility{A}(\fket)=\lim_{n\to\infty}\theutility{A}(\fket[n])\).

We can now move to the unicity part.
Consider any reward assignation system~\(\assignation\) and assume that it satisfies postulates~\cref{post:dm:eigenket,post:dm:different:eigenspaces,post:dm:additivity,post:dm:continuity}, then we're going to show that then necessarily~\(\assignation=\theassignation\).
To do this, we're going to consider any Hilbert space~\(\hilbertspace\) and any Hermitian operator~\(\measurement{A}\) on~\(\hilbertspace\), and prove that \(\utility{A}=\theutility{A}\), where, of course, \(\utilitymap=\assignation(\hilbertspace)\).

By \cref{prop:basis:and:eigenvalues}, we know that there's an orthonormal basis for~\(\hilbertspace\) that consists of eigenstates~\(\eigket[1],\dots,\eigket[n]\) of~\(\measurement{A}\).
If we denote by \(\eigval_k\) the eigenvalue of~\(\measurement{A}\) that corresponds with the eigenstate~\(\eigket[k]\), for~\(k\in\set{1,\dots,n}\), then \(\measurement{A}=\sum_{k=1}^n\eigval_k\eigket[k]\eigbra[k]\).

Fix any state~\(\gket\in\statespace\).
Since \(\set{\eigket[1],\dots,\eigket[n]}\) is an orthonormal basis for~\(\hilbertspace\), we know from \cref{lem:normalisation} that there are complex numbers~\(\con_1,\dots,\con_n\) such that \(\sum_{k=1}^n\vert\con_k\vert^2=1\) and \(\gket=\sum_{k=1}^n\con_k\eigket[k]\).
For all~\(k\in\set{1,2,\dots,n}\), let \(\beta_k\coloneqq\modulus{\con_k}\in\nonnegreals\) and \(\theta_k=\arg\con_k\in[0,2\pi)\).
Then \(\con_k=\beta_k e^{\im\theta_k}\) and \(\fket[k]\coloneqq e^{\im\theta_k}\eigket[k]\) is then obviously also an eigenstate of~\(\measurement{A}\) with eigenvalue~\(\eigval_k\).
It's immediate that the~\(\fket[1],\dots,\fket[n]\) also constitute an orthonormal basis for~\(\hilbertspace\) and that
\[
\gket
=\sum_{k=1}^n\con_k\eigket[k]
=\sum_{k=1}^n\beta_ke^{i\theta_k}\eigket[k]
=\sum_{k=1}^n\beta_k\fket[k]
\text{ with }
\sum_{k=1}^n\beta_k^2=1.
\]
We then infer from \cref{prop:intermediate:step:real} that, indeed, \(\utility{A}(\gket)=\gbra\measurement{A}\gket=\theutility{A}(\gket)\).
\end{proof}

\section{Conclusion}\label{sec:conclusion}
We've formulated the problem of uncertainty in a quantum mechanical system as a decision problem between acts that are measurements You can perform on the system.
We've tried to address this decision problem under uncertainty using a number of postulates that fix the form of the reward function~\(\theutility{A}\) associated with each act (or measurement)~\(\measurement{A}\) as \(\theutility{A}=\bra{\bolleke}\measurement{A}\ket{\bolleke}\).
We also show in Appendix~\labelcref{app:POVMs} how to extend this result to the more general case of POVMs.

The upshot of this is that \emph{Born's rule turns out to be already incorporated in the reward or the utility aspect of the decision problem and is thereby to some extent freed from its purely probabilistic connotations:} we can, in a de Finetti-like approach, separate the utilities and the probabilities, and mathematically express Your preferences as coherent preference orderings on the uncertain rewards, or equivalently, on the measurements.
Such coherent preference orderings, as we have seen, can be represented using coherent lower previsions on the real Hilbert space of all measurements.
Coherent lower previsions can also be represented mathematically by closed convex sets of density operators; only in special cases do these sets reduce to single density operators and to the standard probabilistic models used in quantum mechanics.
In this sense, the decision problem is primary and probabilities are derivative.

To try and drive home this point, let's backtrack a bit and look at \emph{general} decision problems and the role that coherent previsions play there, as described in \cref{sec:decision:theoretic:background}.
Generally speaking, any coherent prevision on the real linear space~\(\gambles[\Omega]\) of \emph{all bounded real-valued functions} on a state space~\(\Omega\) can always be seen as an expectation operator~\(E\) with respect to a (finitely additive) probability measure on that space: the restrictions of the coherent prevision to the indicator functions associated with the subsets of the state space; see for instance Refs.~\cite{walley1991,troffaes2013:lp,augustin2013:itip}.
This is the way that such probability measures can be made to help in characterising preferences, as is made clear in \cref{eq:decision:via:expected:utility}.
But in this characterisation of preferences, the role of probabilities is --- to repeat what we said above --- merely derivative.

This derivative character of probabilities becomes all the more apparent in the \emph{more specific} present context, where the coherent previsions are defined on the real linear space~\(\measurements\) of \emph{all measurements}~\(\measurement{A}\), or equivalently, on the real linear space~\(\theutilities\) of all bounded real-valued uncertain rewards~\(\theutility{A}\) on the state space~\(\statespace\).
This~\(\theutilities\) is only a linear subspace of the linear space~\(\gambles\) of all \emph{bounded} real-valued maps on~\(\statespace\): the uncertain rewards~\(\theutility{A}\) are `quadratic' functions of the system state and can in no way be associated with indicators of subsets of the state space~\(\statespace\), as \(\theutilities\) contains no indicator functions.
The coherence conditions make sure that there are probability measures on the state space~\(\statespace\) such that a given coherent prevision~\(\linprev\) on~\(\measurements\) coincides  on the quadratic functions in~\(\theutilities\) with the expectation operator associated with that probability measure, but these probability measures are in no way necessary to deal with the decision problem: the coherent prevision~\(\linprev\) on~\(\measurements\) suffices and the role of probabilities is, it bears repeating, derivative.
Due to the nature of quantum-mechanical decision problems, probabilities have no foundational part in their treatment --- unless a confusing one --- but coherent (lower and upper) previsions on the real Hilbert space of all measurements, and equivalently, (sets of) density operators, do.
On this view, probabilities are mere artefacts that turn out to appear and be useful in some decision problems, but not in others.

Incidentally, some of the stranger problems associated with using classical probabilities in quantum mechanics, such as the violation of Bell's inequalities, or their incompatibility with the Tsirelson bound, can be simply explained (away) by taking into account the consequences of the geometry of the Hilbert space of measurements, as is elaborately discussed and argued in Ref.~\cite[Secs.~2.6 and~3.4]{janas2021:raffles}.
It turns out that we've been doing this here by considering specific price functionals --- coherent (lower and upper) previsions --- \emph{on this Hilbert space}.
Working with coherent (lower and upper) previsions on measurements, rather than classical probabilities on the state space, also fits in perfectly with the geometric focus in the work of de Finetti \cite{finetti1937:correlazione} and Fisher \cite{fisher1924} and solves the above-mentioned problems.
It seems to us one of the simplest ways of justifying, and working with, `quantum probabilities'.

Finally, we want to point out that when we use coherent sets of desirable \emph{measurements} to express Your uncertainty, we're essentially relying on the Heisenberg picture of quantum mechanics.
And, since density operators are mixed quantum \emph{states}, we see that sets of density operators fit well within the Schrödinger picture.
The (almost-)equivalence between working with \emph{sets of measurements} on the one hand in \cref{sec:sets:of:desirable:measurements}, and working with \emph{sets of density operators} on the other in \cref{sec:lower:upper:previsions,sec:coherent:previsions}, allows us to recover the well-known duality between the Heisenberg and Schrödinger pictures in our decision-theoretic approach as well.

Our argumentation also enables us to get rid of the totality requirement on Your preference ordering whenever You're uncertain about the state of the quantum system.
It therefore allows for a more general treatment of quantum-mechanical uncertainty, that still gets back to Born's rule under the right (and usual) circumstances; a treatment of uncertainty that fits in perfectly with the more recent developments in the field of imprecise probabilities, that allows us to deal with partial probability assessments and that looks at probabilistic inference as a special case of conservative (deductive) inference.
To put it very succinctly, on our way of looking at it in this paper, quantum mechanics doesn't enforce a probability model for the uncertainty, as all the peculiar quantum mechanical aspects of the associated decision problem are captured in the reward functions and Your preference orderings can therefore be allowed to be only partial.
This leaves us with a lot of freedom to accommodate for other uncertainty models than precise probability models, which could (but needn't) be seen as partial and less perfect approximations to these precise models.

We've already taken the opportunity to point out reasons why such more general models can be useful: they can be more realistic for a You with limited time, resources and information, and who therefore can't be expected to come up with a total ordering of the infinity of acts associated with the decision problem; and the conservative inference mechanism associated with them allows You to mathematically draw all the inferences that can be drawn from the preferences You've actually expressed; see, for instance, the various instalments of our qubit running example.

But, there are, next to these more fundamental reasons for using more general models, also practical ones.
Letting go of precision (or totality, or completeness) can make a problem that is computationally too complex or expensive more manageable.
One interesting case in point is lumping, where a reduction of the dimension of the state space reduces the computational complexity of making inferences about a quantum system, but introduces imprecision.
Working with imprecise probability models and techniques will then allow us to turn a computationally hard problem into a more manageable one, at the cost of our only being able to find conservative bounds on quantities of interest whose exact calculation is too expensive.
In this way, our discussion here provides the foundation for practical applications in quantum mechanical inference and for importing into quantum mechanics ideas that have proved useful in classical dynamical systems theory, such as for instance, lumping in Markov chains \cite{erreygers:2021:thesis,erreygers2019:lumping,erreygers2018:spectrum:allocation}.
Before an attempt can be made at using our approach for solving such problems, we must be able to import the essentially static considerations in this paper into a dynamical context, where the quantum state of a system evolves in time.
This is the subject of our current research.

There are a number of issues and questions that remain, and which we feel to be worthy of attention.
First, it might be argued that our approach still isn't general enough, in that we still, in our decision-theoretic background assumption~\cref{backass:dm:utility:function}, assume the existence of a reward function~\(\theutility{A}\), which \emph{in effect} still imposes a \emph{total} strict ordering on measurements when the system is in a known state~\(\uket=\gket\), because then \(\measurement{A}\betterthan\measurement{B}\ifandonlyif\theutility{A}(\gket)>\theutility{B}(\gket)\).
Would it be possible to let go of even this {\itshape a priori} assumption and to then try and recover it {\itshape a posteriori} on the basis of other decision-theoretic assumptions, in the spirit of our postulates~\cref{post:dm:eigenket,post:dm:different:eigenspaces,post:dm:additivity,post:dm:continuity}?
A second question is whether similar conclusions to ours, and similar justifications for using imprecise probabilities in quantum mechanics can be reached when working, as Savage (and Wallace) did, in a decision-theoretic context where the utilities aren't supposed to exist as extraneous to the decision problem, but have to be constructed from the (partial) preference relation on acts.
Third, it would be interesting to try and recover similar results without relying on the Hilbert space structure of the state space, and in doing so find a decision-theoretic foundation for quantum mechanics that doesn't rely on the Hilbert space structure as fundamental.
One possible approach could start with the algebraic formulation of quantum mechanics \cite{takesaki2003,kaduson1986}, where observables are the fundamental objects living in a \(C^{*}\)-algebra, and the Hilbert space structure is determined by the GNS representation of the algebra \cite[Theorem~4.5.2]{kaduson1986}.
A problem with this suggestion, however, is that to get this result, probabilities are again seen as fundamental.
Nevertheless, through the use of symmetry arguments and similar postulates, we might still be able to arrive at coherence axioms for sets of desirable measurements, and we are currently exploring these ideas with some determination.

To conclude, we've already pointed out on several occasions that our work here can be used to provide a decision-theoretic foundation to ideas about introducing imprecise probability models (desirable gambles, lower and upper previsions, sets of density operators) in quantum mechanics, as first proposed by \citeauthor{benavoli2016:quantum} in Ref.~\cite{benavoli2016:quantum}.
As pointed out in a number of relevant places in \cref{sec:desirability,sec:lower:upper:previsions,sec:coherent:previsions}, we recover a number of their results, even if our interpretation of them may differ.
They have developed their ideas further in a different direction in Ref.~\cite{benavoli2019:computational}, where they provide an alternative way of thinking about, and justifying, imprecise probability models in quantum mechanics.
In contrast to what we do here, they don't follow the practice, standard in quantum mechanics, of using the tensor product space~\(\tensortimes^m_{k=1}\hilbertspace_k\) of the particle state spaces~\(\hilbertspace_k\) to represent the system state.
Instead, they essentially use only a subset of this space, namely the Cartesian product~\(\times^m_{k=1}\hilbertspace_k\).
On this alternative but smaller set of quantum states of the type~\(x\coloneqq(\fket[1],\dots,\fket[m])\in\times^m_{k=1}\hilbertspace_k\) they consider all quadratic gambles, which are defined as functions of the form
\[
g_{\measurement{A}}(\fket[1],\dots,\fket[m])
\coloneqq\group[\big]{\tensortimes^m_{k=1}\fbra[k]}\measurement{A}\group[\big]{\tensortimes^m_{k=1}\fket[k]},
\]
also symbolically written as \(g_{\measurement{A}}(x)=x^\dagger\measurement{A}x\), corresponding to the Hermitian operators~\(\measurement{A}\) on~\(\tensortimes^m_{k=1}\hilbertspace_k\).
They then introduce a concept of algorithmic rationality that leads to coherence axioms and a framework of desirability that closely resembles ours in spirit, but is rather different in the mathematical details.
Their gambles \(g_{\measurement{A}}\) correspond to our reward functions, but essentially restricted to the smaller Cartesian product; it's easy to see that there's a one-to-one correspondence between them, as they're both isomorphic to the space of Hermitian operators.\footnote{When there's only one particle, the state space will be identical in both frameworks.}
They furthermore take the quadratic shape of their gambles as a given, whereas we derive it from a set of postulates.
We therefore feel justified in claiming that our approach here is rather different and deserving of separate consideration.

\section*{Author contributions}
This paper originated in many intensive discussions amongst the four of us and Natan T'Joens in the context of Keano De Vos's master dissertation, in the late autumn and winter of 2020 and early spring of 2021, during the second Covid-19 lockdown.
Keano wrote a first draft of the paper, which was later extensively revised, expanded, reorganised and added to by Gert, while Alexander and Jasper focused on thoroughly reviewing and commenting on the intermediate and final stages and showering Keano and Gert with constructive criticism.
Gert and Keano also worked on revising the paper in response to the reviews.

\section*{Acknowledgments}
Gert's research on this paper was partly supported by a sabbatical grant from Ghent University, and from the FWO, reference number K801523N.
He is grateful to Jason Konek for funding several short research stays, as well as a one-month sabbatical stay, at University of Bristol's Department of Philosophy, which allowed him to work on this topic and present results to various people in the department, in the context of Jason Konek's ERC Starting Grant ``Epistemic Utility for Imprecise Probability'' under the European Union's Horizon 2020 research and innovation programme (grant agreement no. 852677).
He also wishes to express his gratitude to Teddy Seidenfeld, whose generous funding helped realise a sabbatical stay at Carnegie Mellon University's Department of Philosophy, during which some of the revision work for this paper was done.

We'd like to express our gratitude to the reviewers for taking the time to read this lengthy paper.
Some of them responded with very detailed and constructive criticism, which allowed us to improve the readability and the flow of the argumentation, and to expand its scope to dealing with POVMs.

\printbibliography

\appendix
\section{Glossary of main notation and concepts}\label{app:glossary}
\begin{longtable}{p{2.75cm}p{9.25cm}}
\textbf{Notation} & \textbf{Definition}\\[.75ex]
\endfirsthead
\multicolumn{2}{r}{\textit{(continued from previous page)}} \\
\textbf{Notation} & \textbf{Definition} \\[.75ex]
\endhead
\multicolumn{2}{r}{\textit{(continued on next page)}} \\
\endfoot
\endlastfoot

\multicolumn{2}{l}{\textit{Hilbert-space language}}\\[.35ex]
\(\hilbertspace\) & Hilbert space associated with the quantum system\\
\(\statespace\) & state space: all normalised kets in~\(\hilbertspace\)\\
\(\gket,\fket\) & generic kets representing possible quantum states\\
\(\gbra,\fbra\) & dual bras associated with kets via the inner product\\
\(\braket{\psi}{\phi}=\inprod{\gket}{\fket}\) & inner product between two kets~\(\gket\) and~\(\fket\)\\
\(\linspanof{\assessment}\) & linear span of a set of kets~\(\assessment\)\\
\(\basisket\) & generic notation for a basis ket\\
\\

\multicolumn{2}{l}{\textit{Operators, measurements, and spectra}} \\[.35ex]
\(\measurement{A}\) & Hermitian operator; interpreted as a measurement (act)\\
\(\measurements\), \(\measurements(\hilbertspace)\) & set of all Hermitian operators on~\(\hilbertspace\)\\
\(\adjoint{\operator{O}}\) & Hermitian adjoint of linear operator~\(\operator{O}\)\\
\(\eigval\), \(\eigket\) & generic notation for an eigenvalue and corresponding eigenket of a linear operator\\
\(\spectrum{A}\) & spectrum of~\(\measurement{A}\): set of eigenvalues\\[.05ex]
\(\projection[\eigspace]\) & projection operator on the linear subspace~\(\eigspace\)\\
\(\trace{\measurement{A}}\) & trace of~\(\measurement{A}\)\\
\\

\multicolumn{2}{l}{\textit{States, rewards, and decision language}} \\[.35ex]
\(\uket\) & unknown quantum state\\
\(\utility{A}\) & reward function for act~\(\act{A}\)/measurement~\(\measurement{A}\), mapping states to utiles\\
\(\theutility{A}\) & uniquely determined reward function based on the reward postulates: \(\theutility{A}(\bolleke)=\bra{\bolleke}\measurement{A}\ket{\bolleke}\)\\
\\

\multicolumn{2}{l}{\textit{Preference and desirability}} \\[.35ex]
\(\betterthan\) & strict preference relation between uncertain rewards or measurements: uncertainty model\\
\(\alwaysbetterthan\), \(\alwaysstrictlybetterthan\) & weak and strict pointwise dominance relations\\[.1ex]
\(\measurement{A}\alwaysstrictlybetterthan\zero\) & measurement~\(\measurement{A}\) is positive definite, equivalent to \(\min\spectrum{A}>0\)\\
\(\measurement{A}\possemidef\zero\) & measurement~\(\measurement{A}\) is positive semidefinite, equivalent to \(\min\spectrum{A}\geq 0\)\\
\(\possemidefmeasurements\) & set of positive semidefinite Hermitian operators\\[.1ex]
\(\posmeasurements\) & set of non-zero positive semidefinite Hermitian operators\\[.1ex]
\(\strictlyposmeasurements\) & set of positive definite Hermitian operators\\
\(\desirset\) & coherent set of desirable measurements; partial preference information: uncertainty model\\
\(\posi\group{\assessment}\) & positive hull of~\(\assessment\): all positive linear combinations of elements in~\(\assessment\)\\
\(\ext\group{\assessment}\) & natural extension, \(\ext\group{\assessment} =\posi\group{\assessment\cup\posmeasurements}\), most conservative coherent set of desirable measurements that includes~\(\assessment\)\\
\\

\multicolumn{2}{l}{\textit{Previsions and density-operator representations}} \\[.35ex]
\(\lowprev\), \(\uppprev\) & coherent lower and upper previsions (buying and selling prices): uncertainty model\\
\(\linprev\) & coherent linear prevision, when \(\lowprev=\uppprev\) (self-conjugacy), corresponds to the precise probability case\\
\(\density\) & density operator; mixed state representation\\
\(\densities\) & sets of density operators: uncertainty model\\
\end{longtable}

\section{POVMs as acts in our framework}\label{app:POVMs}
In \cref{sec:quantum:mechanics:probabilistic}, measurements were introduced as Hermitian operators, also called \emph{projective measurements}.
In many a quantum information theory setting, the most general type of measurement is a \emph{positive operator-valued measure} (POVM).
A POVM consists of a set~\(\mathsf{E}=\{\measurement{E}_1,\dots,\measurement{E}_d\}\) of~\(d\) positive semidefinite Hermitian linear operators~\(\measurement{E}_k\in\possemidefmeasurements\) such that \(\sum_{k=1}^d\measurement{E}_k=\identity\).\footnote{See also the standard textbook discussion in \cite[Sections~2.2.6 and~2.2.8]{nielsen2010:quantum} for a more elaborate overview.}
Each such operator~\(\measurement{E}_k\) is associated with a possible outcome for the experiment, which we'll identify as outcome~\(k\).
If the epistemic uncertainty about the system state is represented by the density operator~\(\density\in\densities\), then the most general form of Born's rule gives the probability of outcome~\(k\) as
\begin{equation}\label{eq:born:rule:povm}
\prob{k}=\trace{\measurement{E}_k\density}.
\end{equation}
The main difference between a POVM and a projective measurement corresponding to a Hermitian operator, is then that POVMs allow for general positive semidefinite operators~\(\measurement{E}_k\), while projective measurements only allow for projection operators on the eigenspaces~\(\projection[k]\).

The natural question that now arises, is how we can make such POVMs fit into the decision-theoretic framework developed in this paper.
In this Appendix, we give a brief indication of how this could be achieved.

To get a fully analogous treatment of POVMs as \emph{acts} in our framework, we will assign to each outcome~\(k\) of the POVM some real value~\(\eigval_k\), which we interpret as the pay-off You obtain when outcome~\(k\) occurs.

We now aim to derive a reward function~\(\uval_{\mathsf{E}}\) for the POVM~\(\mathsf{E}\).
We can do this without the need of any additional postulates, through Naimark's theorem \cite{naimark1943}, which states that any POVM can be implemented as a projective measurement on a larger Hilbert space.

More specifically, consider a POVM~\(\mathsf{E}=\{\measurement{E}_1,\ldots,\measurement{E}_d\}\) on~\(\statespace\), to which have been assigned the corresponding pay-offs~\(\eigval_1,\ldots,\eigval_d\).
In addition, fix some \(d\)-dimensional so-called \emph{ancilla} space~\(\statespace_\mathrm{a}\) with orthonormal basis \(\set{\basisket[1],\dots,\basisket[d]}\) and also fix some state~\(\ket{c}\in\statespace_\mathrm{a}\).
Then it's possible \cite[Section~2.2.8]{nielsen2010:quantum} to construct a unitary operator~\(\unitary\) on the enlarged state space~\(\statespace\otimes\statespace_\mathrm{a}\) for which
\begin{equation*}
\unitary\group{\gket\otimes\ket{c}}
=\sum_{k=1}^d\group[\Big]{\sqrt{\measurement{E}_k}\gket}\otimes\basisket[k]
\text{ for all }\gket\in\statespace,
\end{equation*}
and where \(\sqrt{\measurement{E}_k}\) is the unique positive semidefinite square root of the positive semidefinite~\(\measurement{E}_k\).
On this enlarged state space, define the Hermitian operator~\(\measurement{A}\) by letting
\begin{equation*}
\measurement{A}
\coloneqq\adjoint{\unitary}\group[\bigg]{\sum_{k=1}^d\eigval_k\identity\otimes\basisket[k]\basisbra[k]}\unitary.
\end{equation*}
By construction \cite{naimark1943}, the projective measurement~\(\measurement{A}\) on states~\(\gket\otimes\ket{c}\) is equivalent to the POVM measurement on states~\(\gket\) alone.

Given the equivalence of the POVM~\(\mathsf{E}=\{\measurement{E}_1,\ldots,\measurement{E}_d\}\) on states~\(\gket\) to the projective measurement~\(\measurement{A}\) on states~\(\gket\otimes\ket{c}\), we can take its reward function to be identical to that of the corresponding projective measurement~\(\measurement{A}\) on the enlarged state space, so
\begin{align*}
\uval_{\mathsf{E}}(\gket)
\coloneqq&\theutility{A}(\gket\otimes\ket{c})
=\group{\gbra\otimes\bra{c}}\measurement{A}\group{\gket\otimes\ket{c}}\\
=&\group{\gbra\otimes\bra{c}}\adjoint{\unitary}\group[\bigg]{\sum_{k=1}^d\eigval_k\identity\otimes\basisket[k]\basisbra[k]}\unitary\group{\gket\otimes\ket{c}}\\
=&\group[\bigg]{\sum_{r=1}^d\gbra\adjoint{\sqrt{\measurement{E}_r}}\otimes\basisbra[r]}\group[\bigg]{\sum_{k=1}^d\eigval_k\identity\otimes\basisket[k]\basisbra[k]}\group[\bigg]{\sum_{s=1}^d{\sqrt{\measurement{E}_s}}\gket\otimes\basisket[s]}\\
=&\gbraketwithop[\bigg]{\sum_{k=1}^d\eigval_k\measurement{E}_k},
\text{ for all }\gket\in\statespace.
\end{align*}
So we see that
\begin{equation*}
\uval_{\mathsf{E}}
=\uval_{\measurement{A}_{\mathsf{E}}},
\text{where we let \(\measurement{A}_{\mathsf{E}}\coloneqq\sum_{k=1}^d\eigval_k\measurement{E}_k\).}
\end{equation*}
We conclude that the reward function for the POVM is identical to the reward function for the Hermitian operator~\(\measurement{A}_{\mathsf{E}}\coloneqq\sum_{k=1}^d\eigval_k\measurement{E}_k\) that `averages' the POVM elements~\(\measurement{E}_k\) with the `weights'~\(\lambda_k\) assigned to their outcomes~\(k\).

From an uncertainty modelling perspective, this means that if you deem \(\measurement{A}_{\mathsf{E}}\) to be a desirable measurement, then you also deem the POVM~\(\mathsf{E}=\{\measurement{E}_1,\ldots,\measurement{E}_d\}\) with the pay-offs~\(\eigval_k\) assigned to its outcomes~\(k\) to be desirable, and vice versa.
It's therefore enough to consider the reward functions for Hermitian operators, as we have done in this paper, to also capture the reward functions for POVMs.

We conclude that the representation result in \cref{thm:utility} and the desirability/lower-prevision machinery for Hermitian measurements extend to POVMs without any drastic changes to the set-up.
In particular, if Your uncertainty is represented by a coherent lower prevision~\(\lowprev\) on~\(\measurements\), with associated closed convex set of density operators~\(\densities(\lowprev)\), then the lower prevision of the POVM act~\(\mathsf{E}\) is given by
\begin{align}
\lowprev(\mathsf{E})
\coloneqq\lowprev\group{\measurement{A}_{\mathsf{E}}}
&=\min\cset{\trace{\measurement{A}_{\mathsf{E}}\density}}{\density\in\densities(\lowprev)}
\notag\\
&=\min\cset[\bigg]{\sum_{k=1}^d\eigval_k\trace{\measurement{E}_k\density}}{\density\in\densities(\lowprev)},
\label{eq:povm:lower:prevision}
\end{align}
which is the POVM counterpart of \cref{eq:from:set:of:densities:to:lowprev}.
In the precise case, where \(\densities(\lowprev)=\{\density\}\), \cref{eq:povm:lower:prevision} reduces to
\begin{equation*}
\linprev[\density](\measurement{A}_{\mathsf{E}})
=\sum_{k=1}^d\eigval_k\trace{\measurement{E}_k\density},
\end{equation*}
which can be reinterpreted as the standard expectation \(\expec[\density]{\measurement{A}_{\mathsf{E}}}=\sum_{k=1}^d\eigval_k\prob{k}\) for POVMs resulting from Born's rule in \cref{eq:born:rule:povm}.
\end{document}